\def\version{November 25, 2014}


\documentclass[12pt]{article}

\def\macrosPb{}
\def\macrosHarxiv{}
\usepackage[psamsfonts]{amsfonts}
\usepackage{amsmath,amssymb,amsthm}
\usepackage[dvips]{graphicx}
\usepackage{appendix}
\usepackage{bbm} 
\usepackage{amsbsy}
\usepackage{enumerate}
\usepackage{cite}



\ifdefined\macrosPa
  \usepackage[textwidth=465pt,textheight=650pt,centering]{geometry} 
\else\ifdefined\macrosPb
  \usepackage[textwidth=500pt,textheight=650pt,centering]{geometry} 
\fi\fi

\ifdefined\macrosS
  \makeatletter
  
  \def\boldsymbol{\pmb}
  \makeatother

  \usepackage{mathptmx}
  \DeclareMathAlphabet{\mathcal}{OMS}{cmsy}{m}{n}
\fi


%



\def\UseSection{
        \numberwithin{equation}{section}
	\theoremstyle{plain}
        \newtheorem{theorem}    {Theorem}[section]
        \DefineTheorems 
}

\def\DefineTheorems{
	
	\newtheorem{lemma}      [theorem] {Lemma}
	
	\newtheorem{prop}       [theorem] {Proposition}
	
	\newtheorem{cor}        [theorem] {Corollary}

	\theoremstyle{definition}
	\newtheorem{defn}       [theorem] {Definition}

	\theoremstyle{definition}

}

\newcommand{\bt}   {\begin{theorem}}
\newcommand{\et}   {\end  {theorem}}
\newcommand{\bl}   {\begin{lemma}}
\newcommand{\el}   {\end  {lemma}}
\newcommand{\bp}   {\begin{prop}}
\newcommand{\ep}   {\end  {prop}}
\newcommand{\bc}   {\begin{cor}}
\newcommand{\ec}   {\end  {cor}}
\newcommand{\bd}   {\begin{defn}}
\newcommand{\ed}   {\end  {defn}}

\newcommand{\ba}   {\begin{array}}
\newcommand{\ea}   {\end  {array}}
\newcommand{\be}   {\begin{enumerate}}
\newcommand{\ee}   {\end  {enumerate}}
\newcommand{\bi}   {\begin{itemize}}
\newcommand{\ei}   {\end  {itemize}}

\def\eq#1\en{\begin{equation}#1\end{equation}}  
\def\eqsplit#1\ensplit{
	\begin{equation}\begin{split}#1\end{split}\end{equation}
	}
\def\eqalign#1\enalign{
	\begin{align}#1\end{align}
	}
\def\eqmul#1\enmul{
	\begin{multline}#1\end{multline}
	}
\newcommand{\eqarrstar} {\begin{eqnarray*}} 
\newcommand{\enarrstar} {\end{eqnarray*}} 
\newcommand{\eqarray}   {\begin{eqnarray}} 
\newcommand{\enarray}   {\end{eqnarray}} 
\newcommand{\nnb}	{\nonumber \\} 

\newcommand{\lbeq}[1]  {\label{e:#1}}
\newcommand{\refeq}[1] {\eqref{e:#1}}    
\newcommand{\lbfg}[1]  {\label{fg: #1}}

%
%
\makeatletter
\newcommand{\labelcounter}[2]{{%
	\stepcounter{#1}
	\protected@write\@auxout{}%
	{\string\newlabel{#2}{{\csname the#1\endcsname}{\thepage}}}%
	{\ref{#2}}
	}}
\makeatother
%
%
%


\newcommand{\Ebold} {{\mathbb E}}

\newcommand{\Nbold} {{\mathbb N}}

\newcommand{\Pbold} {{\mathbb P}}

\newcommand{\Rbold} {{\mathbb R}}

\newcommand{\Zbold} {{\mathbb Z}}


 
\newcommand{\Bcal}   {\mathcal{B}} 
 
\newcommand{\Dcal}   {\mathcal{D}}

\newcommand{\Kcal}   {\mathcal{K}}

\newcommand{\Ncal}   {\mathcal{N}} 
 
\newcommand{\Pcal}   {\mathcal{P}}

\newcommand{\Scal}   {\mathcal{S}}

\newcommand{\Vcal}   {\mathcal{V}} 
\newcommand{\Wcal}   {\mathcal{W}}







\newcommand{\Zd}    {{ {\Zbold}^d }}


\newcommand{\spose}[1] {{\hbox to 0pt{#1\hss}} }
\newcommand{\ltapprox} {\mathrel{\spose{\lower 3pt\hbox{$\mathchar"218$}}
 \raise 2.0pt\hbox{$\mathchar"13C$}}}
\newcommand{\gtapprox} {\mathrel{\spose{\lower 3pt\hbox{$\mathchar"218$}}
 \raise 2.0pt\hbox{$\mathchar"13E$}}}

\newcommand{\dist} {{  \rm dist }}






\UseSection   
\setcounter{secnumdepth}{3} 
\setcounter{tocdepth}{3}    

\usepackage[usenames]{color}

\definecolor{at}{rgb}{0.0, 0.5, 0.0} 



\newcommand{\LT}{{\rm Loc}  }

\newcommand{\DV}{\Dcal}
\newcommand{\DVa}{\alpha}

\renewcommand{\to} {\rightarrow}

\newcommand{\R}{\Rbold}
\newcommand{\Z}{\Zbold}

\newcommand{\N}{\Nbold}
\newcommand{\C}{\mathbb{C}}
\newcommand{\volume}{\mathbb{V}}
\newcommand{\Lambdabold}{\boldsymbol{\Lambda}}

\newcommand{\1}{\mathbbm{1}}

\newcommand{\psib}{\bar\psi}

\newcommand{\jm}{j_\Omega}

\newcommand{\Mext}{M_\mathrm{ext}}

\newcommand{\Ex}{\mathbb{E}}

\newcommand{\chicCov}{{\chi}}

\newcommand{\bubble}{{\sf B}}


\newcommand{\pt}{{\rm pt}}

\newcommand{\Vpt}{V_{\rm pt}}

\newcommand{\xch}{\check{x}}
\newcommand{\Kch}{\check{K}}
\newcommand{\Vch}{\check{V}}
\newcommand{\Rch}{\check{R}}
\newcommand{\gch}{\check{g}}
\newcommand{\zch}{\check{z}}
\newcommand{\much}{\check{\mu}}

\newcommand{\I}{\mathfrak{I}}

\newcommand{\z}{z}

\newcommand{\gbar}{\bar{g}}

\newcommand{\ggen}{\tilde{g}}
\newcommand{\sgen}{\tilde{s}}
\newcommand{\chigen}{\tilde{\chi}}
\newcommand{\mgen}{\tilde{m}}
\newcommand{\Iint}{\mathbb{I}}
\newcommand{\Igen}{\tilde{\mathbb{I}}}

\newcommand{\zbar}{\bar{z}}
\newcommand{\mubar}{\bar{\mu}}

\newcommand{\domRG}{\mathbb{D}}
\newcommand{\domRGch}{\check{\mathbb{D}}}

\newcommand{\betamax}{\beta_{\rm max}}

\newcommand{\pp}{a}
\newcommand{\qq}{b}

\newcommand{\half}{\textstyle{\frac 12}}

\newcommand{\ddp}[2]{\frac{\partial #1}{\partial #2}}

\newcommand{\phib}{\bar\phi}

\newcommand{\Kspace}{\Kcal}










\ifdefined\macrosH
  \usepackage{xr-hyper}
  \usepackage{hyperref}
  \hypersetup{hypertexnames=false}
  \hypersetup{colorlinks,citecolor=blue,linkcolor=blue}  

  \externaldocument[norm-]{rg-norm}[rg-norm.pdf]
  \externaldocument[loc-]{rg-loc}[rg-loc.pdf]
  \externaldocument[pt-]{rg-pt}[rg-pt.pdf]
  \externaldocument[IE-]{rg-IE}[rg-IE.pdf]
  \externaldocument[step-]{rg-step}[rg-step.pdf]
  \externaldocument[saw4-]{saw4}[saw4.pdf]
  \externaldocument[log-]{saw4-log}[saw4-log.pdf]
  \externaldocument[phi4-log-]{phi4-log}[phi4-log.pdf]
  \externaldocument[flow-]{rg-flow}[rg-flow.pdf]
  \externaldocument[phi4-]{phi4}[phi4.pdf]
\else\ifdefined\macrosHarxiv
  \usepackage{xr-hyper}
  \usepackage{hyperref}
  \hypersetup{hypertexnames=false}

  \externaldocument[norm-]{rg-norm}[http://arxiv.org/pdf/1403.7244v2.pdf]
  \externaldocument[loc-]{rg-loc}[http://arxiv.org/pdf/1403.7253v2.pdf]
  \externaldocument[pt-]{rg-pt}[http://arxiv.org/pdf/1403.7252v2.pdf]
  \externaldocument[IE-]{rg-IE}[http://arxiv.org/pdf/1403.7255v2.pdf]
  \externaldocument[step-]{rg-step}[http://arxiv.org/pdf/1403.7256v2.pdf]
  \externaldocument[flow-]{rg-flow}[http://arxiv.org/pdf/1211.2477.pdf]
  \externaldocument[saw4-]{saw4}[http://arxiv.org/pdf/1403.7268v2.pdf]
  \externaldocument[log-]{saw4-log}[http://arxiv.org/pdf/1403.7422v2.pdf]
  \externaldocument[phi4-log-]{phi4-log}[http://arxiv.org/pdf/1403.7424.pdf]
\else
  \newcommand{\texorpdfstring}[2]{#1}
  \usepackage{xr}
  \externaldocument[norm-]{rg-norm}
  \externaldocument[loc-]{rg-loc}
  \externaldocument[pt-]{rg-pt}
  \externaldocument[IE-]{rg-IE}
  \externaldocument[step-]{rg-step}
  \externaldocument[flow-]{rg-flow}
  \externaldocument[saw4-]{saw4}
  \externaldocument[log-]{saw4-log}
  \externaldocument[phi4-log-]{phi4-log}
  \externaldocument[phi4-]{phi4}
\fi\fi

\title{
 Logarithmic correction for the susceptibility of the
 4-dimensional weakly self-avoiding walk:
 \\
 a renormalisation group analysis}

\author{
 Roland Bauerschmidt\thanks{School of Mathematics,
   Institute for Advanced Study,
   Einstein Drive,
   Princeton, NJ 08540 USA.
   E-mail: {\tt roland@bauerschmidt.ca}.},\;
 David C.\ Brydges\thanks{Department of Mathematics,
   University of British Columbia,
   Vancouver, BC, Canada V6T 1Z2.
   E-mail: {\tt db5d@math.ubc.ca}, {\tt slade@math.ubc.ca}.}\;
 and Gordon Slade$^\dagger$}

\date\version

\begin{document}

\maketitle

\begin{abstract}
  We prove that the susceptibility of the continuous-time weakly
  self-avoiding walk on $\Zd$, in the critical dimension $d=4$, has a
  logarithmic correction to mean-field scaling behaviour as the
  critical point is approached, with exponent $\frac{1}{4}$ for the
  logarithm.  The susceptibility has been well understood previously
  for dimensions $d \ge 5$ using the lace expansion, but the lace
  expansion does not apply when $d=4$.
  The proof begins by rewriting the walk two-point function as the
  two-point function of a supersymmetric field theory.
  The field theory is then analysed via a
  rigorous renormalisation group method developed in a companion
  series of papers.  By providing a setting where the methods of the
  companion papers are applied together, the proof also serves as an
  example of how to assemble the various ingredients of the general
  renormalisation group method in a coordinated manner.
\end{abstract}

\section{Introduction and main result}

The critical behaviour of the self-avoiding walk depends on the
spatial dimension $d$.  The upper critical dimension is 4, and for $d
\geq 5$ the lace expansion has been used to prove that the
self-avoiding walk is governed by the same critical exponents as the
simple random walk \cite{BS85,Hara08,HS92a,Slad06}.  In this paper, we
apply a rigorous renormalisation group analysis to study the
susceptibility of the weakly self-avoiding walk in the critical
dimension $d=4$.

\subsection{Continuous-time weakly self-avoiding walk}
\label{sec:ctwsaw}

Let $X$ be the continuous-time simple random walk on the integer
lattice $\Zd$, with $d > 0$.  In more detail, $X$ is the stochastic
process with right-continuous sample paths that takes its steps at the
times of the events of a rate-$2d$ Poisson process.  Steps are
independent both of the Poisson process and of all other steps, and
are taken uniformly at random to one of the $2d$ nearest neighbours of
the current position.
The \emph{intersection local time up to time $T$} is defined by
\begin{equation} \label{e:ITdef}
  I(T) = \int_0^T \!\! \int_0^T \1_{X(S_1) = X(S_2)} \; dS_1 \, dS_2
  =
  \sum_{x\in\Z^d} (L_T^x)^2,
\end{equation}
where $L_T^x = \int_0^T \1_{X(S)=x} \; dS$ is the local time of $X$ at
$x$ up to time $T$.  Let $E_a$ denote the expectation for the process
with $X(0)=a \in \Zd$.

Given $g>0$ and $a,b \in \Zd$,
the continuous-time weakly self-avoiding walk \emph{two-point function} is then defined by
\begin{equation}
  \label{e:Gwsaw}
  G_{g,\nu}(a,b)
  =
  \int_0^\infty
  E_{a} \left(
    e^{-g I(T)}
    \1_{X(T)=b} \right)
  e^{- \nu T}
  dT,
\end{equation}
where $\nu$ is a parameter (possibly negative) chosen such that the
integral converges.  In \eqref{e:Gwsaw}, self-intersections are
suppressed by the factor $e^{-gI(T)}$.
In the limit $g \to \infty$, if $\nu$ is simultaneously sent to
$-\infty$ in a suitable $g$-dependent manner, it is known that the
limit of the two-point function \refeq{Gwsaw} is a multiple of the
two-point function of the standard discrete-time strictly
self-avoiding walk \cite{BDS12}.  Our analysis is for small $g>0$; the
model we study is predicted to be in the same universality class as
the strictly self-avoiding walk for all $g>0$.

We set $c_T = c_{g,T} = E_{a}(e^{-gI(T)})$, and define the
\emph{susceptibility} by
\begin{equation}
  \label{e:suscept-def}
  \chi(g,\nu) =   \sum_{b\in \Z^d}  G_{g,\nu}(a,b)
  =
  \int_0^\infty
  c_{g,T}
  e^{- \nu T}
  dT.
\end{equation}
By translation-invariance of the simple random walk and of
\eqref{e:ITdef}, $c_T$ and $\chi$ are independent of the point $a \in
\Z^d$.  In Lemma~\ref{lem:csub}, we apply a standard subadditivity
argument to prove that for all dimensions $d>0$ there exists a
\emph{critical value} $\nu_c=\nu_c(d,g) \in (-\infty,0]$ such that
\begin{equation}
  \label{e:chi-nuc}
  \text{$\chi(g,\nu) < \infty$ \; if and only if \; $\nu > \nu_c$}.
\end{equation}

The rate of divergence of $\chi(g,\cdot)$ is characterised by the
critical exponent $\gamma$ (assuming it exists) by
\begin{equation}
  \label{e:gamdef}
  \chi(g,\nu) \sim A_g (\nu - \nu_c)^{-\gamma}
  \quad \text{as $\nu \downarrow \nu_c$},
\end{equation}
where $A_g$ and $\gamma$ are $d$-dependent constants.
Throughout the paper, we write $p \sim q$ for asymptotic formulas,
i.e., when $\lim p/q =1$.  The exponent $\gamma$ is predicted to be
universal, i.e., dependent on the dimension $d$, but otherwise
independent of fine details of the model.  For $d=4$, a universal
logarithmic correction to this scaling has been predicted, and our
main result gives a rigorous proof of this logarithmic correction.
Logarithmic corrections for the scaling behaviour of the weakly
self-avoiding walk in dimension $4$ have been computed in the physics
literature using nonrigorous renormalisation group arguments, e.g.,
\cite{BGZ73,Dupl86}.  Early indications of the critical nature of the
dimension $d=4$ were given in \cite{ACF83,BFF84}, following proofs of
triviality of the $\varphi^4$ field theory above dimension 4
\cite{Aize82,Froh82}.  For $d=4$, the weakly self-avoiding walk also
coincides with the discrete-space continuous-time Edwards model (see
\cite[Section~10.1]{MS93}).

For $d \ge 5$, it is known that \refeq{gamdef} holds with critical
exponent $\gamma =1$, for the weakly and \emph{strictly} self-avoiding
walk \cite{BS85,HS92a}.  For $d=3$, the problem remains completely
unsolved from a mathematical point of view; a recent numerical
estimate for $d=3$ is $\gamma \approx 1.157$ \cite{SBB11}.  For $d=2$,
it is predicted that $\gamma = \frac{11}{32}$ \cite{Nien82}, and
recent work suggests that the scaling behaviour can be described by
${\rm SLE}_{8/3}$ \cite{LSW04}, but the existence neither of critical
exponents nor the scaling limit has yet been proved.  The case of
$d=1$ is of interest for weakly self-avoiding walk, where a fairly
complete understanding has been obtained \cite{Holl09}.  For a recent
survey of mathematical results about the self-avoiding walk, see
\cite{BDGS12}.

\subsection{Main result}

Let $\Delta$ denote the lattice Laplacian defined by
$\Delta f(x) = \sum_{e: |e|=1} (f(x+e)-f(x))$ on $\Zd$.
The lattice Green function is defined, for $m^2 > 0$, by
\begin{equation}
  \label{e:lGf}
  C_{m^2}(x)
  =
  G_{0,m^2}(0,x)
  = (-\Delta+m^2)^{-1}_{0x}.
\end{equation}
The inverse is bounded in $l^2(\Z^d)$-sense for $m^2>0$, and the limit
$m^2 \downarrow 0$ exists (pointwise in $x$) if $d>2$.  The
\emph{bubble diagram} for simple random walk is the squared $\ell^2$
norm $B_{m^2}=\sum_{x \in \Zd} C_{m^2}(x)^2$.  It follows from the
definition that
\begin{equation}
  B_{m^2} = \int_0^\infty \!\!\!
  \int_0^\infty P(X(T) = Y(S)) e^{-m^2T} e^{-m^2 S} \; dT \, dS
  ,
\end{equation}
where $X$ and $Y$ are two independent simple random walks starting at
$0$.  Hence the bubble diagram measures the expected amount of time
that two independent simple random walks killed at rate $m^2$
intersect each other.  The bubble diagram arises in an important way
in our analysis, and it is convenient to define $\bubble_{m^2} =
8B_{m^2}$.
By Parseval's formula and elementary calculus, as $m^2 \downarrow 0$,
\begin{equation}
  \label{e:freebubble}
  \bubble_{m^2}
  =
  8B_{m^2}
  =
  8
  \int_{[-\pi,\pi]^d}
  \left|\frac{1}{4 \sum_{j=1}^{d} \sin^2 (\frac{k_j}{2}) +m^2}\right|^2
  \frac{dk}{(2\pi)^d}
  \sim \begin{cases}
    {\sf b}  \log m^{-2}  & (d=4)
    \\
    \bubble_0 & (d>4),
  \end{cases}
\end{equation}
with ${\sf b} = 1/(2\pi^2)$ and a $d$-dependent constant $\bubble_0
\in (0,\infty)$.
In particular, the expected time that two independent simple random
walks, without killing, spend intersecting each other is finite in
dimension $d>4$, but infinite in $d=4$.

\medskip

Our main result is the following theorem.

\begin{theorem} \label{thm:suscept}
  Let $d= 4$ and let $g > 0$ be sufficiently small.
  There exists $A_g  >0$ such that, as $\varepsilon \downarrow 0$,
  \begin{equation} \label{e:chieps-asympt}
    \chi(g,\nu_c + \varepsilon)
    \sim A_g
      \varepsilon^{-1} (\log \varepsilon^{-1})^{1/4} .
  \end{equation}
  As $g\downarrow 0$,
  \begin{equation}
    \label{e:cgasy}
    A_g = ({\sf b} g)^{1/4}(1+O(g)) .
  \end{equation}
\end{theorem}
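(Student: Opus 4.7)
The plan is to convert the walk susceptibility into a supersymmetric field-theoretic quantity and then analyse it with the renormalisation group (RG) machinery developed in the companion papers. First, by the Brydges--Fr\"ohlich--Spencer integral representation, for any $m^2 > 0$,
\[
 G_{g,\nu}(a,b) = \int \bar\phi_a\, \phi_b\, e^{-V_0(\phi,\psi)}\, d\mu_{C_{m^2}}(\phi,\psi),
\]
where $d\mu_{C_{m^2}}$ is the Gaussian super-measure with covariance $(-\Delta+m^2)^{-1}$ on a torus $\Lambda_N$ of side $L^N$, $V_0 = \sum_x ( g\tau_x^2 + \nu_0\tau_x)$ with $\nu_0 = \nu - m^2$, and $\tau_x = \bar\phi_x\phi_x+\bar\psi_x\psi_x$. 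The susceptibility $\chi$ becomes the sum over $b$ of this super-expectation. The mass $m^2$ is a free tuning parameter and will ultimately be sent to $0$ after $N\to\infty$.

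Next I would run the RG iteration. The covariance is decomposed as $C_{m^2} = \sum_{j=1}^N C_j$ into finite-range covariances from the companion paper, and the super-expectation is performed one scale at a time. Between scales the integrand is kept in the standard $(I_j,K_j)$ representation: $I_j = e^{-V_j}$ is local and parametrised by running couplings $(g_j,\nu_j,z_j)$, while $K_j$ is a small non-perturbative polymer remainder controlled in the weighted norms of the companion papers. The perturbative flow, as computed in the perturbation-theory companion paper, has the form
\[
  g_{j+1} = g_j - \beta_j g_j^2 + O(g_j^3), \qquad \beta_j \to {\sf b}\log L^2,
\]
together with linearisations of $\nu_j$ and $z_j$ along the stable manifold. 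Combined with a contraction estimate for $K_j$ from the integration-elimination and step-of-the-RG companion papers, a stable-manifold / implicit-function argument picks out a unique initial condition $\nu_0 = \nu_c(g,m^2)$ that keeps the flow bounded for all $j\le N$; the limit $m^2 \downarrow 0$ gives the $\nu_c(g)$ of \eqref{e:chi-nuc}.

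With the flow in hand, the asymptotics \eqref{e:chieps-asympt}--\eqref{e:cgasy} come out as follows. Integrating the one-loop beta function gives $g_j \sim 1/({\sf b}(\log L^2)\,j)$ up to the mass scale $j_m$ defined by $m^2 L^{2j_m}\asymp 1$. The linearised multiplier of $\nu_j$ along the stable manifold equals $L^2(1 + a g_j + O(g_j^2))$ for an explicit $a>0$, so its product across scales $0,\dots,j_m$ equals $L^{2j_m}$ times a factor $\exp(a \sum_{j<j_m} g_j )\sim j_m^{a/({\sf b}\log L^2)} = j_m^{1/4}\sim(\log m^{-2})^{1/4}$. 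This yields $\varepsilon = \nu-\nu_c \sim m^{2}(\log m^{-2})^{1/4}$, equivalently $m^{-2} \sim \varepsilon^{-1}(\log\varepsilon^{-1})^{1/4}$. Combining with the end-of-flow identity $\chi(g,\nu) = m^{-2}(1+o(1))$ produced by summing the residual Gaussian two-point function at scale $j_m$ and taking $N\to\infty$ gives \eqref{e:chieps-asympt}. The explicit constant \eqref{e:cgasy} is obtained by tracking the prefactor $({\sf b}g)^{1/4}$ that the initial data $g_0=g$ contributes to the leading-order solution of the $g$-flow.

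The main obstacle is the simultaneous, multi-scale control of all ingredients: the non-perturbative remainder $K_j$ must remain small in the weighted polymer norms for all $j \le N$ while $(g_j,\nu_j,z_j)$ track their perturbative trajectories, and the construction of the critical stable manifold (which defines $\nu_c$) must be uniform in $N$ and differentiable in the observable source fields used to extract the two-point function. A further delicate point is the endgame at the mass scale $j_m$, where the purely Gaussian decay of the remaining $C_j$ must be glued to the output of the interacting flow so as to yield the precise prefactor in \eqref{e:cgasy} rather than merely the correct exponent $\tfrac14$.
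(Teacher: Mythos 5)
Your overall architecture --- the supersymmetric integral representation, the finite-range covariance decomposition with progressive integration, the $(I_j,K_j)$ polymer representation with a stable-manifold choice of critical initial data, and the end-of-flow identity $\chi\sim m^{-2}$ --- is indeed the paper's. But there are two genuine gaps. First, your $V_0$ contains only $g\tau^2+\nu_0\tau$ with $\nu_0=\nu-m^2$: you have dropped the field-strength coupling $z_0\tau_{\Delta}$ and the accompanying rescaling $\phi\mapsto(1+z_0)^{1/2}\phi$ of \eqref{e:Vsplit}--\eqref{e:gg0}. In $d=4$ the monomial $\tau_{\Delta}$ is marginal and is generated by the flow, so the relevant/marginal coupling space is three-dimensional; the centre-stable manifold must be constructed with the boundary condition $z_\infty=0$ as well as $\mu_\infty=0$, and the exact identity is $\chi=(1+\tilde z_0)/\tilde m^2$ rather than $1/\tilde m^2$. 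Without $z_0$ the flow does not close in the form you wrote.

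Second, and more seriously, the step that produces the exponent $\tfrac14$ does not cohere as stated. The linearised multiplier of the (transformed) mass coupling is $L^2(1-\gamma\beta_j\bar g_j)$ with $\gamma=\tfrac14$, see \eqref{e:mubar}: the correction is \emph{negative}, and the product over scales is $L^{2j}(\bar g_j/g_0)^{1/4}\asymp L^{2j}j^{-1/4}$, the reciprocal of your $L^{2j}j^{+1/4}$. With your multiplier $L^2(1+ag_j)$, the usual ``tune $\varepsilon$ so that the relevant perturbation is $O(1)$ at the mass scale'' heuristic would give $\varepsilon\sim m^2(\log m^{-2})^{-1/4}$, i.e.\ the wrong sign of the logarithmic exponent. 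Even with the sign corrected, this heuristic is not a proof: the mass scale is fixed by $m^2L^{2j_m}\asymp 1$, not by when $\mu_j$ becomes $O(1)$, and no asymptotic relation between $\varepsilon$ and $m^2$ is established this way. The paper's actual mechanism is different: it proves the exact identity \eqref{e:chi-m} on the critical manifold together with the derivative asymptotics \eqref{e:chiprime-m}, the latter obtained by differentiating the entire flow $(V_j,K_j)$ with respect to $\nu_0$ (Lemma~\ref{lem:gzmuprime}), which yields $\partial\hat\chi/\partial\nu_0\sim -m^{-4}c/(g_0\bubble_{m^2})^{1/4}$; one then sets $F=1/\chi$, derives the ODE $F'\sim \mathrm{const}\,(-\log F)^{-1/4}$ as in \eqref{e:Fprime}, and integrates it via Lemma~\ref{lem:ODE} to obtain \eqref{e:chieps-asympt} and the constant \eqref{e:cgasy}. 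This differentiation-plus-ODE-integration step is the engine of the proof and is absent from your proposal. (A minor further point: no observable source fields are needed for the susceptibility; it is extracted from the constant test function via $D^2Z_N^0(0,0;1,1)$.)
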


Let ${\sf a} = 2C_0(0)$ where $C_0(0)>0$ is the expected total time
spent at the origin by the simple random walk, $C_0(0) = \int_0^\infty
P(X(T)=0) \, dT = E_0(\int_0^\infty \1_{X(T)=0} \,dT)$.  By an
elementary application of Jensen's inequality, we prove in
Lemma~\ref{lem:csub} that $\nu_c(g) \in [-{\sf a}g,0]$ for all $d >
2$.  As a corollary of the proof of Theorem~\ref{thm:suscept}, we
obtain the following asymptotic formula for the critical value.

\begin{theorem} \label{thm:nuc}
  Let $d=4$ and ${\sf a} = 2C_0(0)$.
  As $g \downarrow 0$,
  \begin{equation} \label{e:nucasy}
    \nu_c(g) = - {\sf a} g + O(g^2).
  \end{equation}
\end{theorem}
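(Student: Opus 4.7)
The plan is to derive Theorem~\ref{thm:nuc} as a by-product of the renormalisation group (RG) analysis that proves Theorem~\ref{thm:suscept}. By Lemma~\ref{lem:csub}, the lower bound $\nu_c(g) \ge -{\sf a}\,g$ is already in hand for all $d > 2$. The content of the theorem therefore reduces to the matching upper bound $\nu_c(g) \le -{\sf a}\,g + O(g^{2})$, and for this it suffices to exhibit a value $\nu_{\ast}(g) = -{\sf a}\,g + O(g^{2})$ with $\chi(g,\nu) < \infty$ for every $\nu > \nu_{\ast}(g)$. I will identify $\nu_{\ast}(g)$ with the value of $\nu$ singled out by the RG flow.

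First I would show that the proof of Theorem~\ref{thm:suscept} actually constructs $\nu_c(g)$ explicitly. For each sufficiently small $g > 0$, the RG construction produces a unique initial parameter $\nu^{\rm RG}_c(g)$ characterised by the condition that the renormalised couplings remain bounded under the flow, i.e.\ that they lie on the stable manifold of the Gaussian fixed point in the presence of the $g\tau^{2}$ coupling. The asymptotic formula \eqref{e:chieps-asympt} is established at $\nu = \nu^{\rm RG}_c(g) + \varepsilon$ for $\varepsilon > 0$; it shows simultaneously that $\chi(g, \nu^{\rm RG}_c(g) + \varepsilon) < \infty$ for every $\varepsilon > 0$ and that $\chi(g, \nu^{\rm RG}_c(g) + \varepsilon) \to \infty$ as $\varepsilon \downarrow 0$. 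Combined with the monotonicity of $\chi(g,\cdot)$ in $\nu$ and the characterisation \eqref{e:chi-nuc}, this forces $\nu^{\rm RG}_c(g) = \nu_c(g)$.

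Next I would extract the leading $g$-behaviour of $\nu^{\rm RG}_c(g)$ from the perturbative RG recursion. At first order in the coupling, the flow of the running mass at scale $j$ is driven by a tadpole contraction against the scale-$(j{+}1)$ piece $C_{j+1}$ of the finite-range decomposition of $(-\Delta)^{-1}$, of the schematic form
\begin{equation}
  \nu_{j+1} \;=\; \nu_j \;-\; 2\,g_j\,C_{j+1}(0,0) \;+\; O\bigl(g_j^{2}\bigr) \;+\; O\bigl(g_j\,\nu_j\bigr),
\end{equation}
with $g_j = g + O(g^{2})$ at leading order. The stable-manifold condition amounts to the fixed-point equation requiring $\nu^{\rm RG}_c(g)$ to equal the summed downward flow, which to first order gives
\begin{equation}
  \nu_c(g) \;=\; -2g \sum_{j=0}^{\infty} C_{j+1}(0,0) \;+\; O\bigl(g^{2}\bigr) \;=\; -2g\,C_{0}(0) \;+\; O\bigl(g^{2}\bigr) \;=\; -{\sf a}\,g + O\bigl(g^{2}\bigr),
\end{equation}
using the telescoping identity $\sum_{j\ge 0} C_{j+1}(0,0) = C_{0}(0)$ for the finite-range decomposition and the definition ${\sf a} = 2 C_{0}(0)$.

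The main obstacle is the uniform control of the error terms $O(g_j^{2})$ and $O(g_j\,\nu_j)$ across the infinitely many RG scales: summed naively, a scale-$j$ contribution of size $g_j^{2}$ could produce a logarithmic divergence in $j$ that would spoil the claimed $O(g^{2})$ remainder in \eqref{e:nucasy}. This is precisely the content of the stability and contraction estimates that underlie Theorem~\ref{thm:suscept}: they ensure that $g_j$ decreases rapidly enough and that $\nu_j$ remains appropriately small relative to $g_j$, so that the cumulative error indeed sums to $O(g^{2})$. Once those inputs are in place, the linear-in-$g$ coefficient ${\sf a}$ is read off by the elementary first-order calculation above, completing the proof.
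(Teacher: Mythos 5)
Your overall strategy is the same as the paper's: identify $\nu_c(g)$ with the renormalisation-group-critical initial mass (the paper does this in Proposition~\ref{prop:changevariables} via \eqref{e:nucinf}, essentially your finiteness/divergence argument), then evaluate that initial mass to second order by summing the tadpole contributions $2C_{j+1;0,0}\,g_j$ over scales and using $\sum_{j}C_{j+1;0,0}=C_0(0)$. (Your appeal to Lemma~\ref{lem:csub} for the lower bound buys nothing: once the RG value is computed to precision $O(g^2)$ you get both inequalities at once, and without that computation the lower bound alone does not help.)

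The genuine gap is in the error control, which you rightly single out as the main obstacle but then dispose of with a mechanism that fails in $d=4$. You assert the cumulative error is $O(g^2)$ because ``$g_j$ decreases rapidly enough.'' It does not: at the critical point the running coupling decays only logarithmically, $\gbar_j\sim \gbar_0/(1+\beta\gbar_0 j)$, so a per-scale error of size $O(g_j^2)$, as written in your schematic recursion for $\nu_j$, sums to $\sum_j g_j^2=O(g_0)$ (cf.\ \eqref{e:chigbd-bis}) --- one full power of $g$ short of the claim. The mechanism that actually closes the argument, and which constitutes the substance of the paper's proof in Section~\ref{sec:pfmr}, is exponential localisation to the first $O(1)$ scales: since $\mu_j=L^{2j}\nu_j$ is the expanding coordinate, solving the recursion backwards from $\much_\infty=0$ weights the scale-$l$ contribution to $\nu_0$ by $L^{-2(l+1)}\prod_{k\le l}(1-\gamma\beta_k\gch_k)^{-1}$ as in \eqref{e:mubar0}; equivalently, the error in the $\nu$-recursion is $O(L^{-2j}\chi_j g_j^2)$, not $O(g_j^2)$. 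Combined with $C_{l+1;0,0}=O(L^{-2l})$ from \eqref{e:scaling-estimate}, this makes both the second-order errors and the drift term $\sum_l C_{l+1;0,0}(\gch_l-g_0)$ come out $O(g_0^2)$; the paper handles the latter with a Fubini/telescoping argument. Two smaller points: your schematic recursion $\nu_{j+1}=\nu_j-2g_jC_{j+1}(0,0)+\cdots$ is inconsistent in sign with your conclusion (the backward solution of $\nu_{j+1}\approx\nu_j+2C_{j+1;0,0}g_j$ with vanishing final condition is what produces $\nu_0=-2\sum_l C_{l+1;0,0}g_l+\cdots$), and the passage from the RG parameter $g_0$ back to the original coupling $g$ requires $g_0=g+O(g^2)$ and the factor $(1+z_0^c)^{-1}=1+O(g_0)$ from \eqref{e:g0g}, which affect only the $O(g^2)$ term but must be accounted for.
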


Our method of proof of Theorems~\ref{thm:suscept}--\ref{thm:nuc} is
based on a rigorous renormalisation group analysis, and applies more
widely.  In particular, it is used in \cite{BBS-saw4} to prove that
the critical two-point function $G_{g,\nu_c(g)}(0,x)$ is asymptotic to
a multiple of $|x|^{-2}$ as $|x|\to\infty$ in dimension $d= 4$.  Also,
work is in progress to extend our methods to study the weakly
self-avoiding walk with nearest-neighbour contact attraction
\cite{BBS-saw-sa} in dimension $d = 4$.  In \cite{BBS-phi4-log}, we
apply the renormalisation group method to study the critical behaviour
of the 4-dimensional $n$-component $|\varphi|^4$ spin model, for all
positive integers $n \ge 1$.  The existence of logarithmic corrections
to scaling for certain critical 4-dimensional polymer networks, and
for various critical correlation functions for the $|\varphi|^4$ model
is proved in \cite{ST-phi4}.

\subsection{Discussion}
\label{sec-intro}

Since ${\sf a} = 2\int_0^\infty P(X(T)=0) \, dT > 0$, \eqref{e:nucasy}
shows that $\nu_c(g) < 0$ for $g>0$.  In addition, $\nu_c(g) \to 0$ as
$g\downarrow 0$, as expected since $\nu_c(0)=0$ is the critical point
of the simple random walk.

The factor $\varepsilon^{-1}$ in \eqref{e:chieps-asympt} corresponds
to the linear divergence of the simple random walk susceptibility:
\begin{equation}
  \chi(0,m^2) = \sum_{x \in \Z^d} C_{m^2}(x) = m^{-2},
\end{equation}
while the logarithmic factor in \eqref{e:chieps-asympt} arises from
the logarithmic divergence in \eqref{e:freebubble}.
Note that $A_g$ tends to $0$ as $g\downarrow 0$, as expected since
there is no logarithmic correction for the simple random walk ($g=0$).
It has been observed that if the Fourier transform of the critical
two-point function is bounded by a multiple of $|k|^{-2}$ as $k \to 0$
(as is known for $d \ge 5$ \cite{HS92a} and as is predicted to be true
in all dimensions), then the susceptibility can have \emph{at most} a
logarithmic correction for $d=4$ (see \cite{BFF84} and
\cite[Theorem~1.5.4]{MS93}).  The exponent $\frac{1}{4}$ in
\eqref{e:chieps-asympt} is predicted to be universal for models of
self-avoiding walk in four dimensions. In particular, it is predicted
to be the same for the usual discrete-time \emph{strictly}
self-avoiding walk \cite{MS93}.

For $d=4$, Theorem~\ref{thm:suscept} and a standard Tauberian theorem
\cite[Chapter~XIII]{Fell71} imply that
\begin{equation} \label{e:ct}
  \frac{1}{T}\int_0^T c_S e^{\nu_cS} \; dS
  \sim A_g (\log T)^{1/4}
  \quad (T\to \infty).
\end{equation}
It is believed that \eqref{e:ct} remains true without Ces\`aro average
in $T$, i.e., that
\begin{equation} \label{e:ct-noaverage}
  c_T
  \sim A_g e^{-\nu_c T}   (\log T)^{1/4}
  \quad (T\to\infty),
\end{equation}
but our present estimates do not suffice to prove \eqref{e:ct-noaverage}.
Furthermore, denoting by $E^{g,T}_a$ the expectation for the measure
of weakly self-avoiding walks of length $T$, i.e.,
\begin{equation}
  E^{g,T}_a(F(X)) = \frac{E_a(e^{-gI(T)} F(X))}{E_a(e^{-gI(T)})}
  ,
\end{equation}
it is believed that for $p \ge 1$,
\begin{equation}
  \label{e:EXtp}
  \left(E^{g,T}_0|X(T)|^p\right)^{1/p} \sim c_{g,p}\, T^{1/2} (\log T)^{1/8}
  \quad (T\to \infty),
\end{equation}
and that $(\lambda^{-\frac{1}{2}} (\log \lambda)^{-\frac{1}{8}}
X(\lambda T))_{T\geq 0}$ converges as a process to a multiple of
Brownian motion as $\lambda\to\infty$.
(Such convergence is known for
the $4$-dimensional loop-erased random walk, with exponent
$\frac{1}{6}$ rather than $\frac 18$ for the logarithm
\cite{Lawl86,Lawl95}.)

Renormalisation group methods have been used previously to study
weakly self-avoiding walk on a 4-dimensional \emph{hierarchical}
lattice.  The continuous-time version of the model has been studied in
the series of papers \cite{BEI92,GI95,BI03c,BI03d}.  In particular, in
\cite{BI03d}, the results of \cite{BI03c} were extended to prove a
result analogous to \eqref{e:EXtp} for the $4$-dimensional
hierarchical lattice.  This was achieved by a contour integral
analysis of $G_\nu(x)$, with $\nu$ \emph{complex}.  Our methods would
require further development to follow the same procedure for $\Z^4$.
More recently, a completely different renormalisation group approach
to the discrete-time weakly self-avoiding walk on a 4-dimensional
hierarchical lattice has been developed in \cite{Ohno13}.  A variant
of the 4-dimensional Edwards model was analysed in \cite{IM94} using a
renormalisation group method; this variant is not a model of walks
taking steps in a lattice, but it is presumably in the same
universality class as the 4-dimensional self-avoiding walk, 
and the results of \cite{IM94} bear some relation to our results.
Some steps towards an understanding of the behaviour in dimension
$d=4-\epsilon$ are taken in \cite{MS08} (the work of \cite{MS08} is
formulated in dimension $3$ but it mimics the behaviour of the
nearest-neighbour model in dimension $4-\epsilon$).

Our renormalisation group analysis has grown out of the methods of
\cite{BEI92,BI03d}, but in a much extended and generalised form.
It is based on an exact functional integral
representation of the two-point function of the continuous-time
self-avoiding walk as the two-point function of a
supersymmetric quantum field theory,
containing both boson and fermion fields.  Such integral representations
are summarised in \cite{BIS09}.  These representations
are inspired by the observation of de Gennes \cite{Genn72} that the
self-avoiding walk problem can be regarded as the $n=0$ limit of the
$n$-vector model (see also \cite[Section~2.3]{MS93}).  The basic
observation of de Gennes was that in a random walk representation of
the $n$-vector model every closed loop contributes a factor $n$.  When
$n=0$, closed loops do not contribute, leading to self-avoiding walks.
The $n$-vector model is closely related to the $n$-component $|\varphi|^4$
model.  For $n=1$, the critical 4-dimensional $\varphi^4$ model was
analysed using block spin renormalisation in \cite{GK85,GK86},
and via partially renormalised phase space expansion in \cite{FMRS87}.
In both approaches, the
critical two-point function was controlled.
Block spin methods were extended from the
critical point to its neighbourhood in
\cite{Hara87,HT87}, where logarithmic corrections for the
susceptibility and correlation length were derived for the
4-dimensional one-component $\varphi^4$ model
(in particular, the susceptibility has
exponent $\frac{1}{3}$ for the logarithm).  However, it is not clear
how to prove theorems about the scaling limit of the self-avoiding walk,
in a rigorous mathematical sense, via an analysis of an $n\to 0$ limit
of the $n$-vector model or $n$-component $|\varphi|^4$
theory.

On the other hand, the notion was developed in
\cite{BIS09,LeJa87,Lutt83,McKa80,PS80} that while an $n$-component
{\em boson} field $\phi$ associates $n$ to each closed loop, an
$n$-component {\em fermion} field $\psi$ associates $-n$.
With both fields present, the net
effect is to associate zero to each closed loop.  This provides a way
to realise de Gennes' idea, without any nonrigorous limit.  Moreover,
it was also understood that the fermion field can be regarded as
nothing more than the differential of the boson field, with the
anticommuting nature of fermions being represented by anticommuting
differential forms.  A representation of the self-avoiding walk
two-point function as the two-point function of a supersymmetric field
theory, sometimes referred to as the $\tau$-isomorphism, is central to
the analysis of \cite{BEI92,BI03c,BI03d}.  A self-contained derivation
of this integral representation is given in \cite{BIS09}, both for
weakly self-avoiding walk in continuous time and for strictly
self-avoiding walk in discrete time.

We use
the integral representation to rewrite the two-point function
of the continuous-time weakly self-avoiding walk as the two-point
function of a supersymmetric field theory, and apply a rigorous
renormalisation group argument to analyse the field theory.  Key
steps in the method are developed in the series of papers
\cite{BS11,BBS-rg-flow,BBS-rg-pt,BS-rg-norm,BS-rg-loc,BS-rg-IE,BS-rg-step}.
In the present paper, we rely heavily on those developments and show
how they can be combined to analyse the susceptibility
in
the critical dimension $d = 4$.
The proof here is thus not self-contained, but rather
reveals its general structure, with reliance on
substantial details obtained elsewhere.

In \cite{BBS-phi4-log}, we extend our analysis to the $n$-component
$|\varphi|^4$ model.  Among other results, we prove that its
susceptibility obeys
\begin{equation}
  \label{e:chin}
  \chi(n,g,\nu_c+\varepsilon)
  \sim A_{g,n} \varepsilon^{-1}(\log\varepsilon^{-1})^{(n+2)/(n+8)}.
\end{equation}
This confirms the logarithmic correction with exponent
$\frac{n+2}{n+8}$ predicted for $n \ge 1$ in \cite{LK69,BGZ73,WR73},
and generalises the rigorous result of \cite{Hara87,HT87} for $n=1$.
Setting $n=0$ in \refeq{chin} gives a formula consistent with
\refeq{chieps-asympt}, but our proof for the weakly self-avoiding walk
does not use any non-rigorous $n \to 0$ limit.

\subsection{Organisation}
\label{sec:org}

The proof of Theorem~\ref{thm:suscept}--\ref{thm:nuc} is divided into sections,
as follows.

In Section~\ref{sec:fv}, we show that the susceptibility of the weakly
self-avoiding walk on $\Z^d$ is well approximated by replacing $\Z^d$
by a sequence of finite tori $\Lambda_N = \Z^d/L^N\Z^d$, as
$N\to\infty$ with a fixed integer $L>1$.

In Section~\ref{sec:intrep}, we explain how the two-point function of
weakly self-avoiding walk on the finite set $\Lambda_N$ can be
represented as the two-point function of a supersymmetric field theory
on $\Lambda_N$.  The latter is a certain integral over differential
forms on the finite-dimensional linear manifold $\C^{\Lambda_N}$, and
has an interpretation as a Gaussian \emph{super-expectation}.  From
that point onward, we no longer consider walks, and instead focus
attention on the study of such integrals.

In Section~\ref{sec:chvar}, we express the two-point function of the
weakly self-avoiding walk in terms of the two-point function of a
simple random walk with \emph{renormalised} parameters.  We show that
Theorem~\ref{thm:suscept} follows if these renormalised parameters can
be chosen suitably.

In Section~\ref{sec:rg}, we explain how the integrals of
Section~\ref{sec:intrep} can be evaluated progressively.  This
progressive integration is the starting point for a multiscale
analysis and defines the \emph{renormalisation group map}, which we
parametrise by coordinates $(V,K)$.  The $V$-coordinate is
$3$-dimensional and describes the important (``relevant'' and
``marginal'') directions of the renormalisation group map.  The
$K$-coordinate is infinite dimensional and complements $V$ to a
complete description of the problem.  Thus the leading contributions
in \eqref{e:chieps-asympt}--\eqref{e:nucasy} will be determined by
$V$, but the control of $K$ is at the heart of obtaining a
mathematically rigorous result.

In Section~\ref{sec:rg-map}, we consider the definition and properties
of a single application of the renormalisation group map, by making
extensive use of results developed in the companion series of papers,
especially \cite{BBS-rg-pt,BS-rg-step}.  This includes the definition
of an infinite dimensional dynamical system which describes the
infinite volume limit $\Lambda_N \uparrow \Z^d$.

In Section~\ref{sec:bulk-flow}, we study the evolution of the
renormalisation group map, by applying a result of \cite{BBS-rg-flow}
concerning dynamical systems, together with the estimates given in
Section~\ref{sec:rg-map}.  We analyse the stability of the
renormalisation group map near the Gaussian fixed point corresponding
to the simple random walk $g=0$.  More precisely, we construct a
centre stable manifold near this fixed point.

In Section~\ref{sec:pfmr}, we prove
Theorems~\ref{thm:suscept}--\ref{thm:nuc} using the results of
Sections~\ref{sec:fv}--\ref{sec:bulk-flow}.  The centre stable
manifold constructed in Section~\ref{sec:bulk-flow} plays a crucial
role in the identification of the critical point.  To deduce the
logarithmic correction for the susceptibility for $d=4$, we study
infinitesimal deviations of the flow from the centre stable manifold,
which allows the derivative of $\chi$ with respect to $\nu$ to be
computed.  An elementary but important technical aspect is that the
results of the previous sections permit us to focus our attention on
this derivative for the case of finite volume $\Lambda_N$.

Throughout the remainder of this paper, we emphasise the source of the
power of the logarithm in \eqref{e:chieps-asympt} by writing
\begin{equation}
  \label{e:gamconv}
  \gamma = \frac{1}{4}.
\end{equation}
Thus $\gamma$ does not denote the critical exponent as in
\eqref{e:gamdef}, but rather the exponent of the logarithmic
correction to the critical exponent~$1$.

\section{Finite volume approximation}
\label{sec:fv}

The proof begins by approximation of $\Z^d$ by a sequence of finite
tori of period $L^N$, which we denote by $\Lambda=\Lambda_N = \Z^d/L^N\Z^d$.
Let $E^{\Lambda_N}_a$ denote the expectation of the continuous-time
simple random walk on $\Lambda_N$, started from $a \in \Lambda_N$.
Let
\begin{equation}
  \lbeq{cNTdef}
  c_{N,T}(a,b) = E^{\Lambda_N}_a(e^{-gI(T)}\1_{X(T)=b})
  ,
  \quad\quad
  c_{N,T} = E^{\Lambda_N}_a(e^{-gI(T)}).
\end{equation}
We define the two-point function for the torus by
\begin{equation}
  G_{N,\nu}(a,b) = \int_0^\infty c_{N,T}(a,b) e^{-\nu T} \; dT,
\end{equation}
and define the corresponding torus susceptibility by
\begin{equation}
  \chi_N(\nu)
  =
  \sum_{b\in\Lambda_N} G_{N,\nu}(a,b) = \int_0^\infty c_{N,T} e^{-\nu T} \; dT
  .
\end{equation}
By the Cauchy--Schwarz inequality,
$T= \sum_{x\in \Lambda} L_T^x \le (|\Lambda|I(T))^{1/2}$, and hence
\begin{equation}
  \chi_N(\nu)
  \leq
  \int_0^\infty e^{-gT^2/|\Lambda_N|} e^{-\nu T} \; dT < \infty
  \quad
  \text{for all $\nu \in \R$.}
\end{equation}
The following lemma shows that $\chi_N$ is a good approximation to
$\chi$.

\begin{lemma} \label{lem:suscept-finvol}
  Let $d>0$.
  For all $\nu \in \R$, $\chi_N(\nu)$ is non-decreasing
  in $N$, and obeys $\lim_{N\to\infty}\chi_N(\nu)=   \chi(\nu)$
  (with $\chi(\nu)=\infty$ for $\nu \le \nu_c$).
  The functions
  $\chi_N$ and $\chi$ are analytic on $\{\nu \in \C :{\mathrm{Re}}\nu > \nu_c\}$,
  and $\chi_N$ and all its derivatives
  converge uniformly on compact subsets of ${\mathrm{Re}}\nu > \nu_c$ to $\chi$ and its derivatives.
\end{lemma}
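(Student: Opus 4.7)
The plan is threefold: (i) establish pointwise monotonicity $c_{N,T} \le c_{M,T} \le c_T$ for $N \le M$ via a coupling of the walks; (ii) prove pointwise convergence $c_{N,T} \to c_T$ at fixed $T$ using the finite propagation of the walk; (iii) extend from real $\nu > \nu_c$ to the complex half-plane by standard Laplace transform arguments.

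For (i), I would realise the walk on $\Lambda_N = \Z^d/L^N\Z^d$ as the image $\pi(X)$ of a walk $X$ on $\Z^d$ under the canonical projection $\pi$, and analogously on $\Lambda_M$ for $N \le M$ via the projection $\Lambda_M \to \Lambda_N$ induced by $L^M\Z^d \subset L^N\Z^d$. With $\tilde L_T^y = \sum_{x \in \pi^{-1}(y)} L_T^x$, the trivial expansion $(\tilde L_T^y)^2 \ge \sum_{x \in \pi^{-1}(y)} (L_T^x)^2$ (all cross terms being nonnegative) summed over $y$ gives pathwise $I_{\Lambda_N}(T) \ge I_{\Lambda_M}(T) \ge I_{\Z^d}(T)$. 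Taking expectations then yields $c_{N,T} \le c_{M,T} \le c_T$, and integrating against $e^{-\nu T}$ transfers the monotonicity to $\chi_N$ for all real $\nu$. For (ii), the same coupling shows that on the event $\{\sup_{s \le T}\|X(s)\|_\infty < L^N/2\}$, which has probability tending to $1$ as $N \to \infty$ for fixed $T$, the map $\pi$ is injective on the range of $X$, hence $I_{\Lambda_N}(T) = I_{\Z^d}(T)$ almost surely on this event. Since $e^{-gI(\cdot)} \le 1$, dominated convergence yields $c_{N,T} \to c_T$. Combined with the monotonicity, the monotone convergence theorem gives $\chi_N(\nu) \uparrow \chi(\nu)$ for every real $\nu$, including the case $\nu \le \nu_c$ where both sides are $+\infty$.

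For (iii), for $\nu \in \C$ with $\mathrm{Re}\,\nu > \nu_c$, the bound $|c_T e^{-\nu T}| \le c_T e^{-\mathrm{Re}(\nu) T}$ together with $\chi(\mathrm{Re}\,\nu) < \infty$ makes the Laplace integrals defining $\chi$ and $\chi_N$ absolutely convergent; differentiation under the integral sign then establishes analyticity on $\{\mathrm{Re}\,\nu > \nu_c\}$ (and in fact $\chi_N$ is entire, by the Gaussian bound given before the lemma). For uniform convergence on a compact $K \subset \{\mathrm{Re}\,\nu > \nu_c\}$, I would choose $\nu_0 \in (\nu_c, \inf_{\nu \in K}\mathrm{Re}\,\nu)$ and note
\begin{equation*}
|\chi(\nu) - \chi_N(\nu)| \le \int_0^\infty (c_T - c_{N,T})\, e^{-\nu_0 T}\, dT,
\end{equation*}
whose right-hand side is independent of $\nu \in K$ and tends to $0$ by monotone convergence, since its $N=0$ value is $\chi(\nu_0) < \infty$. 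Uniform convergence of all derivatives on compact subsets then follows automatically via Cauchy's integral formula applied on a slightly enlarged compact set, i.e., Weierstrass's theorem on uniform limits of holomorphic functions. The only substantive step is the coupling inequality behind monotonicity; everything else is standard measure theory and complex analysis.
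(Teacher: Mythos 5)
Your proof is correct and follows essentially the same route as the paper: the same projection coupling of torus walks to the $\Z^d$ walk, the same sum-of-squares inequality for local times to get monotonicity of $c_{N,T}$, the same finite-speed-of-propagation argument for pointwise convergence, and monotone/dominated convergence plus standard complex analysis for the rest. The only cosmetic difference is at the very end, where you give an explicit uniform bound on $|\chi(\nu)-\chi_N(\nu)|$ and invoke Weierstrass's theorem, while the paper appeals to Montel's theorem; your version is, if anything, slightly more self-contained.
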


\begin{proof}
  Let $c_{N,T} = E^{\Lambda_N}(e^{-gI(T)})$ as in \refeq{cNTdef},
  and let $c_T = E(e^{-gI(T)})$ as in \refeq{suscept-def}.
  We first show that
  \begin{equation}
  \lbeq{ctmon}
    c_{N,T} \leq c_{N+1,T} \leq c_T
    ,\qquad
    \lim_{N \to \infty} c_{N,T} =c_T
  \end{equation}
  (for the inequality we assume $L^N \ge 3$).

  To see this, we observe that there is a one-to-one correspondence
  between nearest-neighbour walks on $\Z^d$ started at the origin and
  such walks on the finite torus $\Z_n^d$ if $n \geq 3$, by folding a
  walk on $\Z^d$ (the image under the canonical projection $\Z^d
  \twoheadrightarrow \Z_n^d$), and corresponding unfolding of walks on
  $\Z_n^d$ (unique for the nearest-neighbour walk when $n \geq 3$).
  Given a walk $X$ on $\Z^d$ starting at $0$, we denote the folded
  walk on $\Z_n^d$ by $X^n$.  This allows us to couple walks on tori
  of different diameter via the distribution of walks on $\Zd$.
  We denote the local time of a walk $X$ up to time $T$ by $L^x_T(X) =
  \int_0^T \1_{X(S)=x} \; dS$, and similarly the intersection local
  time by $I_T(X)$.
  Given $X$ and a positive integer $k$, we obtain
  \begin{align}
    I_T(X^{kn}) & = \sum_{x\in\Z_{kn}^d} \left(L^{x}_T(X^{kn}) \right)^2
    = \sum_{x\in\Z_n^d} \sum_{y\in\Z^d:  \|y\|_\infty < k}
    \left(L^{x+yn}_T(X^{kn}) \right)^2
    \nnb &
    \leq \sum_{x\in\Z_n^d}
    \left(\sum_{y\in\Z^d:  \|y\|_\infty < k} L^{x+yn}_T(X^{kn}) \right)^2
    = \sum_{x\in\Z_n^d}
    \left(L^{x}_T(X^{n})\right)^2
    = I_T(X^n).
  \end{align}
  Thus, with $n=L^N$, $k=L$, and with $X$ fixed,
  \begin{equation} \label{e:EI-cond}
    e^{-gI_{N+1,T}(X)}   \geq e^{-gI_{N,T}(X)}.
  \end{equation}
  Now we take the expectation over $X$ to obtain the first inequality
  of \refeq{ctmon}.
  This shows monotonicity in $N$ of $c_{N,T}$.
  Also, a folded walk can only have more intersections than its
  unfolding, so $I_T(X^n) \ge I_T(X)$ for any walk $X$ on
  $\Zd$ and for any $n$, so we also have $c_{N,T} \le c_{N+1,T} \le c_T$.

  For the convergence of $c_{T,N}$ to $c_T$, we first note that
  walks which do not reach distance $\frac 12 L^N$ from the origin
  do not contribute to the difference.
  Walks which do must take at least $\frac 12 L^N$ steps.  Therefore, since
  $I(T) \geq 0$, we conclude that $|c_T-c_{T,N}| \leq 2\Pbold(M_T > \frac 12 L^N)$,
  where $M_T$ is a rate-$2d$ Poisson process.
  The probability in the upper bound goes to zero as $N \to\infty$, so
  $\lim_{N\to\infty}c_{N,T} = c_T$, and \refeq{ctmon} is proved.

  With the monotone convergence theorem, \refeq{ctmon} gives
  \begin{equation}
    \chi(\nu)
    = \int_0^\infty \lim_{N\to\infty} c_{N,T} e^{-\nu T} \; dT
    = \lim_{N\to\infty} \chi_N(\nu)
    \quad  \text{for $\nu \in \R$}
    ,
  \end{equation}
  where both sides are finite if and only if $\nu > \nu_c$.
  Also,
  since
  $|c_{N,T} e^{-\nu T}| \le c_{N,T} e^{-({\rm Re}\nu) T} \le
  c_{T} e^{-({\rm Re}\nu) T}$,
  it follows from the
  dominated convergence theorem that
  \begin{equation}
  \chi(\nu) = \lim_{N\to\infty}\chi_N(\nu)
  \quad
  \text{for ${\rm Re} \nu > \nu_c$}.
  \end{equation}
  The analyticity of $\chi$ and $\chi_N$ follows from analyticity of
  Laplace transforms, 
  and the desired compact convergence of $\chi_N$ and all its derivatives
  then follows from Montel's theorem. 
\end{proof}

\section{Integral representation}
\label{sec:intrep}

The next step in the proof is to represent the two-point function for
walks on the finite set $\Lambda$ by an integral over the finite
dimensional space $\C^\Lambda$.  Before entering into the details of
the representation, we first briefly recall some basic facts about
integration of differential forms.

\subsection{Integration of differential forms}
\label{sec:intforms}

Let $\Lambda = \{1,\ldots, M\}$ be a finite set (e.g., a discrete torus)
of cardinality $M$.  Let $u_1,v_1,$ \ldots, $u_M,v_M$ be standard
coordinates on $\R^{2M}$. Then $du_1 \wedge dv_1 \wedge \cdots
\wedge du_M \wedge dv_M$ is the standard volume form on $\R^{2M}$,
where $\wedge$ denotes the usual anticommuting wedge product (see
\cite[Chapter~10]{Rudi76} for an introduction).
In the following, we drop the
wedge from the notation and write simply $du_idv_j$ in place of $du_i
\wedge dv_j$.  The one-forms $du_i$, $dv_j$ generate the Grassmann
algebra of differential forms on $\R^{2M}$.
A form which is a function of $u,v$ times a product of $p$ differentials is
said to have \emph{degree} $p$, for $p \ge 0$.
A sum of differential forms of even degree is called \emph{even}.
We will use the term \emph{form} as
an abbreviation for ``differential form.''

Any form $F$ of degree $2M$ can be written as $F=f(u,v)du_1dv_1\cdots
du_M dv_M$, and its integral is defined by
\begin{equation}
    \int F = \int_{\R^{2M}} f(u,v)du_1dv_1\cdots du_M dv_M,
\end{equation}
where the right-hand side is the usual Lebesgue integral of $f$ over
$\R^{2M}$.  For $p \ne 2M$, we define the integral over $\R^{2M}$ of a form of
degree $p$ to be zero.  Then the integral of an arbitrary
form, which is a linear combination of forms of degree $p$ for
different values of $p$, is defined by linearity; only its component
of top degree affects the value of the integral.

We introduce complex coordinates by setting $\phi_x = u_x + i v_x$,
$\phib_x = u_x-iv_x$ and $d\phi_x = du_x+idv_x$, $d\phib_x =
du_x-idv_x$, for $x \in \Lambda$.  Since the wedge product is
anticommutative, the following pairs all anticommute for every $x,y\in
\Lambda$: $d\phi_x$ and $d\phi_y$, $d\phib_x$ and $d\phi_y$,
$d\phib_x$ and $d\phib_y$.  In addition,
\begin{equation}
\label{e:duv}
    d\phib_x d\phi_x = 2i du_x dv_x.
\end{equation}

The integral of a function $f(\phi,\phib)$ (i.e., a \emph{$0$-form})
with respect to $\prod_{x\in \Lambda}d\phib_xd\phi_x$ is thus given by
$(2i)^M$ times the integral of $f(u+iv,u-iv)$ over $\R^{2M}$.  Note
that the product here can be taken in any order, since each factor
$d\phib_xd\phi_x$ has even degree (namely degree two).
It is convenient for us to define
\begin{equation}
\label{e:phipsi}
    \psi_x = \frac{1}{\sqrt{2\pi i}} d \phi_x,
    \quad
    \psib_x = \frac{1}{\sqrt{2\pi i}} d \phib_x,
\end{equation}
where we fix a choice of the square root and use this choice
henceforth.  Then
\begin{equation}
    \psib_x \psi_x =\frac{1}{2\pi i} d\phib_x d\phi_x = \frac{1}{\pi}du_x dv_x.
\end{equation}

By definition, a differential form $F$ can be written as
\begin{equation} \label{e:psipsib}
  \sum_{x,y} F_{x,y}(\phi,\bar\phi) \psi^x \bar\psi^{y}
  ,
\end{equation}
where the sum is over sequences $x$ and $y$ in $\Lambda$ (of any length)
and $\psi^x = \psi_{x_1}\cdots \psi_{x_p}$ for $x=(x_1,\ldots,x_p) \in\Lambda^p$.
Given a subset $X \subset \Lambda$, we denote by $\Ncal(X)$ the
algebra of \emph{even} forms $F$ with the restriction that the sum in \refeq{psipsib}
extends only over sequences in $X$ and the coefficients $F_{x,y}$ depend only
on $(\phi_x,\bar\phi_x)_{x\in X}$.  The algebra $\Ncal(X)$ is
commutative and associative.  For the special case $X=\Lambda$, we write
simply $\Ncal=\Ncal(X)$.
Later we discuss norms on $\Ncal$ that impose regularity conditions on its elements.

\subsection{Functions of forms}
\label{sec:formsfunc}

We refer to the variables $(\phi_{x})$ and the forms $(\psi_{x})$
in \refeq{psipsib}
as \emph{boson} and \emph{fermion} fields, respectively.
We view $\psi$ as an anticommuting analogue of $\phi$
and think of a differential form as a function of $\phi$ and $\psi$.
Differential forms have degree at most $2M$ and therefore, as
``functions'' of $\psi$, they are polynomial with degree at most $2M$.
We use terminology corresponding to this view of differential forms.
For example, by setting $\psib = \psi = 0$ in a form we mean to take
the projection to its degree-$0$ part.  In our context, the forms
$\psi, \bar\psi$ are often called \emph{Grassmann variables}, and
integrals of ``functions'' of $\psi$ in the standard sense of
differential forms, as explained in Section~\ref{sec:intforms}, are
the same as the Berezin integral \cite{Bere66,LeJa87}.

We also define functions of even forms.
Let $F=(F_j)_{j\in J}$ be a finite collection of even forms.
We say that $F$ is {\em even}.  Let $F_j^{0}$ denote the degree-$0$
part of $F_j$.  Given a $C^\infty$ function $f : \R^{J} \to\mathbb C$ we
define a form $f(F)$ by its Taylor series about the degree-$0$ part
of $F$, i.e.,
\begin{equation}
  \label{e:Fdef}
  f(F) = \sum_{\alpha} \frac{1}{\alpha !}
  f^{(\alpha)}(F^{0})
  (F - F^{0})^{\alpha}
\end{equation}
where $\alpha = (\alpha_j)_{j\in J}$ is a multi-index, with $\alpha ! = \prod_{j\in
J}\alpha_j !$, and $(F - F^{0})^{\alpha} =\prod_{j\in J} (F_{j} -
F_{j}^{0})^{\alpha_{j}}$.  Note that the summation terminates as
soon as $\sum_{j\in J}\alpha_j=M$ since higher order forms vanish, and
that the order of the product on the right-hand side is immaterial
when $F$ is even.  For example,
\begin{equation}
  e^{-\phi_{x}\phib_{x}
    - \psi_{x} \bar\psi_{x}}
  =  e^{-\phi_{x}\phib_{x}}
  \left(1 -  \psi_{x} \bar\psi_{x}\right).
\end{equation}

\subsection{Identity for two-point function}
\label{sec:Grep}

Let $\Lambda = \Lambda_N$ now denote the discrete torus in $\Zd$, as in Section~\ref{sec:fv}.
For $x \in \Lambda$, we define the forms
\begin{equation} \label{e:taudef}
  \tau_x
  = \phi_x \bar\phi_x
  + \psi_x  \wedge \bar\psi_x
  ,
\end{equation}
and
\begin{equation}
  \label{e:addDelta}
  \tau_{\Delta,x}
  =
  \frac 12 \Big(
  \phi_{x} (- \Delta \bar{\phi})_{x} + (- \Delta \phi)_{x} \bar{\phi}_{x} +
  \psi_{x} \wedge (- \Delta \bar{\psi})_{x} + (- \Delta \psi)_{x} \wedge \bar{\psi}_{x}
  \Big),
\end{equation}
where $\Delta$ is the lattice Laplacian on $\Lambda$ given by $\Delta
\phi_{x} = \sum_{e: |e|=1} (\phi_{x+e} - \phi_{x})$.
Atypically, we have left the wedge product $\wedge$ explicit in the above
definitions, for emphasis.  The following proposition is proved in
\cite{BEI92,BI03d}; see \cite[Theorem~5.1]{BIS09} for a
self-contained proof.  The integrand and integral on the right-hand
side of \eqref{e:Grep1} are as defined in Section~\ref{sec:intforms}.

\begin{prop}
  \label{prop:Grep}
  Let $d>0$.
  For $g>0$ and $\nu\in \R$, or $g = 0$ and $\nu > 0$,
  \begin{equation}
    \label{e:Grep1}
    G_{N,g,\nu}(a,b)
    =
    \int
    e^{-\sum_{x\in\Lambda}(\tau_{\Delta ,x} + g \tau_x^2 + \nu \tau_x)}
    \bar\phi_{\pp} \phi_{\qq}
    .
  \end{equation}
\end{prop}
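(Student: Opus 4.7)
My approach is to reduce the claim to the supersymmetric Dynkin-type identity (the so-called $\tau$-isomorphism), which relates Grassmann integrals of functions of $\tau_x$ to expectations over continuous-time random walks with local times. The core identity I would invoke (the version proved in BIS09 and BEI92, BI03d) is that for any sufficiently smooth $F \colon \R^\Lambda \to \C$,
\begin{equation*}
  \int e^{-\sum_{x}\tau_{\Delta,x}}\, F(\tau)\, \bar\phi_a \phi_b
  \;=\;
  \int_0^\infty E_a^{\Lambda}\bigl[F(L_T)\, \1_{X(T)=b}\bigr]\, dT,
\end{equation*}
where $L_T = (L_T^x)_{x\in\Lambda}$ is the vector of local times and $F(\tau)$ is understood in the sense of \eqref{e:Fdef}.

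Before applying this, I would check integrability. For $g>0$, the degree-$0$ part of $\exp(-\sum_x g\tau_x^2)$ equals $\exp(-g\sum_x |\phi_x|^4)$, which decays faster than the quadratic growth coming from expanding $e^{-\sum_x\tau_{\Delta,x}}$ as a polynomial in the $\psi,\bar\psi$ differentials, so the underlying Lebesgue integral on $\R^{2|\Lambda|}$ is absolutely convergent for every $\nu\in\R$. For $g=0$ and $\nu>0$ the exponent reduces to a standard positive-definite Gaussian action with covariance $(-\Delta+\nu)^{-1}$, which is again convergent.

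Next I would apply the $\tau$-isomorphism with the specific choice $F(t) = \exp\bigl(-\sum_x (gt_x^2 + \nu t_x)\bigr)$, which is entire and hence yields $F(\tau) = e^{-\sum_x (g\tau_x^2 + \nu \tau_x)}$ by \eqref{e:Fdef}. Evaluating $F$ on the local-time vector and using the elementary identities $\sum_{x\in\Lambda} L_T^x = T$ and $\sum_{x\in\Lambda}(L_T^x)^2 = I(T)$ (the latter being \eqref{e:ITdef} on $\Lambda$), I obtain $F(L_T) = e^{-gI(T)}e^{-\nu T}$. Substituting into the isomorphism identity gives
\begin{equation*}
  \int e^{-\sum_x(\tau_{\Delta,x} + g\tau_x^2 + \nu\tau_x)}\, \bar\phi_a \phi_b
  \;=\;
  \int_0^\infty E_a^{\Lambda}\bigl[e^{-gI(T)}\1_{X(T)=b}\bigr]e^{-\nu T}\,dT
  \;=\; G_{N,g,\nu}(a,b),
\end{equation*}
which is the desired formula.

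The substantive content of the proposition is packaged entirely into the $\tau$-isomorphism, so the main obstacle is that statement, not its application here. A self-contained derivation uses the identification of the fermion field with $(2\pi i)^{-1/2}$ times the differential $d\phi$ so that $\tau_x$ becomes $\phi_x\bar\phi_x$ plus an explicit two-form; this, combined with a random-walk expansion of the resolvent of $-\Delta+\text{diag}(\nu)$ and the cancellation of closed fermion/boson loops (supersymmetric localisation), produces the local-time representation. Since a complete proof is given in \cite[Theorem~5.1]{BIS09}, my write-up would simply cite it and carry out the two routine verifications above: the integrability check and the algebraic rearrangement of the exponent.
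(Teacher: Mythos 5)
Your proposal is correct and matches the paper's treatment: the paper offers no independent proof of Proposition~\ref{prop:Grep}, but simply cites \cite{BEI92,BI03d} and the self-contained derivation in \cite[Theorem~5.1]{BIS09}, which is exactly the $\tau$-isomorphism you invoke. Your added verifications (integrability for $g>0$ or for $g=0,\ \nu>0$, and the identification $F(L_T)=e^{-gI(T)}e^{-\nu T}$ via $\sum_x L_T^x=T$ and $\sum_x(L_T^x)^2=I(T)$) are the correct routine steps in applying that cited result.
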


Note that, by summation by parts on the torus $\Lambda$,
\begin{equation}
  \sum_{x\in\Lambda}\tau_{\Delta ,x}
  =
  \sum_{x\in\Lambda} \Big(
  \phi_{x} (- \Delta \bar{\phi})_{x} +
  \psi_{x} \wedge (- \Delta \bar{\psi})_{x}
  \Big)
  ,
\end{equation}
so the symmetrisation in \eqref{e:addDelta} is not needed for \eqref{e:Grep1}
(it is absent also in \cite[Theorem~5.1]{BIS09}).
However, later we need sums $\sum_{x\in X} \tau_{\Delta,x}$
with $X \subset \Lambda$ a proper subset, and there symmetrisation is important.

The right-hand side of \refeq{Grep1}
is the two-point function of a \emph{supersymmetric}
field theory with quartic self-interaction.
The partition function that results by dropping the factor $\bar\phi_{\pp}
\phi_{\qq}$ on the right-hand side of \eqref{e:Grep1} turns out to be
simply equal to $1$.  More generally, there is the remarkable
self--normalisation property that
\begin{equation}
  \label{e:self-norm}
  \int
  e^{
    -\sum_{x\in\Lambda}(p_{x}\tau_{\Delta ,x} + q_{x} \tau_x^2 + r_{x} \tau_x)
  }
  =
  1
  \quad\quad \text{for all  }
  p_{x}\ge 0,\, q_{x}> 0,\, r_{x} \in \Rbold
  ,
\end{equation}
as a result of supersymmetry.  Thus supersymmetry provides, in particular,
the simplification that
there is no need to conduct any analysis of a partition function.
A self-contained proof of \eqref{e:self-norm} and generalisations,
as well as a brief discussion of supersymmetry,
can be found in \cite{BIS09}.

In the remainder of this section, we discuss some of the
intuition associated to \eqref{e:Grep1}, but we do not make use of it afterwards.
Since $\tau^2 = |\phi|^4 + 2|\phi|^2|\psi|^2$,
the fermionic part of the right-hand side of \refeq{Grep1} is
\begin{equation}
  \lbeq{fermexp}
  e^{- \sum_{x,y\in \Lambda} \psi_x A_{xy}\bar\psi_y}
  \quad
  \text{with $A_{xy}= -\Delta_{xy}  + (\nu + 2g \phi_x\bar\phi_x )\delta_{xy} $}.
\end{equation}
One way to evaluate the integral on the right-hand side of \refeq{Grep1} is to expand
the exponential \refeq{fermexp}, keep only the top-degree part, and rearrange the order of the
differentials to obtain the standard volume form.  The anti-commutativity produces
a determinant, with the result that the top-degree part of
$e^{-\sum_{x\in\Lambda}(\tau_{\Delta ,x} + g \tau_x^2 + \nu \tau_x)}$ is equal to
\begin{equation}
  \lbeq{fermdet}
  \det(-\Delta + \nu + 2g|\phi|^2)
  e^{-\sum_{x\in\Lambda} (\phi_x(-\Delta \bar\phi)_x+g|\phi_x|^4+\nu|\phi_x|^2)}
  \prod_{x\in\Lambda} \frac{d\bar\phi_x \, d\phi_x}{2\pi i}.
\end{equation}
Thus there is an alternate version of the identity \refeq{Grep1}, in which the
exponential on its right-hand side is replaced by \refeq{fermdet}.
This procedure of integrating out of fermions has been useful in the study of other supersymmetric models
\cite{DS10,DSZ10},
but we do not follow this approach.  Instead, we work directly with the right-hand
side of \refeq{Grep1} throughout
the entire analysis.

If we drop the $\psi$ terms in the definitions of $\tau, \tau_{\Delta}$ in
\eqref{e:taudef}--\eqref{e:addDelta} (or equivalently drop the determinant
in \refeq{fermdet}), consider a real-valued
boson field $\varphi_x$ rather than a complex-valued one,
and integrate with respect to the
Lebesgue measure $\prod_{x\in\Lambda} d\varphi_x $ on $\R^\Lambda$,
the right-hand side of \eqref{e:Grep1} becomes the integral
\begin{equation}
  \lbeq{2comp}
  \int_{\R^\Lambda}
  \varphi_{\pp} \varphi_{\qq}
  e^{-\sum_{x\in\Lambda}(\frac12\varphi_x(-\Delta \varphi)_x + \frac14 g \varphi_x^4 + \frac12 \nu \varphi_x^2)}
  \prod_{x\in\Lambda} d\varphi_x
\end{equation}
which is the unnormalised two-point function of the $\varphi^4$-model.
The $\varphi^4$-model is a spin system in which each point
$x\in\Lambda$ carries a spin variable $\varphi_x\in \R$.  It is a
perturbation of the massless Gaussian free field, in which each spin
has a factor $e^{-U(\varphi_x)}$ with $U(t)=\frac14 gt^4+\frac12\nu
t^2$.  The term $\sum_{x\in\Lambda}\frac12 \varphi_x(-\Delta
\varphi)_x$ in the exponent can also be written as a sum of squares of
the gradient of $\varphi_x$, so fields with small gradient receive
larger weight, which is a ferromagnetic interaction. Completing the
square gives
\begin{equation}
  U(t) = \frac14 g\left(t^2 + \frac{\nu}{g} \right)^2 - \frac{\nu^2}{4g}.
\end{equation}
The term $-\nu^2/4g$ on the right-hand side cancels upon normalisation
of the integral \refeq{2comp} and is unimportant.  As in
Theorem~\ref{thm:nuc}, we are interested in negative values of $\nu$,
so that $U(t)$ has a double well shape, with minima at $t = \pm
\sqrt{|\nu|/g}$ separated by a barrier of height $\nu^2/g$.  As $\nu$
becomes increasingly negative, the two wells become increasingly deep
and increasingly separated, so it is plausible that there is a
critical value of $\nu$ at which long range correlations develop due
to the values of $\varphi_x$ becoming concentrated at the minima of
$U$.  The behaviour in dimension $d=4$ at and near this critical point
is the subject of the renormalisation group analyses of
\cite{FMRS87,GK85,GK86,Hara87,HT87}.  Although our method of proof is
not based on the above picture, the picture provide some intuition.
Our method is applied to the $n$-component $|\varphi|^4$ model in
\cite{BBS-phi4-log,ST-phi4}.

\subsection{Gaussian super-expectation}
\label{sec:gauss-sub}

The right-hand side of \eqref{e:Grep1} is an instance of a
\emph{Gaussian super-expectation}, defined as follows.
First, given a $\Lambda \times \Lambda$ matrix $A$, let
\begin{equation}
    S_{A} (\Lambda )
    =
    \sum_{x,y \in \Lambda}\Big(
    \phi_{x} A_{xy} \phib_{y} +
    \psi_{x} A_{xy} \psib_{y}
    \Big)
    .
\end{equation}

\begin{defn}
  \label{def:ExC}
  Given a positive-definite $\Lambda \times \Lambda$ matrix $C$, let $A=C^{-1}$.
  Given a form $F$ on $\C^\Lambda$, we define the
  \emph{Gaussian super-expectation} of $F$, with covariance $C$, by
  \begin{equation}
    \label{e:Ebolddef}
    \Ex_{C} F
    =
    \int
    e^{-S_A (\Lambda)} F
  \end{equation}
  where the integral on the right-hand side is defined as
  in Section~\ref{sec:intforms}.
\end{defn}

In the special case that $f$ is a
$0$-form, or in other words $f$ is a function of $(\phi,\bar\phi)$, $\Ebold_{C}f$
is equal to the standard Gaussian expectation for a complex-valued random
field $\phi$ with covariance $C$ (see \cite[Proposition~4.1]{BIS09}), i.e.,
\begin{equation}
  \label{e:intzeroform}
  \Ex_{C}f
  =
  \int_{\C^\Lambda} f(\phi,\bar\phi)
  d\mu_C(\phi,\bar\phi),
  \quad
  d\mu_C(\phi,\bar\phi)
  = Z_C^{-1} e^{-\sum_{x,y\in \Lambda}\phi_x C^{-1}_{xy}\bar\phi_y}
  \prod_{x \in \Lambda} d\bar\phi_x
  d\phi_x ,
\end{equation}
where $Z_C$ is a normalisation constant such that $\Ex_C 1 =1$.
In particular, $\Ebold_{C} \phib_{a} \phi_{b} = C_{ab}$, i.e.,
$C$ is the \emph{covariance} of the Gaussian measure
$\mu_C$. For a general differential form, $\Ex_C F$
loses the probabilistic interpretation, but nevertheless many of the
important properties of Gaussian integrals continue to hold for the Gaussian
super-expectation. For example,
$C$ is also the ``covariance'' of $\psi$,
i.e., $\Ex_C\bar\psi_a\psi_b = - \Ex_C\psi_b\bar\psi_a = C_{ab}$.
Also, as a result of supersymmetry and in contrast to the usual Gaussian
measure $\mu_C$ in \eqref{e:intzeroform},
there is no normalisation constant in \eqref{e:Ebolddef}.
Further discussion of the Gaussian super-expectation is given in
Section~\ref{sec:decomposition} below.

Formally, the right-hand side of \eqref{e:Grep1} looks like the Gaussian super-expectation of
$e^{-\sum_{x\in\Lambda} g\tau_x^2}$
with $A=-\Delta + \nu$ where $\Delta$ is the discrete Laplace operator on $\Lambda_N$.
However, the critical point $\nu_c$ is \emph{negative} according to \eqref{e:nucasy}.
Thus $-\Delta+ \nu$ is an \emph{indefinite} matrix for the values of $\nu$ of interest,
and therefore is not a proper covariance of a Gaussian expectation.
On the other hand, for any $\epsilon > 0$, \eqref{e:Grep1} can be written as
\begin{equation}
  \int
  e^{-\sum_{x\in\Lambda}(\tau_{\Delta ,x} + g \tau_x^2 + \nu \tau_x)}
  \bar\phi_{\pp} \phi_{\qq}
  =
  \Ex_C(e^{-\sum_{x\in\Lambda}(g\tau_x^2 + (\nu-\epsilon)\tau_x)} \bar\phi_\pp \phi_\qq)
\end{equation}
with the positive-definite covariance $C = (-\Delta+\epsilon)^{-1}$.
A careful division of the right-hand side of \eqref{e:Grep1} into a Gaussian expectation
and a perturbation is central to our analysis and is discussed next,
where we divide not just the $\tau$ term but also the $\tau_\Delta$ term.

\section{Reformulation in renormalised parameters}
\label{sec:chvar}

\subsection{Approximation by simple random walk}
\label{sec:ga}

An important notion in theoretical physics is that it is often
possible to approximate an interacting system by an \emph{effective}
non-interacting system with \emph{renormalised} parameters.  As a
first step towards implementing this, we write temporarily
\begin{align}
    V_{g,\nu,z;x}
    =
    V_{g,\nu,z;x}(\phi,\psi)
    &
    =
    g \tau_x^2 + \nu \tau_x + z\tau_{\Delta ,x}
    .
\end{align}
We fix $m^2 > 0$ and $z_0 > -1$, and set
$(\phi',\psi')=((1+z_0)^{1/2}\phi,(1+z_0)^{1/2}\psi)$ and similarly
for the conjugates.  By definition,
\begin{equation} \label{e:Vsplit}
  V_{g,\nu,1;x}(\phi',\psi')
  = V_{0,m^2,1;x}(\phi,\psi)
  + V_{g_0,\nu_0,z_0;x}(\phi,\psi)
  ,
\end{equation}
where
\begin{equation} \label{e:gg0}
  g_0 = g(1+z_0)^2, \quad \nu_0 = (1+z_0)\nu-m^2
  ,
\end{equation}
or equivalently,
\begin{equation}
\label{e:g0g}
    g = \frac{g_0}{(1+z_0)^2}, \quad
    \nu = \frac{\nu_0+m^2}{1+z_0}
  .
\end{equation}
For $X \subset \Lambda$, we define
\begin{align}
  \label{e:Vtil0def}
  V_{0} (X)
  &
  = \sum_{x\in X} V_{g_0,\nu_0,z_0;x}
  =
  \sum_{x\in X}
  \big(g_{0} \tau_x^2 + \nu_{0} \tau_x + z_{0}\tau_{\Delta ,x}\big)
  .
\end{align}

Let $C=(-\Delta+m^2)^{-1}$, with $\Delta$ the discrete Laplacian on $\Lambda_N$.
By making the change of variables $\phi_x \mapsto \phi'=(1+z_0)^{1/2}\phi_x$,
and writing $F' (\phi,\psi) = F(\phi',\psi')$, we obtain
\begin{equation}
  \lbeq{ExF}
  \int F e^{-\sum_{x\in \Lambda}(\tau_{\Delta,x}+g\tau_x^2 + \nu\tau_x)}
  = \Ex_C F' e^{-V_0(\Lambda)} .
\end{equation}
There is no explicit Jacobian factor, since we also make the change of variables in the
differentials $\psi,\bar\psi$.
Then, by \refeq{ExF},
\begin{equation}
  \label{e:GG2}
  G_N(g,\nu) = (1+z_0) \hat G_N (m^2, g_0, \nu_0, z_0),
\end{equation}
where
\begin{equation}
  \label{e:Gmgnzdef}
  \hat G_N(m^2,g_0,\nu_0,z_0)
  = \Ex_C(e^{-V_0(\Lambda)} \bar\phi_a \phi_b)
\end{equation}
(for $g_0>0$ the right-hand side is convergent for all $\nu_0 \in \R$
and $m^2 >0$).
In \refeq{GG2}, we write the two-point
function as $G_N(g,\nu)$ instead of
$G_{N,g,\nu}(\pp,\qq)$, with $a,b$ now suppressed, since
the lattice points $\pp, \qq$ do not play a primary role for the moment
and it is rather the dependence on $g,\nu$ that
we wish to emphasise.
If we set $V_0=0$ on the right-hand side of \refeq{GG2}, the result is a multiple
of the simple random walk two-point function
with mass $m^2$.  We regard the factor $e^{-V_0}$
as a perturbation of the simple random walk.
Much of our effort lies in the choice of
the renormalised parameters $(m^2,z_0)$ and the determination of their
relation to the original (or ``bare,'' in terminology of quantum field theory) parameters
$(g,\nu)$, so that the perturbation is small and the approximation
by simple random walk is a good one.

We define an analogous quantity for the susceptibility by
\begin{equation}
  \label{e:chiNhatdef}
  \hat\chi_N (m^2, g_0, \nu_0, z_0)
  = \sum_{x\in \Lambda} \Ex_C(e^{-V_0(\Lambda)} \bar\phi_0\phi_x)
  .
\end{equation}
Then, as in \refeq{GG2}, $\chi$ and $\hat\chi$ are related by
\begin{equation}
\label{e:chichibar}
  \chi_N\left(g,\nu\right)
  = (1+z_0)\hat\chi_N(m^2, g_0, \nu_0, z_0)
  .
\end{equation}
The limit
\begin{equation}
  \label{e:chihatdef}
  \hat\chi(m^2, g_0, \nu_0, z_0)
  =
  \lim_{N \to \infty}\hat\chi_N (m^2, g_0, \nu_0, z_0)
\end{equation}
exists by Lemma~\ref{lem:suscept-finvol} and \refeq{chichibar}, and
\begin{equation}
  \label{e:chichihat}
  \chi(g,\nu) = (1+z_0) \hat\chi (m^2,g_0,\nu_0,z_0).
\end{equation}
Also, for $\nu > \nu_c$, it follows from Lemma~\ref{lem:suscept-finvol}
and the chain rule that
\begin{equation}
  \label{e:dchichihat}
  \ddp{\chi}{\nu}(g,\nu)
  = (1+z_0)^2 \ddp{\hat\chi}{\nu_0}(m^2,g_0,\nu_0,z_0)
  =
  (1+z_0)^2 \lim_{N \to \infty} \ddp{\hat\chi_N}{\nu_0}(m^2,g_0,\nu_0,z_0).
\end{equation}

The finite volume susceptibility $\hat\chi_N$ can be conveniently
re-expressed in terms of a generating functional, as follows.
Given an \emph{external field} (or \emph{test function}) $J: \Lambda \to \C$, we write
\begin{equation}
  (J,\bar\phi) = \sum_{x\in \Lambda} J_x \bar\phi_x,
  \quad
  (\bar J,\phi) = \sum_{x\in \Lambda} \bar J_x \phi_x.
\end{equation}
Let $1$ denote the constant test function $1_x = 1$ for all $x\in \Lambda$.
Let $m^2>0$, $g_0>0$, $\nu_0 \in \R$, $z_0>-1$.
By \eqref{e:chiNhatdef} and translation invariance,
\begin{align}
  \hat\chi_{N}
  &=
  \hat\chi_{N}(m^2, g_0, \nu_0, z_0)
  =|\Lambda|^{-1}\Ex_{C}((1,\bar\phi)(1,\phi) Z_0),
\end{align}
where $C = (-\Delta+m^2)^{-1}$ and $Z_0 = Z_0(g_0,\nu_0,z_0)=e^{-V_0(\Lambda)}$.
We define the bosonic \emph{generating functional}
$\Sigma: \C^\Lambda 
\to \C$ by
\begin{align}
  \Sigma(J, \bar J)
  & =
  \Ex_C(e^{(J,\bar\phi)+(\phi, \bar J)} Z_0).
\end{align}
Then
\begin{equation}
\label{e:chibarG}
  \hat\chi_{N}
  = |\Lambda|^{-1} D^2\Sigma(0,0; (0,1),(1,0))
  = |\Lambda|^{-1} D^2\Sigma(0,0; 1,1),
\end{equation}
where the right-hand sides involve the directional derivative
with directions equal to the constant function $1$ in the first and second argument, respectively,
for which we use the short-hand notation of the last equality.
The evaluation of $\hat\chi_{N}$ now becomes reduced to
the evaluation of $D^2\Sigma$ on the right-hand side
of \refeq{chibarG}.

As a first step, we  complete the square in the exponent to obtain
\begin{align} \label{e:Gamma-sq}
  \Sigma(J, \bar J)
  &=
  e^{(J,C\bar J)} \Ex_C(Z_0(\phi+C J, \bar\phi+C\bar J, \psi, \bar\psi)).
\end{align}
In more detail, with $A = -\Delta + m^2 = C^{-1}$,
\begin{equation}
\begin{aligned}
  \sum_{x\in\Lambda} \left( \tau_{\Delta,x} + m^2\tau_x \right)
  - (J,\bar\phi) - (\phi, \bar J)
  &=
  (\phi, A \bar\phi) + (\psi, A\bar\psi) - (J, \bar\phi)- (\phi, \bar J)
  \\
  &=
  (\phi-CJ, A (\bar\phi-C\bar J)) + (\psi, A\bar\psi) - (J,C \bar J)
\end{aligned}
\end{equation}
and \eqref{e:Gamma-sq} follows with the translation $\phi \mapsto \phi + CJ$
of the integration variable in the integral \eqref{e:Ebolddef} defining the
super-expectation $\Ex_C$.
This translation of $\phi$ leaves $\psi = (2\pi i)^{-1/2} d\phi$ unchanged.
As we explain next,
\eqref{e:Gamma-sq} can be expressed conveniently in terms of a notion of convolution.

By definition, any form $F$ in the algebra $\Ncal = \Ncal(\Lambda)$ of
differential forms (see Section~\ref{sec:intforms}) is a linear
combination of products of factors $\psi_{x_i}$ and $\bar\psi_{y_i}$,
with $x_i,y_i\in \Lambda$ and with coefficients given by functions of
$\phi$ and $\bar\phi$.
We also define an algebra $\mathcal{N}^\times$ with twice as many
fields as $\mathcal{N}$, namely with boson fields $(\phi,\xi)$ and
fermion fields $(\psi,\eta)$, where $\psi = (2\pi i)^{-1/2} d\phi$,
$\eta = (2\pi i)^{-1/2} d\xi$.
There are also the four corresponding conjugate fields.
Given a form $F=f(\phi, \bar\phi) \psi^x \bar\psi^y$
(where $\psi^x$ denotes a product
$\psi_{x_1}\cdots\psi_{x_j}$), we define
\begin{equation}
\label{e:thetadef}
    \theta F = f(\phi+\xi,\bar\phi+\bar\xi) (\psi+\eta)^x (\bar\psi+\bar\eta)^y,
\end{equation}
and extend this to a map $\theta : \mathcal{N} \to \mathcal{N}^\times$
by linearity.
Then we understand the map $\Ebold_C \circ \theta : \mathcal{N}
\to \mathcal{N}$ as the integration with respect to the
\emph{fluctuation fields} $\xi$ and $\eta$, with the fields $\phi$ and $\psi$
left fixed.  This is like a conditional expectation.
For example, if $F = f(\phi)$ is of degree-$0$, then
\begin{equation}
  \Ex_C\theta F = \mu_C * f
  = E_C(f(\phi+\xi) | \phi),
\end{equation}
where the right-hand side is the usual conditional expectation with respect to
the Gaussian measure $d\mu_C$ defined in \refeq{intzeroform}.
However, in general, this
is not usual probability theory, since $\Ebold_C\circ\theta$
acts on the algebra of forms.

The expectation with the translated boson field $\phi$ in \eqref{e:Gamma-sq}, with no corresponding translation
of the fermion field $\psi$,
can be expressed succinctly in terms of $\Ex_C \theta$
and projection onto the subspace of degree zero forms.
To this end, let
\begin{equation} \label{e:ZNdef}
  Z_N = \Ex_C\theta Z_0
\end{equation}
which is a differential form on $\C^{\Lambda}$.
The subscript $N$ on $Z_N$ refers to the parameter defining the finite volume $\Lambda_N$.
We denote the degree-$0$
part of $Z_N$ by $Z_N^0$, i.e., $Z_N^0$ is a function
$ Z_N^0: \C^\Lambda \to \C$.
Then
\begin{align}
  \lbeq{GamZ0}
  \Sigma(J, \bar J)
  &
  =
  e^{(J,C\bar J)} Z_N^0(C J, C\bar J).
\end{align}
As mentioned already in  \eqref{e:self-norm},
the partition function $\Ex_C Z_0$
is equal to $1$
by supersymmetry.  However it is much
more challenging to understand the convolution
$\Ex_C \theta Z_0$.

To express the susceptibility conveniently in terms of $Z_N^0$,
observe that $1$ is an eigenfunction of the covariance matrix $C$, namely
\begin{equation} \label{e:C1m2}
  C1  = (-\Delta + m^2)^{-1} 1 = \frac{1}{m^{2}} 1.
\end{equation}
With \refeq{GamZ0}, this and $(1,1) = |\Lambda|$ imply that
\begin{equation}
  D^2\Sigma(0, 0; 1, 1)
  =  (1,C 1) + D^2Z_N^0(0, 0; C1, C1)
  = \frac{1}{m^2} |\Lambda| + \frac{1}{m^4} D^2Z_N^0(0, 0; 1, 1).
\end{equation}
Thus, with \eqref{e:chibarG}, we obtain
\begin{equation}
  \label{e:chibarm}
  \hat\chi_N
  = \frac{1}{m^2}  + \frac{1}{m^4} \frac{1}{|\Lambda|} D^2Z_N^0(0, 0; 1, 1)
  .
\end{equation}

The representation
\eqref{e:chichihat}
has the two free parameters $m^2$ and $z_0$, which define a division of the
quadratic $\tau$ and $\tau_{\Delta}$ terms between
the Gaussian expectation $(1+z_0) \Ex_C$ and
the perturbation $V_0(\Lambda)$.
The formula \refeq{chibarm} holds for \emph{any} choice of $m^2,z_0$
(and also any choice of $g_0,\nu_0$), but to study the behaviour near
the critical point $\nu_c$, it is essential that this split be made
exactly right.  The correct \emph{critical} choice is given in the
following theorem, whose proof occupies the remainder of the paper.
With this choice, we have $|\Lambda|^{-1} D^2Z_N^0(0,0;1,1) \to 0$ as
$N\to\infty$ in \refeq{chibarm}.  In Section~\ref{sec:pfsuscept}
below, we show that Theorem~\ref{thm:suscept} is a consequence of
Theorem~\ref{thm:suscept-diff}.

\begin{theorem}
\label{thm:suscept-diff} Let $d = 4$, and
let $\delta > 0$ be sufficiently small.
There are continuous real-valued functions $\nu_0^c,z_0^c$, defined for
$(m^2, g_0) \in [0,\delta)^2$ and
continuously differentiable in $g_0$,
and there is a continuous function $c(g_0)=1+O(g_0)$, such that
for all $m^2,g_0,\hat g_0 \in (0,\delta)$,
\begin{align} \label{e:chi-m}
  \hat\chi \left( m^2,g_0,\nu_0^c(m^2,g_0),z_0^c(m^2,g_0) \right)
  &= \frac{1}{m^2} ,
  \\
  \label{e:chiprime-m}
  \ddp{\hat\chi}{\nu_0} \left(m^2,g_0,\nu_0^c(m^2,g_0),z_0^c(m^2,g_0) \right)
  &\sim - \frac{1}{m^4} \frac{c(\hat g_0)}{(\hat g_0\bubble_{m^2})^{\gamma}}
  \quad \text{as $(m^2,g_0) \to (0,\hat g_0)$}.
\end{align}
The functions $\nu_0^c,z_0^c$ obey
\begin{align}
  &\nu_0^c(m^2,0) = z_0^c(m^2,0) = 0,
  \quad
  \ddp{\nu_0^c}{g_0}(m^2,g_0) = O(1),
  \quad
  \label{e:z0est}
  \ddp{z_0^c}{g_0}(m^2,g_0) = O(1),
\end{align}
where $O (1)$ means that these derivatives are bounded on their
whole domain by constants uniform in $(m^2,g_0)$.
\end{theorem}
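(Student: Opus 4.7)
The plan follows the decomposition announced in Section~\ref{sec:org}: combine the finite-volume identity \refeq{chibarm} with the RG analysis of Sections~\ref{sec:rg}--\ref{sec:bulk-flow}. The progressive integration of $Z_N = \Ex_C\theta Z_0$ over scales $j = 0, \ldots, N$ defines a discrete dynamical system $(V_j, K_j)$, where $V_j = (g_j, \nu_j, z_j)$ lies in a three-dimensional space of relevant and marginal couplings and $K_j$ lies in an infinite-dimensional Banach space of remainders. The linearisation at the Gaussian fixed point $(V, K) = (0, 0)$ has one expanding direction ($\nu$, eigenvalue $L^2$) and two marginal directions ($g$, $z$). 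I apply the centre stable manifold theorem of \cite{BBS-rg-flow}, using the single-step bounds developed in Section~\ref{sec:rg-map} together with \cite{BS-rg-step}, to produce a unique pair $(\nu_0^c(m^2,g_0), z_0^c(m^2,g_0))$ such that the orbit $(V_j,K_j)$ remains uniformly bounded for all $j \le N$ as $N \to \infty$. Continuity in $(m^2,g_0)$ and $C^1$-smoothness in $g_0$ follow from the corresponding regularity of the RG map; the boundary values $\nu_0^c(m^2,0) = z_0^c(m^2,0) = 0$ are immediate since $V_0 = 0$ is itself a fixed point; and the derivative bounds \refeq{z0est} come from differentiating the implicit-function construction of the manifold.

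On this manifold the uniform boundedness of $(V_j, K_j)$ forces $|\Lambda_N|^{-1} D^2 Z_N^0(0,0;1,1) \to 0$ as $N \to \infty$, so \refeq{chibarm} yields \refeq{chi-m}. For \refeq{chiprime-m}, I differentiate \refeq{chibarm} in $\nu_0$ at fixed $(m^2, g_0, z_0)$ and express $\partial \hat\chi_N/\partial\nu_0$ in terms of the linearised flow $\partial V_j/\partial\nu_0$ along the critical trajectory. Because $\nu$ is the unique expanding direction with eigenvalue $L^2$, an infinitesimal $\nu_0$-perturbation is amplified by $L^{2 j}$ up to the mass scale $j_\ast \sim \log_L m^{-2}$, accounting for the prefactor $m^{-4}$. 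The remaining factor $(g_0 \bubble_{m^2})^{-\gamma}$ with $\gamma = 1/4$ arises from the running of the $g$-coupling: the perturbative flow $g_{j+1} = g_j - \beta_j g_j^2 + O(g_j^3)$, with $\beta_j \to \beta > 0$, satisfies $g_{j_\ast}^{-1} - g_0^{-1} \sim \beta j_\ast \sim {\sf b}^{-1}\beta\,\bubble_{m^2}$ by \refeq{freebubble}, and the exponent $\tfrac{1}{4}$ is precisely the coefficient governing the renormalisation of the $\nu$-direction by the $g$-flow (this is the $n = 0$ instance of the general prediction $\tfrac{n+2}{n+8}$ in \refeq{chin}). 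The constant $c(\hat g_0) = 1 + O(g_0)$ is then read off from second-order perturbation theory, using \cite{BBS-rg-pt}.

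The principal obstacle is to carry out all of the above \emph{uniformly in $m^2$ down to $m^2 = 0$}. The RG norms, the domains of the single-step map, and the contraction estimates of \cite{BS-rg-norm, BS-rg-loc, BS-rg-IE, BS-rg-step} must be arranged so that the centre stable manifold construction extends to a closed interval $m^2 \in [0,\delta)$ containing the critical value, and so that \refeq{chiprime-m} is a genuine asymptotic equivalence and not merely an order-of-magnitude bound. Controlling the coupling between the infinite-dimensional coordinate $K_j$ and the finite-dimensional coordinate $V_j$ uniformly in scale and mass, and handling the finite-volume remainder as $N \to \infty$ via Lemma~\ref{lem:suscept-finvol} and \refeq{dchichihat}, are the delicate points; they constitute the core technical work inherited from the companion papers and assembled here in Sections~\ref{sec:rg-map}--\ref{sec:pfmr}.
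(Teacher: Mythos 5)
Your proposal is correct and follows essentially the same route as the paper: the critical initial conditions $(\nu_0^c,z_0^c)$ come from the centre-stable-manifold theorem of \cite{BBS-rg-flow} applied to the (transformed, infinite-volume) flow, \refeq{chi-m} follows from the identity \refeq{chibarm} once the critical orbit stays bounded, and \refeq{chiprime-m} follows by differentiating the flow in $\nu_0$, with $\gamma=\tfrac14$ entering through the $(1-\gamma\beta_j g_j)$ factor in the $\mu$-recursion and $\bubble_{m^2}$ entering through $g_\infty^{-1}-g_0^{-1}=\sum_j\beta_j=\bubble_{m^2}$. The only cosmetic quibble is that the prefactor $m^{-4}$ is already present algebraically in \refeq{chibarm} rather than being produced by the $L^{2j}$ amplification (which instead exactly compensates the rescaling $\nu_j=L^{-2j}\mu_j$ so that $\nu_N'$ has a finite limit).
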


\subsection{Change of parameters}
\label{sec:changeofvariables}

We have introduced six real variables $\{g,\nu,m^2,g_0,z_0,\nu_0\}$, and it
is convenient in different contexts to switch perspective on which are dependent and which are independent variables.
In particular, to deduce Theorem~\ref{thm:suscept} from Theorem~\ref{thm:suscept-diff},
we relate the parameters $(m^2,g_0,\nu_0^c,z_0^c)$ of $\hat \chi$ in \eqref{e:chihatdef} 
with the parameters $(g,\nu)$ of $\chi$.
We summarise the different perspectives now.
The six variables are constrained to satisfy
the two equations in \eqref{e:gg0} as well as
\begin{equation} \label{e:nu0cz0c}
  \nu_0 = \nu_0^c(m^2,g_0),
  \quad
   z_0 = z_0^c(m^2,g_0),
\end{equation}
with $\nu_0^c,z_0^c$ the functions of Theorem~\ref{thm:suscept-diff}.
In particular, \eqref{e:GG2} and \eqref{e:chichihat} hold for such a choice.

Given $(m^2,g_0)$, \eqref{e:gg0} and \refeq{nu0cz0c} determine $g^c,\nu^c$ such that
\begin{equation}
\lbeq{cccrit}
    (g,\nu,\nu_0,z_0) = (g^c(m^2,g_0),\nu^c(m^2,g_0),\nu_0^c(m^2,g_0),z_0^c(m^2,g_0)),
\end{equation}
and \refeq{cccrit} is continuous in $(m^2,g_0) \in [0,\delta)^2$ by Theorem~\ref{thm:suscept-diff}.
In Proposition~\ref{prop:changevariables}(i) below,
we show that, given $(m^2,g)$, we can determine
\begin{equation}
\lbeq{ccstar}
    (\nu,g_0,\nu_0,z_0)
    = (\nu^*(m^2,g), g_0^*(m^2,g), \nu_0^*(m^2,g), z_0^*(m^2,g)),
\end{equation}
continuously in $(m^2,g) \in [0,\delta_1)^2$,
with \eqref{e:gg0} and \eqref{e:nu0cz0c} satisfied.
In Proposition~\ref{prop:changevariables}(ii) below,
we show that given $(g,\nu_c+ \varepsilon)$, we can
determine
\begin{equation}
\lbeq{tilpars}
    (m^2,g_0,\nu_0,z_0)
    =
    (\tilde m^2(g,\varepsilon), \tilde g_0(g,\varepsilon), \tilde \nu_0(g,\varepsilon), \tilde z_0(g,\varepsilon))
\end{equation}
so that \eqref{e:gg0} and \eqref{e:nu0cz0c} hold, with
\emph{right-continuity} as $\varepsilon \downarrow 0$.  (We also
expect continuity in $\varepsilon > 0$, but have not proved it.)

Theorem~\ref{thm:suscept-diff} is stated in terms of parameters $(m^2,g_0)$ rather
than the original parameters $(g,\nu)$ in \eqref{e:suscept-def}.
Let $\nu=\nu_c(g)+\varepsilon$, and
let $(\tilde m^2, \tilde g_0, \tilde \nu_0, \tilde z_0)$ be given by
\refeq{tilpars}.
By \eqref{e:chichihat} and Theorem~\ref{thm:suscept-diff},
we have the \emph{equality}
\begin{equation}
\lbeq{chi-mtil0}
    \chi(g,\nu)
    =
    (1+ \tilde z_0) \hat\chi (\tilde m^2,\tilde g_0,\tilde \nu_0,\tilde z_0)
    =
    (1+ \tilde z_0) \frac{1}{\tilde m^2}
    .
\end{equation}
The right-hand side is $(1+\tilde z_0)$ times the susceptibility
$\tilde m^{-2}$ of the non-interacting
walk with killing rate $\tilde m^2$.
Thus we have implemented the goal mentioned
at the beginning of Section~\ref{sec:ga},
i.e., to represent the susceptibility of
the interacting model by that of a non-interacting model with renormalised parameters.
Equation~\eqref{e:chi-mtil0} is reminiscent of the \emph{renormalisation conditions} imposed
in the physics literature (e.g., \cite[Chapter~5]{ID89a}). However, we stress that here
\eqref{e:chi-mtil0} arises not by defining $\tilde m^2,\tilde z_0$
by the requirement  that the equality holds,
but rather that our method computes $\tilde m^2,\tilde z_0$ and we subsequently verify
that the equality holds.
The equality \refeq{chi-mtil0} is an identity, not merely an asymptotic
formula, and thus it contains all information about the
susceptibility, including not just the leading asymptotic behaviour but
also all higher-order corrections.

\begin{prop} \label{prop:changevariables}
  There exists $\delta_1>0$ such that the following hold.

  \smallskip\noindent
  (i)
  For $(m^2,g) \in [0,\delta_1)^2$, there exist
  \begin{equation}
    \lbeq{ccstar2}
    (\nu,g_0,\nu_0,z_0)
    = (\nu^*(m^2,g), g_0^*(m^2,g), \nu_0^*(m^2,g), z_0^*(m^2,g)),
  \end{equation}
  continuous in $(m^2,g)$,
  such that
  \eqref{e:gg0} and \eqref{e:nu0cz0c} hold, and
  \begin{gather}
    \label{e:nustarbd}
    \nu^*(0,g) = \nu_c(g), \quad
    \nu^*(m^2,g) > \nu_c(g) \quad (m^2 > 0),
    \\
    \label{e:gznustarbd}
    g_0^*(m^2,g) = g + O(g^2),
    \quad
    \nu_0^*(m^2,g) = O(g),
    \quad
    z_0^*(m^2,g) = O(g).
  \end{gather}
  (ii)
  For $(g,\varepsilon) \in [0,\delta_1)^2$, there exist
  \begin{equation}
    \lbeq{cctilde}
    (m^2,g_0,z_0,\nu_0) = (\tilde m^2(g,\varepsilon),\tilde g_0(g,\varepsilon),\tilde z_0(g,\varepsilon),\tilde \nu_0(g,\varepsilon)),
  \end{equation}
  right-continuous as $\varepsilon \downarrow 0$ (with $g$ fixed),
  such that \eqref{e:gg0} and \eqref{e:nu0cz0c} hold,
  and
  \begin{gather} \label{e:mtildebd}
    \tilde  m^2(g,0) = 0, \quad
    \tilde m^2(g, \varepsilon) > 0 \quad (\varepsilon>0),
    \\
    \label{e:gznutildebd}
    \tilde  g_0(g,\varepsilon) = g + O(g^2),
    \quad
    \tilde  \nu_0(g,\varepsilon) = O(g),
    \quad
    \tilde  z_0(g,\varepsilon) = O(g).
  \end{gather}
\end{prop}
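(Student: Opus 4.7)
The plan is to build Part~(i) by solving the constraints \eqref{e:gg0} and \eqref{e:nu0cz0c} as a fixed point in $g_0$, and to deduce Part~(ii) by inverting the map $m^2\mapsto \nu^*(m^2,g)$ constructed in Part~(i).

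For Part~(i), given $(m^2,g)\in[0,\delta_1)^2$, combining \eqref{e:g0g} with $z_0=z_0^c(m^2,g_0)$ gives the fixed-point equation $g_0 = g\,(1+z_0^c(m^2,g_0))^2$. Since $z_0^c(m^2,0)=0$ and $\partial_{g_0}z_0^c=O(1)$ uniformly by \eqref{e:z0est}, integration yields $z_0^c(m^2,g_0)=O(g_0)$, so the derivative of the right-hand side in $g_0$ is $O(g)$. Thus for $\delta_1$ small the map is a contraction in $g_0$ on a neighbourhood of $g$, giving a unique solution $g_0^*(m^2,g)=g+O(g^2)$, continuous in $(m^2,g)$. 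Setting $z_0^*=z_0^c(m^2,g_0^*)$, $\nu_0^*=\nu_0^c(m^2,g_0^*)$, and $\nu^*=(\nu_0^*+m^2)/(1+z_0^*)$, the bounds \eqref{e:gznustarbd} then follow from Taylor expansion of $\nu_0^c,z_0^c$ about $g_0=0$ using \eqref{e:z0est} together with $g_0^*=g+O(g^2)$.

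To establish \eqref{e:nustarbd} I would combine \eqref{e:chi-m} with \eqref{e:chichihat}: for $m^2>0$ the quadruple $(m^2,g_0^*,\nu_0^*,z_0^*)$ satisfies the hypotheses of Theorem~\ref{thm:suscept-diff}, so $\chi(g,\nu^*(m^2,g))=(1+z_0^*)/m^2<\infty$. This forces $\nu^*(m^2,g)>\nu_c(g)$, and continuity at $m^2=0$ then gives $\nu^*(0,g)\ge\nu_c(g)$. For the reverse inequality, if $\nu^*(0,g)>\nu_c(g)$ then by continuity of $\chi(g,\cdot)$ on $(\nu_c(g),\infty)$ (Lemma~\ref{lem:suscept-finvol}) we would have $(1+z_0^*)/m^2 = \chi(g,\nu^*(m^2,g)) \to \chi(g,\nu^*(0,g))<\infty$ as $m^2\downarrow 0$, contradicting the blow-up of the left-hand side.

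For Part~(ii), fix $g\in[0,\delta_1)$. By Part~(i) the map $m^2\mapsto\nu^*(m^2,g)$ is continuous on $[0,\delta_1)$, equals $\nu_c(g)$ at $m^2=0$, and strictly exceeds $\nu_c(g)$ for $m^2>0$; moreover \eqref{e:gznustarbd} yields $\nu^*(m^2,g)\ge m^2/(1+O(g))-O(g)$, so values $\nu_c(g)+\varepsilon$ are attained inside the domain for all sufficiently small $\varepsilon>0$ (after possibly shrinking $\delta_1$). The intermediate value theorem then produces a positive $m^2$ with $\nu^*(m^2,g)=\nu_c(g)+\varepsilon$, and I would select the smallest one,
\[
 \tilde m^2(g,\varepsilon)=\inf\{m^2>0:\nu^*(m^2,g)\ge\nu_c(g)+\varepsilon\},
\]
which is automatically non-decreasing and right-continuous in $\varepsilon$, and which tends to $0$ as $\varepsilon\downarrow 0$ by continuity of $\nu^*$ at $m^2=0$. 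Setting $(\tilde g_0,\tilde z_0,\tilde\nu_0)=(g_0^*,z_0^*,\nu_0^*)(\tilde m^2(g,\varepsilon),g)$ inherits right-continuity from Part~(i) and gives \eqref{e:mtildebd}--\eqref{e:gznutildebd}.

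The main obstacle is the identification $\nu^*(0,g)=\nu_c(g)$ in Part~(i): the identity \eqref{e:chi-m} is only available for $m^2>0$, so this is not a direct substitution but must be extracted from the divergence $(1+z_0^*)/m^2\to\infty$ combined with continuity of both $\nu^*$ and of $\chi(g,\cdot)$ on its finite domain. Everything else reduces to routine fixed-point and implicit-function computations, with the minimum-selection definition of $\tilde m^2$ providing right-continuity in Part~(ii) without requiring the (plausible but unproven) monotonicity of $m^2\mapsto\nu^*(m^2,g)$.
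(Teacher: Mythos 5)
Your proposal is correct and follows essentially the same route as the paper: invert the relation $g=g_0/(1+z_0^c(m^2,g_0))^2$ (the paper does this by showing $g_0\mapsto s(m^2,g_0)$ is uniformly strictly increasing rather than by a contraction argument, but these are equivalent), identify $\nu^*(0,g)=\nu_c(g)$ from \eqref{e:chi-m} together with the divergence of $(1+z_0^*)/m^2$ and continuity of $\chi(g,\cdot)$ on $\{\nu>\nu_c\}$, and define $\tilde m^2$ as an infimum whose attainment and convergence to $0$ follow from continuity of $\nu^*$. The only cosmetic differences are your use of ``$\ge$'' instead of ``$=$'' in the defining set for $\tilde m^2$ and your explicit remark on non-emptiness of that set via the intermediate value theorem, which the paper leaves implicit.
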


In the following proof, the construction of the maps \refeq{ccstar2} and \refeq{cctilde}
involves only elementary calculus.
The proof does not use \refeq{chiprime-m}, but
the identification of the critical point
(which occurs in \eqref{e:nustarbd} and \eqref{e:mtildebd}),
as well as the proof of right-continuity of $(\tilde m^2, \tilde g_0, \tilde \nu_0, \tilde z_0)$
as $\varepsilon \downarrow 0$, uses \eqref{e:chi-m}.

\begin{proof}
(i)
For $(m^2,g_0) \in [0,\delta)^2$, set
\begin{equation}
  \lbeq{smg}
  s(m^2,g_0) = \frac{g_0}{(1+ z_0^c (m^{2},g_{0}))^2}
  .
\end{equation}
For fixed $m^2$, we show below that it is possible to
construct an inverse to the map $g_0 \mapsto s$,
i.e., to show that there exists $\delta_1 >0$ such that the equation $g=s(m^2,g_0)$
can be solved for $g_{0}=g_0^*(m^2,g)$ as a continuous function of $(m^2,g) \in
[0,\delta_1)^2$. Given its existence, we set
\begin{equation}
  \nu_0^* = \nu_0^*(m^2,g) = \nu_0^c(m^2,g_0^*(m^2,g))
  ,
  \quad\quad
  z_0^* = z_0^*(m^2,g) = z_0^c(m^2,g_0^*(m^2,g))
  ,
\end{equation}
and also define
\begin{equation}
  \label{e:mu-m-def}
  \nu^* = \nu^*(m^2 ,g) = \frac{\nu_0^* + m^2}{1+z_0^*}.
\end{equation}
Since $\nu_0^c$ and $z_0^c$ are continuous in $(m^2,g_0)$,
and since $g_0^*$ is continuous in $(m^2,g)$,
it is also the case that $z_0^*$, $\nu_0^*$, and $\nu^*$
are continuous in $(m^2,g)$, and it is immediate
that \eqref{e:gg0} and \eqref{e:nu0cz0c} hold,
and also that \eqref{e:gznustarbd} holds.
To show \eqref{e:nustarbd}, it follows from \refeq{chichihat} and \refeq{chi-m} that
\begin{equation}
  \chi(g,\nu^*(m^2,g)) = (1+z_0^*) \hat\chi(m^2,g_0^*,\nu_0^*,z_0^*)
    = (1+z_0^* )\frac{1}{m^2}.
\end{equation}
In particular, $\chi(g, \nu^*(m^2,g)) < \infty$ if $m^2 > 0$ and
therefore $\nu^*(m^2,g) > \nu_c(g)$ for $m^2>0$.  Also, since
$\chi(g,\nu^*) \uparrow \infty$ as $m^2 \downarrow 0$, it follows from
the continuity of $\nu^*$ that
\begin{equation} \label{e:nucinf}
    \nu_c(g) = \nu^*(0,g) = \inf\{ \nu^*(m^2,g) : m^2\in [0,\delta_1)\}.
\end{equation}

To construct the inverse to $g_0 \mapsto s$, we proceed as follows.
By Theorem~\ref{thm:suscept-diff}, $\nu_0^c,z_0^c$ are continuous in
$(m^2,g_0) \in [0,\delta)^2$, differentiable in $g_0$, and satisfy
\eqref{e:z0est}. Thus, with $z_0' = \ddp{}{g_0} z_0(m^2,g_0)$,
\begin{align} \label{e:sdiff}
  \ddp{}{g_0}s(m^2,g_0)
  = \frac{(1+z_0)^2 - 2g_0 (1+z_0) z_0'}{(1+z_0)^4}
  = 1+O(g_0) > 0.
\end{align}
For sufficiently small $g_*>0$, $s$ is therefore a strictly increasing
continuous function of $g_{0} \in [0,g_*)$ such that $|s (m^2,u) - s
(m^2,v)| \ge (1 - O (g_{*})) |u - v|$ and hence, for $m^2$ fixed,
$s(m^2, \cdot)$ is a continuously invertible map from $[0,g_{*})$ onto
the interval $[0,s(m^2,g_{*}))$.  Let $[0,\delta_1)$ be in the
intersection over $m^{2}>0$ of the latter intervals. Since
\eqref{e:sdiff} holds uniformly in $m^2$, it follows that
$\delta_1>0$. Therefore, $g=s(m^2,g_0)$ can be solved for $g_{0}$ as a
function $g_0^*(m^{2},g)$ for $g \in [0,\delta_1)$ and $g_{0}^*$ is
continuous in $g$ for $m^2$ fixed.
To prove that $g_{0}^*$ is jointly continuous in $(m^2,g)$ it suffices
to show that when $(\hat m^2,\hat g) \rightarrow (m^2,g)$, then $\hat
g_{0} \rightarrow g_{0}$, where $\hat g_{0},g_{0}$ solve $s(\hat
m^2,\hat g_{0}) - \hat g = 0 = s(m^2,g_{0}) - g$.  This follows from
$(1 - O(g_{*})) |\hat g_{0} - g_{0}| \le |s (\hat m^2,\hat g_{0}) - s
(\hat m^2,g_{0})| = |(s (m^2,g_{0}) - s (\hat m^2,g_{0})) + (\hat g-
g)| \rightarrow 0$, since $s(\cdot,g_0)$ is continuous by \refeq{smg}
and the continuity of $z_0^c$.

\smallskip\noindent (ii)
We set
\begin{equation} \lbeq{mtildef}
  \tilde m^2 = \tilde m^2 (g,\varepsilon) = \inf \{m^2 >0 : \nu^*(m^2,g)=\nu_c(g) + \varepsilon \}.
\end{equation}
By continuity of $\nu^*$, the infimum is attained and
\begin{equation} \lbeq{nucstar}
  \nu_c(g) + \varepsilon = \nu^*(\tilde m^2(g,\varepsilon),g).
\end{equation}
The left-hand side of \refeq{nucstar} converges to $\nu_c$ as $\varepsilon \downarrow 0$,
and hence
$\tilde m^2(g,\varepsilon) \downarrow 0$ as $\varepsilon \downarrow 0$
(if $\tilde m^2(g,\varepsilon)$ had a nonzero accumulation point the right-hand
side of \refeq{nucstar} would not converge to $\nu_c$).
We set
\begin{equation}
    \tilde g_0 = \tilde g_0(g,\varepsilon) = g_0^*(\tilde m^2 ,g),
    \quad
    \tilde \nu_0 = \tilde \nu_0(g,\varepsilon) = \nu_0^*(\tilde m^2 ,g),
    \quad
    \tilde z_0 = \tilde z_0(g,\varepsilon) = z_0^*(\tilde m^2 ,g),
\end{equation}
and conclude that \eqref{e:gg0} and \eqref{e:nu0cz0c} hold.
The desired properties of $\tilde \nu_0$ and $\tilde z_0$
follow from those of $\tilde m$ and $\tilde g_0$,
and continuity of $\nu_0^*$ and $z_0^*$,
and the proof is complete.
\end{proof}

\subsection{Proof of Theorem~\ref{thm:suscept} assuming Theorem~\ref{thm:suscept-diff}}
\label{sec:pfsuscept}

The following elementary lemma is used in the proof of
Theorem~\ref{thm:suscept}.

\begin{lemma} \label{lem:ODE}
  Let $\delta > 0$.
  Suppose that $u: [0,\delta) \to [0,\infty)$ is continuous,
  differentiable on $(0,\delta)$, that $u(0)=0$ and $u(t)>0$ for $t>0$, and that
  \begin{equation}
    u'(t) = (-\log u(t))^{-\gamma} ( 1+ o(1)) \quad \text{(as $t\downarrow 0$)} .
  \end{equation}
  Then
  \begin{equation}
    u(t) = t(-\log t)^{-\gamma} (1+o(1))
    \quad \text{(as $t\downarrow 0$)}
    .
  \end{equation}
\end{lemma}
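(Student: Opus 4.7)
The plan is to convert the asymptotic ODE $u'(t) \sim (-\log u(t))^{-\gamma}$ into an asymptotic equality whose right-hand side is a constant, by multiplying by the correct integrating factor, and then invert the resulting implicit relation.

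First I would note that since $u$ is continuous with $u(0)=0$ and positive on $(0,\delta)$, we have $u(t)\downarrow 0$ as $t\downarrow 0$, so $-\log u(t)\to\infty$ and in particular $u'(t)\to 0$. In particular, for $t$ sufficiently small, $u(t)<e^{-1}$, so $-\log u(t)>1$ and the expression $(-\log u(t))^\gamma$ is well-defined and large. Introduce the auxiliary function
\begin{equation*}
\Phi(y) = y(-\log y)^{\gamma} \qquad (y\in(0,e^{-1})).
\end{equation*}
A direct computation gives $\Phi'(y) = (-\log y)^{\gamma}\bigl(1-\gamma/(-\log y)\bigr)$, so $\Phi'(y)\sim(-\log y)^{\gamma}$ as $y\downarrow 0$. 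Combining this with the hypothesis on $u'$ yields
\begin{equation*}
\frac{d}{dt}\Phi(u(t)) \;=\; u'(t)\Phi'(u(t)) \;=\; (1+o(1))(1+o(1)) \;=\; 1+o(1)
\qquad (t\downarrow 0).
\end{equation*}

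Second, since $\Phi(u(t))\to 0$ as $t\downarrow 0$, integrating the previous display from $0$ to $t$ and using the fundamental theorem of calculus (with the standard fact that $\int_0^t(1+o(1))\,ds = t(1+o(1))$) gives
\begin{equation*}
\Phi(u(t)) \;=\; u(t)(-\log u(t))^{\gamma} \;=\; t(1+o(1)).
\end{equation*}

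Third, I would invert this implicit relation by taking logarithms: $\log u(t) + \gamma \log(-\log u(t)) = \log t + o(1)$. Since $-\log u(t)\to\infty$, the iterated-log term on the left is negligible compared to $\log u(t)$, so $\log u(t)\sim \log t$, equivalently $-\log u(t)\sim -\log t$. Substituting this back into $u(t) = t(1+o(1))/(-\log u(t))^{\gamma}$ produces the claim
\begin{equation*}
u(t) \;=\; t(-\log t)^{-\gamma}(1+o(1)) \qquad (t\downarrow 0).
\end{equation*}

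The argument is genuinely elementary and I do not expect any serious obstacle; the only delicate point is checking that $\Phi\circ u$ is absolutely continuous on $[0,\tau]$ for small $\tau$ so that the integration step is rigorous. This follows immediately from the differentiability of $u$ on $(0,\delta)$, the smoothness of $\Phi$ on $(0,e^{-1})$, and the continuity of $\Phi(u(\cdot))$ at $0$ (with value $0$); the boundedness of $u'$ near $0$ (which holds because $u'(t)\to 0$) ensures that no pathology arises at the endpoint.
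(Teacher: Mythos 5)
Your proof is correct and follows essentially the same route as the paper's: both derive the key relation $u(t)(-\log u(t))^{\gamma}=t(1+o(1))$ by integrating $u'(t)(-\log u(t))^{\gamma}=1+o(1)$, and then invert the relation asymptotically. The only (cosmetic) difference is that you recognise the integrand as $\frac{d}{dt}\Phi(u(t))$ with $\Phi(y)=y(-\log y)^{\gamma}$, which sidesteps the monotonicity/change-of-variables step the paper uses, and you make the final inversion explicit via logarithms where the paper simply invokes that $x\mapsto x(-\log x)^{\gamma}$ and $y\mapsto y(-\log y)^{-\gamma}$ are approximate inverses.
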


\begin{proof}
By hypothesis,
\begin{equation}
    \int_0^t u'(t) (-\log u(t))^{\gamma} \; dt
    =
    \int_0^t (1+o(1)) \; dt
    =
    t(1+o(1))
    .
\end{equation}
  Note that $u(t) > 0$ implies that $u'(t) > 0$ for small $t$, so $u$ is monotone.
  A change of variables, followed by integration by parts, gives
  \begin{equation}
  \begin{aligned}
    \int_0^t u'(t) (-\log u(t))^{\gamma} \; dt
    &
    =
    \int_0^{u(t)} (-\log v)^{\gamma} \; dv
    \\&
    =  u(t) (-\log u(t))^\gamma
    (1+ O((-\log u(t))^{-1}))
    .
  \end{aligned}
  \end{equation}
  The equality of the above two right-hand sides then gives
  \begin{align}
    u(t) (-\log u(t))^\gamma = t(1+o(1)).
  \end{align}
  Let $f(x) = x(-\log x)^\gamma$ and $g(y) = y(-\log y)^{-\gamma}$. Then
  $f$ and $g$ are approximate inverses in the sense that
  $f(g(y)) = y(1+o(1))$.
  Thus
    $u(t) = t(-\log t)^{-\gamma}(1+o(1))$,
  and the proof is complete.
\end{proof}

\begin{proof}[Proof of Theorem~\ref{thm:suscept}]
Let $\nu=\nu_c(g)+\varepsilon$.
We have observed already in \refeq{chi-mtil0} that
\begin{equation}
\lbeq{chi-mtil}
    \chi(g,\nu)
    =
    (1+ \tilde z_0) \hat\chi (\tilde m^2,\tilde g_0,\tilde \nu_0,\tilde z_0)
    =
    (1+ \tilde z_0) \frac{1}{\tilde m^2}
    .
\end{equation}
For the derivative, we apply \refeq{dchichihat}, \refeq{chiprime-m}
and \refeq{chi-mtil} to obtain, as $\nu \downarrow \nu_c$,
\begin{equation}
\lbeq{dchi-mtil}
    \ddp{\chi}{\nu}(g,\nu)
    =
    (1+\tilde z_0)^2 \ddp{\hat\chi}{\nu_0}
    (\tilde m^2,\tilde g_0,\tilde \nu_0,\tilde z_0)
    \sim
    -  \chi^2(g,\nu)
    \frac{c_0(g)}{(\tilde g_0 \bubble_{\tilde m^2})^\gamma}
    .
\end{equation}
The constant on the right-hand side is $c_0(g)= \lim_{\varepsilon
\downarrow 0} c(\tilde g_0(g,\varepsilon))$ which exists by
right-continuity of $\tilde g_0$ at $\varepsilon=0$ for fixed $g$ and
by continuity of $c$.
Equation~\eqref{e:dchichihat} relates
a directional derivative of $\chi$ at the point $(g,\nu)$ to a
directional derivative of $\hat{\chi}$ at any point
$(m^{2},g_{0},\nu_{0},z_{0})$ related to $(g,\nu)$ by
\eqref{e:gg0}. Equation \eqref{e:dchi-mtil} is valid because it
specialises \eqref{e:dchichihat} to $(\tilde m^2,\tilde g_0,\tilde
\nu_0,\tilde z_0)$ which is related to $(g,\nu)$ by
\eqref{e:gg0}.

Now we drop $g$ from the notation.
For $\nu > \nu_c$, let
\begin{equation} \label{e:F-def}
  F(\nu) = \frac{1}{\chi(\nu)}.
\end{equation}
By Proposition~\ref{prop:changevariables}(ii),
$\tilde z_0(\varepsilon)$ is continuous
as $\varepsilon \downarrow 0$, and hence, by \refeq{chi-mtil},
\begin{equation} \label{e:F-m}
  F(\nu) \sim (1+ \tilde z_0 (0))^{-1} \tilde m^2
  \quad\quad
  \text{(as $\nu \downarrow \nu_c$)}.
\end{equation}
We set $F(\nu_c)=0$.
By \eqref{e:dchi-mtil}, together with \eqref{e:freebubble},
\refeq{F-m} and \eqref{e:gznutildebd},
\begin{equation}
  \label{e:Fprime}
  \ddp{F}{\nu}
  = - \frac{1}{\chi^{2}} \ddp{\chi}{\nu}
  \sim \frac{c_0(g)}{(\tilde g_0 \bubble_{\tilde m^2})^{\gamma}}
  \sim
  \frac{c_0(g)}{(\tilde g_0(0) {\sf b})^\gamma}(-\log F(\nu))^{-\gamma}
  \quad\quad
  \text{(as $\nu \downarrow \nu_c$)}
  .
\end{equation}
We apply Lemma~\ref{lem:ODE} with $u(t)=(\tilde g_0(0) {\sf b})^\gamma c_0(g)^{-1} F(\nu_c+t)$
to conclude from \eqref{e:Fprime} that
\begin{equation} \label{e:F-asymp}
  F(\nu_c+\varepsilon) \sim A_g^{-1}\varepsilon(-\log \varepsilon)^{-\gamma}
\end{equation}
with
\begin{equation}
  A_g = \frac{(\tilde g_0 (0) {\sf b})^\gamma}{c_0(g)}.
\end{equation}
This is equivalent to \eqref{e:chieps-asympt}.
Since $c_0(g) = 1+O(\tilde g_0 (0)) = 1+O(g)$ as $g \downarrow 0$ by \eqref{e:gznutildebd},
we obtain
\begin{equation}
  A_g =
  (g {\sf b})^\gamma (1+O(g)),
\end{equation}
which proves \eqref{e:cgasy} and completes the proof.
\end{proof}

As a consequence of Theorems~\ref{thm:suscept} and
\ref{thm:suscept-diff}, we note in passing that the effective killing
rate $m^2$ vanishes as the critical point is approached, i.e., as
$\varepsilon = \nu - \nu_c(g) \downarrow 0$, according to the
asymptotic formula
\begin{equation}
  \label{e:m-mu}
  m^2
  \sim
  \frac{1+z_0^c(0,g_0)}{A_g} \varepsilon (\log \varepsilon^{-1})^{-\gamma}
  \quad \text{as $\varepsilon \downarrow 0$}.
\end{equation}

\section{Renormalisation group}
\label{sec:rg}

In this section, we gather together some of the important ingredients
of the renormalisation group analysis used in the proof of Theorem~\ref{thm:suscept-diff}.

\subsection{Progressive integration and covariance decomposition}
\label{sec:decomposition}

The proof of Theorem~\ref{thm:suscept-diff} is based on
\emph{renormalisation group analysis} of the integral on the
right-hand side of \eqref{e:chiNhatdef}.
We now explain a fundamental mechanism of our renormalisation group method:
progressive Gaussian integration which enables a multi-scale analysis.

It is an elementary fact that if $X_1 \sim N(0, \sigma_1^2)$ and
$X_2 \sim N(0,\sigma_2^2)$ are independent centred Gaussian random variables
with respective variances $\sigma_1^2,\sigma_2^2$,
then $X_1+X_2 \sim N(0, \sigma_1^2+\sigma_2^2)$
(here $\sim$ denotes equality in distribution).
In particular,
if $X \sim N(0,\sigma_1^2+\sigma_2^2)$ then we can evaluate an expectation
$E(f(X))$ by performing iterated Gaussian integrals, as
\begin{equation}
\label{e:EX1X2}
    E(f(X)) = E( E( f(X_1+X_2) \, | \, X_2) ).
\end{equation}
The inner expectation on the right-hand side is conditional on $X_2$, and
the outer expectation then averages over $X_2$.
The elementary identity \refeq{EX1X2} extends to the
super-expectation \eqref{e:Ebolddef},
as we describe next.  Recall the map $\theta$ defined in \refeq{thetadef}.

\begin{prop}\label{prop:Gaussian-conv}
Let $F \in \mathcal{N}(\Lambda)$, and suppose that $C_1$ and $C'$
are $\Lambda\times\Lambda$ matrices with
positive-definite Hermitian parts.  Then
  \begin{equation} \label{e:Gaussian-conv}
    \Ex_{C'+C_1}\theta F = (\Ex_{C'}\theta \circ \Ex_{C_1}\theta) F.
  \end{equation}
\end{prop}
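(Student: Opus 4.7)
The result extends to differential forms the elementary fact that if $X_1,X_2$ are independent centred Gaussian random variables with covariances $C_1,C'$, then $X_1+X_2$ is centred Gaussian with covariance $C'+C_1$, so that expectation against the sum covariance can be rewritten as an iterated conditional expectation. My plan is to reduce to a convenient spanning family of forms on which both sides of \eqref{e:Gaussian-conv} can be computed directly by completing the square, and then to extend by linearity to all of $\mathcal{N}(\Lambda)$.

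\emph{Reduction to super-exponentials.} Since both sides of \eqref{e:Gaussian-conv} are $\mathbb{C}$-linear in $F$, it suffices to check the identity on a spanning subset. I would use the family of super-exponential generating forms
$$
\mathcal{E}(J,\bar J,\sigma,\bar\sigma;\phi,\psi) = \exp\!\bigl((J,\bar\phi)+(\bar J,\phi)+(\sigma,\bar\psi)+(\bar\sigma,\psi)\bigr),
$$
with $J,\bar J\in\mathbb{C}^{\Lambda}$ bosonic sources taken in a small polydisc so that all Gaussian integrals converge absolutely, and with $\sigma,\bar\sigma$ auxiliary anticommuting sources adjoined as extra Grassmann generators. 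Arbitrary monomials in $\phi,\bar\phi,\psi,\bar\psi$ are recovered by differentiating in the sources at the origin, so the identity for $\mathcal{E}$ implies the identity on the monomial basis of $\mathcal{N}(\Lambda)$, and then on all of $\mathcal{N}(\Lambda)$ by linearity.

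\emph{Computation on the generating family.} Under $\theta$, the shifts $\phi\mapsto\phi+\xi$ and $\psi\mapsto\psi+\eta$ (together with their conjugates) factorise $\mathcal{E}$ multiplicatively into a part depending only on the external fields $(\phi,\psi)$ and a part depending only on the fluctuations $(\xi,\eta)$. For the left-hand side of \eqref{e:Gaussian-conv}, I would complete the square in $(\xi,\eta)$ jointly in the bosonic and fermionic variables against the quadratic action $S_{(C'+C_1)^{-1}}$. The bosonic Gaussian integral with linear source $(J,\bar J)$ produces $\exp\!\bigl((J,(C'+C_1)\bar J)\bigr)$ by the usual identity, while the Berezin integral over $(\eta,\bar\eta)$ against $S_{(C'+C_1)^{-1}}$ with source $(\sigma,\bar\sigma)$ produces the anticommuting analogue $\exp\!\bigl((\sigma,(C'+C_1)\bar\sigma)\bigr)$; the bosonic normalising constant and the fermionic determinant cancel by the supersymmetric self-normalisation \eqref{e:self-norm}. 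For the right-hand side, the inner super-integration against $C_1$ produces the degree-$0$ prefactor $\exp\!\bigl((J,C_1\bar J)+(\sigma,C_1\bar\sigma)\bigr)$, which is constant in the outer external fields and so passes through the outer $\Ex_{C'}\theta$, contributing an additional factor $\exp\!\bigl((J,C'\bar J)+(\sigma,C'\bar\sigma)\bigr)$. Additivity of exponents combined with $C'+C_1=C_1+C'$ then matches the two sides.

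\emph{Main obstacle.} The principal technical point is the fermionic bookkeeping: one must verify that the shift $\psi\mapsto\psi+\eta$ by an anticommuting quantity interacts correctly with both the quadratic action $S_A$ and the Grassmann source $(\sigma,\bar\sigma)$, that the resulting Berezin integral yields $\exp\!\bigl((\sigma,C\bar\sigma)\bigr)$ with the correct sign conventions, and that the fermionic determinant exactly compensates the bosonic normalising constant so that $\Ex_C 1=1$. Once this sign and normalisation check is in place (essentially the content of the formalism recalled in Section~\ref{sec:gauss-sub} and in \cite{BIS09}), the convolution identity \eqref{e:Gaussian-conv} is the super-analogue of the Fubini rearrangement of a joint Gaussian integral over $(\xi,\eta)$ as an iterated integration, made possible by the bilinearity of $S_A$ in $(\phi,\bar\phi)$ and in $(\psi,\bar\psi)$.
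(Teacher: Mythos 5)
Your strategy---verifying \eqref{e:Gaussian-conv} on a generating family of super-exponentials by completing the square in the bosonic and fermionic fluctuation variables and then using additivity of the covariances in the resulting exponent---is sound in outline, and it is a genuinely different route from the one the paper relies on: the paper gives no proof here but defers to \cite[Proposition~\ref{norm-prop:conv}]{BS-rg-norm}, where the identity is obtained by representing $\Ex_C\theta$ as the exponential of a super-Laplacian and using that these operators commute and add in $C$. Your computation on the generating family itself is correct, including the fermionic bookkeeping you identify as the main obstacle, and the hypothesis that $C_1,C'$ have positive-definite Hermitian parts is exactly what makes the bosonic integrals converge for sources in a small polydisc.

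The gap is in the reduction step. Differentiating your $\mathcal{E}$ in the sources at the origin produces only monomials, i.e.\ forms whose coefficients $F_{x,y}(\phi,\bar\phi)$ are polynomials; but $\Ncal(\Lambda)$ consists of forms with \emph{general} coefficient functions, and the proposition is applied in this paper to $F=Z_0=e^{-V_0(\Lambda)}$, which is not a polynomial form. A general element of $\Ncal(\Lambda)$ is not a finite linear combination of monomials, so ``extension by linearity'' does not reach the stated generality; one needs either a density argument together with continuity of both sides of \eqref{e:Gaussian-conv} in a topology in which the super-expectations are continuous, or an appeal to uniqueness of Laplace transforms on the polydisc. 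The cleaner repair is to decouple the two sectors: the fermionic degrees of freedom are finitely many and Berezin integration is purely algebraic, so your source calculus (or direct expansion of $(\psi+\eta)^x(\bar\psi+\bar\eta)^y$) disposes of them exactly, and what remains is the classical statement that the convolution of the Gaussian measures with covariances $C_1$ and $C'$ is the Gaussian measure with covariance $C'+C_1$, applied to each coefficient function via Fubini and an explicit computation with densities rather than via linearity.
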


For a proof, see \cite[Proposition~\ref{norm-prop:conv}]{BS-rg-norm}.
Equations~\eqref{e:Gaussian-conv} and \eqref{e:EX1X2} are closely related.
To see this, consider the special case in which we set
$\phi=\bar\phi=0$ and $\psi=\bar\psi=0$ in \eqref{e:Gaussian-conv}.  In this case,
\refeq{Gaussian-conv} gives
\begin{equation}
    \Ex_{C'+C_1}  F = \Ex_{C'}  ( \Ex_{C_1}\theta  F),
\end{equation}
which is an abbreviation for the lengthier formula
(in which we suppress the conjugate fields) expressed in the conventional notation
for conditional expectations as
\begin{equation}
    \Ex_{C'+C_1}  F(\xi,\eta)
    = \Ex_{C'}  ( \Ex_{C_1}  ( F(\xi'+\xi_1,\eta'+\eta_1)  \mid \xi',\eta') ).
\end{equation}

As in Section~\ref{sec:gauss-sub},
we set $C =(-\Delta +m^2)^{-1}$ where $\Delta$ is the
Laplacian on
the torus $\Lambda_N$ of period $L^N$.
Our goal is to prove Theorem~\ref{thm:suscept-diff}, which concerns the asymptotic
behaviour of $\hat\chi_N$ of \eqref{e:chiNhatdef}.
In particular,
we are interested in the limits  $\Lambda_N \uparrow \Z^d$ and $m^2 \downarrow 0$,
so $C$ is an approximation to the operator $(-\Delta_{\Z^d})^{-1}$.
The kernel $[(-\Delta_{\Z^d})^{-1}]_{xy}$ decays as $|x-y|^{-2}$ as $|x-y|\to\infty$ in dimension
$d=4$, and such long-range correlations make the analysis difficult.
The renormalisation group approach takes the long-range correlations
into account progressively, by making a good decomposition of the
covariance $C$ into a sum of terms with \emph{finite range},
together with \emph{progressive integration}.
The covariance $(-\Delta_\Zd + m^2)^{-1}$ with $m^2 > 0$ also arises in our
analysis.

The progressive integration is based on finite-range decompositions
of the covariances $C$ and $(-\Delta_\Zd + m^2)^{-1}$.
The decompositions we use are defined and discussed in detail
in \cite[Section~\ref{pt-sec:Cdecomp}]{BBS-rg-pt}, based on
\cite{Baue13a} (see also \cite{Bryd09,BGM04}).
In \cite{Baue13a}, it is shown
that there are positive-definite
covariances $(C_j)_{1 \le j < \infty}$ on $\Zd$ such that
\begin{equation}
    (-\Delta_\Zd + m^2)^{-1} = \sum_{j=1}^\infty C_j.
\end{equation}
These covariances have the \emph{finite-range} property that
\begin{equation} \label{e:finite-range}
  C_{j;x,y} = 0 \quad \text{if $|x-y| \geq \half L^j$},
\end{equation}
and therefore $C_j$ can also be identified with a covariance on
the torus of period $L^N$ when $N>j$.
In addition, there is a positive-definite covariance $C_{N,N}$ on $\Lambda$ such that
\begin{equation}
\lbeq{CCN}
    C=(-\Delta_\Lambda + m^2)^{-1} = \sum_{j=1}^{N-1} C_j + C_{N,N}.
\end{equation}

The finite-range property
gives rise to a factorisation property which is essential for the analysis
of \cite{BS-rg-step},
upon which our results depend.  To state the property,
given $X \subset \Lambda$,
recall that $\Ncal(X)$ was defined in Section~\ref{sec:intforms} as the
subalgebra of $\Ncal(\Lambda)$ consisting of forms that depend
on the boson and fermion fields only at points in $X$.
It follows from \cite[Proposition~\ref{norm-prop:factorisationE}]{BS-rg-norm}
that for forms $F \in \Ncal(X)$
and $G \in \Ncal(Y)$ such that $\dist(X,Y) > \frac 12 L^j$,
\begin{equation}
\label{e:Exfac}
  \Ex_{C_j}\theta(FG) = (\Ex_{C_j}\theta F)(\Ex_{C_j}\theta G),
\end{equation}
so $F$ and $G$ are uncorrelated with respect to $\Ex_{C_j}$.
This is an extension to the fermionic setting of the
standard fact that uncorrelated Gaussian random
variables are independent.

By \refeq{CCN} and Proposition~\ref{prop:Gaussian-conv},
\begin{equation}
    \label{e:progressive}
    \Ex_{C}\theta F
    =
    \big( \Ex_{C_{N,N}}\theta \circ \Ex_{C_{N-1}}\theta \circ \cdots
    \circ \Ex_{C_{1}}\theta\big) F
    ,
\end{equation}
and this expresses the expectation on the left-hand side as a progressive
integration.
The calculation of the expectation on the right-hand side of
\refeq{chibarm}
provides the basis for our proof of Theorem~\ref{thm:suscept-diff}.
For simplicity, we sometimes write $C_N = C_{N,N}$.
To compute the expectation \eqref{e:ZNdef}, we use
\refeq{progressive}
to evaluate it progressively.  Namely, if we define
\begin{equation}
  \label{e:Z0def}
  Z_0 = e^{-V_0(\Lambda)}, \quad
  Z_{j+1} = \Ex_{C_{j+1}}\theta Z_j \;\;\; (j<N),
\end{equation}
then, consistent with \eqref{e:ZNdef},
\begin{equation}
  \label{e:ZN}
  Z_N = \Ex_C\theta Z_0.
\end{equation}
Thus we are led to study the recursion $Z_j \mapsto Z_{j+1}$.
To simplify the notation, we use the short-hand notation $\Ex_{j} = \Ex_{C_j}$,
and leave implicit the dependence of the covariance $C_j$
on of the mass $m$.

In our analysis, the value $m^2=0$ corresponds to $\nu=\nu_c$.
The fact that we work with covariances with $m^2>0$
is an important feature of our method because it
permits the renormalisation
group step to be carried out with good estimates indefinitely, not only for $\nu=\nu_c$
but also for $\nu > \nu_c$.

\subsection{Typical size of fields}
\label{sec:fieldsize}

To control the progressive
integration,
we need
good estimates on $C_j$.
To state the estimates, we fix some $\delta>0$ and define mass intervals
\begin{equation}
\lbeq{massint}
    \Iint_j = \begin{cases}
    [0,\delta) & (j<N)
    \\
    [\delta L^{-2(N-1)},\delta) & (j=N).
    \end{cases}
\end{equation}
It is shown in
\cite[Proposition~\ref{pt-prop:Cdecomp}]{BBS-rg-pt}
that  for multi-indices $\alpha,\beta$ with
$\ell^1$ norms $|\alpha|_1,|\beta|_1$ at most
some fixed value $p$, for $j \le N$, for $m^2 \in \Iint_j$, and for any $k \in \N$,
\begin{equation}
  \label{e:scaling-estimate}
  |\nabla_x^\alpha \nabla_y^\beta C_{j;x,y}|
  \leq c(1+m^2L^{2(j-1)})^{-k}
  L^{-(j-1)(2[\phi]+(|\alpha|_1+|\beta|_1))},
\end{equation}
where $c=c(k)$ depends on $k$ and $\delta$ but is independent of $j$ and $m^2$,
and the estimate is for $C_{N,N}$ when $j=N$.
Here $\nabla_x^\alpha=\nabla_{x_1}^{\alpha_1} \dotsb \nabla_{x_d}^{\alpha_d}$ for a
multi-index $\alpha=(\alpha_1,\dotsc,\alpha_d)$, where
$\nabla_{x_k}$ denotes the
finite-difference operator $\nabla_{x_k}f(x,y)=f(x+e_k,y)-f(x,y)$.
The number $[\phi]=\frac 12
(d-2)$ is referred to as the \emph{scaling dimension} or
\emph{engineering dimension} of the field, or, more briefly, simply as
the field's \emph{dimension}.
Throughout this paper, the parameter $\delta>0$ in \eqref{e:massint} is fixed
and all constants are allowed to depend on it.
For other reasons, we sometimes assume that $\delta>0$ is small.
We take limits in the order $N\to\infty$, and then $m^2 \downarrow 0$,
so that the condition $m^2 \geq \delta L^{-2(N-1)}$
required for the validity of \eqref{e:scaling-estimate} is satisfied.
Moreover, throughout the remainder of this paper, we tacitly
assume that $L$ is sufficiently large. This is needed,
in particular, for the results of \cite{BS-rg-IE,BS-rg-step}, on which our
results rely.

In the expectation $Z_{j+1}=\Ex_{j+1}\theta Z_j$, on the right-hand
side we may write $\phi_j = \phi_{j+1}+\xi_{j+1}$, as in
\eqref{e:thetadef}, and similarly for $\bar\phi_{j}, d\phi_{j},
d\bar\phi_{j}$.  The expectation $\Ex_{j+1}\theta$ integrates out
$\xi_{j+1}, \bar\xi_{j+1}, d\xi_{j+1}, d\bar\xi_{j+1}$ leaving
dependence of $Z_{j+1}$ on $\phi_{j+1}$, $\bar\phi_{j+1}$,
$d\phi_{j+1}$, $d\bar\phi_{j+1}$.  This process is repeated.  The
$\xi_j$ fields that are integrated out are called \emph{fluctuation
fields}.  The field $\phi_j$ is thus the variable in a Gaussian
super-expectation with the remaining covariance $C_{\ge
j+1}=\sum_{k=j+1}^N C_k$.  By \eqref{e:intzeroform}, $\Ex_{\ge
j+1}|\phi_{j,x}|^2 = C_{\ge j+1;x,x}$.  The bounds of
\eqref{e:scaling-estimate} suggest that $C_{\ge j+1;x,x}\approx
C_{j+1;x,x}$, so that the typical size of the field $\phi_{j;x}$ is of
order $L^{-j[\phi]}$.  Moreover, \eqref{e:scaling-estimate} also
indicates that the derivative $\nabla_{x_k} \phi_{j;x}$ is typically
smaller than the field itself by a factor $L^{-j}$, so that $\phi_j$
remains approximately constant over a distance $L^{j}$. It is less
familiar to think of the size of a differential form, but motivated by
\begin{equation}
  \Ex_{j+1}(\bar\psi_x\psi_y) = \Ex_{j+1}(\bar\phi_x\phi_y) = C_{j+1;xy},
\end{equation}
we can apply the same heuristics as for $\phi$ to the fermion field $\psi$.

\subsection{Polymers and relevant directions}
\label{sec:polymers}

To make a systematic analysis of the fluctuation fields with
covariance satisfying \eqref{e:finite-range} and
\eqref{e:scaling-estimate}, we introduce nested pavings of the torus
$\Lambda = \Lambda_N = \Z^d/L^N\Z^d$ by sets of \emph{blocks}
$\mathcal{B}_j$ on \emph{scales} $j=0,\ldots,N$. See
Figure~\ref{fig:RG_hierarchy1}. The blocks in $\mathcal{B}_0$ are
simply the points in $\Lambda$. The blocks in $\mathcal{B}_1$ form a
disjoint paving of $\Lambda$ by boxes of side $L$.  More generally,
each block in $\mathcal{B}_j$ has side length $L^j$ and consists of
$L^d$ disjoint blocks in $\mathcal{B}_{j-1}$. The following definition
makes this more formal, and also introduces the concept of a
\emph{polymer}, which has a long history in statistical mechanics
going back to the important paper \cite{GK71} (these polymers have
nothing to do with long chain molecules or random walks).

\begin{figure}
\begin{center}
\includegraphics[scale = 0.4]{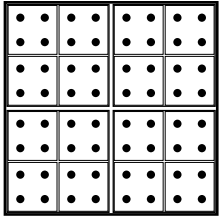}
\end{center}
\caption{\lbfg{reblock} Illustration of ${\cal B}_j(\Lambda)$ for
$j=0,1,2,3$ when $d=2$, $N=3$, $L=2$.}
\label{fig:RG_hierarchy1}
\end{figure}

\begin{defn}
\label{def:blocks}
For each $j=0,1,\ldots,N$, the torus $\Lambda$ is paved in a natural
way by $L^{N-j}$ disjoint $d$-dimensional cubes of side $L^j$.  The cube
that contains the origin at the corner has the form
\begin{equation}
    \{x\in \Lambda:  x_{i} = 0,1,\dots ,L^{j}-1 \; \forall \, i=1,\dots ,d\}
\end{equation}
and all the other cubes are translates of this one by vectors in
$L^{j} \Zd$.  We call these cubes $j$-{\em blocks}, or {\em blocks}
for short, and denote the set of $j$-blocks by ${\cal B}_j= {\cal
B}_j(\Lambda)$.
A union of $j$-blocks is called a {\em polymer} or $j$-{\em
polymer}, and the set of $j$-polymers is denoted ${\cal P}_j={\cal
P}_j(\Lambda)$.
\end{defn}

For a block $B \in \mathcal{B}_j$, the considerations
in Section~\ref{sec:fieldsize} concerning
the typical size of $\phi_j$ suggest that, at each of the $L^{dj}$
points $x\in B$, $\phi_{j,x}$ has typical size of order $L^{-j[\phi]}$, and
hence
\begin{equation} \label{e:monomrel}
  \sum_{x\in B} \phi_{j,x}^p \approx L^{dj} L^{-pj[\phi]}
  = L^{(d-p[\phi])j}.
\end{equation}
The above right-hand side is \emph{relevant} (growing exponentially in $j$) for
$p[\phi]<d$, \emph{irrelevant} (decaying exponentially in $j$) for
$p[\phi]>d$, and \emph{marginal} (neither growing or decaying) for
$p[\phi]=d$.  Since $\tau_x = \phi_x\bar\phi_x +\psi_x\bar\psi_x$ is
quadratic in the fields, it corresponds to $p=2$, so
$p[\phi]=2[\phi] = d-2 < d$ and $\tau_x$ is relevant in all
dimensions. Similarly, $\tau_x^2$ corresponds to $p=4$ with
$p[\phi]=4[\phi]=2d-4$, so $\tau_x^2$ is irrelevant for $d>4$,
marginal for $d=4$, and relevant for $d<4$.
Also, $\tau_{\Delta,x}$ is marginal in all dimensions.
In fact, by \cite[Lemma~\ref{pt-lem:ss}]{BBS-rg-pt},
any sum of local field monomials in $\phi$ and $\psi$ that are relevant or marginal,
Euclidean invariant,
and that obey the additional symmetry between boson and fermions called
\emph{supersymmetry}
(see \cite[Section~\ref{pt-sec:supersymmetry}]{BBS-rg-pt} for details)
is of the same form as $V_0$ of \refeq{Vtil0def}, plus an additional term
$\tau_{\nabla\nabla}$
(the use of summation by parts to avoid inclusion of the term
$\tau_{\nabla\nabla}$
is discussed in \cite[Section~\ref{pt-sec:ourflow}]{BBS-rg-pt}).

Essentially based on this observation,
the general approach of Wilson \cite{WK74} suggests that, for $d=4$
the map $Z_j \mapsto Z_{j+1}$
is qualitatively well approximated by a $3$-dimensional map
$(g_j,z_j,\mu_j) \mapsto (g_{j+1},z_{j+1},\mu_{j+1})$ such that
\begin{equation} \label{e:Zjapprox}
  Z_j  \approx e^{-V_j(\Lambda)}
\end{equation}
where $V_j$ has the same form as $V_0$ with \emph{renormalised} coupling constants $g_j,z_j,\mu_j$.
Furthermore, Wilson's general approach suggests
that, given $g_0>0$, one can find $z_0,\mu_0$ such that $(g_j,z_j,\mu_j) \to 0$,
i.e., $Z_j \approx 1$ as $j \to\infty$.
This is the famous observation of \emph{infrared asymptotic freedom}.
In the context of Section~\ref{sec:ga}, it
corresponds to the approximation of the interacting model by the non-interacting model.
Our method gives a rigorous meaning to the approximation,
by introducing a codimension-$3$ error coordinate $K_j$ 
which rigorously keeps track of all errors in the approximation.
The introduction of $K_j$ trades the \emph{linear but nonlocal} evolution of $Z_j$
given by \eqref{e:Z0def}
for a \emph{nonlinear but local} evolution of $(V_j,K_j)$.
The evolution of $(V_j,K_j)$ is addressed in Sections~\ref{sec:rg-map}--\ref{sec:bulk-flow}
below.

Our construction of the evolution map in \cite{BS-rg-step} makes important use of
the finite-range property of the covariance decomposition.
This can be contrasted with
the \emph{block spin} method
\cite{GK85,GK86},
in which the
fluctuation covariances $C_j$ are chosen such the fields
$\phi_j$ are constant over the blocks in $\Bcal_j(\Lambda)$.
Block spin covariances decay exponentially, but do not have the finite-range property \eqref{e:finite-range}.
In our setup, fields are only \emph{approximately} constant over blocks by \eqref{e:scaling-estimate},
but this is compensated by the independence property \eqref{e:Exfac} which
allows for an effective construction of a renormalisation group map,
by using independence rather than cluster expansion.

\subsection{Interaction functional}
\label{sec:if}

Motivated by the discussion in Section~\ref{sec:polymers},
we define $\Vcal = \R^3$ and identify $\Vcal$ with the vector space of
\emph{local field polynomials}
\begin{align}
    \label{e:Vterms}
    &
    V
    =
    g \tau^{2} + \nu \tau + z \tau_{\Delta}
    ,
\end{align}
where $(g,\nu,z)  \in \R^3$.
Rather than the coupling constant $\nu = \nu_j$ we often use its rescaled version
\begin{equation} \label{e:munu}
    \mu_j = L^{2j} \nu_j
    .
\end{equation}
The reason for this rescaling is that, according to the heuristics
discussed around \eqref{e:monomrel}, the sizes of
$\tau^2,\tau_{\Delta}$ are comparable with that of $L^{-2j}\tau$ rather than with
that of $\tau$.
There is no distinction between $\mu_0$ and $\nu_0$, so the
two are interchangeable.
The parameters $(g,z,\nu)$ in $V$ are designed to track the
relevant and marginal directions in the dynamical system determined
by the renormalisation group flow.
The norm we use on $\Vcal$
is the scale-dependent
maximum norm on $\R^3$ with respect to the rescaled coordinate $\mu$,
i.e.,
\begin{equation}
\lbeq{Vcalnorm}
    \|V\|_{\Vcal} = \max\{ |g|, |z|, |\mu| \} = \max\{ |g|, |z|, L^{2j} |\nu| \}.
\end{equation}

Given a polymer $X \in \Pcal_j(\Lambda)$ as in Definition~\ref{def:blocks},
we write $V(X)$ for the form
\begin{equation} \label{e:VX}
  V(X) = \sum_{x\in X} (g \tau_x^2 + \nu \tau_x + z \tau_{\Delta,x}).
\end{equation}
In \cite[Section~\ref{pt-sec:Vpt-def}]{BBS-rg-pt},
for $X \in \Pcal_j(\Lambda)$
we define a field functional $W_j(V,X) \in \Ncal$
as an explicit quadratic function of $V \in \Vcal$.
Given $V \in \Vcal$,  $B \in {\cal B}_j(\Lambda)$, and $X \in {\cal P}_j(\Lambda)$,
we define the interaction functional
$I_j : \Pcal_j(\Lambda) \to \Ncal(\Lambda)$ by
\begin{align}
\label{e:Idef}
    I_{j} (V,B)
    & = e^{-V(B)}(1+W_j(V,B)),
    \quad\quad
    I_{j} (V,X)
    =
    \prod_{B\in \Bcal_{j}(X)}I_{j} (V,B)
    .
\end{align}
For simplicity, we often write $I_{j} (X) = I_{j} (V,X)$, with the
coordinate
$V$ left
implicit.
The precise definitions of $W_j(V,X)$
and $I_{j}(V,B)$ do not play a major role in this paper, but they do play a role
in results from \cite{BBS-rg-pt,BS-rg-IE,BS-rg-step} upon which our analysis depends.
The interaction functional $I_j$ has the following important properties.
  In their statement, the enlargement $B^+$ of a block $B$ is the union of $B$ and its neighbouring blocks.
\begin{itemize}
\item
  Field locality: $I_j(B) \in \Ncal(B^+)$
  for each block $B \in \Bcal_j(\Lambda_N)$;
\item
  Symmetry: $I_j$ is
  supersymmetric
  and Euclidean invariant;
\item
  Block factorisation: $I_j(X) = \prod_{B \in \Bcal_j(X)} I_j(B)$.
\end{itemize}
By definition, $W_0=0$, so $I_{0} (V,X)=e^{-V (X)}$ for all $X \subset \Lambda$.
With $V_0$ defined by \eqref{e:Vtil0def}, it follows from \eqref{e:Z0def} that
\begin{equation}
  Z_0 = I_0(\Lambda) = I_0(V_0, \Lambda).
\end{equation}

The definition \eqref{e:Idef} is motivated in
\cite[Section~\ref{pt-sec:WPjobs}]{BBS-rg-pt}.
In short, given $V$ as in \eqref{e:Vterms}, it is shown in
\cite[Section~\ref{pt-sec:WPjobs}]{BBS-rg-pt}
that there is an explicit choice $\Vpt$ of the same form as $V$
but with renormalised coupling constants,  such that
as a formal power series in $V$,
\begin{equation}
\label{e:ExIEapprox}
  \Ex_{j+1}\theta I_j(V,\Lambda) = I_{j+1}(\Vpt,\Lambda) + O(V^3).
\end{equation}
This shows that the form of $I$ is stable in the sense that
integrating out a fluctuation field 
is approximately the
same as replacing $V$ by $\Vpt$.
However, the error in \eqref{e:ExIEapprox} fails to be uniform in the
size of $\Lambda$.
To obtain an approximation that is uniform in $\Lambda$, we use the
localised form \refeq{Idef}.

The factorised version \eqref{e:Idef}, together with our choice of the
side length of $B$ to be larger than the range of the covariance,
allows us to take advantage of independence of fields on polymers that
do not touch.
An expression of this is \eqref{e:Exfac}, which implies that whenever
$X_1,X_2 \in \Pcal_j(\Lambda)$ do not touch, and $F(X_i) \in
\Ncal(X_i)$ for $i=1,2$, then
\begin{equation} \label{e:Efac}
  \Ex_{j} \theta (F(X_1)F(X_2)) = (\Ex_j\theta F(X_1))(\Ex_j\theta F(X_2))
  .
\end{equation}
This factorisation property can be combined with a perturbative calculation
on individual polymers, together with a careful control of errors
in the perturbative calculation.  Large parts of
\cite{BBS-rg-pt,BS-rg-IE,BS-rg-step} are concerned
with such matters.

\subsection{Circle product and error coordinate}
\label{sec:circ}

Given $F_1, F_2 :{\cal P}_j \to \Ncal$, we define their \emph{circle product}
$F_1\circ F_2 :{\cal P}_j \to \Ncal$ by
\begin{equation} \label{e:circprod}
    (F_1 \circ F_2)(Y) = \sum_{X\in {\cal P}_j: X \subset Y} F_1(X) F_2(Y \setminus X),
    \quad \text{for $Y \in \Pcal_j$.}
\end{equation}
The terms corresponding to $X=\varnothing$ and $X=Y$ are included
in the summation on the right-hand side, and we assume that
$F(\varnothing)=1$ for all $F: \Pcal_j \to \Ncal$.
The circle product depends on the scale $0 \leq j \leq N$, but
we leave this dependence implicit.
It is an associative and commutative product, since $\Ncal$ has these properties.
The identity element for the circle product is
$\1_\varnothing(X) = \1_{X=\varnothing}$, i.e.,
$(F\circ \1_{\varnothing})(Y) = F(Y)$ for all $F$ and $Y$.

Let $K_0 : \Pcal_0 \to \Ncal$ be the identity element $K_0 = \1_{\varnothing}$.
Then $Z_{0}=I_0(V_0,\Lambda)$ of \eqref{e:Z0def} is
also given by
\begin{equation}
\label{e:Zinit}
    Z_0 = I_0(\Lambda) =  (I_0 \circ K_0)(\Lambda)
    .
\end{equation}
Our procedure is to maintain this form,
\begin{gather}\label{e:ZIK}
    Z_j =  (I_j \circ K_j)(\Lambda)
\quad \text{with} \quad
    K_{j}:\Pcal_{j}(\Lambda) \rightarrow \Ncal(\Lambda),
\end{gather}
in the recursion $Z_j \mapsto Z_{j+1} = \Ex_{j+1}\theta Z_j$ of
\eqref{e:Z0def}, with the initial condition given by \eqref{e:Zinit}.
With $I_j=I_j(V_j)$,
the action of $\Ebold_{j+1}\theta$ on $Z_{j}$ is then expressed as
the \emph{renormalisation group map}
\begin{equation}
\label{e:RGmap}
     ( V_j, K_j) \mapsto ( V_{j+1},K_{j+1}).
\end{equation}
The map \refeq{RGmap} depends on the scale $j$ and the volume parameter $N$.
To achieve this, given
$V_j \in \Vcal$ and $K_j : \Pcal_j(\Lambda) \to \Ncal(\Lambda)$,
we seek to define  $V_{j+1}\in \Vcal$ and
$K_{j+1} : \Pcal_{j+1}(\Lambda) \to \Ncal(\Lambda)$ in such a way that
\begin{equation}
    \label{e:Kspace-objective}
    Z_{j+1}
    =
    \Ex_{j+1}\theta (I_j \circ K_j)(\Lambda)
    =
    (I_{j+1} \circ K_{j+1})(\Lambda)
    .
\end{equation}
Then $Z_{j} = (I_{j}\circ K_{j})(\Lambda)$ retains its form under progressive integration.
The representation \refeq{ZIK} is by no means unique: given $Z_j$ there are many
choices of  $I_j,K_j$
such that \eqref{e:Kspace-objective} holds. For example, a trivial choice is given by
$K_j(X) = \1_{X=\Lambda} Z_j(X)$ along with \emph{any} $I_j$ for which
$I_j(\varnothing)=1$.

It is crucial to make a careful choice of the representation
in order to obtain useful estimates valid at all scales.
A choice of the map $(V_j,K_j) \mapsto (V_{j+1},K_{j+1})$ that leads to good estimates
is explicitly constructed in \cite{BS-rg-step}.
This choice is such that $K_j = O(\|V_j\|^3)$ holds in a certain precise sense, so
that $K_j$ can be regarded as a third-order error term.
We discuss this in more detail in the next section, deferring substantial
details to \cite{BS-rg-step}.

Substantial details underlying our analysis are
developed in the series
of papers \cite{BBS-rg-flow,BBS-rg-pt,BS-rg-norm,BS-rg-loc,BS-rg-IE,BS-rg-step}.
One important aspect, which is worthy of mention despite not playing a visible
role in the present paper, is the operator $\LT$ defined and analysed in \cite{BS-rg-loc}.
The operator $\LT$ extracts from an arbitrary element of $\Ncal$ its relevant and marginal
parts, in the form of a local polynomial in the fields.
Although
$\LT$ does not make a direct appearance in the present paper, it plays a crucial role in
results from
\cite{BBS-rg-pt,BS-rg-IE,BS-rg-step} upon which we rely here.
The local polynomial $V_{j+1}$ is created from $(V_j,K_j)$ using $\LT$, which is used
to incorporate the relevant and marginal parts of $W_j$ and
$K_j$ into $V_{j+1}$.

\section{Renormalisation group map}
\label{sec:rg-map}

In this section, we discuss the definition and properties of the
renormalisation group map $(V_j, K_j) \mapsto (V_{j+1},K_{j+1})$ in
more detail.  Our discussion summarises results of
\cite{BBS-rg-pt,BS-rg-IE,BS-rg-step}, where further details can be
found.  The \emph{observable} fields $\sigma,\bar\sigma$ that appear
in those references play no role in this paper, and can be set equal
to zero for our present needs.  The observable fields are needed for
the analysis of critical correlation functions in
\cite{BBS-saw4,ST-phi4}.

Both \emph{local} and \emph{global} aspects are important for the
repeated application of the renormalisation group map.  In this
section, we focus on the local aspect, which concerns a single
application under appropriate assumptions on $(V_j,K_j)$.
We give precise meaning to the notion that $V_j$ is a second-order
approximation of $Z_j$ and that $K_j$ is a third-order remainder.
The global aspect is discussed in Section~\ref{sec:bulk-flow}; this is
the requirement that the assumptions on $(V_j,K_j)$ continue to remain
valid as the two scales $j$ and $N$ increase indefinitely.  It is the
global aspect that requires the careful choice of the initial
parameters $z_0$ and $\nu_0$ in Theorem~\ref{thm:suscept-diff}.

\subsection{Perturbative quadratic flow}
\label{sec:flow-approx}

In \cite[\eqref{pt-e:varphiptdef}]{BBS-rg-pt}, an explicit quadratic
map $\varphi_{\pt,j}^{\smash{(0)}}: \R^3 \to \R^3$ is defined and
discussed.  This map defines $V_{\pt,j+1}^{\smash{(0)}} =
\varphi_{\pt,j}^{\smash{(0)}}(V_j)$ in such a way that the formal
power series identity \refeq{ExIEapprox} is satisfied.  The map
$\varphi^{\smash{(0)}}_{\pt,j}$ is a second-order approximation to the
$V$-component of the full renormalisation group map, $(V_j, K_j)
\mapsto (V_{j+1}, K_{j+1})$.
It gives the dominant contribution to the renormalisation group map,
and it must be understood in detail.

It is useful to re-express
the $V$-component of the renormalisation group map
in \emph{transformed coordinates}.
In \cite[Proposition~\ref{pt-prop:transformation}]{BBS-rg-pt}, we define
quadratic polynomials $T_j : \R^3 \to \R^3$ and consider the
change of variables
$V_j \mapsto \smash{\Vch_j}=T_j(V_j)$.
The transformation $T_j$ satisfies
\begin{equation} \label{e:TVV2}
  T_0(V) = V, \quad\quad T_j(V) = V + O(\|V\|^2),
\end{equation}
with error estimate uniform in $j$.
Since the $T_j$ are polynomials,
this implies that they are invertible in a neighbourhood of $0$ that is independent of $j$.
The composite maps
$T_{j+1}\circ\varphi_{\pt,j}^{\smash{(0)}}\circ T_j^{-1}$
are equal, up to an error $O(\|V\|^3)$, to $\bar\varphi_j: \R^3 \to \R^3$,
given explicitly
by $(\gbar_{j+1},\zbar_{j+1},\mubar_{j+1}) = \bar\varphi_j(\gbar_j,\zbar_j,\mubar_j)$ with
\begin{align}
\label{e:gbar}
  \gbar_{j+1}
  &=
   \gbar_j
  -
  \beta_j  \gbar_j^{2}
  ,
  \\
\label{e:zbar}
  \zbar_{j+1}
  &=
  \zbar_j
  - \theta_j  \gbar_j^{2}
  ,
  \\
\label{e:mubar}
  \mubar_{j+1}
  &=
  L^2 \mubar_j ( 1- \gamma \beta_j \gbar_j)
  + \eta_j  \gbar_j
  -
  \xi_j  \gbar_j^{2}
   - \pi_j  \gbar_j  \zbar_j,
\end{align}
where $\beta_j,\theta_j,\eta_j,\xi_j,\pi_j$
are real coefficients defined precisely in
\cite[\eqref{pt-e:nuplusdef},\,\eqref{pt-e:betadef}--\eqref{pt-e:xipidef}]{BBS-rg-pt}.
These coefficients, and also those of the transformations $T_j$,
depend continuously on the mass $m^2$ appearing in the covariance decomposition of
Section~\ref{sec:decomposition},
but are independent of the size  $N$ of the torus.

The effect of the transformation $T_j$ is to \emph{triangularise} the evolution equation to second order:
the  $\gbar$-equation does not depend on $\zbar$ or $\mubar$, the $\zbar$-equation depends only on $\gbar$,
and the $\mubar$-equation depends both on $\gbar$ and $\zbar$.
This second-order triangularisation is the natural coordinate system
to study the evolution of $V_j$.
For example, we emphasise the explicit occurrence of the parameter
$\gamma = \frac{1}{4}$ in \eqref{e:mubar} which, in
Section~\ref{sec:pfmr}, gives rise to the power of the logarithm in
Theorem~\ref{thm:suscept}. This parameter is not apparent in the
original equations before transformation by $T_j$ (cf.\
\cite[\eqref{pt-e:nupta}]{BBS-rg-pt} where an additional term
proportional to $g\mu$ appears).

The sequence $(\beta_j)_{0\le j < \infty}$ plays a key role in the
analysis.  It is defined in \cite[\eqref{pt-e:betadef}]{BBS-rg-pt} by
\begin{equation} \label{e:betadef}
  \beta_j =
  8\sum_{x \in \Z^d} \left( w_{j+1,x}^2 -  w_{j,x}^2\right) ,
  \quad \text{with } w_{j,x} = \sum_{i=1}^j C_{i;0,x}.
\end{equation}
The $m^2$-dependence of $\beta_j$ is suppressed in the notation.
Since each $C_i$ is positive-definite, the above definition implies
that $\beta_j > 0$ for all $j$.  Moreover, it is shown in
\cite[Lemma~\ref{pt-lem:betalim}]{BBS-rg-pt} that
$\lim_{j\to\infty}\beta_j =\log L /\pi^2$ for $m^2=0$, so $\beta_j$ is
bounded away from $0$ for sufficiently large $j$.  On the other hand,
for $m^2 >0$, by \refeq{scaling-estimate}, $\beta_j$ decays extremely
rapidly to $0$ for $j \geq j_m$ where $j_m$ is the \emph{mass scale},
i.e., the smallest $j$ such that $L^{2j}m^2 \geq 1$.  The mass scale
is natural for our specific sequence \refeq{betadef}, but we apply
below a general dynamical system result from \cite{BBS-rg-flow}, in
which there is no explicit mass parameter with which to define a mass
scale.

To prepare for the application of the main result of \cite{BBS-rg-flow},
given $\Omega >1$ we define
the
$\Omega$-\emph{scale} $\jm \in \N \cup \{\infty \}$, by
  \begin{equation}
  \lbeq{mass-scale}
    \jm = \jm(m^2) = \inf \{ k \geq 0: |\beta_j| \leq \Omega^{-(j-k)} \|\beta\|_\infty
    \text{ for all $j$} \}
    .
  \end{equation}
For the remainder of the paper, we fix $\Omega > 1$ arbitrarily, e.g., $\Omega=2$.
According to \cite[Proposition~\ref{pt-prop:rg-pt-flow}]{BBS-rg-pt}, $j_m$ and
$\jm$ are equivalent for the sequence \refeq{betadef}, in the sense that
$|j_m-j_\Omega|$ is bounded uniformly as $m^2 \downarrow 0$.  Thus we may regard
the mass scale and $\Omega$-scale as essentially equivalent.  We define
\begin{equation}
\lbeq{chidef}
    \chi_j = \chi_j(m^2) = \Omega^{-(j-\jm)_+}.
\end{equation}
In \cite[Proposition~\ref{pt-prop:rg-pt-flow}]{BBS-rg-pt}, it is verified that the coefficients $\theta_j,\eta_j,\xi_j,\pi_j$
are bounded by $O(\chi_j)$
and that they depend continuously on $m^2$. In particular, it is shown that the following two
assumptions are satisfied.

\medskip\noindent \textbf{Assumption~(A1).}
  \emph{The sequence $\beta$:}
  The sequence $(\beta_j)$ is bounded, namely
  $\betamax = \sup_{j\in\N_0} |\beta_j| < \infty$.
  There exists $c>0$ such that $\beta_j \ge c$ for all but $c^{-1}$ values of $j \le \jm$.

\medskip\noindent \textbf{Assumption~(A2).}
  \emph{The other parameters of $\bar\varphi$:}
  Each of
  $\theta_j$, $\eta_j$, $\xi_j$, and $\pi_j$
  is bounded in absolute value by $O(\chi_j)$,
  with a constant that is independent of both $j$ and $\jm$.
\medskip

The above
Assumption~(A1) is the same as \cite[Assumption~(A1)]{BBS-rg-flow}, and
the above Assumption~(A2) is a specialised form of \cite[Assumption~(A2)]{BBS-rg-flow}
(where in the notation of \cite{BBS-rg-flow},
$\zeta_j = 0$, $\upsilon_j^{\smash{gg}} = \xi_j$, $\upsilon_j^{\smash{gz}} = \pi_j$,
$\upsilon_j^{\smash{g\mu}} = L^2 \gamma \beta_j$,
and $\upsilon_j^{\smash{zz}} = \upsilon_j^{\smash{z\mu}} = 0$).
We apply Assumptions~(A1--A2)
in the proof of Proposition~\ref{prop:flow-flow} below,
where we apply the main result of \cite{BBS-rg-flow}.

\medskip

We interpret the \emph{scale} parameter $j$ as \emph{time},
and thus view $\bar\varphi = (\bar\varphi_j)_{j\in\N_0}$ as a time-dependent dynamical system on $\R^3$ which we call the \emph{quadratic flow}.
The exponential decay of the coefficients beyond $\jm$ has the interpretation that the evolution essentially stops at $j=\jm$.
The dynamical  system $\bar\varphi_j$ has the fixed point $(0,0,0)$.
This fixed point is not hyperbolic:
there are two unit eigenvalues of $D\bar\varphi_j(0)$ corresponding to the $g$- and $z$-equations.
In \cite{BBS-rg-flow}, such non-hyperbolic fixed points
are studied in a general infinite-dimensional setting.
The flow of $\bar\varphi$ near the fixed point $0$ is elementary, and can be understood
via explicit analysis using the triangular structure of the equations
\eqref{e:gbar}--\eqref{e:mubar}.
In particular, the following proposition is a special case of
\cite[Proposition~\ref{flow-prop:approximate-flow}]{BBS-rg-flow}.
In its statement, $z_\infty$ denotes the limit $z_\infty = \lim_{j \to\infty}z_j$,
and similarly for $\mu_\infty$.

\begin{prop} \label{prop:approximate-flow}
  If $\gbar_0>0$ is sufficiently small,
  then there exists a unique global flow $\bar V = (\bar V)_{j\in \N_0} =
  (\gbar_j, \zbar_j, \mubar_j)_{j\in\N_0}$
  of $\bar\varphi$,
  i.e., a solution to $\bar V_{j+1} = \bar\varphi_j(V_j)$ defined for all $j \in\N_0$, with initial condition
  $\gbar_0$ and
  final condition
  $(\zbar_\infty , \mubar_\infty) = (0, 0)$.
  This flow satisfies, for any real $p \in [1,\infty)$,
  \begin{equation} \lbeq{approximate-flow}
    \chi_j \gbar_j^p
    = O\left(\frac{\gbar_0}{1+\gbar_0j}\right)^p,
    \quad
    \zbar_j = O(\chi_j \gbar_j) ,
    \quad
    \mubar_j = O(\chi_j \gbar_j),
  \end{equation}
  with constants independent of $j_\Omega$ and $\gbar_0$, and dependent on $p$ in
  the first bound.
  Furthermore, $\bar V_j$ is continuously
  differentiable in the initial condition $\gbar_0$
  and continuous in the mass parameter $m^2$,
  for every $j \in \N_0$.
\end{prop}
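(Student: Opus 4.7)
My plan is to exploit the triangular structure of the system \eqref{e:gbar}--\eqref{e:mubar}: solve the autonomous $\gbar$-equation forward, then the $\zbar$-equation by backward summation, then the $\mubar$-equation by backward contraction. Since the proposition is stated as a specialisation of \cite[Proposition~\ref{flow-prop:approximate-flow}]{BBS-rg-flow}, an alternative would be to simply verify that Assumptions~(A1)--(A2) hold in our setting (which has already been noted above) and quote that result. Below I outline the direct argument, since it is short and explains the origin of the asserted bounds.

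First I would treat the scalar recursion $\gbar_{j+1} = \gbar_j - \beta_j \gbar_j^2$. For $\gbar_0 > 0$ small, the smallness of $\betamax \gbar_0$ combined with $\beta_j \geq 0$ implies $0 < \gbar_{j+1} \leq \gbar_j$, so $(\gbar_j)$ is positive and decreasing. Setting $u_j = 1/\gbar_j$ gives the telescoping relation $u_{j+1} - u_j = \beta_j/(1-\beta_j \gbar_j) = \beta_j + O(\beta_j^2 \gbar_j)$; summing and using part of (A1) that $\sum_{k<j} \beta_k \gtrsim j \wedge \jm$ yields $\gbar_j \asymp \gbar_0/(1 + \gbar_0 (j\wedge \jm))$. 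For $j > \jm$, Assumption~(A1) and the definition \refeq{mass-scale} imply $\sum_{k \geq \jm} \beta_k \gbar_k^2 = O(\Omega^{-(j-\jm)})$, so $\gbar_j$ stabilises and the bound $\chi_j \gbar_j^p = O(\gbar_0/(1+\gbar_0 j))^p$ is obtained by comparing the exponential decay of $\chi_j$ for $j > \jm$ against the polynomial decay of the right-hand side.

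Next, for the $\zbar$-equation, summing the update forward gives $\zbar_j = \zbar_\infty + \sum_{k \geq j} \theta_k \gbar_k^2$, and the final condition $\zbar_\infty = 0$ fixes the unique bounded trajectory, namely $\zbar_j = \sum_{k \geq j} \theta_k \gbar_k^2$. Absolute convergence follows from $|\theta_k| = O(\chi_k)$ by Assumption~(A2), together with the bound on $\gbar_k$ just proved. The resulting tail estimate splits into the sum from $j$ to $\jm$, which is dominated by $\gbar_j$ via $\sum_{k\geq j} \gbar_k^2 \asymp \gbar_j$ (the Abel/integral comparison $\sum 1/k^2 \sim 1/j$), and the sum for $k > \jm$, which is bounded by a geometric series in $\Omega^{-1}$ and yields the $\chi_j \gbar_j$ decay in the second claim of \refeq{approximate-flow}.

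Finally, the $\mubar$-equation has the \emph{expanding} multiplier $L^2(1-\gamma\beta_j\gbar_j)$, so I run it backward. Writing it as $\mubar_j = [L^2(1-\gamma\beta_j\gbar_j)]^{-1}(\mubar_{j+1} - r_j)$ with $r_j = \eta_j\gbar_j - \xi_j\gbar_j^2 - \pi_j\gbar_j\zbar_j = O(\chi_j \gbar_j)$, the backward iteration is a strict contraction with factor close to $L^{-2}$. The unique bounded solution with $\mubar_\infty = 0$ is
\begin{equation}
\mubar_j = -\sum_{k\geq j} \frac{b_j}{b_{k+1}}\, r_k, \qquad b_k = L^{2k}\prod_{l<k}(1-\gamma\beta_l\gbar_l),
\end{equation}
which converges because $b_j/b_{k+1}$ decays geometrically in $k-j$ (the product factors contribute only a polynomial-in-$\gbar$ correction, the origin of the exponent $\gamma=\tfrac14$ used later in Section~\ref{sec:pfmr}). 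Since $r_k$ is slowly varying in $k$, the sum is dominated by its first term and gives $\mubar_j = O(\chi_j \gbar_j)$. Continuous differentiability in $\gbar_0$ is inherited from the explicit series for $\zbar_j$ and $\mubar_j$ together with the forward flow for $\gbar_j$; continuity in $m^2$ follows from the corresponding continuity of $\beta_j,\theta_j,\eta_j,\xi_j,\pi_j$ and uniform convergence of the tail sums. The main obstacle is the uniformity of all bounds with respect to $\jm$ (equivalently, in the limit $m^2 \downarrow 0$ where $\jm \to \infty$), which requires carefully matching the polynomial regime $j \leq \jm$ with the exponential-decay regime $j > \jm$ governed by $\chi_j$.
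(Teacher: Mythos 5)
Your proposal is correct in outline, but it follows a genuinely different route from the paper: the paper gives no proof of Proposition~\ref{prop:approximate-flow} at all, deferring entirely to the general dynamical-systems result \cite[Proposition~\ref{flow-prop:approximate-flow}]{BBS-rg-flow}, whose hypotheses (A1)--(A2) are verified in \cite[Proposition~\ref{pt-prop:rg-pt-flow}]{BBS-rg-pt}. You instead reconstruct the direct argument, and your three-stage triangular scheme (forward $\gbar$, backward telescoping for $\zbar$, backward geometric contraction for $\mubar$) is essentially the content of the cited proof; the explicit objects you produce --- the inverse relation $\gbar_j^{-1}=\gbar_0^{-1}+\sum_{k<j}\beta_k+O(\log(1+\gbar_0 j))$, the tail sum $\sum_{k\ge j}\chi_k\gbar_k^2=O(\chi_j\gbar_j)$, and the product $\prod_l(1-\gamma\beta_l\gbar_l)$ --- reappear verbatim later in the paper as \eqref{e:gjinv}, \eqref{e:chigbd-bis} and \eqref{e:prodbdx}, all quoted from \cite{BBS-rg-flow}. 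Two caveats are worth recording. First, your monotonicity step $0<\gbar_{j+1}\le\gbar_j$ uses $\beta_j\ge 0$, which holds for the specific sequence \eqref{e:betadef} (positive-definiteness of the $C_i$) but is not part of the general Assumption~(A1); the companion paper's proof must, and does, work without it, so your argument proves the specialisation rather than the general statement. Second, the regularity assertions ($C^1$ dependence on $\gbar_0$, continuity in $m^2$) are the least routine part of the proposition and you dispatch them in one sentence; differentiating the forward $\gbar$-recursion and the backward tail sums for $\zbar_j,\mubar_j$ in $\gbar_0$, with uniform-in-$\jm$ control of the resulting series, is exactly where the quoted estimates such as \cite[\eqref{flow-e:gbarprime}]{BBS-rg-flow} are needed, so this step deserves more than an appeal to "explicit series". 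What your approach buys is transparency about where the bounds \eqref{e:approximate-flow} and the exponent $\gamma=\frac14$ come from; what the paper's citation buys is the full generality and the uniformity statements needed elsewhere (e.g.\ in Theorem~\ref{thm:flow4}) without repeating a lengthy calculus argument.
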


The main difficulty in the analysis of the
renormalisation group map \refeq{RGmap}
lies in the control of the non-perturbative coordinate
$K_j : \Pcal_j(\Lambda) \to \Ncal(\Lambda)$, and we address this next.

\subsection{Non-perturbative coordinate}
\label{sec:flow-infvollim}

The coordinate $V_j$ is identified with an element of the space $\Vcal = \R^3$, which is independent of $\Lambda$.
The remainder coordinate $K_j$, however, is a map $K_j : \Pcal_j(\Lambda) \to \Ncal(\Lambda)$
and therefore does depend on $\Lambda$.
In this section, we discuss the definition of $K_j$ in more detail,
summarising full details given in \cite{BS-rg-step}.
Of particular importance are locality properties of $K_j$ which allow for
the definition of a natural
limiting space as $\Lambda_N \uparrow \Z^d$.
We require some understanding of this
\emph{infinite volume limit}.
In Section~\ref{sec:fintori}, we use results about the infinite volume limit
to obtain estimates for finite $\Lambda_N$.

To discuss both the finite and infinite volume simultaneously,
let $\volume$ denote either $\Lambda_N$ or $\Z^d$, and set $N(\Lambda_N) = N$ and $N(\Z^d) = \infty$.
We interpret the inequality $j\leq N(\volume)$ to mean that $j < \infty$ if $N(\volume) = \infty$.
Given a finite set $X\subset\volume$, recall that $\Ncal(X)$
denotes the space of even differential forms
defined in Section~\ref{sec:intforms}.
We identify $\Ncal(X)$ as a subalgebra of $\Ncal(Y)$ if $X \subset Y$.
We define $\Ncal(\volume)$ to be the union over $\Ncal(X)$ for all finite subsets $X\subset\volume$.
As in \cite[Definition~\ref{step-def:blocks1}]{BS-rg-step}, we say that a polymer
$X \in \Pcal_j(\volume)$ is a \emph{small set} if it is connected and if $|X|_j$, the number of
$j$-blocks it contains, satisfies $|X|_j \leq 2^d$. We write $\Scal_j$ for the set of all small sets in $\Pcal_j(\volume)$.
Given any polymer $X \in \Pcal_j(\volume)$, we define its
\emph{small set neighbourhood} $X^\square$ by
\begin{equation} \label{e:smallsetnbh}
  X^\square = \bigcup_{Y \in \Scal_j: X \cap Y \neq \varnothing} Y.
\end{equation}
We write ${\rm Comp}_j( X) \subset \Pcal_j(\volume)$ for the set of connected components
of $X\in \Pcal_j(\volume)$.

From \cite[Definition~\ref{step-def:Kspace}]{BS-rg-step},
we recall the definition of the space $\Kspace_j(\volume)$,
as follows.  The non-perturbative coordinate $K_j$ is an element of the space
$\Kspace_j(\Lambda)$.

\begin{defn} \label{def:Kspace}
For $j \le N(\volume)$, let $\Kspace_{j} = \Kspace_{j} (\volume)$ be the
real vector space of functions $K :
\Pcal_j (\volume) \to \Ncal (\volume)$ with the properties,
\begin{itemize}
\item Field locality: $K(X) \in \Ncal (X^{\Box},\volume)$ for each connected $X\in\Pcal_j$,
\item Symmetry: $K$ is supersymmetric and Euclidean invariant,
\item Component Factorisation: $K (X) = \prod_{Y
\in {\rm Comp}_j( X)}K (Y)$ for all $X\in\Pcal_j$.
\end{itemize}
\end{defn}

The symmetries mentioned in Definition~\ref{def:Kspace} are discussed
above \cite[Definition~\ref{step-def:Kspace}]{BS-rg-step}, and also in
\cite{BBS-rg-pt,BS-rg-loc,BS-rg-IE}.  They do not play an explicit
role in this paper, but they are needed in results we use from
\cite{BBS-rg-pt,BS-rg-loc,BS-rg-IE,BS-rg-step}.  We do not discuss
them further here.

The analysis of \cite{BBS-rg-pt,BS-rg-loc,BS-rg-IE,BS-rg-step} permits
the inclusion of \emph{observables} which break certain symmetries and
are employed in \cite{BBS-saw4,ST-phi4} to study the decay of critical
correlation functions.  Observables are not needed in the present
paper. We only need the case $\sigma = \bar\sigma = 0$ of results in
\cite{BBS-rg-pt,BS-rg-loc,BS-rg-IE,BS-rg-step}, we identify $K$ and
$\pi_\varnothing K$, with $\pi_* K = 0$ in the notation of those
papers, and make no further references to observables here.  In
particular, Definition~\ref{def:Kspace} has been specialised to the
case where observables are not present.

\subsection{Norms}
\label{sec:flow-norms}

For the analysis of $K_j$, we employ a family of norms on the spaces $\Kspace_j(\volume)$,
both for $\volume =\Lambda_N$ and $\volume = \Z^d$.  These norms, which are defined
in \cite[\eqref{step-e:9Kcalnorm}]{BS-rg-step},
give rise to Banach spaces $\Wcal_j(\sgen,\volume) \subset \Kspace_j(\volume)$
indexed by a parameter $\sgen = (\mgen^2, \ggen) \in [0,\delta) \times (0,\delta)$
with fixed small $\delta >0$.
The parameter $\sgen$ is chosen depending on the mass $m^2$ and on $g_0$.
The use of $\mgen^2$ allows us the option to
choose the norm parameter close to but not necessarily equal to
the mass $m^2$ of the covariance $(-\Delta +m^2)^{-1}$, which permits variation of
$m^2$ without changing the norm.  The choice of $\ggen$ is discussed below.

In the present paper, we do not need and therefore
do not recall the definition of the norm from
\cite[\eqref{step-e:9Kcalnorm}]{BS-rg-step},
needed for the propagation of estimates of $K$ from one
scale to the next.
We provide some general indication
of the origin of these norms in the following discussion,
with reference to \cite{BS-rg-norm,BS-rg-IE,BS-rg-step} for further details.

We discuss the norms for the case $\volume=\Lambda$, the extension to $\volume = \Z^d$
is discussed in \cite{BS-rg-step}.
Consider first a single differential form $F \in\Ncal(\Lambda)$ which,
as in Section~\ref{sec:formsfunc}, we regard as a function of the boson
field $\phi$ and the fermion field $\psi$. For example, the degree-$0$ part $F^0$ is a function $F^0: \C^\Lambda \to \Lambda$ of $\phi$,
in the ordinary sense. The degree-$1$ part of $F$ can be written as
$\sum_{x\in\Lambda} F_{x}(\phi) \psi_x + \sum_{\bar x \in \Lambda} F_{\bar x}(\phi)\bar\psi_{\bar x}$,
and each term can be viewed as an ordinary function of $\phi$ times a monomial of degree $1$ in $\psi$. The coefficient $F_x(\phi)$
has the interpretation of a derivative of $F$ with respect to $\psi_x$ evaluated at $\psi=0$.
The semi-norm $\|F\|_{T_{0,j}}$ provides a means to measure
the size of these derivatives of $F$ with respect to $\phi$ and $\psi$ at $\phi =0$.
Definitions of these and other norms, and development of their properties,
is given in \cite{BS-rg-norm}.

To prove Theorem~\ref{thm:suscept}, we only make use of the
specialised set-up of \cite[Section~\ref{IE-sec:reg}]{BS-rg-IE},
omitting the parts that concern observables.
In particular, we consider four copies of $\Lambda$, denoted
$\Lambda_b$, $\bar\Lambda_b$, $\Lambda_f$, and $\bar\Lambda_f$,
corresponding to the components of $\phi,\bar\phi,\psi,\bar\psi$,
respectively, and set $\Lambdabold_b =\Lambda_b \sqcup\bar\Lambda_b$,
$\Lambdabold_f = \Lambda_f \sqcup \bar\Lambda_f$, $\Lambdabold =
\Lambdabold_b \sqcup \Lambdabold_f$, and we let $\vec\Lambdabold^*$
denote the set of finite ordered sequences of $\Lambdabold$.  A
\emph{test function} is a function $g: \vec\Lambdabold^* \to \R$.

In \cite[Section~\ref{IE-sec:reg}]{BS-rg-IE}
(see also \cite[Section~\ref{norm-sec:tf}]{BS-rg-norm}), we define linear spaces of test
functions $\Phi_j = \Phi_j(\ell_j)$.
We set $\ell_j = \ell_0 L^{-j[\phi]}$ for an appropriate constant $\ell_0$
(large and $L$-dependent), fix an integer $p_\Phi \ge\frac{1}{2}d+2$, and
fix an integer $p_\Ncal \ge 10$.
The $\Phi_j(\ell_j)$ norm of a test function $g$ is defined by
\begin{equation} \label{e:Phinormdef}
  \|g\|_{\Phi_j(\ell_j)} = \sup_{z\in \vec\Lambdabold^*: p_b(z) \leq p_\Ncal}
  \sup_{\alpha:
  |\alpha|
  \leq p_\Phi}
  \ell_j^{-z}   L^{j|\alpha|}  |\nabla^\alpha g_z|
  ,
\end{equation}
where $p_b(z) \leq p_\Ncal$ means that $z$ has at most $p_\Ncal$ boson
components, $\ell_j^{-z}$ denotes $\ell_j^{-q}$ when $z=(z_1,\ldots,
z_q)$ with each $z_i$ in a copy of $\Lambda$, and the restriction
$|\alpha| \le p_\Phi$ limits the number of finite-difference
derivatives to be at most $p_\Phi$ in each of the $q$ components of
$z$.  The number of fermion components of $z$ is unrestricted.  For
example, for a test function $\phi : {\Lambdabold}_b \to \R$ with only
one boson component,
\begin{equation} \label{e:phiPhinorm}
  \ell_0^{-1} L^{-j[\phi]} L^{j|\alpha|_1}|\nabla^\alpha \phi_x| \leq  \|\phi\|_{\Phi_j}
\end{equation}
for any multi-index $\alpha$ such that $|\alpha|_1 \leq p_\Phi$.
According to the discussion in Section~\ref{sec:fieldsize},
the fluctuation field at scale $j$, i.e., the Gaussian field with covariance $C_j$,
when regarded as a test function in $\Phi_j(\ell_j)$ thus typically has norm $O(1)$.
The scaling used in the definition of the norm has been designed to make this happen.

In \cite{BS-rg-norm}, a pairing between forms and test functions is defined  by
\begin{equation} \label{e:pairing}
  \langle F, g\rangle_0
  =
  \sum_{z \in {\vec\Lambdabold}^*: p_b(z) \leq p_\Ncal} \frac{1}{z!} F_z(0) g_z.
\end{equation}
The pairing can be thought of as a kind of generalised Taylor
expansion, to all orders in the fermion field, and to order $p_\Ncal$
in the boson field.  We define a semi-norm on $\Ncal(\Lambda)$ by
\begin{equation} \label{e:Tphinormdef}
  \|F\|_{T_{0,j}} =
  \|F\|_{T_{0,j}(\ell_j)} =
  \sup_{g\in\Phi_j : \|g\|_{\Phi_j} \le 1} |\langle F, g \rangle_0|.
\end{equation}
The pairing of $F$ with test functions $g$ of norm $1$
is intended to mimic how large $F$ should be when evaluated at
typical \emph{small} fields $(\bar\phi,\phi,\bar\psi,\psi)$, whose size is dictated by the scale-$j$ covariance.

It is also essential to have some control over \emph{large} fields;
this issue is addressed in detail in \cite{BS-rg-IE}.
As motivation, consider the probability measure on $\R$ given by
\begin{equation} \label{e:phi41d}
  \mu(ds) \propto e^{-gs^4} \, ds,
\end{equation}
which has mean $0$ and variance proportional to $g^{-1/2}$.  More informally,
typically $|s| = O(g^{-1/4})$.
The expectation $\Ex_{j+1}\theta(I_j \circ K_j(\Lambda))$ contains a factor $e^{-g_j|\phi|^4}$
which suggests that typically $|\phi| = O(g_j^{\smash{-1/4}})$.
Since we do not know {\it a priori} the value of $g_j$,
we design the norm $\|K_j\|_{\Wcal_j(\sgen,\volume)}$ so that,
in addition to taking small fields into account, it also measures the size of
$K_j$ when evaluated on
fields of size
$|\phi| = O(\ggen_j^{-1/4})$ with $\ggen_j$ an approximate guess
for $g_j$.
We then show \emph{a posteriori} that the guess does what is required.
The sequence
$\gbar_j = \gbar_j(m^2,g_0)$ determined by $\gbar_0=g_0$ and the recursion \eqref{e:gbar}
is a natural candidate for $\ggen_j$, but is one
that introduces dependence of the norms on the two parameters $m^2$ and $g_0$.
The dependence on $g_0$ cannot be avoided, but it is convenient to use a family of norms
that minimises the dependence on $m^2$.
To accomplish this, we set
\begin{equation} \label{e:ggendef}
  \ggen_j(m^2,g_0) =
  \gbar_j(0,g_0) \1_{j \le j_m} + \gbar_{j_m}(0,g_0) \1_{j > j_m},
\end{equation}
where the mass scale $j_m$ is the smallest integer $j$ such that $L^{2j}m^2 \ge 1$.
Thus $\ggen_j(m^2,g_0)$ is the massless sequence $\gbar_j(0,g_0)$ with its evolution shut down at
the mass scale.
We show in Lemma~\ref{lem:gbarmcomp} that
$\ggen_j(m^2,g_0) = \gbar_j(m^2,g_0) +O(\gbar_j^2(m^2,g_0))$, so the
sequences $\ggen_j$ and $\gbar_j$ are close to being equal.

A property of the norms that we use in
the proof of Theorem~\ref{thm:suscept-diff} (see Section~\ref{sec:suscept-diff-pf})
is the fact given by \cite[\eqref{step-e:T0dom}]{BS-rg-step} that
the $\Wcal_N$ norm dominates the $T_{0,N}$ semi-norm,
in the sense that
\begin{equation} \label{e:T0dom}
  \|F(\Lambda)\|_{T_{0,N}} \leq \|F\|_{\Wcal_N(\sgen,\Lambda)}
  .
\end{equation}

\subsection{Specification of renormalisation group map}
\label{sec:rgiv}

For both choices $\volume = \Lambda_N$ and $\volume=\Zd$,
explicit definitions of maps \eqref{e:RGmap}  are constructed in \cite{BS-rg-step}.
For $\volume = \Lambda$, the maps achieve the objective
\eqref{e:Kspace-objective}.  The significance for $\volume=\Zd$ is discussed
in Section~\ref{sec:rg-iv} below.
To simplify the notation, we write
\refeq{RGmap} as $(V,K) \mapsto (V_+,K_+)$, typically dropping subscripts $j$ and writing $+$ in place of $j+1$,
and also leaving the dependence of the maps on the mass parameter $m^2$ implicit.
As discussed in Section~\ref{sec:flow-norms}, it is necessary to make
assumptions on $(V,K)$, in order for these definitions to be well-defined and to obtain
useful estimates for $(V_+,K_+)$.
These assumptions are stated in terms of domains, defined as follows.

Given $C_\DV>1$, $\DVa>0$, $\delta >0$, and
$\sgen=(\mgen^2, \ggen) \in  [0,\delta) \times (0,\delta)$, let
\begin{equation} \label{e:domRG}
  \domRG_j(\sgen, \volume)
  =
  \{ (g, z, \nu): C_\DV^{-1} \ggen < g < C_\DV \ggen, \;
  |z|, |\mu| < C_\DV \ggen \}
  \times
  B_{\Wcal_j(\sgen, \volume)}(\DVa \chigen_j\ggen^3)
  ,
\end{equation}
where $B_X(r)$ is the open ball of radius $r$ around the origin in the
Banach space $X$, and $\chigen_j = \chi_j(\mgen^2)$ with $\chi_j$
defined by \refeq{chidef}.  Roughly speaking, the domain
\eqref{e:domRG} permits small $g > 0$ that is bounded away from zero,
with $z,\mu = O(g)$, and with $K$ bounded in a precise but non-trivial
fashion by $O(g^3)$.  The domain $\domRG_j(\sgen,\volume)$ is equipped
with the norm of $\Vcal \times \Wcal_j(\sgen,\volume)$.

The parameter $\mgen^2$ that appears in the definition of the domain
\refeq{domRG} would ideally be set equal to the mass $m^2$ appearing
in the covariance, but this would be problematic for the discussion of
continuity of the renormalisation group map as a function of $m^2$.
Thus we decouple $m^2$ from the domain by using $\mgen^2$ in the
domain and requiring $m^2$ to be close to but not necessarily equal to
$\mgen^2$.  For this, we recall the interval $\Iint_j$ defined in
\refeq{massint}, and set
\begin{equation}
\lbeq{Itilint}
    \Igen_j = \Igen_j(\mgen^2) =
    \begin{cases}
    [\frac 12 \mgen^2, 2 \mgen^2] \cap \Iint_j & (\mgen^2 \neq 0)
    \\
    [0,L^{-2(j-1)}] \cap \Iint_j & (\mgen^2 =0).
    \end{cases}
\end{equation}
For $j<N$, these intervals are illustrated in Figure~\ref{fig:intervals}.
\begin{figure}
  \begin{center}
    \input{intervals.pspdftex}
  \end{center}
  \caption{The dashed line illustrates the interval $\Iint_j = [0,\delta)$ for $j < N$.
      The intervals $\Igen_j (\mgen^2)$ are given by $[\frac12 \mgen^2,2\mgen^2]$ if $\mgen^2 \in (0,\frac12 \delta)$
      and are illustrated below the dashed line. As $\mgen^2 \downarrow 0$, the length of these
      intervals shrinks to $0$, but $\Igen_j(0)$ is given by the non-empty interval $[0,L^{-2(j-1)}]$.
    }
\label{fig:intervals}
\end{figure}

For $\sgen = (\mgen^2,\ggen) \in [0,\delta)\times (0,\delta)^2$, the maps
$V_+ = V_{+,\volume}$ and $K_+ = K_{+,\volume}$ with mass $m^2 \in \Igen_+(\mgen^2)$
are maps
\begin{equation}
  \label{e:VKplusmap}
  V_+
  : \domRG(\sgen, \volume)  \to \Vcal,
  \quad\quad
  K_+
  : \domRG(\sgen, \volume)  \to \Kspace_{+}(\volume),
\end{equation}
such that, given $(V,K) \in \domRG_{j}(\sgen, \Lambda)$,
\begin{equation}
\lbeq{EIK}
  \Ex_{+}\theta (I(V) \circ K)(\Lambda) = (I_{+}(V_+(V,K)) \circ K_+(V,K))(\Lambda).
\end{equation}
There is no analogue of \refeq{EIK} for the case $\volume = \Zd$, and we postpone
discussion of infinite volume to Section~\ref{sec:rg-iv}.
The \emph{renormalisation group map} is defined by \refeq{VKplusmap}.
As mentioned in Section~\ref{sec:circ}, there are trivial choices of maps
that make \refeq{EIK} hold.  The maps of \cite{BS-rg-step} provide a nontrivial and
good choice, a choice which obeys the useful
estimates discussed below.

In the present paper, we do not need or use the precise definitions of
the maps $(V_+, K_+)$, so we discuss only their existence and certain
of their important properties from \cite{BS-rg-step}.  Since the maps
are given by explicit formulas, not depending on the parameter $\sgen$
appearing their domains in \eqref{e:VKplusmap}, the maps viewed as
maps on different domains coincide on their intersections, and we will
apply this property without further comment.

The map $V_+$ is a perturbation of the map $\varphi_\pt^{(0)}$
discussed in Section~\ref{sec:flow-approx}, and it is convenient to describe it
in terms of the difference
\begin{equation}
  R_+(V,K) = V_+(V,K) - \varphi_\pt^{(0)}(V).
\end{equation}
The following theorem, which is illustrated by
Figure~\ref{fig:Kplusdom}, provides estimates for the maps $R_+,K_+$,
combining \cite[Theorems~\ref{step-thm:mr-R}--\ref{step-thm:mr},
\ref{step-thm:Kmcont}]{BS-rg-step} into a single statement.
In particular, we apply the result of
\cite[Theorem~\ref{step-thm:mr-R}]{BS-rg-step}
in the form given in \cite[\eqref{step-e:Rmain-g}]{BS-rg-step}.
For the statement of the theorem, we view $R_+,K_+$ as maps jointly on
$(V,K,m^2) \in \domRG(\sgen) \times \Igen_+(\mgen^2)$.
The $L^{p,q}$ norm is the operator norm of a multi-linear operator
from $\Vcal^p \times \Wcal^q$ to $\Vcal$ or to $\Wcal_{+}$, for $R_+$
or $K_+$, respectively, and $\chigen=\chi_j(\mgen^2)$ where $\chi_j$
is given by \eqref{e:chidef}.  We have replaced occurrences of
$\chi^{3/2}$ and $\chi^{1/2}$ from \cite{BS-rg-step} by $\chi$ here;
this amounts to an unimportant redefinition of the parameter $\Omega$.
Also, in \refeq{RKplusmaps}, $\sgen_+=(\mgen^2,\ggen_+)$ for any
choice of $\ggen_+ \in [\frac 12 \ggen, 2\ggen]$ (as required by
\cite[\eqref{step-e:gbarmono}]{BS-rg-step}).  It is straightforward to
verify that the sequence defined by \refeq{ggendef} obeys this
constraint.  For application of a single renormalisation group step in
Theorem~\ref{thm:step-mr-fv}, there is considerable flexibility in the
definition of $\ggen$.  However, indefinite iteration of the
renormalisation group map requires a careful choice of the sequence
$\ggen_j$, and \refeq{ggendef} provides a choice that works.

\begin{figure}
  \begin{center}
    \input{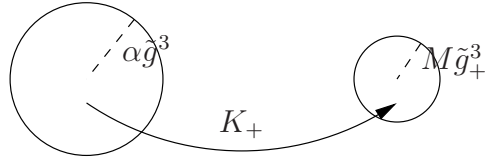}
  \end{center}
  \caption{
  The map $K_+$ of Theorem~\ref{thm:step-mr-fv}
  maps the centred ball of radius
      $\DVa\ggen^3$ to a ball of smaller radius $M\ggen_+^3$ (in our application $\alpha = 4M$).
    }
\label{fig:Kplusdom}
\end{figure}

\begin{theorem} \label{thm:step-mr-fv}
  Let $d =4$ and $\volume = \Lambda_N$.
  Let $C_\DV$ and $L$ be sufficiently large, and let $p,q\in \N_0$.
  There exist $M>0$ (depending on $p,q$), $\delta >0$,  and $\kappa = O(L^{-1})$ such that
  for $\ggen \in (0,\delta)$ and $\mgen^2 \in \Iint_+$, and with the domain
  $\domRG$ defined using any $\DVa> M$, the maps
  \begin{equation}
  \lbeq{RKplusmaps}
    R_+:\domRG(\sgen,\Lambda) \times \Igen_+(\mgen^2) \to \Vcal,
    \quad
    K_+:\domRG(\sgen,\Lambda) \times \Igen_+(\mgen^2) \to \Wcal_{+}(\sgen_+,\Lambda)
  \end{equation}
  are analytic in $(V,K)$,
  and satisfy the estimates
  \begin{align}
\label{e:Rmain-g}
    \|D_V^p D_K^q R_+\|_{L^{p,q}}
    & \le
    \begin{cases}
    M
    \chigen\ggen_{+}^{3-p} & (p\ge 0,\, q=0)\\
    M
     \ggen_{+}^{1-p-q} & (p\ge 0,\, q = 1,2)\\
    \rlap{$0$}\hspace{3.3cm}  & (p\ge 0,\, q \ge  3),
    \end{cases}
\\
\lbeq{DVKbd}
    \|D_{V}^pD_{K}^{q}K_+\|_{L^{p,q}}
    &\le
    \begin{cases}
    M  \chigen \ggen_{+}^{3-p}
    &
    (p \ge 0)
    \\
    \rlap{$\kappa$}\hspace{3.3cm}
    & (p=0,\, q=1)
    \\
    M  \ggen_{+}^{-p}
    (
    \chigen
    \ggen_{+}^{10/4}
    )^{1-q}
    &
    (p \ge 0,\, q \ge 1)
    .
    \end{cases}
  \end{align}
  In addition, $R_+,K_+$, and every Fr\'echet derivative in $(V,K)$,
  when applied as a multilinear map to directions $\dot{V}$ in $\Vcal^{p}$
  and $\dot{K}$ in $\Wcal^{q}$, is jointly continuous in all arguments,
  $m^{2}, V,K, \dot{V}, \dot{K}$.
\end{theorem}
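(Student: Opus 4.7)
Since the theorem is stated as a compilation of \cite[Theorems~\ref{step-thm:mr-R}--\ref{step-thm:mr}, \ref{step-thm:Kmcont}]{BS-rg-step}, the plan is to specialise and quote those results rather than redo the construction. First I would reduce to the case without observables by setting $\sigma = \bar\sigma = 0$ throughout. In \cite{BS-rg-step}, the $K$-coordinate has a decomposition $K = \pi_\varnothing K + \pi_* K$ into bulk and observable parts; discarding $\pi_* K$ collapses the main results of that paper to statements about the single Banach-space-valued maps $R_+, K_+$ on $\domRG(\sgen,\Lambda)$ appearing in \eqref{e:VKplusmap}, and makes the analyticity and derivative structure there directly applicable.

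Next I would reconcile a minor notational mismatch: bounds in \cite{BS-rg-step} feature factors $\chigen^{3/2}$ and $\chigen^{1/2}$, whereas \eqref{e:Rmain-g}--\eqref{e:DVKbd} use $\chigen$. Because $\chigen_j = \Omega^{-(j-\jm)_+}$ depends on the arbitrary parameter $\Omega > 1$ fixed in \eqref{e:mass-scale}, replacing $\chigen^{1/2}$ by $\chigen$ amounts to doubling $\log \Omega$, which shifts the $\Omega$-scale $\jm$ only by a bounded additive constant and is absorbed into the $O(\cdot)$ constants. With this rescaling fixed, the estimates \eqref{e:Rmain-g} follow from \cite[\eqref{step-e:Rmain-g}]{BS-rg-step}, and the estimates \eqref{e:DVKbd} follow from \cite[Theorem~\ref{step-thm:mr}]{BS-rg-step}. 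The hypotheses on $\delta$, $L$, and $\DVa > M$ in Theorem~\ref{thm:step-mr-fv} match the smallness and domain requirements of those theorems verbatim once one records that $M$ must be taken larger as more Fr\'echet derivatives (order $p+q$) are needed.

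The heart of what is being cited, and what I expect is by far the hardest step if one were to rederive it, is the contraction bound $\|D_K K_+\|_{L^{0,1}} \le \kappa$ with $\kappa = O(L^{-1})$ in the $p=0,\, q=1$ case of \eqref{e:DVKbd}. This single bound is the engine of the whole theory: it is what permits the fixed-point problem for $K$ to be closed at each scale and allows $K_j$ to be treated as a genuine third-order remainder throughout the iteration in Section~\ref{sec:bulk-flow}. The $L^{-1}$ gain is invisible at the level of scaling dimensions; in \cite{BS-rg-step} it is extracted from a careful interplay between the reblocking operation and the localisation operator $\LT$ of \cite{BS-rg-loc}, which subtracts the relevant and marginal parts so that what remains is genuinely irrelevant and contracts by an explicit factor $L^{-1}$ under rescaling. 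I would import this bound as a black box.

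Finally, the joint continuity of $R_+,K_+$ and all their Fr\'echet derivatives in $(m^2, V, K, \dot V, \dot K)$ comes from \cite[Theorem~\ref{step-thm:Kmcont}]{BS-rg-step}. The essential mechanism is the decoupling of the covariance mass $m^2$ from the norm-defining parameter $\mgen^2$ through the intervals $\Igen_+(\mgen^2)$ of \eqref{e:Itilint}: on each such interval the Banach space $\Wcal_+(\sgen_+,\Lambda)$ is held fixed, so continuity in $m^2$ reduces to continuity of the finite-range covariance decomposition $C_j(m^2)$ in $m^2$, which is provided by \cite[Proposition~\ref{pt-prop:Cdecomp}]{BBS-rg-pt}. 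Analyticity in $(V,K)$ then upgrades pointwise continuity in $m^2$ to continuity of every Fr\'echet derivative applied to any fixed directions, and joint continuity in all arguments follows. I do not anticipate a genuine obstacle here; the only care required is the bookkeeping of the $\Omega$-rescaling and the observable-free specialisation.
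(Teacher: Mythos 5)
Your proposal matches the paper's own treatment: Theorem~\ref{thm:step-mr-fv} is not proved in this paper but imported verbatim from \cite[Theorems~\ref{step-thm:mr-R}--\ref{step-thm:mr}, \ref{step-thm:Kmcont}]{BS-rg-step}, specialised to $\sigma=\bar\sigma=0$ (so $K$ is identified with $\pi_\varnothing K$) and with the replacement of $\chi^{3/2},\chi^{1/2}$ by $\chi$ justified exactly as you say, by an unimportant redefinition of $\Omega$ (strictly, passing from $\chi^{1/2}$ with parameter $\Omega$ to $\chi$ means taking the new $\Omega$ to be $\Omega^{1/2}$, i.e.\ halving rather than doubling $\log\Omega$, but this does not affect anything). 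Your identification of the $\|D_K K_+\|\le\kappa=O(L^{-1})$ contraction as the deepest imported ingredient, and of the $\Igen_+(\mgen^2)$ decoupling as the mechanism behind mass continuity, is also consistent with the paper's discussion.
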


We emphasise that when $\mgen^2=0$ in Theorem~\ref{thm:step-mr-fv}, the
continuity statement includes right-continuity at $m^2=0$.
The bounds for $(p,q)=(0,0)$ show that $R_+$ and $K_+$ are third-order in $\ggen$ and
thus third-order in $V$.
The $(p,q)=(0,1)$ bound of
\eqref{e:DVKbd} shows that the $K$-coordinate is contracting, since $\kappa < 1$.
Thus, viewed as a dynamical system, $(V,K)\mapsto (V_+,K_+)$ has two centre directions $g$ and $z$,
one expanding direction $\mu$, and an infinite-dimensional contracting direction $K$.
Theorem~\ref{thm:step-mr-fv} is the deepest ingredient in our analysis; its proof
occupies most of \cite{BS-rg-IE,BS-rg-step}.  The
development of the norms defined in \cite{BS-rg-norm}
culminates in the $\Wcal$ norm needed for Theorem~\ref{thm:step-mr-fv}, and the
operator $\LT$ defined in \cite{BS-rg-loc} is designed to achieve the contraction
in $K$.

\subsection{Renormalisation group map in infinite volume}
\label{sec:rg-iv}

In our context, the renormalisation group map is most naturally defined to be a map
in the setting of a torus, since a defining property is that it should preserve the circle
product under expectation as in \refeq{EIK}.  We have no analogue of \refeq{EIK}
for the infinite volume $\Zd$, for which we have defined neither the expectation nor
the circle product.  Nevertheless, there is a natural definition
of a map $(V,K) \mapsto (V_+,K_+)$ which is set on $\Zd$ rather than on a torus $\Lambda$,
as an appropriate inductive limit of the corresponding maps on the family of all tori.
The infinite volume map has the advantage that it is defined for all scales
$j \in \N$, with no restriction due to finite volume.  In particular, we can
study the limit $j \to \infty$, which we use in an important way to apply
the main result of \cite{BBS-rg-flow} to the
dynamical system defined by the renormalisation group.

The definition of the renormalisation group map on $\Zd$ is discussed
in detail in \cite[Section~\ref{step-sec:rg-iv}]{BS-rg-step}.  There a
correspondence is established between elements of the family of spaces
$\Kcal_j(\Lambda)$ indexed by $\Lambda$ which obey a compatibility
property called Property~$(\Zd)$, and the space $\Kcal_j(\Zd)$.  In
brief, given $X \in \Pcal_j(\Zd)$ and any $\Lambda$ whose period is at
least twice the diameter of $X$, there is a map $\iota: X \to \Lambda$
such that nearest neighbours in $X$ are mapped to nearest neighbours
in $\Lambda$ and the preimages of nearest neighbours in $\Lambda$ are
nearest neighbours in $X$.  The above-mentioned correspondence permits
identification of $K_{j,\Lambda}(\iota X)$ with $K_{j,\Zd}(X)$ for all
$\Lambda$ whose period is at least twice the diameter of $X$.  The
diameter restriction causes information to be lost in going from the
finite volume $K$ to its infinite volume version, as there is no
infinite volume counterpart to $K(U)\in \Kcal(\Lambda)$ whenever $U$
is comparable in size to $\Lambda$, and in particular if $U$
``wraps around'' the torus $\Lambda$.  This does not cause
difficulties since our analysis shows that the scale-$j$ norm of
$K(U)$ decays exponentially in the number of scale-$j$ blocks in $U$,
and the significant information is maintained by the correspondence.

It is shown in \cite[Proposition~\ref{step-prop:KplusZd}]{BS-rg-step}
that
\begin{equation}
\lbeq{Kiv}
    \text{if the family $(K_\Lambda)$ has Property~$(\Zd)$ then so does the family
    $(K_{+,\Lambda})$},
\end{equation}
and this permits the definition of a map $(V,K_\Zd) \mapsto K_{+,\Zd}$.
Moreover, it is shown in
\cite[Proposition~\ref{step-prop:VZd}]{BS-rg-step} that the map $V_+$ can be regarded
as a map on $(V,K_\Zd)\in \domRG(\Zd)$, and in addition
\begin{equation}
\lbeq{Viv}
    \text{if the family $(K_\Lambda)$ has Property~$(\Zd)$ then
    $V_{+,\Lambda}(V,K_\Lambda)=V_{+,\Zd}(V,K_{\Zd})$ for all $\Lambda$}.
\end{equation}
In particular, this shows that the loss of information in going from $(K_\Lambda)$
to $K_\Zd$ does not affect the flow of coupling constants.

The following theorem is proved in \cite[Theorem~\ref{step-thm:VKZd}]{BS-rg-step}.
In particular, it shows that $R_{+,\Zd}$ and $K_{+,\Zd}$
(in the space $\Wcal(0,\ggen)$)
are right-continuous
at $m^2=0$ for all $j<\infty$.

\begin{theorem}
\label{thm:step-mr-iv}
All statements of Theorem~\ref{thm:step-mr-fv}
hold, including domains and estimates,
with the same parameters and constants, when $\volume = \Lambda$ is replaced by
$\volume = \Zd$.
\end{theorem}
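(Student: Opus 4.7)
The plan is to transfer each statement of Theorem~\ref{thm:step-mr-fv} from the finite tori $\Lambda_N$ to $\Zd$ via the Property~$(\Zd)$ correspondence between $\Kspace_j(\Zd)$ and compatible families $(K_\Lambda)$, combined with the identities \refeq{Kiv}--\refeq{Viv}. The key structural input is that the norm on $\Wcal_j(\sgen,\Zd)$ is determined polymer by polymer, and each $X\in\Pcal_j(\Zd)$ embeds into every torus $\Lambda_N$ whose period exceeds twice the diameter of $X$; consequently, the norm of $K_\Zd$ tested on $X$ agrees with the norm of the corresponding $K_{\Lambda_N}$ tested on the image polymer $\iota X$.

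Given $(V,K_\Zd)\in \domRG_j(\sgen,\Zd)$, the correspondence produces a compatible family $(V,K_\Lambda)\in \domRG_j(\sgen,\Lambda)$ for every sufficiently large $\Lambda_N$. Applying Theorem~\ref{thm:step-mr-fv} yields finite-volume maps $V_{+,\Lambda_N}(V,K_\Lambda)$ and $K_{+,\Lambda_N}(V,K_\Lambda)$. By \refeq{Viv} the $V$-component is $\Lambda$-independent and defines $V_{+,\Zd}(V,K_\Zd)$, and by \refeq{Kiv} the family $(K_{+,\Lambda_N})$ has Property~$(\Zd)$ and hence defines $K_{+,\Zd}(V,K_\Zd)\in \Kspace_{j+1}(\Zd)$. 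Each estimate \eqref{e:Rmain-g}--\refeq{DVKbd} then passes to $\Zd$: the bound on $R_+=V_+-\varphi_\pt^{(0)}$ is inherited since $V_+$ is $\Lambda$-independent, while the bound on $\|K_{+,\Zd}\|_{\Wcal_{j+1}(\sgen_+,\Zd)}$ is tested against polymers, each of which lies in some $\Lambda_N$ with $N$ large enough, where the uniform-in-$N$ estimate of Theorem~\ref{thm:step-mr-fv} applies.

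The analyticity of $R_{+,\Zd}$ and $K_{+,\Zd}$ on $\domRG_j(\sgen,\Zd)$ reduces similarly to the finite-volume analyticity: Fr\'echet derivatives are determined polymer by polymer through the correspondence, so their norms inherit the uniform-in-$N$ bounds of Theorem~\ref{thm:step-mr-fv}. The joint continuity in $(m^2,V,K,\dot V,\dot K)$, including right-continuity at $m^2=0$, follows by the same polymer-by-polymer argument from the corresponding finite-volume continuity. The main obstacle is precisely this compatibility of the $\Wcal$-norms across volumes, namely that the uniform-in-$N$ finite-volume bounds genuinely control the infinite-volume norm without loss. This is a structural consequence of the finite-range property \eqref{e:finite-range}, which ensures that $I_j$ and $K_j$ restricted to polymers of diameter less than $\frac12 L^j$ are genuinely local and hence volume-independent; once this compatibility is verified carefully using the definitions from \cite{BS-rg-step}, Theorem~\ref{thm:step-mr-iv} reduces cleanly to Theorem~\ref{thm:step-mr-fv}.
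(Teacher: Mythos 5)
Your argument is correct in outline and is essentially the route the paper itself takes: Theorem~\ref{thm:step-mr-iv} is not proved here but quoted from \cite[Theorem~\ref{step-thm:VKZd}]{BS-rg-step}, and the ingredients you invoke --- the Property~$(\Zd)$ correspondence, the identities \refeq{Kiv}--\refeq{Viv}, and the finite-range, polymer-by-polymer locality that makes the $\Wcal$-norms and the estimates of Theorem~\ref{thm:step-mr-fv} uniform in $N$ --- are exactly those summarised in Section~\ref{sec:rg-iv}. The one point your sketch treats as automatic is the passage from $K_{\Zd}$ back to a compatible family $(K_\Lambda)$: this is not a true inverse, since $K_\Lambda(U)$ for $U$ comparable in size to $\Lambda$ (e.g.\ wrapping around the torus) is not determined by $K_{\Zd}$, so the infinite-volume map and its bounds must be obtained from the locality of $K_+$ on each fixed polymer (its value on $X$ depends on $K$ only near $X$, hence is independent of the choice of sufficiently large $\Lambda$) rather than by wholesale reconstruction of $K_\Lambda$.
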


\subsection{Renormalisation group map in transformed variables}
\label{sec:trans}

In Section~\ref{sec:flow-approx}, we introduced the transformation $T_j$,
and discussed its effect on the perturbative quadratic part of the renormalisation group flow.
We now show that its effect on the nonperturbative part
is insignificant.
The transformed renormalisation group flow is important in proofs in Sections~\ref{sec:bulk-flow}--\ref{sec:pfmr}.

We distinguish versions of the renormalisation group coordinates and maps in
transformed variables by writing, e.g., $\Vch = T(V)$, and $(\Vch_+,\Kch_+)$ for versions
of the maps $(V_+,K_+)$ which act on transformed variables $(\Vch,K)$.
Since the transformation acts trivially for $j=0$, i.e.,  $T_0(V)=V$,
we do not distinguish between $V_0$ and $\Vch_0$ and use both interchangeably.
In more detail, the maps $\Vch_+,\Kch_+$ are defined by
\begin{equation} \label{e:VchKch}
  \Vch_+(\Vch,K) = T_{+}(V_+(T^{-1}(\Vch),K)),
  \qquad
  \Kch_+(\Vch,K) = K_+(T^{-1}(\Vch),K),
\end{equation}
and we also set
\begin{equation} \label{e:Rch}
  \Rch_+(\Vch,K) = \Vch_+(\Vch,K)  - \bar\varphi(\Vch).
\end{equation}
As in Section~\ref{sec:rgiv},
dependence on $m^2$ is left implicit in the notation, and the subscript $+$
is shorthand for scale $j+1$.

We extend $T,T^{-1}$ to act trivially on $K$, i.e., $T(V,K) = (T(V),K)$, $T^{-1}(V,K) = (T^{-1}(V),K)$,
and set $\domRGch = T(\domRG)$. Since $T$ is invertible for $V$ sufficiently small (which we assume),
also $T^{-1}(\domRGch) = \domRG$.

\begin{cor} \label{cor:step-mr-tr}
  All statements of Theorems~\ref{thm:step-mr-fv}--\ref{thm:step-mr-iv}
  (namely regularity, domains, estimates)
  hold when $R_+,K_+$ are replaced by $\Rch_+,\Kch_+$.
\end{cor}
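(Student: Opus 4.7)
\medskip
\noindent\emph{Proof proposal.}
The plan is to deduce every assertion for $\Rch_+, \Kch_+$ from the corresponding assertion for $R_+, K_+$ via the explicit change of coordinates \eqref{e:VchKch}--\eqref{e:Rch}. The key input is \eqref{e:TVV2}: since $T_j(V) = V + O(\|V\|^2)$ with constants uniform in $j$, the maps $T_j$ and $T_j^{-1}$ are polynomial (quadratic) diffeomorphisms on a fixed neighbourhood of $0$ in $\Vcal$, with Fr\'echet derivatives of all orders bounded uniformly in $j$. By \cite[Proposition~\ref{pt-prop:transformation}]{BBS-rg-pt}, the coefficients of $T_j$ depend continuously on $m^2$, hence $T_j$ and $T_j^{-1}$ are jointly continuous in $(m^2, V)$. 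Extending $T, T^{-1}$ trivially to $K$, the identification $\domRGch = T(\domRG)$ gives a domain of the same shape as $\domRG$ up to $O(1)$ rescaling of the $\Vcal$-component, while the $\Wcal$-ball for $K$ is untouched.

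First I would transfer the regularity. Analyticity of $\Vch_+, \Kch_+$ in $(\Vch, K)$ follows from composition of the analytic maps $V_+, K_+$ (given by Theorems~\ref{thm:step-mr-fv}--\ref{thm:step-mr-iv}) with the polynomial maps $T_+, T^{-1}$. Joint continuity in $m^2, V, K$ and in directions $\dot V, \dot K$ follows analogously, since composition of jointly continuous maps is jointly continuous.

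Next, for the derivative estimates on $\Kch_+$, observe that by the chain rule $D_\Vch^p D_K^q \Kch_+$ is a finite sum of contractions of $D_V^{p'} D_K^q K_+$ (evaluated at $(T^{-1}(\Vch), K)$) with at most $p$ derivatives of $T^{-1}$, each of operator norm $O(1)$ uniformly in $j$ and $m^2$. Substituting \eqref{e:DVKbd} then yields \eqref{e:DVKbd} for $\Kch_+$ with a possibly enlarged constant $M$. For $\Rch_+$, I would split
\begin{align*}
\Rch_+(\Vch, K)
&= \bigl[T_+(V_+(T^{-1}(\Vch), K)) - T_+(\varphi_\pt^{(0)}(T^{-1}(\Vch)))\bigr] \\
&\quad + \bigl[T_+(\varphi_\pt^{(0)}(T^{-1}(\Vch))) - \bar\varphi(\Vch)\bigr].
\end{align*}
The second bracket is the intrinsic triangularisation error: by the definition of $\bar\varphi$ recalled in Section~\ref{sec:flow-approx}, $T_+ \circ \varphi_\pt^{(0)} \circ T^{-1}$ differs from $\bar\varphi$ by $O(\|V\|^3)$, and inspection of \eqref{e:gbar}--\eqref{e:mubar} together with Assumption~(A2) supplies the factor $\chigen$. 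Since this bracket does not depend on $K$, its $D_K^q$-derivatives vanish for $q \ge 1$, matching the second and third cases of \eqref{e:Rmain-g}. The first bracket equals $T_+(\varphi_\pt^{(0)}(T^{-1}(\Vch)) + R_+(T^{-1}(\Vch), K)) - T_+(\varphi_\pt^{(0)}(T^{-1}(\Vch)))$; since $T_+$ is a quadratic polynomial with uniformly bounded coefficients and $\varphi_\pt^{(0)}(T^{-1}(\Vch)) = O(\ggen)$ on $\domRGch$, this quantity is a linear combination of $R_+$ and $R_+^2$ with prefactors of order one, and the estimates \eqref{e:Rmain-g} for $R_+$ transfer with constants adjusted in a controlled way.

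The main obstacle is really only bookkeeping: one must check that the scale-dependent weights $\chigen$ and $\ggen_+^{\,k}$ are tracked correctly through the chain rule, and that the quadratic term $R_+^2$ arising from the Taylor expansion of $T_+$ is absorbed into the $\chigen \ggen_+^3$ bound (which it is, since $R_+ = O(\chigen \ggen_+^3)$ by \eqref{e:Rmain-g} and thus $R_+^2$ is of even smaller order). Since the transformation $T$ acts trivially on $K$, no new analysis of the $K$-component is required, and in particular Property~$(\Zd)$ and the infinite volume extension of Theorem~\ref{thm:step-mr-iv} pass through the transformation without modification.
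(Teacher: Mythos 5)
Your proposal is correct and follows essentially the same route as the paper: the same chain-rule transfer for $\Kch_+$ using the uniform bounds on $D^pT^{\pm 1}$, and the same splitting of $\Rch_+$ into the $K$-independent triangularisation error $T_{+}\circ\varphi_\pt^{(0)}\circ T^{-1}-\bar\varphi = O(\|V\|^{3})$ plus the piece $T_{+}(\varphi_\pt^{(0)}+R_+)-T_{+}(\varphi_\pt^{(0)})$, which the paper likewise expands via the bilinear part of $T_+$ into $R_+$ plus lower-order terms. No gaps.
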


\begin{proof}
  We first note that since $T$ is a polynomial with $T(V) = V+O(\|V\|^2)$, for bounded $p$,
  \begin{equation} \label{e:DTbd}
    \|D^pT(V)\| \leq O(1), \quad \|D^pT^{-1}(V)\| \leq O(1).
  \end{equation}
  The chain rule and \eqref{e:DTbd} imply
  \begin{equation} \label{e:Kchbdpf}
    \|D^p_{\Vch}D^q_K \Kch_+(\Vch,K)\|
    \leq C \sum_{k=1}^p \|D^k_V D^q_K K_+(T^{-1}(\Vch),K)\|,
  \end{equation}
  and the claim for $\Kch_+$ immediately follows from \eqref{e:DVKbd}
  since the bound on the deriviative of $K_+$ is largest for $k=p$.

  To bound the derivatives of $\Rch_+$, we write
  $\Rch_{+} = \Rch_\pt + \Rch_*$, where
  \begin{align}
    \Rch_\pt(\Vch) &= T_{+}(\varphi_\pt^{(0)}(T^{-1}(\Vch))) - \bar\varphi(\Vch)
    \\
    \Rch_*(\Vch,K) &=
    T_{+}\left(\varphi_\pt^{(0)}(T^{-1}(\Vch))
    +
    R_+(T^{-1}(\Vch),K)\right)
    -T_{+}(\varphi_\pt^{(0)}(T^{-1}(\Vch)))  \big).
  \end{align}
  It suffices to show that $\Rch_\pt,\Rch_*$ both satisfy the bounds claimed for $R_+$ in \eqref{e:Rmain-g}.
  As discussed below \eqref{e:TVV2},
  $T_{+} \circ \varphi_\pt^{(0)} \circ T^{-1} = \bar\varphi + O(\|V\|^3)$.
  Since $\varphi_\pt^{(0)}, \bar\varphi, T,T_{+}$ are polynomials,
  it follows that
  \begin{equation}
    D_{\Vch}^p (T_{+} \circ \varphi_\pt^{(0)} \circ T^{-1})
    = D_{\Vch}^p \bar\varphi + O(\|\Vch\|^{3-p}).
  \end{equation}
  In particular, since $\Rch_\pt$ is independent of $K$, this shows that $\Rch_\pt$ satisfies \eqref{e:Rmain-g}.
  For $\Rch_*$,
  we write
  the quadratic
  polynomial $T_{+}$ as $T_{+}(V) = V+B_{+}(V,V)$ with $B_+(V_1,V_2)$ bilinear,
  and we set $f=\varphi_\pt^{(0)}(T^{-1}(\Vch))$ and
  $r=R_+(T^{-1}(\Vch),K)$. Then
  \begin{equation}
    \Rch_* = r
    + B_{+}(r,f) + B_+(f,r) + B_{+}(r,r).
  \end{equation}
  The first term is bounded exactly as \eqref{e:Kchbdpf} and the
  others can be seen to be smaller in a similar way.
  This completes the proof.
\end{proof}

\section{Renormalisation group flow}
\label{sec:bulk-flow}

We now discuss the global aspect of the renormalisation group: the
enabling of indefinitely repeated application of the renormalisation
group map, as the scale $j$ increases.  It is this global aspect that
requires the careful choice of $(\nu_0^c,z_0^c)$ in
Theorem~\ref{thm:suscept-diff}, and that leads to the identification
of the critical point.
We restrict throughout Section~\ref{sec:bulk-flow} to the infinite volume
case, $\volume=\Z^d$.  In Section~\ref{sec:pfmr}, we return to the case
$\volume = \Lambda_N$.

\subsection{Existence and regularity of global flow}
\label{sec:flow-global}

We say that $(V_j,K_j)_{j\in\N_0}$ is a \emph{global flow} of the
infinite volume renormalisation group if
\begin{equation} \label{e:flowVK}
  (V_{j+1},K_{j+1}) = (V_{j+1}(V_j,K_j),K_{j+1}(V_j,K_j)) \quad \text{for all $j \in \N_0$}.
\end{equation}
In \refeq{flowVK},
on the left-hand side
$(V_{j+1},K_{j+1})$ is
an element in the sequence $(V_j,K_j)$, and on the right-hand side
it denotes the map $(V_+,K_+)=(V_{j+1},K_{j+1})$
of Section~\ref{sec:rg-map}; the interpretation should be clear from context.
We suppress the dependence on the mass parameter $m^2$ in our notation.

The following proposition, which sits at the centre of
our analysis, constructs a sequence with the desired properties.
For its statement, we fix the parameter $\DVa$ in the definition of
the domains $\domRG$ in \eqref{e:domRG} as $\DVa = 4M$, where
$M$ is the constant in Theorem~\ref{thm:step-mr-fv} (this choice is
convenient, but somewhat arbitrary).

\begin{prop} \label{prop:flow-flow}
  Let $d=4$ and $\volume = \Zd$, and
  let $\delta>0$ be sufficiently small.
  There are continuous functions of $(m^2,g_0)$, namely $z_0^c, \mu_0^c: [0,\delta)^2
  \to \R$, with $z_0^c(m^2,0)=\mu_0^c(m^2,0)=0$,
  continuously differentiable in $g_0 \in (0,\delta)$ with uniformly bounded derivatives,
  such that for all $(m^2,g_0)\in [0,\delta) \times (0,\delta)$, the global flow
  \eqref{e:flowVK} with mass parameter $m^2$ and initial condition
  given by
  \begin{equation} \label{e:flow-flow-ic}
    V_0 = (g_0, z_0^c(m^2,g_0), \mu_0^c(m^2,g_0)), \quad K_0=\1_{\varnothing},
  \end{equation}
  exists, and $(V_j,K_j) \in \domRG_j(s_j,\Z^d)$ for all $j\in\N_0$.
  Here $s_j=(m^2,\ggen_j(m^2,g_0))$ with $\ggen_j$ given by \refeq{ggendef}.
  In particular, $\|K_j\|_{\Wcal_j(s_j)} = O(\chi_j \gbar_j^3)$ and $\gch_j,g_j = O(\gbar_j)$.
    In addition, $\zch_j,\much_j,z_j,\mu_j = O(\chi_j\gbar_j)$.
\end{prop}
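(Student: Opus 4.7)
The plan is to reduce the construction of the global flow to the stable-manifold theorem of \cite{BBS-rg-flow} for the dynamical system $(\Vch_j,K_j) \mapsto (\Vch_{j+1},K_{j+1})$ defined by the infinite-volume renormalisation group map, working in the triangularised coordinates $\Vch = T(V)$ of Section~\ref{sec:trans}. In these coordinates the perturbative quadratic flow $\bar\varphi$ takes the explicit form \eqref{e:gbar}--\eqref{e:mubar}: $\gch$ is autonomous, $\zch$ couples only to $\gch$, and $\much$ is the only direction whose linearisation has a non-unit eigenvalue, namely the expansion $L^2$. Since $T_0$ is the identity, the choice $(z_0^c,\mu_0^c)$ in original and transformed coordinates coincides at scale $j=0$, so it suffices to construct the flow in the transformed variables and transfer back at the end.

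To invoke \cite{BBS-rg-flow}, I would verify its hypotheses in turn. Assumptions~(A1) and~(A2) on the perturbative coefficients $\beta_j,\theta_j,\eta_j,\xi_j,\pi_j$ are provided by \cite[Proposition~\ref{pt-prop:rg-pt-flow}]{BBS-rg-pt} as recalled in Section~\ref{sec:flow-approx}. The non-perturbative estimates required on $\Rch_+$ and $\Kch_+$, uniformly in the scale $j$ on the domain $\domRG_j(s_j,\Zd)$, are exactly the content of Corollary~\ref{cor:step-mr-tr} together with Theorem~\ref{thm:step-mr-iv}: $\Rch_+$ is third-order in $\Vch$ modulo the factor $\chigen_j$, and $\Kch_+$ is a contraction in $K$ with rate $\kappa = O(L^{-1})$ that dominates the expansion $L^2$ once $\much$ is properly tuned. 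The sequence $\ggen_j$ of \eqref{e:ggendef} sets the scale of the norms compatibly with the hypothesis $\ggen_+ \in [\tfrac12 \ggen,2\ggen]$ required by Theorem~\ref{thm:step-mr-fv}.

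With the hypotheses verified, I would then apply the stable-manifold theorem of \cite{BBS-rg-flow} with $K_0 = \1_\varnothing$ and prescribed $g_0 \in (0,\delta)$. The theorem produces unique initial values $(\zch_0^c,\much_0^c)$, depending continuously on $(m^2,g_0) \in [0,\delta)^2$ and continuously differentiable in $g_0$ with uniformly bounded derivative, such that the recursion \eqref{e:flowVK} remains in $\domRG_j(s_j,\Zd)$ for every $j \in \N_0$ and satisfies the quantitative bounds $\|K_j\|_{\Wcal_j(s_j)} = O(\chi_j \gbar_j^3)$, $\gch_j = O(\gbar_j)$, and $\zch_j,\much_j = O(\chi_j \gbar_j)$, in parallel to \eqref{e:approximate-flow}. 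Continuity in $m^2$ (including right-continuity at $m^2=0$) is inherited from the continuity of the renormalisation group map in $m^2$ stated in Theorem~\ref{thm:step-mr-iv}; differentiability in $g_0$ is inherited from analyticity of the map in $(V,K)$ via the implicit-function step built into the stable-manifold construction. The boundary identity $z_0^c(m^2,0) = \mu_0^c(m^2,0) = 0$ is forced because the origin is a fixed point of the full system when $g_0 = 0$, and uniqueness of the selected manifold then fixes the value there.

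Finally, transferring back from $\Vch$ to $V$ via the polynomial change of variables $V = T^{-1}(\Vch)$ with $T_j(V) = V + O(\|V\|^2)$ (see \eqref{e:TVV2}) preserves all the quantitative estimates up to a change of constants, so that $g_j, z_j, \mu_j$ satisfy the same bounds as $\gch_j, \zch_j, \much_j$, and the regularity of $(\zch_0^c,\much_0^c)$ in $(m^2,g_0)$ carries over to $(z_0^c,\mu_0^c)$. The main obstacle is the non-hyperbolic nature of the fixed point at the origin, combining two central directions $(g,z)$, the single expanding direction $\much$, and an infinite-dimensional contractive direction $K$; disentangling these, and showing that a codimension-two submanifold of admissible initial data exists with the stated regularity, is exactly the content of \cite{BBS-rg-flow}, so the work here largely reduces to checking that its hypotheses apply in the present setting and extracting the required regularity of $(z_0^c,\mu_0^c)$.
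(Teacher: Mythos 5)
Your strategy for the existence part coincides with the paper's: pass to the transformed variables $\Vch=T(V)$, verify Assumptions~(A1--A3) for the infinite-volume maps $(\Rch_+,\Kch_+)$ via Theorem~\ref{thm:step-mr-iv} and Corollary~\ref{cor:step-mr-tr}, and apply Theorem~\ref{thm:flow4} with $K_0=\1_\varnothing$ to obtain the flow, the bounds on $(\gch_j,\zch_j,\much_j,K_j)$, and the transfer back through $T^{-1}$. The genuine gap is your claim that continuity of $(z_0^c,\mu_0^c)$ on all of $[0,\delta)^2$ is simply ``inherited'' from the continuity of the renormalisation group map. Theorem~\ref{thm:flow4} and Corollary~\ref{cor:flow4-masscont} are \emph{local}: for each choice of norm parameters $(\mgen^2,\ggen_0)$ they produce a solution only for $(m^2,g_0)$ in a small neighbourhood of $(\mgen^2,\ggen_0)$, and that solution lives in the Banach spaces $\Wcal_j(\mgen^2,\ggen_j(\mgen^2,\ggen_0))$, which vary with the choice of $(\mgen^2,\ggen_0)$. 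To obtain a single continuous function on $(0,\delta)^2$ one must show that the local solutions built with different norm parameters agree on overlaps. The paper does this in its Step~2, using the norm comparison \eqref{e:normequivg0m} together with a uniqueness device in which Theorem~\ref{thm:flow4} is applied with two values $b_1=\frac12>b_2=\frac14$: existence of a solution satisfying the bounds with the smaller $b$, combined with uniqueness of solutions satisfying them with the larger $b$, forces the local solutions to coincide.

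Right-continuity at $m^2=0$ is a further issue that your argument does not reach: the mass intervals $\Igen_{+}(\mgen^2)=[\frac12\mgen^2,2\mgen^2]$ shrink to a point as $\mgen^2\downarrow 0$, so no neighbourhood with $\mgen^2>0$ contains a point $(0,\hat g_0)$ in its interior, and the patching argument cannot cross the boundary. The paper's Step~3 handles this with a separate limit-point argument: uniform boundedness of $\Vch_0(m^2,g_0)$ yields a limit point $V_0^*$ along $(m^2,g_0)\to(0,\hat g_0)$; the flow started from $V_0^*$ with $m^2=0$ is shown inductively, using \eqref{e:normindep} (the $m^2$-independence of the scale-$j$ norm for $m^2\le cL^{-2j}$) and the continuity of $\Rch_{j+1},\Kch_{j+1}$ on $\Igen_{j+1}(0)=[0,L^{-2j}]$, to satisfy the bounds \refeq{VVbar2app}--\refeq{VVbar1app}; and the uniqueness clause of Theorem~\ref{thm:flow4} then identifies $V_0^*$ with $\Vch_0(0,\hat g_0)$. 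Without these two arguments your proof establishes the global flow and its estimates for each fixed $(m^2,g_0)$, but not the continuity of $(z_0^c,\mu_0^c)$ on $[0,\delta)^2$ asserted in the proposition, which is exactly the property needed later to identify the critical point.
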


Proposition~\ref{prop:flow-flow} can be extended to
replace its initial condition $K_0 = \1_{\varnothing}$
by a more general condition on $K_0$.  The extension is not required for our present
purposes, so is not pursued here, but a more general initial condition
is needed for the analysis of weakly self-avoiding walk with nearest-neighbour
attraction \cite{BBS-saw-sa}.

The proof of Proposition~\ref{prop:flow-flow} is based an application of the main result
of \cite{BBS-rg-flow}, which concerns a class of non-hyperbolic dynamical systems.
For this, it is advantageous to work with the transformed variables $\Vch_j=T_j(V_j)$
defined in \refeq{TVV2}--\refeq{mubar}, as this produces a triangular system to
second order.  We therefore reformulate the renormalisation group map
$(V,K) \mapsto (V_{+,\Zd},K_{+,\Zd})$ of Section~\ref{sec:rg-iv} in terms of the
transformed variables.
This was anticipated in Section~\ref{sec:trans}, where maps $\Vch_+,\Kch_+,\Rch_+$
are defined.
Thus we define the \emph{evolution map}
\begin{align} \label{e:Phi}
  \Phi_j(\Vch_j,K_j)
  &= (\Vch_{j+1}(\Vch_j,K_j),\Kch_{j+1}(\Vch_j,K_j))
  \nnb &
  = (\bar\varphi_j(\Vch_j) + \Rch_{j+1}(\Vch_j,K_j), \Kch_{j+1}(\Vch_j,K_j)).
\end{align}
The map $\bar\varphi$ is given by \eqref{e:gbar}--\eqref{e:mubar},
and its global flow with initial condition $\gbar_0$ and final condition
$(\zbar_\infty,\mubar_\infty)=(0,0)$ is determined by Proposition~\ref{prop:approximate-flow}.
Precise statements about the domains of the $\Phi_j$ are deferred to Section~\ref{sec:pfflow}.

We interpret $\Phi_j$
as the time-dependent evolution map of a dynamical system, which is an in\-fi\-nite-dimensional
perturbation of the
3-dimensional dynamical system
$\bar\varphi$ on $\Vcal = \R^3$.
The maps $(\Phi_j)$ are between
different spaces, since $K_j$ and $K_{j+1}$ are functions of polymers
defined in terms of blocks
of different side lengths $L^j$ and $L^{j+1}$, respectively, but
this aspect is unimportant.
Although we have not defined $(V_+,K_+)$
at $(V,K)=(0,0)$ in Section~\ref{sec:rgiv},
it is natural to extend it to act trivially on
$(0,0)$, which can thus be regarded a \emph{fixed point} of the dynamical system $\Phi$.
We are interested in the behaviour of $\Phi$ near this fixed point. 
In particular,
we wish to construct a sequence $(\Vch_j,K_j)$ that satisfies the flow equation
\begin{equation} \label{e:Phi-flow}
  (\Vch_{j+1},K_{j+1}) = \Phi_j(\Vch_j,K_j),
\end{equation}
with good estimates on the sequence $(\Vch_j,K_j)$ and with boundary conditions
\begin{equation} \label{e:Phi-bc}
  K_0=\1_\varnothing, \quad \gch_0>0 \; \text{small}, \quad \zch_\infty = \much_\infty = 0.
\end{equation}
The condition $\gch_0>0$ is related to the stability problem
discussed around \eqref{e:phi41d}; because of it the fixed point
can be approached from one side only.

\subsection{Non-hyperbolic dynamical system}

\renewcommand{\I}{{\sf N}}
\renewcommand{\b}{b}
\renewcommand{\u}{}
\newcommand{\domK}{a}
\newcommand{\domV}{h}

The dynamical system \refeq{Phi-flow} is
not hyperbolic near the fixed point $(0,0)$, due to the two unit eigenvalues of $\bar\varphi$
corresponding to the variables $g,z$.
To study it, we apply the main result of \cite{BBS-rg-flow}, which considers
a general class of dynamical systems $\Phi = (\Phi_j)$ with non-hyperbolic fixed point and with
contractive coordinate $K_j$ lying in
a given sequence of Banach spaces $\Wcal_j$.
The result of \cite{BBS-rg-flow} shows that, under appropriate assumptions on $\Phi_j$,
there exists a global flow satisfying the boundary conditions \eqref{e:Phi-bc}.
We now recall this result in the form we require.
Because we apply the result to the renormalisation group flow in the transformed variables
discussed in Section~\ref{sec:flow-approx},
we state it in the notation of
transformed variables $(\gch,\zch,\much,K)$ rather than
the variables $(g,z,\mu,K)$ used in \cite{BBS-rg-flow}.

The dynamical system studied in \cite{BBS-rg-flow} involves a quadratic
flow map $\bar\varphi$, of which the flow defined by \refeq{gbar}--\refeq{mubar} is
an instance.
This flow map in \cite{BBS-rg-flow} is required to obey
the Assumptions~(A1--A2)  stated in  Section~\ref{sec:flow-approx},
with a fixed $\Omega >1$ which defines $\jm$ as in \refeq{mass-scale}.
Any such quadratic flow has a unique global flow which obeys the
boundary conditions of our interest, and this global flow obeys the estimates of
Proposition~\ref{prop:approximate-flow}; this is proved in
\cite[Proposition~\ref{flow-prop:approximate-flow}]{BBS-rg-flow}.
We have already seen that the specific quadratic flow \refeq{gbar}--\refeq{mubar}
obeys Assumptions~(A1--A2) and Proposition~\ref{prop:approximate-flow}.
In this section, we consider any quadratic flow that obeys
Assumptions~(A1--A2), not necessarily our specific example \refeq{gbar}--\refeq{mubar},
and we replace $\Phi$ of \refeq{Phi} by the more general map
\begin{equation}
  \Phi_j(\Vch_j,K_j)
  = (\bar\varphi_j(\Vch_j) + \rho_j(\Vch_j,K_j), \psi_j(\Vch_j,K_j)),
\end{equation}
where $(\rho_j,\psi_j)$ satisfy Assumption~(A3) stated below.

To formulate the theorem,
let  $\Vcal =\R^3$, let $(\Wcal_j)_{j\in\N_0}$ be a sequence of Banach spaces, and set $X_j = \Vcal \oplus \Wcal_j$
(compared with \cite{BBS-rg-flow} we have reversed the order of the components $\Vcal,\Wcal_j$).
We define domains $D_j \subset X_j$ on which $(\rho_j,\psi_j)$ is assumed to be defined,
and an assumption which states estimates for $(\rho_j,\psi_j)$, as follows.
The domain and estimates depend on the initial
condition $g_0$ and an external parameter $m^2$ (here we write $m^2$ instead of $m$
as in \cite{BBS-rg-flow}).
For parameters $\domK,\domV>0$ and sufficiently small $g_0>0$, let $(\gbar_j,\zbar_j,\mubar_j)_{j\in\N_0}$
be the sequence determined by Proposition~\ref{prop:approximate-flow} with initial condition
$\gbar_0=g_0$ and mass $m^2$,
and define the domain $D_j = D_j(m^2, g_0, \domK,\domV) \subset X_j$  by
\begin{align}
  \lbeq{Djdef}
  D_j
  =
  \{
  x_j
  =
  (\gch_j,\zch_j,\much_j,K_j)
  \in X_j :
  |\gch_j - \gbar_j| &\leq \domV \gbar_j^2 |\log \gbar_j|,\nnb
  |\zch_j - \zbar_j| &\leq \domV \chi_j\gbar_j^2 |\log \gbar_j|,\nnb
  |\much_j - \mubar_j| &\leq \domV \chi_j\gbar_j^2 |\log \gbar_j|,\nnb
  \|K_{j}\|_{\Wcal_j} &\leq \domK \chi_j\gbar_j^3
  \}.
\end{align}
Assumptions~(A1--A2) for the quadratic flow are supplemented with the following
assumption for the perturbation $(\rho_j,\psi_j)$.
Constants in Assumption~(A3) carry subscripts ``3'' to indicate
that they arise in (A3).

\medskip\noindent \textbf{Assumption~(A3).}
  \emph{The perturbation:}
  The maps
  $\rho_j : D_j \to \Vcal \subset X_{j+1}$ and
  $\psi_j : D_j \to \Wcal_{j+1} \subset X_{j+1}$
  are three times
  continuously
  Fr\'echet differentiable,
  and
  there exist $\kappa \in (0,\Omega^{-1})$,
  $R \in (0,\domK(1-\kappa\Omega))$,
  $M>0$
  such that,
  for all $x_j = (V_j,K_j) \in D_j$,
  \begin{align}
\label{e:Rmain-g-flow}
    \|D_V^p D_K^q  \rho_j(x_j)\|_{L^{p,q}}
    & \le
    \begin{cases}
    M
    \chi_{j+1}\gbar_{j+1}^{3-p} & (p=0,1; \, q=0)\\
    M (\gbar_{j+1}^2 |\log \gbar_{j+1}|)^{-p} (\chi_{j+1}\gbar_{j+1}^3)^{1-q}
    & \Big(\begin{array}{ll} &p=0,\,q=1 \\ &\text{ or } p+q=2, 3\end{array}\Big),
    \end{cases}
\\
\lbeq{DVKbd-flow}
    \|D_{V}^pD_{K}^{q}\psi_j(x_j)\|_{L^{p,q}}
    &\le
    \begin{cases}
    R  \chi_{j+1} \gbar_{j+1}^{3}
    &
    (p =  q=0; \, K_j=0)
    \\
    M  \chi_{j+1} \gbar_{j+1}^{2}
    &
    (p = 1; \, q=0)
    \\
    \rlap{$\kappa$}\hspace{3.3cm}
    & (p=0,\, q=1)
    \\
    M (\gbar_{j+1}^2 |\log \gbar_{j+1}|)^{-p} (\chi_{j+1}\gbar_{j+1}^3)^{1-q}
    & (p+q=2, 3)
    .
    \end{cases}
  \end{align}

The following theorem, which is illustrated by
Figure~\ref{fig:phaseport}, is a restatement of
\cite[Theorem~\ref{flow-thm:flow}]{BBS-rg-flow} (with
\cite[Remark~\ref{flow-rk:Nrad}]{BBS-rg-flow} for part~(iii)), and its
corollary is a restatement of
\cite[Corollary~\ref{flow-cor:masscont}]{BBS-rg-flow}.  We make the
simplifying assumption that $K_0=0$, which corresponds to our
application with $K_0(X)=0$ for all non-empty polymers $X$.  In the
theorem, the sequence $(\bar K_j)_{j\in\N_0}$ is defined inductively
by $\bar K_{j+1} = \psi_j(\bar K_j, \bar V_j)$, with $\bar K_0=0$.
The existence of this sequence is established in
\cite[Lemma~\ref{flow-lem:small-ball}]{BBS-rg-flow}, which also gives
\begin{equation}
  \lbeq{Kbarbd}
  \|\bar K_j\|_{\Wcal_j} \leq \domK_* \chi_j \gbar_j^3
\end{equation}
for any $\domK_* \in (R/(1-\kappa\Omega),\domK]$,
if $\gbar_0$ is chosen sufficiently small (depending on $a_*$).

\begin{figure}
  \begin{center}
    \input{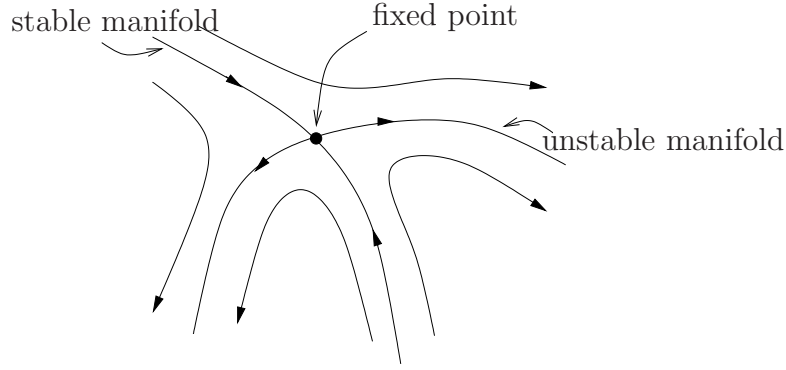}
  \end{center}
  \caption{Schematic phase portrait of the global flow of Theorem~\ref{thm:flow4}.
  In the renormalisation group
  flow  of  Proposition~\ref{prop:flow-flow},
    the portion of the stable manifold near the fixed point $(V=(0,0,0),K=0)$ restricted to $K=0$ consists of the
    points $V = (g_0,\mu_0^c(g_0), z_0^c(g_0))$, $g_0 \in [0,\delta]$.
    }
\label{fig:phaseport}
\end{figure}

\begin{theorem}
\label{thm:flow4}
  Fix a sequence of Banach spaces $\Wcal_j$.
  Suppose that (A1--A3) hold
  with parameters given by
  $(\domK,\domV,\kappa,\Omega,M,R)$ and with $\gbar_0=\ggen_0$.
  Let $\domK_* \in (R/(1-\kappa\Omega),\domK)$, $\b \in (0,1)$.
  There exists $\domV_* > 0$ such that for any $\domV \geq \domV_*$,
  there exists $g_*>0$ such that if $\tilde g_0 \in (0,g_*]$,
  there exists a neighbourhood  $\I = \I(\tilde g_0) \subset \R$ of $\ggen_0$
  such that the following conclusions hold.

  \smallskip\noindent
  (i)
  For initial condition $g_0 \in \I$,
  there exists a global flow $\xch$ of $\Phi=(\bar\varphi+\rho,\psi)$ with $(\zch_\infty,\much_\infty)=(0,0)$
  such that, with $\bar x$ the unique flow of $\bar\Phi=(\bar\varphi,\psi)$ determined by the same
  boundary conditions,
  \begin{align}
     \lbeq{VVbar2app}
     |\gch_j - \gbar_j|
     &\leq \b \domV \gbar_j^2 |\log \gbar_j|,
     \\
     \lbeq{VVbar3app}
     |\zch_j - \zbar_j|
     & \leq \b \domV \chi_j \gbar_j^2 |\log \gbar_j|     ,
     \\
     \lbeq{VVbar4app}
     |\much_j -\mubar_j|
     &\leq \b \domV \chi_j \gbar_j^2 |\log \gbar_j|,
     \\
     \lbeq{VVbar1app}
     \|K_j -\bar K_j\|_{\Wcal_j}
     &\leq \b (\domK-\domK_*) \chi_j \gbar_j^3
    .
  \end{align}
  The sequence $\xch$
  is the unique solution to
  \eqref{e:Phi-flow} which obeys
  these boundary conditions
  and the bounds \refeq{VVbar2app}--\refeq{VVbar1app}.

  \smallskip\noindent
  (ii)
  For every $j \in \N_0$, the map
  $(\Vch_j,K_j): \I \to \Vcal \oplus \Wcal_j$ is $C^1$
  and obeys
  \begin{equation} \lbeq{derivbd}
    \ddp{\zch_0}{g_0} = O(1), \quad \ddp{\much_0}{g_0} = O(1).
  \end{equation}

  \smallskip\noindent
  (iii)
  The neighbourhood  $\I(\tilde g_0)$ can be taken to be an interval
  centred at $\tilde g_0$, whose length $\varepsilon(\tilde g_0) > 0$
  depends only on $\ggen_0$, the constants in (A1--A3), and $a_*,b,h$,
  and with $\varepsilon(\ggen_0)$  bounded below away from $0$ uniformly on compact
  subsets of $\ggen_0>0$.
\end{theorem}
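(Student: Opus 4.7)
The plan is to recast \eqref{e:Phi-flow} together with the boundary conditions \eqref{e:Phi-bc} as a fixed-point equation $\xch = T\xch$ in a weighted Banach space $\mathcal{X}$ of sequences, and apply the Banach contraction mapping theorem. The structural observation is that the boundary data splits into two groups: $\gch_0 = g_0$ and $K_0 = 0$ are specified at scale $0$ and correspond to coordinates that should be iterated \emph{forward}, while $\zch_\infty = \much_\infty = 0$ are conditions at infinity that force $\zch,\much$ to be reconstructed by \emph{backward} summation. The expansion factor $L^{2}$ in \eqref{e:mubar} that makes $\much$ unstable is also exactly what makes the backward $\much$-sum converge geometrically.

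Let $\bar x = (\bar V_j,\bar K_j)_{j\in\N_0}$ be the reference flow of $(\bar\varphi,\psi)$ with $\bar K_0 = 0$, whose existence and estimate \refeq{Kbarbd} are already available. I would take $\mathcal{X}=\mathcal{X}(g_0)$ to be the Banach space of sequences $(\xch_j)$ with $\gch_0=g_0$, $K_0=0$, and finite norm
\begin{equation}
\|\xch - \bar x\|_{\mathcal{X}}
= \sup_{j}\max\Bigl\{
\frac{|\gch_j-\gbar_j|}{\gbar_j^{2}|\log\gbar_j|},\;
\frac{|\zch_j-\zbar_j|+|\much_j-\mubar_j|}{\chi_j\gbar_j^{2}|\log\gbar_j|},\;
\frac{\|K_j-\bar K_j\|_{\Wcal_j}}{\chi_j\gbar_j^{3}}
\Bigr\},
\end{equation}
so that the closed ball $\{\|\xch-\bar x\|_{\mathcal{X}}\le bh\}$ sits inside $\prod_j D_j$ and (A3) applies at every scale. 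Define $T$ componentwise: forward iteration for $\gch$ and $K$,
\begin{equation}
(T\xch)^{g}_{j+1}=\gch_j-\beta_j\gch_j^{2}+\rho^{g}_j(\xch_j),\qquad
(T\xch)^{K}_{j+1}=\psi_j(\xch_j),
\end{equation}
and backward summation for $\zch$ and $\much$, obtained by telescoping \eqref{e:zbar}--\eqref{e:mubar} (perturbed by $\rho^{z},\rho^{\mu}$) from infinity:
\begin{align}
(T\xch)^{z}_j &= \sum_{k\ge j}\bigl(\theta_k\gch_k^{2}-\rho^{z}_k(\xch_k)\bigr),\\
(T\xch)^{\mu}_j &= -\sum_{k\ge j}\Bigl(\prod_{\ell=j}^{k}\frac{1}{L^{2}(1-\gamma\beta_\ell\gch_\ell)}\Bigr)
\bigl(\eta_k\gch_k-\xi_k\gch_k^{2}-\pi_k\gch_k\zch_k+\rho^{\mu}_k(\xch_k)\bigr).
\end{align}
A fixed point of $T$ is precisely a global flow of $(\bar\varphi+\rho,\psi)$ satisfying \eqref{e:Phi-bc}.

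The body of the proof then consists of (a) $T(B(\bar x,bh))\subset B(\bar x,bh)$ and (b) $T$ is a contraction there. Both are verified coordinate by coordinate using (A2), (A3) and Proposition~\ref{prop:approximate-flow}: the $\zch$-sum converges by summability of $\theta_k\gbar_k^{2}$, the $\much$-sum by the $L^{-2(k-j)}$ product, and the forward $K$-iteration is stabilised by the per-step contraction $\|D_K\psi_j\|_{L^{0,1}}\le\kappa$ together with $\kappa\Omega<1$ and \refeq{Kbarbd}. Part (ii) follows from the standard parametric version of the contraction argument: $T$ depends $C^1$ on $g_0$ through both the initial data and $\bar x$ (which is $C^1$ in $\gbar_0$ by Proposition~\ref{prop:approximate-flow}), so the fixed point is $C^1$ in $g_0$, and the $O(1)$ bounds of \eqref{e:derivbd} come from differentiating the backward sums for $\zch_0,\much_0$ term by term. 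Part (iii) is immediate because all constants appearing in (A1)--(A3) and Proposition~\ref{prop:approximate-flow} are uniform on compact subsets of $\ggen_0>0$. The main obstacle, which dictates the exact shape of the norm on $\mathcal{X}$, is the simultaneous balancing of weights in all four coordinates: a $\zch$-deviation of order $\chi_k\gbar_k^{2}|\log\gbar_k|$ couples through the near-marginal term $\pi_k\gch_k\zch_k$ into the $\much$-recursion and must telescope, via the $L^{-2}$ product, back to a bound of the same order $\chi_j\gbar_j^{2}|\log\gbar_j|$ at scale $j$. It is this interplay that forces the logarithmic factors in the radii of $D_j$ in \refeq{Djdef}, and the delicate point to check is that these logarithms do not compound under iteration of $T$, which is guaranteed by the strict inequality $\kappa\Omega<1$ embedded in (A3).
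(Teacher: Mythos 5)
First, a point of comparison: this paper does not actually prove Theorem~\ref{thm:flow4} --- it is stated as a verbatim import of the main result of the companion paper \cite{BBS-rg-flow} (with its Remark for part~(iii)), so there is no in-paper argument to measure your proposal against. That said, your overall strategy --- a Lyapunov--Perron-type fixed point in a weighted sequence space, with forward iteration in the coordinates carrying initial data ($\gch$, $K$) and backward summation from infinity in the coordinates carrying final data ($\zch$, $\much$), the expanding factor $L^2$ in the $\much$-equation being precisely what makes the backward sum converge geometrically --- is the strategy of the companion paper, and your choice of weights, matching the radii of $D_j$ in \refeq{Djdef}, is the right one.

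The gap is in step (b), where you assert that $T$ is a contraction ``verified coordinate by coordinate.'' For the $K$-coordinate this is genuinely supplied by $\|D_K\psi_j\|_{L^{0,1}}\le\kappa$ together with $\kappa\Omega<1$; for the backward $\zch$- and $\much$-sums it is supplied by summability and the $L^{-2}$ product. But the $\gch$-coordinate is a \emph{centre} direction: the linearisation of $g\mapsto g-\beta_j g^2$ at $\gbar_j$ is $1-2\beta_j\gbar_j$, which is not bounded away from $1$, so the forward iteration is not contractive per step and deviations can a priori accumulate over the order $1/\gbar_0$ scales before $\gbar_j$ decays appreciably. What saves the argument is that the homogeneous solution of $\delta g_{j+1}=(1-2\beta_j\gbar_j)\,\delta g_j$ decays like $(\gbar_j/\gbar_k)^2$, so that relative to the weight $\gbar_j^2|\log\gbar_j|$ the propagator is marginally non-expanding, and the accumulated forcing $\sum_{k\le j}(\gbar_j/\gbar_k)^2\chi_k\gbar_k^3=O(\gbar_j^2\sum_k\chi_k\gbar_k)=O(\gbar_j^2|\log\gbar_j|)$ only just fits inside the weight; this is exactly where the logarithm in \refeq{Djdef} originates, and it must be carried out with explicit constants to obtain the factor $b<1$ in \refeq{VVbar2app} and the uniqueness claim. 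Your sketch instead attributes the non-compounding of logarithms to $\kappa\Omega<1$, but that inequality controls only the $K$-component and does nothing for $\gch$ or $\zch$. So the one estimate that is specific to the non-hyperbolicity of the fixed point --- the reason this theorem requires a bespoke proof rather than the standard stable manifold theorem --- is the one your proposal does not supply.
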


In particular, \eqref{e:Kbarbd} and \eqref{e:VVbar1app} imply that
$\|K_j\|_{\Wcal_j} \leq a \chi_j\gbar_j^3$ for all $j \in \N_0$.
Theorem~\ref{thm:flow4} concerns a single dynamical system $\Phi = (\Phi_j)_{j\in\N_0}$.
Let $\Mext$ be a metric space of external parameters and assume now that the $\Phi_j$
depend continuously on an external parameter $m^2 \in \Mext$, in the sense that the
$\Phi_j$ are continuous maps $X_{j} \times \Mext \to X_{j+1}$.
We say that $\Phi$ satisfies (A1--A3) uniformly if
$\Phi_j(\cdot, m^2)$ satisfies (A1--A3) for parameters
$(\domK,\domV,\kappa,\Omega,R,M)$
independent of $m^2 \in \Mext$.
For the proof of Theorem~\ref{thm:suscept-diff}, we apply the following extension
to Theorem~\ref{thm:flow4},
with $\Mext \subset [0,\delta)$.

\begin{cor}
  \label{cor:flow4-masscont}
   Assume that the $\Phi_j$ depend continuously on an external parameter $m^2 \in \Mext$
   and that Assumptions~(A1--A3) hold uniformly in $m^2$.
   Let $x = x(m^2,g_0) = (V(m^2,g_0),K(m^2,g_0))$
   be the global flow for external parameter $m^2$ and initial condition
   $g_0 \in \I(\ggen_0)$ guaranteed by Theorem~\ref{thm:flow4}.
   Then $x_j$ 
   is continuous in $(m^2,g_0)$ for each $j \in \N_0$,
   i.e., $(m^2,g_0) \mapsto V_j(m^2,g_0)$ is a continuous function into $\R^3$,
   and
   $(m^2,g_0) \mapsto K_j(m^2,g_0)$ is a continuous function into $\Wcal_j$.
\end{cor}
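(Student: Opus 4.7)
The plan is to exploit the fact that the global flow constructed in Theorem~\ref{thm:flow4} is characterised as the unique fixed point of a Banach-space contraction whose dependence on $(m^2,g_0)$ is inherited from the continuity of $\Phi_j(\cdot,m^2)$. Continuity of fixed points of a continuously parameterised family of uniform contractions is standard, and this is the mechanism I would use, supplemented by the $C^1$ dependence on $g_0$ already in Theorem~\ref{thm:flow4}(ii) to upgrade separate continuity to joint continuity.

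First I would revisit the construction of the global flow in the proof of Theorem~\ref{thm:flow4}: the sequence $(\Vch_j,K_j)_{j\in\N_0}$ is obtained as the fixed point of a map $\mathcal{T}$ acting on a suitable Banach space of sequences bounded by the weights in \refeq{Djdef}, with the boundary conditions $\gch_0 = g_0$, $K_0 = 0$, and $(\zch_\infty,\much_\infty) = (0,0)$ encoded into $\mathcal{T}$. The contraction constant and the admissible domain depend only on the parameters $(\domK,\domV,\kappa,\Omega,M,R)$ of Assumptions~(A1--A3), which by hypothesis are uniform in $m^2 \in \Mext$. Hence $\mathcal{T} = \mathcal{T}_{m^2,g_0}$ is a uniform contraction on a fixed ball as $(m^2,g_0)$ varies.

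Next I would verify that $(m^2,g_0) \mapsto \mathcal{T}_{m^2,g_0}(x)$ is pointwise continuous for each $x$ in the ball. The parameter dependence enters through (i) the initial value $g_0$, trivially continuous; (ii) the coefficients $\beta_j,\theta_j,\eta_j,\xi_j,\pi_j$ of $\bar\varphi_j$, continuous in $m^2$ by the results of \cite{BBS-rg-pt} recalled in Section~\ref{sec:flow-approx}; and (iii) the maps $\rho_j,\psi_j$, continuous in $m^2$ by hypothesis. Since at each fixed scale $j$ the component $(\mathcal{T}(x))_j$ depends on only finitely many of these ingredients evaluated along $x$, pointwise continuity at each coordinate is immediate. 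The uniform contraction principle then promotes this to continuity of the fixed point coordinates $\Vch_j(m^2,g_0)$ and $K_j(m^2,g_0)$ at every scale $j$, separately in each variable.

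The main obstacle is the mildly delicate behaviour of $\chi_j(m^2)$ near $m^2 = 0$, where the $\Omega$-scale $\jm(m^2)$ is integer-valued and can jump. This is harmless at each fixed scale $j$, because the bounds \refeq{Rmain-g-flow}--\refeq{DVKbd-flow} and the weights in \refeq{Djdef} enter the fixed-point argument only through quantities that at scale $j$ reduce to finite-scale continuous data, while the jumps of $\jm$ at later scales are absorbed by the uniform contractive tail bound. Finally, joint continuity in $(m^2,g_0)$ follows from the separate continuity in $m^2$ just established together with the $C^1$ dependence on $g_0$ of Theorem~\ref{thm:flow4}(ii), whose derivative bound \refeq{derivbd} is uniform in $m^2$ by the uniform validity of (A1--A3); this yields a Lipschitz estimate in $g_0$ with constant independent of $m^2$, from which joint continuity at each scale $j$ is immediate.
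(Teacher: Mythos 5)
The paper does not prove this corollary here --- it is a restatement of a result from the companion paper \cite{BBS-rg-flow} --- but the argument used there is reproduced in adapted form in Step~3 of the proof of Proposition~\ref{prop:flow-flow}, and it is quite different from yours. The key observation there is that once the initial condition is known, the boundary-value problem collapses to an initial-value problem: $K_0=0$ is fixed and $V_0(m^2,g_0)$ lives in the \emph{finite-dimensional} space $\R^3$ and is uniformly bounded, so along any sequence of parameters one extracts a convergent subsequence $V_0\to V_0^*$, iterates $\Phi_j$ forward (finitely many applications of maps that are continuous by hypothesis), passes to the limit in the bounds \refeq{VVbar2app}--\refeq{VVbar1app}, and invokes the \emph{uniqueness} clause of Theorem~\ref{thm:flow4} to identify the limit point with the flow at the limiting parameters. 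No contraction-with-parameters argument is needed, and no continuity in a norm on the space of sequences is ever required.

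Your route has a genuine gap precisely where you appeal to ``the uniform contractive tail bound.'' The contraction underlying Theorem~\ref{thm:flow4} acts on a space of sequences carrying a weighted supremum norm whose weights are the quantities $\chi_j(m^2)\,\gbar_j(m^2,g_0)^p$ appearing in \refeq{Djdef}, and the admissible ball is defined by those same weights. These are not fixed as the parameters vary: $\gbar_j$ moves continuously, but $\chi_j(m^2)=\Omega^{-(j-\jm(m^2))_+}$ is built from the integer-valued $\Omega$-scale $\jm$ and is genuinely discontinuous in $m^2$. So ``a uniform contraction on a fixed ball'' is not what you have, and the standard continuity-of-fixed-points lemma does not apply as stated: it requires $(m^2,g_0)\mapsto\mathcal{T}_{m^2,g_0}(x)$ to be continuous \emph{in the sequence norm}, i.e.\ uniformly in $j$, whereas you only argue coordinate-wise continuity. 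Coordinate-wise continuity is also not ``immediate,'' because the $j$-th coordinate of the fixed-point map is not finite-scale data: the components $\zch_j,\much_j$ are determined by backward sums over all scales $k\ge j$ (this is how the final condition $(\zch_\infty,\much_\infty)=(0,0)$ is imposed), so interchanging the parameter limit with these infinite sums needs exactly the uniform tail control that the discontinuous weights make delicate. Finally, the uniform Lipschitz bound in $g_0$ that you invoke for joint continuity is only recorded in \refeq{derivbd} at scale $0$; Theorem~\ref{thm:flow4}(ii) gives $C^1$ dependence at every scale but states no scale-$j$ derivative bound uniform in $m^2$, so that step also needs justification. Each of these points could probably be repaired, but as written the proof does not go through, whereas the compactness-plus-uniqueness argument sidesteps all of them.
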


\newcommand{\Wcalt}{\Wcal^{\smash{0}}}

\subsection{Proof of Proposition~\ref{prop:flow-flow}}
\label{sec:pfflow}

We prove Proposition~\ref{prop:flow-flow}
by applying Theorem~\ref{thm:flow4} to the infinite volume evolution map
\eqref{e:Phi} for the transformed variables, with the verification of
(A3) via Theorem~\ref{thm:step-mr-iv}.
The proof is divided into three main steps:

\smallskip\noindent
\emph{Step 1.} Construction of the maps $z_0^c,\mu_0^c$ and proof of the estimates for $(V_j,K_j)$ for all $j\in\N$.
\\
\emph{Step 2.} Proof of continuity of $z_0^c,\mu_0^c$ in the interior of their domains, i.e., $(m^2,g_0) \in (0,\delta)^2$.
\\
\emph{Step 3.} Proof that $z_0^c,\mu_0^c$ are continuous on $[0,\delta)^2$.
\smallskip

For the regularity statements of steps~2--3, a difficulty is that
Theorem~\ref{thm:step-mr-iv} involves the restrictions
$\mgen^2 \in \Iint_+ = [0,\delta)$
and $m^2 \in \Igen_+(\mgen^2)$, which maintain the mass $\mgen^2$ appearing in the norm
close to the mass $m^2$ appearing in \refeq{Phi} through its implicit
dependence on the original covariance $(-\Delta+m^2)^{-1}$.
To prove
mass continuity of $z_0^c,\mu_0^c$ in $m^2 \in [0,\delta)$, we must prove some
compatibility in this respect.
Similarly, Theorem~\ref{thm:flow4} permits only local variation of $g_0$,
whereas we must prove continuity in $g_0\in [0,\delta)$.
This is illustrated by Figure~\ref{fig:covering}.
To deal with these difficulties, although we aim to prove
$(V_j,K_j) \in \domRG_j(s_j)$
with
$s_j=(m^2,\ggen_j(m^2,g_0))$, we consider a wider
class of  parameters and spaces.
To simplify the notation, we write
\begin{equation} \label{e:Wcaltdef}
  \Wcalt_j(\mgen^2,\ggen_0) = \Wcal_j(\mgen^2, \ggen_j(\mgen^2,\ggen_0),\Z^d),
\end{equation}
where here $(\mgen^2, \ggen_0)$ defines the sequence $(\ggen_j)$ via \eqref{e:ggendef}.

\begin{figure}
  \begin{center}
    \input{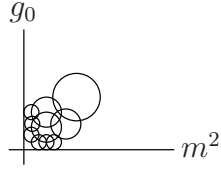}
  \end{center}
  \caption{
      To construct a continuous solution on $(m^2,g_0) \in (0,\delta)^2$, we first construct solutions
      in small neighbourhoods of a fixed $(\mgen^2, \ggen_0)$, and show compatibility of these solutions.
    }
\label{fig:covering}
\end{figure}

\subsubsection{Preliminaries: the sequence \texorpdfstring{$\ggen$}{gtilde} and norms}

The proof uses the properties
established in the next lemma, for the sequence $\ggen_j$ defined in \refeq{ggendef}.
The proof of Lemma~\ref{lem:gbarmcomp} uses elementary calculus only.

\begin{lemma} \label{lem:gbarmcomp}
  (i) There exists $\delta>0$ such that uniformly in $(m^2,g_0) \in (0,\delta)^2$
  and in $j \in\N_0$,
  \begin{equation} \label{e:ggengbarequiv2}
    \ggen_j(m^2,g_0) = \gbar_j(m^2,g_0) + O(\gbar_j^2(m^2,g_0)).
  \end{equation}
  (ii) For sufficiently small $\theta > 0$,
  if $|m^2-\hat m^2| \leq \theta\hat m^2$ 
  and $|g_0 -\hat g_0| \leq \theta \hat g_0$, 
  then
  \begin{equation} \label{e:gbarmg0bd}
    |\ggen_j(m^2,g_0) - \ggen_j(\hat m^2, \hat g_0)|
    \leq
    (\theta+O(g_0)) \ggen_j(\hat m^2, \hat g_0).
  \end{equation}
  (iii) For $j \in\N_0$,
  $\ggen_j(m^2,g_0)$ is monotone increasing in each of $m^2$ and $g_0$.
\end{lemma}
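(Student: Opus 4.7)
The plan rests on three facts already available: (a) Proposition~\ref{prop:approximate-flow} gives the two-sided estimate $\gbar_j(0,g_0) \asymp g_0/(1 + g_0 j)$; (b) the summands in \refeq{betadef} depend continuously on $m^2$ via \refeq{scaling-estimate}, which yields $|\beta_j(m^2) - \beta_j(0)| = O(m^2 L^{2j})$ in the sub-mass-scale regime $j \le j_m$; and (c) Assumption~(A1) together with the equivalence $|j_m - \jm|= O(1)$ gives $|\beta_j(m^2)| = O(\Omega^{-(j-j_m)})$ for $j > j_m$. All three statements are already established in \cite{BBS-rg-pt}.

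Part~(iii) is the warm-up. Monotonicity of $\gbar_j(0,\cdot)$ in $g_0$ is by induction on $j$: the map $u \mapsto u(1 - \beta_j(0) u)$ is strictly increasing on $[0, 1/(2\betamax)]$, and (a) keeps the iterates inside this interval. Then $\ggen_j(m^2, g_0) = \gbar_{j\wedge j_m}(0, g_0)$ inherits the monotonicity. For monotonicity in $m^2$, if $m^2 \le m'^2$ then $j_m \ge j_{m'}$, so $j \wedge j_m \ge j \wedge j_{m'}$, and since $k \mapsto \gbar_k(0,g_0)$ is decreasing (the recursion subtracts a nonnegative quantity), we obtain $\ggen_j(m^2,g_0) \le \ggen_j(m'^2, g_0)$.

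Part~(i) splits at $j_m$. For $j \le j_m$, define $D_j := \gbar_j(0,g_0) - \gbar_j(m^2,g_0)$ with $D_0 = 0$. Subtracting the two recursions gives
\begin{equation}
  D_{j+1} = D_j\bigl[1 - \beta_j(0)(\gbar_j(0,g_0) + \gbar_j(m^2,g_0))\bigr]
  + \bigl[\beta_j(0) - \beta_j(m^2)\bigr] \gbar_j(m^2,g_0)^2.
\end{equation}
Using (b) and iterating, $|D_j| \le C \sum_{k<j}(m^2 L^{2k}) \gbar_k^2$; the sum is dominated by its top terms near $k = j_m$, where $m^2 L^{2k} = O(1)$ and $\gbar_k = O(\gbar_{j_m})$, yielding $|D_j| = O(\gbar_j^2)$ since $\gbar_{j_m} \asymp \gbar_j$ up to the mass scale. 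For $j > j_m$, $\ggen_j = \gbar_{j_m}(0,g_0)$, while (c) and geometric summation give $|\gbar_j(m^2,g_0) - \gbar_{j_m}(m^2,g_0)| = O(\gbar_{j_m}^2)$; combining with the previous case yields \refeq{ggengbarequiv2}.

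Part~(ii), which I expect to be the main obstacle, is handled by the triangle inequality through the intermediate point $(\hat m^2, g_0)$. For the $g_0$-piece, monotonicity from (iii) together with the asymptotic $\gbar_j(0,g_0)^{-1} = g_0^{-1} + O(j)$ implied by (a) gives the ratio identity $\gbar_j(0,g_0)/\gbar_j(0,\hat g_0) = 1 + O(\theta + g_0)$ when $|g_0 - \hat g_0| \le \theta \hat g_0$, so the $g_0$-variation contributes $(\theta + O(g_0))\ggen_j(\hat m^2, \hat g_0)$. For the $m^2$-piece, the $m^2$-dependence of $\ggen_j$ is only through the cutoff index $j_m$; for $\theta$ sufficiently small, $|m^2 - \hat m^2| \le \theta \hat m^2$ forces $|j_m - \hat j_m| \le 1$, so the difference involves at most one extra step of the $\gbar$-recursion, contributing $O(\beta_{j_m}\gbar_{j_m}^2) = O(\gbar_{j_m}) \ggen_j$, absorbed into the $O(g_0)\ggen_j$ term. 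The delicate bookkeeping is to ensure that the constants in these two estimates combine additively to give exactly $\theta + O(g_0)$ rather than a strictly larger multiple of $\theta$; this is arranged by keeping the $O(g_0)$ contribution separated from the $\theta$ contribution throughout the iteration.
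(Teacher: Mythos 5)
Your proposal is correct and follows essentially the same route as the paper: split at the mass scale $j_m$, compare the massive and massless recursions for $j\le j_m$ using $|\beta_k(m^2)-\beta_k(0)|=O(m^2L^{2k})$ (the paper does this by solving the differentiated recursion, $\dot{\gbar}_j=O(\gbar_j^2L^{2j})$, and integrating, which is the continuous analogue of your difference recursion for $D_j$), use summability of $\beta_l$ beyond $j_m$ for $j>j_m$, and for part~(ii) combine $|j_m-j_{\hat m}|\le 1$ with Lipschitz dependence of $\gbar_j$ on $g_0$, exactly as in the paper. One small correction to part~(i): your justification ``$\gbar_{j_m}\asymp\gbar_j$ up to the mass scale'' is false for $j\ll j_m$ (the ratio is of order $(1+g_0 j_m)/(1+g_0 j)$), and moreover the sum for $D_j$ runs only over $k<j$, not up to $j_m$; what actually saves the step is that $\sum_{k<j}m^2L^{2k}\gbar_k^2=O(m^2L^{2j}\gbar_j^2)$ by geometric domination in $k$ (since $\gbar_k\le C(1+g_0(j-k))\gbar_j$), together with $m^2L^{2j}\le L^2$ for $j\le j_m$, which gives $|D_j|=O(\gbar_j^2)$ directly.
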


\begin{proof}
  (i)
  We use dots to denote derivatives with respect to $m^2$.  By
  \cite[Lemma~\ref{pt-lem:betalim}(b)]{BBS-rg-pt},
  $\dot\beta_l = O(L^{2l})$ (with $L$-dependent constant).
  Differentiation of \eqref{e:gbar} gives
  $\dot\gbar_{j+1} = \dot\gbar_j (1-2\beta_j \gbar_j) - \dot\beta_j \gbar_j^2$.
  We solve the recursion and then apply
  \cite[Lemma~\ref{flow-lem:elementary-recursion}(iii)]{BBS-rg-flow} to obtain
  \begin{equation} \label{e:gbarj-mderiv}
    \dot\gbar_j
    = - \sum_{l=0}^{j-1} \prod_{k=l}^j (1-2\beta_j\gbar_j) \dot\beta_l \gbar_l^2
    = - (1+O(g_0)) \sum_{l=0}^{j-1} \left(\frac{\gbar_j}{\gbar_l}\right)^2 \dot\beta_l \gbar_l^2
    = O(\gbar_j^2 L^{2j})
    .
  \end{equation}
  By continuity, $\gbar_j$ achieves its maximum on $[0,\hat m^2]$ at some
  $m_*^2$ (depending on $j$).
  For $j \leq j_m$, by integrating the derivative we obtain
  \begin{equation} \label{e:gbar0m1}
    |\gbar_j(0)-\gbar_j(m^2)|
    \leq
    O\left(  \gbar_j^2(m_*^2)\right)
    .
  \end{equation}
  Thus $\gbar_j(m^2_*) \leq \gbar_j(0) + O(\gbar_j(m^2_*)^2)$.
  Since $\gbar_j(m^2_*) = O(\gbar_0)$
  (by \cite[Lemma~\ref{flow-lem:elementary-recursion}(i)]{BBS-rg-flow}),
  it follows that
  $\gbar_j(m^2_*) \leq (1+O(\gbar_0)) \gbar_j(0)$, and hence
  $\gbar_j(m^2) = \gbar_j(0) + O(\gbar_j(0)^2)$ for $j \leq j_m$ as claimed.

  For $j \geq j_m$, we iterate \eqref{e:gbar}, and use $\gbar_l \leq \gbar_{j_m}$ for $l \geq j_m$
  (which is immediate from $\beta_k \geq 0$)
  and $\sum_{l=j_m}^\infty |\beta_l| = O(1)$
  (by \cite[Lemma~\ref{pt-lem:wlims}]{BBS-rg-pt}), to obtain
  \begin{equation}
    \gbar_j
    = \gbar_{j_m} \prod_{l=j_m}^{j-1} (1-\beta_l \gbar_l)
    = \gbar_{j_m} \exp\left(\sum_{l=j_m}^{j-1} O(\beta_l \gbar_l) \right)
    = \gbar_{j_m} \exp(O(\gbar_{j_m}))
    = \gbar_{j_m} + O(\gbar_{j_m}^2),
  \end{equation}
  and the proof of (i) is complete.

  \smallskip
  \noindent
  (ii)
  Using $\log(1+t) \leq t$ for $t \geq 0$, we first observe that for sufficiently small
  $\theta >0$,
  \begin{equation}
    |j_{\hat m} - j_m| \leq 1 + |\log_{L^2} \hat m^2 - \log_{L^2} m^2|
    \leq 1+ \frac{|\hat m^2 - m^2|}{m^2 \log L^2} \leq 1+O(\theta) < 2.
  \end{equation}
  Thus, since $j_m$ is an integer, $|j_{\hat m} - j_m| \leq 1$.
  Let $j_m^*(j) = j_m$ if $j \geq j_m$ and $j_m^*(j) = j$ otherwise. Then also
  $|j_m^* - j_{\hat m}^*| \leq |j_m - j_{\hat m}| \leq 1$.
  By \eqref{e:ggendef} and the triangle inequality,
  \begin{equation}
  \begin{aligned}
    \ggen_j(g_0, m^2) - \ggen_j(\hat g_0, \hat m^2)
    &=
    \gbar_{j_m^*}(g_0) - \gbar_{j_{\hat m}^*}(\hat g_0)
    \\&
    \leq
    |\gbar_{j_m^*}(g_0) - \gbar_{j_{\hat m}^*}(g_0)|
    +|\gbar_{j_{\hat m}^*}(g_0) - \gbar_{j_{\hat m}^*}(\hat g_0)|
    .
  \end{aligned}
  \end{equation}
  By \eqref{e:gbar},
  $|\gbar_{j_m^*} - \gbar_{j_{\hat m}^*}| \le O(\gbar_{j_{\hat m}^*}^2)$.
  Also, it is shown in \cite[Lemma~\ref{flow-lem:elementary-recursion}(iv)]{BBS-rg-flow}
  that if $|g_0-\hat g_0| \leq \theta \hat g_0$ then
   \begin{equation} \label{e:gbarg0g0tildebd}
     |\gbar_j(g_0) - \gbar_j(\hat g_0)|  \leq \theta(1+O(g_0))\gbar_j(g_0).
   \end{equation}
   The combination of the above estimates gives
  \begin{equation}
    |\ggen_j(g_0, m^2) - \ggen_j(\hat g_0, \hat m^2)|
    \leq
    (\theta+O(g_0)) \ggen_j(\hat g_0,\hat m^2)
  \end{equation}
  which is \eqref{e:gbarmg0bd}.

  \smallskip
  \noindent
  (iii)
  The sequence $\gbar$ is monotone increasing in $g_0$
  (by \cite[\eqref{flow-e:gbarprime}]{BBS-rg-flow}),
  and thus the sequence $\ggen$ is also monotone increasing in $g_0$.
  Also, the sequence $\gbar$ is decreasing in $j$ since $\beta_k \ge 0$ for all $k$,
  while $j_m$ is decreasing in $m^2$,
  so $\ggen$ is increasing in $m^2$.
\end{proof}

In preparation for step~2, we make the following observation.
It is proved in \cite[Lemma~\ref{step-wishlist-lem:Wcalnormequiv}]{BS-rg-step}
(with $1+O(\ggen-\ggen') \leq 2$)
that there exists $\delta>0$ such that,
whenever $\mgen^2 \leq \mgen'^2$ and $\ggen' \leq \ggen \leq (1+\delta)\ggen'$,
\begin{equation} \label{e:normequiv}
  \|\cdot\|_{\Wcal_j(\mgen^2,\ggen)}
  \leq
  2 \|\cdot\|_{\Wcal_j(\mgen'^2,\ggen')}
  .
\end{equation}
Therefore, by Lemma~\ref{lem:gbarmcomp}(ii-iii),
there exists $\delta>0$ such that
if $\mgen^2 \leq m^2$ and $g_0 \leq \ggen_0 \leq (1+\delta)g_0$,
then, for all $g_0 \in (0,\delta)$,
\begin{equation} \label{e:normequivg0m}
  \|\cdot\|_{\Wcalt_j(\mgen^2, \ggen_0)} \leq 2 \|\cdot\|_{\Wcalt_j(m^2,g_0)},
\end{equation}
with $\Wcal^0$ defined in \eqref{e:Wcaltdef}.

In preparation for step~3, we recall
from \cite[Lemma~\ref{step-wishlist-lem:Wcalnormequiv}]{BS-rg-step}
that
$\|\cdot\|_{\Wcal_j(\mgen,\ggen)} = \|\cdot\|_{\Wcal_j(0,\ggen)}$ if $j \le \jm$.
In addition,
if $j \le j_m$ then $\ggen_j(m^2,g_0) = \ggen_j(0,g_0)$ by definition
in \eqref{e:ggendef}.
Therefore the scale-$j$ norm is independent of $m^2$ as long as $j \le \min\{\jm,j_m\}$.
As discussed below \refeq{mass-scale}, the mass scale $j_m$ differs from $\jm$
by at most a constant, so we conclude that there is a constant $c$ such that
\begin{equation} \label{e:normindep}
  \|\cdot\|_{\Wcalt_j(m^2,g_0)} = \|\cdot\|_{\Wcalt_j(0,g_0)} \quad
  \text{for $m^2 \leq c L^{-2j}$.}
\end{equation}

\subsubsection{Step 1: construction of the maps \texorpdfstring{$z_0^c,\mu_0^c$}{z0c, mu0c}}

\begin{proof}[Step 1]
We construct the maps $z_0^c,\mu_0^c$ and show that for initial conditions \eqref{e:flow-flow-ic},
\eqref{e:flowVK} is valid and $(V_j,K_j) \in \domRG_j(s_j)$ for all $j\in\N_0$.
As explained in Section~\ref{sec:trans}, this is equivalent to showing that there exist
$(\Vch_j,K_j) \in \domRGch_j(s_j)$ satisfying \eqref{e:Phi-flow}.
We also show the estimates stated for $V_j,\Vch_j$.

We fix $(\mgen^2,\ggen_0) \in [0,\delta) \times (0,\delta)$.
To apply Theorem~\ref{thm:flow4} to the maps \eqref{e:Phi} with the $\Wcal_j$ spaces given
by $\Wcalt_j(\mgen^2,\ggen_0)$,
we must verify Assumptions~(A1--A3) with this choice.
Assumptions~(A1--A2) for the maps $\bar\varphi$
have already been seen to be satisfied in Section~\ref{sec:flow-approx}.
We apply Theorem~\ref{thm:step-mr-iv}
to verify (A3) with $\Wcal_j=\Wcalt_j(\mgen^2,\ggen_0)$.
Application of Theorem~\ref{thm:flow4} with a single value $(\mgen^2,\ggen_0)$
then produces a local solution, i.e.,
sequences $(\Vch_j,K_j) \in \domRGch_j$ defined for
each $(m^2,g_0)$ in a neighbourhood of $(\mgen^2,\ggen_0)$.
In steps~2--3, we subsequently show
that these local solutions can be combined
into a single continuous solution of $(m^2, g_0) \in [0,\delta)^2$.
The details are as follows.

Recall  the domain $\domRG_j$ of \refeq{domRG}.
We write $\domRG_j^{0}(\mgen^2, \ggen_0)$ instead of
$\domRG_j(\mgen^2, \ggen_j(\mgen^2,\ggen_0))$.
Recall also the domain  $D_j$ of \refeq{Djdef}.
Explicitly, with $\chigen_j=\chi_j(\mgen^2)$,
\begin{align}
\label{e:domRG-2}
  \domRG_j^{0}(\mgen^2,\ggen_0)
  &=
  \{ (g_j, z_j, \mu_j): C_\DV^{-1} \ggen_j < g_j < C_\DV \ggen_j, \;
  |z_j|, |\mu_j| < C_\DV \ggen_j \}
  \nnb
  & \qquad
  \times
  B_{\Wcalt_j(\mgen^2,\ggen_0)}(\DVa \chigen_j\ggen_j^3)
  ,
  \\
  \lbeq{Djdef-2}
  D_j(\mgen^2,\ggen_0)
  &=
  \{
  (\gch_j,\zch_j,\much_j) :
  |\gch_j - \gbar_j| \leq \domV \gbar_j^2 |\log \gbar_j|,\;
  \nnb & \qquad\qquad\qquad\qquad
  |\zch_j - \zbar_j|, |\much_j - \mubar_j| \leq \domV \chigen_j\gbar_j^2 |\log \gbar_j|
  \}
  \nnb
  & \qquad
  \times
  B_{\Wcalt_j(\mgen^2,\ggen_0)}( \domK \chigen_j\gbar_j^3).
\end{align}
Set $a=\DVa$.
By Lemma~\ref{lem:gbarmcomp}(i), for any fixed $C_\DV>1$ and $h>0$
(with $\delta$ chosen small depending on $h$), the domains of
\refeq{domRG-2}--\refeq{Djdef-2} obey $D_j \subset T_j(\domRG_j^{0})$.
Therefore, for any $(\Vch,K) \in D_j$, by Corollary~\ref{cor:step-mr-tr},
  $\rho=\Rch_+,\psi=\Kch_+$ obey the bounds of Theorem~\ref{thm:step-mr-iv}
(namely those of Theorem~\ref{thm:step-mr-fv}).
For convenience, we recall the
bounds for $\Kch_+$ from \refeq{DVKbd}, which are more delicate than those for $\Rch_+$:
\begin{align}
\lbeq{DVKbd-9}
    \|D_{V}^pD_{K}^{q}\Kch_+\|_{L^{p,q}}
    &\le
    \begin{cases}
    M  \chigen_{j+1} \ggen_{j+1}^{3-p}
    &
    (p \ge 0)
    \\
    \rlap{$\kappa$}\hspace{3.3cm}
    & (p=0,\, q=1)
    \\
    M  \ggen_{j+1}^{-p}
    (
    \chigen_{j+1}
    \ggen_{j+1}^{10/4}
    )^{1-q}
    &
    (p \ge 0,\, q \ge 1)
    .
    \end{cases}
\end{align}
As stated above the statement of Proposition~\ref{prop:flow-flow}, we
have made the choice $\DVa=4M$.  Also, since $\kappa = O(L^{-1})$, we
can and do assume that $\kappa < \half \Omega^{-1}$.  Then
\refeq{DVKbd-9} implies the bounds of Assumption~(A3) for $\Kch$,
namely
\begin{align}
\lbeq{DVKbd-flow-9}
    \|D_{V}^pD_{K}^{q}\Kch_+\|_{L^{p,q}}
    &\le
    \begin{cases}
    R' \chi_{j+1} \gbar_{j+1}^{3}
    &
    (p =  q=0; \, K_j=0)
    \\
    M' \chi_{j+1} \gbar_{j+1}^{2}
    &
    (p = 1; \, q=0)
    \\
    \rlap{$\kappa$}\hspace{3.3cm}
    & (p=0,\, q=1)
    \\
    M' (\gbar_{j+1}^2 |\log \gbar_{j+1}|)^{-p} (\chi_{j+1}\gbar_{j+1}^3)^{1-q}
    & (p+q=2, 3)
    ,
    \end{cases}
\end{align}
with constants $M' = 2M$ (to absorb a factor $2$ in replacing
$\ggen_j$ by $\gbar_j = \gbar_j(1+O(\gbar_j))$, and $R'=2M = \frac12
\domK < \domK(1-\kappa \Omega)$ (the factor $2$ is again to replace
$\ggen_j$ by $\gbar_j$, the second equality is our choice $a=4M$, and
the inequality follows since $\kappa\Omega < \frac12$).  Note that the
powers in \refeq{DVKbd-flow-9} are less restrictive than in
\refeq{DVKbd-9}.  The conclusion of the previous discussion is that,
for any $(\mgen^2,\ggen_0) \in [0,\delta) \times (0,\delta)$, the maps
$\Phi(\cdot, m^2)$ with $m^2 \in \Igen_+(\mgen^2)$ satisfy (A3) with
norms $\Wcalt_j(\mgen^2,\ggen_0)$ and parameters
$(\domK,\domV,\kappa,\Omega,R',M')$ depending on the constants in
Theorem~\ref{thm:step-mr-iv}, but not on any of
$\mgen^2,m^2,\ggen_0,g_0$.

We choose $b=\frac14$ (somewhat arbitrarily), $h>h_*$ large enough,
$\delta \leq g_*$, and apply Theorem~\ref{thm:flow4} to conclude that
for each $(\mgen^2,\ggen_0) \in (0,\delta)^2$, there is a
neighbourhood $N(\mgen^2,\ggen_0) = \Igen_+(\mgen^2) \times
\I(\ggen_0) \subset (0,\delta)^2$ of $(\mgen^2,\ggen_0)$ such that,
given $(m^2,g_0) \in N(\mgen^2,\ggen_0)$, there are solutions to
\eqref{e:Phi-flow} with maps $\Phi_j = \Phi_j(\cdot,m^2)$ that satisfy
the boundary conditions \eqref{e:Phi-bc} specified by the parameter
$g_0$.  We denote these solutions by
$\xch^{d}(\mgen^2,\ggen_0;m^2,g_0)$ where the argument
$(\mgen^2,\ggen_0)$ indicates the dependence on the Banach spaces
$\Wcalt_j(\mgen^2,\ggen_0)$ to which we apply Theorem~\ref{thm:flow4},
and where $(m^2,g_0)$ refers to the mass parameter $m^2$ of $\Phi_j$
and the initial condition $g_0$.

Choosing $(\mgen^2, \ggen) = (m^2,g)$,
we define a map
$\xch_j^c: [0,\delta) \times (0,\delta) \to \Vcal \oplus \Kspace_j(\Z^d)$ by
\begin{equation} \label{e:xss}
  \xch_j^c(m^2,g_0) = \xch_j^{d}(m^2,g_0;m^2,g_0) \in \Vcal \oplus \Wcalt_j(m^2,g_0) \subset \Vcal \oplus \Kspace_j(\Z^d)
  .
\end{equation}
We extend this map to $[0,\delta)^2$ by setting $\xch^{c}_j(m^2,0) = (0,0)$.
In particular, restriction to the $(z_0,\mu_0)$-component produces two real-valued maps
$z_0^c: [0,\delta)^2 \to \R$ and $\mu_0^c : [0,\delta)^2 \to \R$ such that
$z_0^c(m^2,0)=\mu_0^c(m^2,g_0)=0$, and
the flow with initial condition \eqref{e:flow-flow-ic} satisfies $(V_j,K_j) \in \domRG_j(s_j)$ for all $j\in\N_0$,
as desired.

In particular, this implies $\gch_j = O(\gbar_j)$.
Since $\zbar_j, \mubar_j = O(\chi_j\gbar_j)$, by Proposition~\ref{prop:approximate-flow},
\eqref{e:VVbar3app}--\eqref{e:VVbar4app} also show $\zch_j,\much_j = O(\chi_j \gbar_j)$.
By \eqref{e:TVV2}, it then follows that $g_j = O(\gbar_j)$.
Furthermore, by \cite[\eqref{pt-e:zch-def}--\eqref{pt-e:much-def}]{BBS-rg-pt},
$\much_j = \mu_j + O(\mu_j)^2$ and $\zch_j = z_j + O(z_j\mu_j) + O(\mu_j^2)$, and
$z_j,\mu_j = O(\chi_j \gbar_j)$ follows.
\end{proof}

\subsubsection{Step 2: regularity in the interior}

In preparation for step~2 of the proof,
we make the following two observations:

\smallskip\noindent (a)
The neighbourhood $N(\mgen^2,\ggen_0) = \Igen_+(\mgen^2) \times \I(\ggen_0)$
in the proof
  of (i--ii) can be taken instead to be of the form
  $N(\mgen^2,\ggen_0)
  = (\mgen^2-\varepsilon(\mgen^2,\ggen),\mgen^2+\varepsilon(\mgen^2,\ggen))
  \times (\ggen-\varepsilon(\mgen^2,\ggen),\ggen+\varepsilon(\mgen^2,\ggen))$,
  with $\varepsilon(\mgen^2,\ggen)>0$ bounded away from $0$ uniformly on compact subsets of $(0,\delta)^2$.

\smallskip\noindent (b)
The flexibility to choose any $b \in (0,1)$
in Theorem~\ref{thm:flow4} can be used
to enhance the statement of
Theorem~\ref{thm:flow4}, as follows.
We apply Theorem~\ref{thm:flow4} with two choices of $b$, namely
$b_1=\frac 12$ and $b_2=\frac 14$.
The values of $h_*,g_*$ in Theorem~\ref{thm:flow4} depend on $b$,
and we assume that
$h > \max\{h_*(b_1),h_*(b_2)\}$
but otherwise the choice of $h$ is arbitrary, and
$\delta \leq \min\{ g_*(b_1), g_*(b_2) \}$.
As in the proof of (i--ii), for $(\mgen^2,\ggen_0) \in (0,\delta)^2$, there exists
a neighbourhood $N(\mgen^2,\ggen_0)$ such that for each $(m^2,g_0) \in N(\mgen^2,\ggen_0)$, there exists
a solution that satisfies \eqref{e:VVbar2app}--\eqref{e:VVbar1app} with $b=b_2 =\frac14$
and $\Wcal_j=\Wcalt_j(\mgen^2, \ggen_0)$,
by applying the existence statement of Theorem~\ref{thm:flow4} with the latter parameters.
Moreover, any solution that satisfies \eqref{e:VVbar2app}--\eqref{e:VVbar1app}
with $b=b_1 =\frac12$
and $\Wcal_j = \Wcalt_j(\mgen^2,\ggen_0)$
is unique, by applying the uniqueness statement of Theorem~\ref{thm:flow4} with
$b=b_1 =\frac12$,
and hence must, in fact, satisfy \eqref{e:VVbar1app} with the smaller value of the
parameter $b=b_2 =\frac14$ and therefore be identical to the solution
produced when $b=b_2$.

\begin{proof}[Step 2]
We show that $z_0^c,\mu_0^c$ are continuous in
$(m^2,g_0) \in (0,\delta)^2$,
differentiable in $g_0 \in (0,\delta)$, with uniformly bounded $g_0$-derivative.

We set $b=b_2 = \frac 14$ as in step~1.
We fix some $(\hat m^2,\hat g_0) \in (0,\delta)^2$ and show that
$z_0^c,\mu_0^c$ are
continuous in both variables
and differentiable in $g_0$ at this $(\hat m^2,\hat g_0)$.
By Corollary~\ref{cor:flow4-masscont}, for any $(\mgen^2,\ggen_0)\in (0,\delta)^2$,
$\xch_j^{d}(\mgen^2,\ggen_0; m^2, g_0)$ is continuous in $(m^2,g_0)\in N(\mgen^2,\ggen_0)$,
as a map taking values in the Banach space $\Vcal \oplus \Wcalt_j(\mgen^2,\ggen_0)$.
Moreover, by Theorem~\ref{thm:flow4}(ii), this map is differentiable in $g_0$, for fixed $(\mgen^2,\ggen_0,m^2)$,
and its derivative is bounded uniformly in $(\mgen^2,m^2,\ggen_0,g_0)$.
It therefore suffices to show that, given $(\hat m^2, \hat g_0)$, there exists $(\mgen^2,\ggen_0)$ such that
$(\hat m^2,\hat g_0) \in N(\mgen^2,\ggen_0)$, and that for all $(m^2,g_0)$ in a neighbourhood of $(\hat m^2,\hat g_0)$,
$\xch_j^{d}(m^2,g_0;m^2,g_0) = \xch_j^{d}(\mgen^2,\ggen_0;m^2,g_0)$.
Such a neighbourhood is constructed in the next paragraph.
Assuming its existence, the continuity in $(m^2,g_0)$ and the differentiability in $g_0$
of $\xch_j^c(m^2,g_0)$ follow from the already established properties of $\xch_j^{d}(\mgen^2,\ggen_0; \cdot, \cdot)$.

We now choose $(\mgen,\ggen_0)$ and the neighbourhood of $(\hat m^2,\hat g_0)$,
discussed in the previous paragraph.
By item~(a) above, given $(\hat m^2,\hat g_0)$,
the radius of $N(\mgen^2,\ggen_0)$ is uniformly bounded below by some $\hat\varepsilon >0$ on the compact set
$\tilde m^2 \in [\half \hat m^2, \hat m^2]$, $\tilde g_0 \in [\hat g_0, (1+\delta)\hat g_0]$.
We assume that $\hat\varepsilon < \max\{\delta,\half\}$ and
choose $\mgen^2 = \hat m^2 - \half \hat\varepsilon$ and $\ggen_0 = \hat g_0 + \half \hat\varepsilon$.
Then
$N(\mgen^2,\ggen_0)
\supset [\hat m^2 - \frac14 \hat\varepsilon,\hat m^2+ \frac14 \hat\varepsilon]
\times [\hat g_0 - \frac14 \hat\varepsilon, \hat g_0+ \frac14 \hat\varepsilon] = Q$.
By definition, $\mgen^2 \leq m^2$ and $g_0 \leq \ggen_0 \leq (1+\delta)g_0$
hold for all $(m^2,g_0) \in Q$. Thus, for $(m^2,g_0) \in Q$, by \eqref{e:normequivg0m},
\begin{equation}
\begin{aligned}
\lbeq{Kmmgg}
  &\|K^c(m^2,g_0;m^2,g_0)-\bar K(m^2,g_0)\|_{\Wcalt_j(\mgen^2,\ggen_0)}
  \\&\qquad \leq
  2\|K^c(m^2,g_0;m^2,g_0)-\bar K(m^2,g_0)\|_{\Wcalt_j(m^2,g_0)}
  \\
  & \qquad
  \leq 2 b_2 (a-a_*) \gbar_j^3
  = b_1 (a-a_*) \gbar_j^3
  ,
\end{aligned}
\end{equation}
and the bounds \eqref{e:VVbar2app}--\eqref{e:VVbar4app} hold with $b=b_2<b_1$.
As discussed in item~(b) above, solutions for which \refeq{Kmmgg} and
\eqref{e:VVbar2app}--\eqref{e:VVbar4app} hold with $b_1$ are unique, from which we
conclude that $\xch^{d}(\mgen^2,\ggen_0;m^2,g_0) =
\xch^{d}(m^2,g_0;m^2,g_0)$ as desired.
\end{proof}

\subsubsection{Step 3: regularity at the boundary}

\begin{proof}[Step 3]
We show that $\z_0^c,\mu_0^c$ are
continuous as $g_0 \downarrow 0$ or as $m^2 \downarrow 0$.

The bounds $\mu_0^c, z_0^c = O(g_0)$, which hold uniformly in $m^2$, imply that $\mu_0^c,z_0^c \to 0$
as $(m^2,g_0) \to (\hat m^2,0)$ for some $\hat m^2 \in [0,\delta)$.
The continuity in the limit $g_0 \downarrow 0$ is immediate from this.

For continuity of $\mu_0^c,z_0^c$ as $(m^2,g_0) \to (0,\hat g_0)$
for some $\hat g_0 \in (0,\delta)$,
we show that $\xch_0(m^2, g_0) \to\xch_0(0,\hat g_0)$
by adapting the argument used
in the proof of Corollary~\ref{cor:flow4-masscont}
(given in \cite[Corollary~\ref{flow-cor:masscont}]{BBS-rg-flow}), as follows.
First, we recall that
\begin{align}
  |\gch_j(m^2,g_0) - \bar g_j(m^2,g_0)| \lbeq{VVbar2mp}
  &\leq \b \domV \gbar_j(m^2)^2 |\log\gbar_j(m^2,g_0)|
  ,
  \\
  |\much_j(m^2,g_0) - \bar \mu_j(m^2,g_0)| \lbeq{VVbar3mp}
  &\leq \b \domV \chi_j(m^2) \gbar_j(m^2,g_0)^2 |\log\gbar_j(m^2,g_0)|
  ,
  \\
  |\zch_j(m^2,g_0) - \bar z_j(m^2,g_0)| \lbeq{VVbar4mp}
  &\leq \b \domV \chi_j(m^2) \gbar_j(m^2,g_0)^2 |\log\gbar_j(m^2,g_0)|
  ,
  \\
  \|K_j(m^2,g_0) - \bar K_j(m^2,g_0)\|_{\Wcalt_j(m^2,g_0)} \lbeq{VVbar1mp}
  &\leq \b (\domK-\domK_*) \chi_j(m^2) \gbar_j(m^2,g_0)^3
  .
\end{align}
By Proposition~\ref{prop:approximate-flow},
$\bar V_j(m^2,g_0)$ is continuous in $(m^2,g_0^2) \in [0,\delta]$, and thus in particular
$\bar V_0(m^2,g_0^2)$ is uniformly bounded for $(m^2,g_0^2) \in [0,\delta]^2$.
With \refeq{VVbar2mp}--\refeq{VVbar4mp}, we see that
$\Vch_0(m^2,g_0)$ is therefore also uniformly bounded.
Thus, for every sequence $(m^2,g_0) \to (0,\hat g_0)$, $\Vch_0(m^2,g_0)$ has a limit point.
It suffices to show that any such limit point is equal to $\Vch_0(0,\hat g_0)$.
To show this uniqueness, we fix an arbitrary limit point $V_0^*$
and a sequence $(m^2, g_0) \to (0,\hat g_0)$ such that $\Vch_0(m^2, g_{0}) \to V_0^*$.
We also set $K_0^*=K_0=0$.

Then we define $x_j^* = (V_j^*,K_j^*)$ by inductive application of $\Phi_j(\cdot,0)$
starting from $x_0^* = (V_0^*,K_0^*)$, as long as $x_j^* \in D_j(0,\hat g_0)$.
For $m^2=0$, the continuity interval in Theorem~\ref{thm:step-mr-fv} is
$\Igen_{j+1}(0) = [0,L^{-2j}]$ (recall \refeq{massint} and \refeq{Itilint}).
Since $\Vch_0(m^2,g_0) \to V_0^*$, it follows by
induction and the continuity of
the maps
$\rho_j=\Rch_{j+1}: \Wcalt_j(0, \hat g_0) \times [0,L^{-2j}] \to \Vcal$
and
$\psi_j=\Kch_{j+1}: \Wcalt_j(0, \hat g_0) \times [0,L^{-2j}] \to \Wcalt_{j+1}(0,\hat g_0)$,
that $x_j(m^2,g_{0}) \to x_j^*$ in $\Vcal \times \Wcalt_j(0,\hat g_0)$.
By an analogous induction, using continuity of $\bar V_j$ and $\psi_j$,
it follows that  $\bar K_j(m^2, g_{0}) \to \bar K_j(0,\hat g_0)$ in $\Wcalt_j(0,\hat g_0)$.
Since $\chi_j(m^2) \to \chi_j(0) = 1$, we can now take the limit of
\refeq{VVbar2mp}--\refeq{VVbar1mp} along the sequence $(m^2,g_{0}) \to (0,\hat g_0)$
and obtain, with \eqref{e:normindep},
\begin{align}
  |g_j^* - \bar g_j(0, \hat g_0)|
  &\leq \b \domV \gbar_j(0, \hat g_0)^2 |\log\gbar_j(0, \hat g_0)|
  ,
  \\
  |\mu_j^* - \bar \mu_j(0, \hat g_0)|
  &\leq \b \domV
  \gbar_j(0,\hat g_0)^2 |\log\gbar_j(0, \hat g_0)|
  ,
  \\
  |z_j^* - \bar z_j(0, \hat g_0)|
  &\leq \b \domV
  \gbar_j(0,\hat g_0)^2 |\log\gbar_j(0, \hat g_0)|
  ,
  \\
  \|K_j^* - \bar K_j(0, \hat g_0)\|_{\Wcalt_j(0,\hat g_0)} \lbeq{KKbarm}
  &\leq \b (\domK-\domK_*)
  \gbar_j(0, \hat g_0)^3
  .
\end{align}
This shows inductively that $x_j^{*}$ does remain in $D_j(0,\hat g_0)$ for all $j$,
so the above inductions can be carried out indefinitely.
By the uniqueness assertion of Theorem~\ref{thm:flow4}, $x_j^* = \xch_j(0,\hat g_0)$.
In particular, $V_0^* = \Vch_0(0,\hat g_0)$, so $\Vch_0(m^2,g_0) \to \Vch_0(0,\hat g_0)$
as $(m^2,g_0) \to (0,\hat g_0)$.
This shows that $z_0^c,\mu_0^c$ are continuous as $m^2 \downarrow 0$, as claimed.
\end{proof}

\section{Proof of main result}
\label{sec:pfmr}

It is shown in Section~\ref{sec:chvar} that the main result,
Theorem~\ref{thm:suscept}, is a consequence of Theorem~\ref{thm:suscept-diff}.
We now prove Theorem~\ref{thm:suscept-diff}.
Functions  $z_0^c$ and $\nu_0^c=\mu_0^c$ with
the regularity properties required by Theorem~\ref{thm:suscept-diff}
were constructed  in  Proposition~\ref{prop:flow-flow}.
We  now show that there exists $\delta>0$ such that
for $m^2,g_0,\hat g_0 \in (0,\delta)$,
\begin{align} \label{e:chi-m-pf}
  \hat\chi \left( m^2,g_0,\nu_0^c(m^2,g_0),z_0^c(m^2,g_0) \right)
  &= \frac{1}{m^2} ,
  \\
  \label{e:chiprime-m-pf}
  \ddp{\hat\chi}{\nu_0} \left(m^2,g_0,\nu_0^c(m^2,g_0),z_0^c(m^2,g_0) \right)
  &\sim - \frac{1}{m^4} \frac{c(\hat g_0)}{(\hat g_0\bubble_{m^2})^{\gamma}}
  \quad \text{as $(m^2,g_0)  \to (0,\hat g_0)$},
\end{align}
with $c(g_0)$ continuous and equal to $1+O(g_0)$.
This will complete the proof of Theorem~\ref{thm:suscept-diff}.
From this, an elementary analysis yields also Theorem~\ref{thm:nuc}.

\subsection{Estimates for finite volume}
\label{sec:fintori}

Let $(z_0^c(m^2,g_0), \mu_0^c(m^2,g_0))$ be the functions of
Proposition~\ref{prop:flow-flow}, let
\begin{equation}
\lbeq{V0c}
  V_{0}^c(m^2,g_0)
  = (g_0, z_0^c(m^2,g_0), \mu_0^c(m^2,g_0)),
\end{equation}
and let $V_j$ be the sequence determined by this
initial condition in Proposition~\ref{prop:flow-flow}.
Proposition~\ref{prop:flow-flow} considered the
the \emph{infinite volume flow}, and we now wish to consider the finite
volume flow.  The following proposition, which is the basis for
the proof of Theorem~\ref{thm:suscept-diff}, constructs the finite volume flow.
Its $V_j$ component is \emph{identical} to the above infinite volume sequence
of $V_j$, though of course only $j \le N$ is meaningful for $\Lambda_N$.
In particular, the sequence $(V_j)_{0\le j \le N}$ is independent of the
volume, in the sense that it is the same for $\Lambda_N$
as it is on any $\Lambda_{N'}$ with $N'>N$.

\begin{prop} \label{prop:KjNbd}
  Let $d=4$, $N\in \N$, $g_0\in (0,\delta)$ and $m^2 \in [0,\delta)$.
  Let $\ggen_j=\ggen_j(m^2,g_0)$ and $s_j = (m^2,\ggen_j)$.
  Let $K_0=\1_\varnothing$, and let $V_j$ be the sequence
  in the infinite volume global flow
  determined by the initial condition \refeq{V0c}.
  Then, with the additional proviso that for the case $j+1=N$ we restrict to
  $m^2 \in [\delta L^{-2(N-1)},\delta)$,
  the finite volume renormalisation group flow $(V_{j},K_{j,N}) \mapsto (V_{j+1},K_{j+1,N})$
  of Theorem~\ref{thm:step-mr-fv} exists for all $j<N$, with
  $(V_j,K_j) \in \domRG(s_j,\Lambda_N)$.
  In particular,
  \begin{align}
  \label{e:KjNbd}
    \|K_{j,N}\|_{\Wcal_j(s_j,\Lambda_N)} &\leq O(\chi_j \gbar_j^3)
    \quad
    \quad
    (j \le N),
  \end{align}
  $\gch_j,g_j = O(\gbar_j)$, and $\zch_j,\much_j,z_j,\mu_j = O(\chi_j\gbar_j)$.
\end{prop}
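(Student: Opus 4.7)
The plan is to run an induction on $j \in \{0, 1, \ldots, N-1\}$ that builds the finite volume flow $(V_j, K_{j,N})$ on $\Lambda_N$ in parallel with, and anchored to, the infinite volume flow $(V_j, K_{j,\Zd})_{j \in \N_0}$ already supplied by Proposition~\ref{prop:flow-flow}. The crucial point is that the $V$-component of both flows is forced to be the same sequence; once this is secured at each step, the estimates on $\gch_j, g_j, \zch_j, \much_j, z_j, \mu_j$ transfer verbatim from Proposition~\ref{prop:flow-flow}, and only the $K$-estimate \eqref{e:KjNbd} in the finite volume norm $\Wcal_j(s_j, \Lambda_N)$ needs to be re-established.

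For the base case, take $K_{0,N} = \1_\varnothing$. Then $(V_0^c, K_{0,N}) \in \domRG_0(s_0, \Lambda_N)$ by the bounds on $V_0^c$ in Proposition~\ref{prop:flow-flow}, and the family $(K_{0,N})_N$ trivially has Property~$(\Zd)$ with $\Zd$-counterpart equal to $K_{0,\Zd} = \1_\varnothing$. For the inductive step at scale $j$ with $j+1 \le N$, assume $(V_j, K_{j,N}) \in \domRG_j(s_j, \Lambda_N)$ and that $(K_{j,N})_N$ has Property~$(\Zd)$ with $\Zd$-counterpart $K_{j,\Zd}$. Apply the finite volume renormalisation group map of Theorem~\ref{thm:step-mr-fv} to define
\begin{equation*}
(V_{j+1,N},\, K_{j+1,N}) = \bigl(V_{+,\Lambda_N}(V_j, K_{j,N}),\, K_{+,\Lambda_N}(V_j, K_{j,N})\bigr).
\end{equation*}
By \refeq{Viv}, $V_{j+1,N}$ equals the infinite volume value $V_{j+1} = V_{+,\Zd}(V_j, K_{j,\Zd})$ from Proposition~\ref{prop:flow-flow}; by \refeq{Kiv}, $(K_{j+1,N})_N$ again has Property~$(\Zd)$ with $\Zd$-counterpart $K_{j+1,\Zd}$; and the $(p,q) = (0,0)$ case of \refeq{DVKbd} yields $\|K_{j+1,N}\|_{\Wcal_{j+1}(s_{j+1}, \Lambda_N)} \le M \chi_{j+1} \gbar_{j+1}^3$. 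Since the domain parameter was fixed at $\DVa = 4M$ before the statement of Proposition~\ref{prop:flow-flow}, this places $(V_{j+1}, K_{j+1,N})$ strictly inside $\domRG_{j+1}(s_{j+1}, \Lambda_N)$, closing the induction and delivering \eqref{e:KjNbd}.

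The main obstacle is essentially organisational rather than analytic: it amounts to invoking the compatibility of the finite and infinite volume renormalisation group maps, which is exactly what \refeq{Viv} and \refeq{Kiv} (proved in \cite{BS-rg-step}) were engineered to supply, so that the $V$-flow computed in $\Lambda_N$ cannot drift away from the $V$-flow already constructed in $\Z^d$. The only genuinely delicate point is the last scale $j+1 = N$, where the step map is applied with the terminal covariance $C_{N,N}$ rather than a bulk covariance $C_{j+1}$; this is the reason for imposing $m^2 \in [\delta L^{-2(N-1)}, \delta)$ in the hypothesis, which places $m^2$ in the interval $\Iint_N$ of \refeq{massint} and ensures that the scaling estimate \eqref{e:scaling-estimate}, and hence Theorem~\ref{thm:step-mr-fv}, remain valid at the final step.
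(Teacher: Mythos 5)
Your proposal is correct and follows essentially the same route as the paper: an induction on the scale in which Property~$(\Zd)$ together with \refeq{Kiv}--\refeq{Viv} pins the finite volume $V$-flow to the infinite volume flow of Proposition~\ref{prop:flow-flow}, the $(p,q)=(0,0)$ bound of Theorem~\ref{thm:step-mr-fv} with $\DVa=4M$ keeps $K_{j+1,N}$ in the domain, and the restriction $m^2\in[\delta L^{-2(N-1)},\delta)$ is correctly identified as coming from $\Iint_N$ at the final scale. No gaps.
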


\begin{proof}
  The proof is by induction on $0\leq j\leq N$.
  We write $V_{j,N}$ for the perturbative coordinate, and show in the course
  of the proof that it is equal to $V_j=V_{j,\Zd}$ as claimed.
  The induction hypothesis is
  that \refeq{KjNbd} holds,
  that $V_{j,N}=V_{j,\Zd}$, and
  that the family $(K_{j,\Lambda})$ has Property~$(\Zd)$ discussed in Section~\ref{sec:rg-iv}.
  For $j=0$, the fact that \refeq{KjNbd} holds is true vacuously since $K_0=\1_\varnothing$,
  the fact that $V_{0,N}=V_{0,\Zd}$ holds by definition, and the fact
  that $(K_{0,\Lambda})$ has Property~$(\Zd)$ is again vacuous.
  To advance the induction, it follows from \refeq{Kiv}--\refeq{Viv}
  that $V_{j+1,N}=V_{j+1,\Zd}$, and
  that $(K_{j+1,\Lambda})$ has Property~$(\Zd)$.
  By 
  Proposition~\ref{prop:flow-flow}, 
  $V_{j+1,N} = V_{j+1,\Z^d}$ lies in the set appearing in the
  domain $\domRG_j(m^2,\ggen_j,\Z^d)$
  of \refeq{domRG}, which is the same set as in the domain $\domRG_j(m^2,\ggen_j,\Lambda_N)$.
  Together with the induction hypothesis on $K_{j,N}$, we conclude that
  $(V_{j,N},K_{j,N}) \in \domRG_j(m^2,\ggen_j,\Lambda)$.
  By Theorem~\ref{thm:step-mr-fv} with $\mgen^2=m^2$, this implies that
  \refeq{KjNbd} holds when $j$ is replaced by $j+1$.
  (The restriction on $m^2$ at the last scale arises
  from $\Iint_N$ of \refeq{massint}.)
  This completes the induction and the proof.
\end{proof}

In particular, it follows from Proposition~\ref{prop:KjNbd} and \refeq{EIK} that,
with the initial choice $V_0=V_{0}^c(m^2,g_0)$,
and with $I_j=I_j(V_{j})$,
\begin{equation}
\lbeq{ZIKNa}
    Z_j=(I_j\circ K_{j,N})(\Lambda) \quad \quad (0 \le j \le N).
\end{equation}
This identity continues to hold for initial conditions $V_0 \neq V_0^c$,
as long as the right-hand side remains well-defined,
which is the case in an $N$-dependent neighbourhood of $V_0^c$, by continuity.
In addition to the renormalisation group flow $(V_j,K_j)$, it will be convenient
to use its transformed version, defined in Section~\ref{sec:trans}.
According to \eqref{e:flowVK}, the transformed flow $(\Vch_j,K_j) = (T_j(V_j),K_j)$ satisfies,
for all $j<N$,
\begin{align}
\lbeq{fvflow}
    (\Vch_{j+1},K_j)
    &
    =
    \big(\bar\varphi_j(\Vch_j) + \Rch_{j+1}(\Vch_j,K_j),
    \Kch_{j+1}(\Vch_j,K_j)
    \big),
\end{align}
where, by Corollary~\ref{cor:step-mr-tr},
the maps $\Rch_{j+1}$ and $\Kch_{j+1}$ obey the estimates \refeq{Rmain-g}--\refeq{DVKbd}
for $0 \le j <N$.

\subsection{Continuous functions of the renormalisation group flow}%
\label{sec:Fcont}

We now show how
the regularity statements of Theorem~\ref{thm:step-mr-fv}, which apply in the \emph{local}
domains $\domRG(\sgen) \times \Igen_+(\mgen^2)$, imply suitable
regularity statements of functions of the flow of
Proposition~\ref{prop:KjNbd},
on the larger domain $[0,\delta) \times (0,\delta)$.
We restrict attention here to the case of finite volume $\volume = \Lambda_N$.

\begin{defn} \label{def:flowcont}
  (i)
  A map $(V,K,m^2) \mapsto F(V,K,m^2)$ acting on a subset
  of $\Vcal \times \Kspace_j \times [0,\delta)$ with values in a Banach space $E$
  is a \emph{continuous function of the renormalisation group coordinates
  at scale-$j$},
  if its domain includes $\domRG_j(\sgen_j) \times \Igen_{j+1}(\mgen^2)$
  for all $\sgen_0 \in [0,\delta) \times (0,\delta)$, and if its
  restriction to the domain $\domRG_j(\sgen_j) \times \Igen_{j+1}(\mgen^2)$
  is continuous as a map $F: \domRG_j(\sgen_j) \times \Igen_{j+1}(\mgen^2) \to E$,
  for all $\sgen_0 \in [0,\delta) \times (0,\delta)$.
  We also say that $F$ is a $C^0$ map of the renormalisation group coordinates.
  \smallskip
  \\
  \noindent
  (ii) For $k \in \N$, a map $F$ is a
  \emph{$C^k$ map of the renormalisation group coordinates at scale-$j$},
  if it is a $C^0$ map  of the renormalisation group coordinates,
  its restrictions to the domains  $\domRG_j(\sgen_j) \times \Igen_{j+1}(\mgen^2)$
  are $k$-times continuously Fr\'echet differentiable in $(V,K)$,
  and every Fr\'echet derivative in $(V,K)$,
  when applied as a multilinear map to directions $\dot{V}$ in $\Vcal^{p}$
  and $\dot{K}$ in $\Wcal^{q}$, is jointly continuous in all arguments,
  $m^{2}, V,K, \dot{V}, \dot{K}$.
\end{defn}

Two examples of $C^k$ maps of the renormalisation group coordinates are $F(V,K,m^2) = V$,
and the map $R_+$ of Theorem~\ref{thm:step-mr-fv}.
The map $K_+$ is a \emph{not} a $C^k$ map in the above sense, since it does not act with a common
target space $E$ (its image space $\Wcal_+$ depends on $\sgen$).
In all our applications, $E$ is finite-dimensional.

For $s_0 = (m^2,g_0) \in [0,\delta) \times (0,\delta)$,
recall the definition of $V_0^c(s_0)$ from \refeq{V0c}.
Let $(V_j,K_j) = (V_j(m^2,V_0),K_j(m^2,V_0))$ be the flow of $(V_+,K_+)$
with initial condition $V_0$ (not necessarily equal to $V_0^c$) and mass parameter $m^2$.
By Proposition~\ref{prop:KjNbd}, this flow exists for all $j < N$ if $V_0 = V_0^c$,
and also exist for $j=N$ if $m^2 \in [\delta L^{-2(N-1)}, \delta)$.
Given a $C^k$ map $F$ of the renormalisation group coordinates at scale-$j$,
let $D_{V_0}^k F(s_0)$ denote the $k^{\rm th}$ derivative of $F(V_j(m^2,V_0),K_j(m^2,V_0),m^2)$ with respect to the initial condition $V_0$,
evaluated at $V_0=V_0^c(s)$.
The following proposition shows that $D_{V_0}^kF$ exists and is continuous in $s_0$.

\begin{prop} \label{prop:Fcont}
  Let $j < N(\volume)$, $k \in \N_0$,
  and let $F$ be a $C^k$ map of the renormalisation group coordinates at scale-$j$.
  Then, for every $p \leq k$,
  all $s_0 \in [0,\delta) \times (0,\delta)$, the derivative $D_{V_0}^p F(s_0)$ exists, and
  \begin{equation} \label{e:Fcont}
    \text{$s_0 \mapsto D_{V_0}^p F(s_0)$ is a continuous map $[0,\delta) \times (0,\delta) \to L^p(\Vcal, E)$,}
  \end{equation}
  where $L^p(\Vcal,E)$ is the space of $p$-linear maps from $\Vcal$ to $E$ with the operator norm.
\end{prop}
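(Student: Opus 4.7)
The plan is to reduce the assertion to the regularity properties of a single renormalisation group step provided by Theorem~\ref{thm:step-mr-fv} (and its transformed version, Corollary~\ref{cor:step-mr-tr}), combined with the chain rule and the covering/norm-compatibility arguments already exploited in the proof of Proposition~\ref{prop:flow-flow}. Throughout, I exploit crucially that $\Vcal=\R^3$ and $E$ are finite-dimensional, so $L^p(\Vcal,E)$ is finite-dimensional and continuity in it is elementary; only the intermediate target space $\Wcal_j(s_j,\Lambda_N)$ depends on $s_0$, but $(V,K)$-derivatives of $F$ pair against it to produce values in $E$.

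First I would fix $s_0=(m^2,g_0)\in[0,\delta)\times(0,\delta)$ and argue by induction on $j$ that, on a neighbourhood of $V_0^c(s_0)$ in $\Vcal$, the maps $V_0\mapsto V_j(m^2,V_0)\in\Vcal$ and $V_0\mapsto K_j(m^2,V_0)\in\Wcal_j(\sgen,\Lambda_N)$ (for any $\sgen$ close to $s_j$ and with $m^2\in\Igen_j(\mgen^2)$) are $C^k$ with derivatives jointly continuous in $(m^2,V_0)$. The base case $j=0$ is immediate since $V_0\mapsto V_0$ is linear and $K_0=\1_\varnothing$ is constant. For the inductive step, by Theorem~\ref{thm:step-mr-fv} the maps $V_+,K_+$ are analytic in $(V,K)$, and every Fréchet derivative is jointly continuous in $(m^2,V,K,\dot V,\dot K)$; composition with the inductive $C^k$ flow then yields the claim at scale $j+1$ via the ordinary chain rule in Banach spaces. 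The hypothesis $(V_j,K_j)\in\domRG_j(s_j,\Lambda_N)$ from Proposition~\ref{prop:KjNbd} guarantees that the composition stays inside the domain of Theorem~\ref{thm:step-mr-fv}, and the derivative bounds \eqref{e:Rmain-g}--\eqref{e:DVKbd} ensure the inductive estimates remain uniform in a neighbourhood. Applying the chain rule once more with $F$ (which is $C^k$ by hypothesis) gives existence of $D_{V_0}^p F(s_0)$ for $p\le k$, expressed by the multivariate Faà di Bruno formula as a finite sum of compositions of derivatives of $F$ with derivatives of $(V_j,K_j)$ in $V_0$.

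Next I would establish continuity of $s_0\mapsto D_{V_0}^p F(s_0)$. Since $L^p(\Vcal,E)$ is finite-dimensional, this reduces to showing that each term in the Faà di Bruno sum is continuous in $s_0$. Each term has the form $D_V^aD_K^b F\bigl(V_j(m^2,V_0^c(s_0)),K_j(m^2,V_0^c(s_0)),m^2\bigr)$ applied to a product of derivatives of $(V_j,K_j)$ with respect to $V_0$ evaluated at $V_0^c(s_0)$. The continuity of $V_0^c,z_0^c,\mu_0^c$ in $s_0$ is given by Proposition~\ref{prop:flow-flow}. The remaining task is to show joint continuity in $s_0$ of the evaluated derivatives of $V_j$ and $K_j$. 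For $s_0$ in the interior $(0,\delta)^2$, I would mimic step~2 of the proof of Proposition~\ref{prop:flow-flow}: given $\hat s_0=(\hat m^2,\hat g_0)$ with both coordinates positive, pick an auxiliary pair $(\mgen^2,\ggen_0)$ with $\mgen^2<\hat m^2$ and $\ggen_0>\hat g_0$, apply Theorem~\ref{thm:step-mr-fv}'s joint continuity statement inside the local domain $\domRG_j(\sgen)\times\Igen_{j+1}(\mgen^2)$ (which now contains $(\hat m^2,\hat g_0)$ in its interior), and use the norm-comparison \eqref{e:normequivg0m} to transport the $C^k$ regularity of $(V_j,K_j)$ from the $\Wcalt_j(\mgen^2,\ggen_0)$-topology to the $\Wcalt_j(m^2,g_0)$-topology (the latter being what we need for feeding into $D_V^aD_K^b F$). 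For the boundary case $s_0=(0,\hat g_0)$, I would copy step~3 of the proof of Proposition~\ref{prop:flow-flow}, invoking \eqref{e:normindep} to see that for scales $j$ with $L^{-2j}\ge c^{-1}m^2$ the norm does not depend on $m^2$ at all and right-continuity at $m^2=0$ is provided by Theorem~\ref{thm:step-mr-fv}; for the remaining large scales one uses the contractive bound $\kappa$ on $D_K K_+$ together with the vanishing of $\chigen_j$ for $j$ beyond the mass scale. For the boundary $s_0=(\hat m^2,0)$, the uniform bounds $g_0^{-1}\|V_0^c(m^2,g_0)\|_\Vcal=O(1)$ and $\|K_j\|=O(\gbar_j^3)$ force all derivatives at $V_0^c$ to converge to their values at the trivial flow $V_j\equiv 0$, $K_j\equiv 0$, and continuity follows from the continuity of $D_V^aD_K^b F$ there.

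The main obstacle is the bookkeeping in the interior case: although $(V_j,K_j)$ is $C^k$ in $V_0$ into a Banach space $\Wcal_j(\sgen,\Lambda_N)$ that depends on $\sgen$, what is needed is continuity of the evaluation in $s_0$ of expressions where the $\Wcal_j$-valued derivatives are paired with $D_K^b F$. The norm-equivalence \eqref{e:normequivg0m} resolves this by showing that the $\Wcal_j$-norms at two nearby choices of $\sgen$ are comparable up to a factor $2$, so that the derivatives with respect to $V_0$ live simultaneously in both spaces with uniformly bounded norms, and the continuity of the pairing with $D_K^b F$ as $s_0$ varies follows from the joint continuity clause in Theorem~\ref{thm:step-mr-fv}. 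Because $F$ lands in the finite-dimensional $E$, no further compactness argument is needed to conclude the continuity into $L^p(\Vcal,E)$.
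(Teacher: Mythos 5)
Your proposal is correct and follows essentially the same route as the paper: induction on the scale $j$, the chain rule, and the joint continuity of the Fr\'echet derivatives of $(V_+,K_+)$ from Theorem~\ref{thm:step-mr-fv}, measured in a reference space $\Wcal_j(\sgen_j)$ fixed by the limit point, followed by composition with the $C^k$ map $F$. The extra machinery you invoke for the boundary cases is not needed: $g_0=0$ lies outside the stated domain $[0,\delta)\times(0,\delta)$, and right-continuity at $m^2=0$ is already built into Theorem~\ref{thm:step-mr-fv} via the interval $\Igen_{j+1}(0)=[0,L^{-2j}]$.
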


In particular, with $k=0$,
continuous maps of the renormalisation group coordinates at scale-$j$,
in the sense of Definition~\ref{def:flowcont},
induce continuous functions in the ordinary sense of the parameters $(m^2,g_0)$
of the renormalisation group flow.

To give an example of an application of Proposition~\ref{prop:Fcont}, it follows from
Corollary~\ref{cor:step-mr-tr} that
\begin{equation} \label{e:gchrec-0}
  \gch_{j+1}
  = \gch_j - \beta_j \gch_j^2 + r_j \quad\text{with}\quad r_j=O(\chicCov_j\gch_j^3).
\end{equation}
It also follows from Theorem~\ref{thm:step-mr-fv} that the function
$F(V_j,K_j,m^2) = g_{j+1}$ is a $C^\infty$ map of the renormalisation group coordinates
at scale-$j$.  Therefore, by Proposition~\ref{prop:Fcont}, we conclude
the continuity of $g_j$ in $(m^2,g_0) \in [0,\delta) \times (0,\delta)$, for all $j\le N$.
Furthermore, since $g_j = O(g_0)$, the continuity extends to
$(m^2,g_0) \in [0,\delta)^2$.  From \refeq{gchrec-0} and the continuity of $\beta_j$
in $m^2 \in [0,\delta)$, we therefore also have continuity of $r_j=r_j(m^2,g_0)$ in
$(m^2,g_0) \in [0,\delta)^2$, for all $j<N$.

\begin{proof}[Proof of Proposition~\ref{prop:Fcont}]
Let $\sgen_0 =(\mgen^2,g_0) \in [0,\delta) \times (0,\delta)$,
and set $\sgen_j = (\mgen^2, \ggen_j(\mgen^2,\ggen_0))$.
For $s_0 \in [0,\delta) \times (0,\delta)$ sufficiently close to $\sgen_0$,
we write $[D_{V_0}^k(V_j, K_j)](s_0)$ for the $k^{\rm th}$ derivative of $(V_j(m^2,V_0),K_j(m^2,V_0))$
with respect to the initial condition $V_0$, evaluated at $V_0=V_0^c(s_0)$ of \eqref{e:V0c},
and with derivative taken in $\Vcal \times \Wcal_j(\sgen_j)$.
In particular, $[D_{V_0}^0(V_j,K_j)](s_0) = (V_j(s_0), K_j(s_0))$ is the
renormalisation group flow with initial condition
determined by $s_0$
in Proposition~\ref{prop:KjNbd}.
To see that the derivatives exist, we
claim that for every $\sgen_0 = (\mgen^2,\ggen_0) \in [0,\delta) \times (0,\delta)$,
$k \in \N_0$, and $j < N$,
\begin{equation} \label{e:VKcontk}
  \text{$s_0 \mapsto D_{V_0}^k(V_j,K_j)$ is
    continuous in $L^k(\Vcal,\Vcal \times \Wcal_j(\sgen_j))$ as $s_0 \to \sgen_0$.}
\end{equation}

We prove \refeq{VKcontk} as follows.  First, we know from  Proposition~\ref{prop:KjNbd}
that $(V_j(\sgen_0),K_j(\sgen_0)) \in \domRG_j(\sgen_j)$,
and that $z_0^c,\mu_0^c$ are continuous.  Also,
the maps $(V_+,K_+)$ are continuous by Theorem~\ref{thm:step-mr-fv}.  Therefore,
there exist neighbourhoods $N_j=N_j(\sgen_0)$ of $\sgen_0$ with
  $(V_j(s_0),K_j(s_0)) \in \domRG_j(\sgen_j)$ for $s_0 \in N_j$,
  and such that $(V_j(s_0),K_j(s_0))$ is continuous in $s_0 \in N_j$ as a map with values in $\Vcal \times \Wcal_j(\sgen_j)$.
  In particular, for $j<N$,
  the case $k=0$ of $(V_j,K_j)$ as $s_0 \to \sgen_0$ in \eqref{e:VKcontk} follows.

  For $k > 0$, the existence and continuity of $D_{V_0}^k(V_j,K_j)$ in \eqref{e:VKcontk}
  is a consequence of the continuity the derivatives of the maps $V_+,K_+$ in
  $\domRG_j(\sgen_j) \times \Igen_{j+1}(\mgen^2)$ provided by
  Theorem~\ref{thm:step-mr-fv}.
  By induction, we assume that the claim holds for some $j<N$. It is vacuous for $j=0$.
  To advance the induction, we consider, e.g., $D_{V_0}K_{j+1}$.
  By the chain rule,
  \begin{equation} \label{e:Kmchain}
    D_{V_0}K_{j+1} = D_VK_{j+1}(V_j,K_j,m^2)D_{V_0}V_j + D_K K_{j+1}(V_j,K_j,m^2)D_{V_0}K_j.
  \end{equation}
  By the inductive assumption, each of $V_j,D_{V_0}V_j,K_j,D_{V_0}K_j$ is continuous as $s_0 \to \sgen_0$.
  The continuity statement of  Theorem~\ref{thm:step-mr-fv}
  advances the induction.
  An analogous argument applies to $D_{V_0}V_{j+1}$, and also to all higher derivatives.
  This completes the proof of \refeq{VKcontk}.

  The assumption that $F$ is a continuous map of the renormalisation group
  coordinates
  and
  \refeq{VKcontk}
  imply that
  $F(V_j(s_0),K_j(s_0),m^2) \to F(V_j(\sgen_0),K_j(\sgen_0),\mgen^2)$
  as $s_0 \to \sgen_0$. Since $\sgen_0 \in [0,\delta) \times (0,\delta)$ is arbitrary,
  the claim \eqref{e:Fcont} follows for $k=0$.
  Similarly, for $k>0$,
  the proof of \eqref{e:Fcont} follows from the chain rule and \refeq{VKcontk}.
\end{proof}

\subsection{Susceptibility and renormalisation group flow}
\label{sec:finscale}

As the first step in the analysis of the susceptibility, we
express $\hat\chi_N$ in terms of the renormalisation group coordinates.
Let $N \in \N$ and let $m^2 \in [\delta L^{-2(N-1)},\delta)$.
Recall from \eqref{e:chibarm} that
\begin{equation}
\label{e:chibarm-bis}
  \hat\chi_N
  = \frac{1}{m^{2}} + \frac{1}{m^4} \frac{1}{|\Lambda|}  D^2Z_N^0(0, 0; 1,1),
\end{equation}
with $Z_N^0$ the degree-$0$ part of the form $Z_N = \Ex_C\theta Z_0$.
Let $(V_j,K_{j,N})$ be the finite volume flow given by Proposition~\ref{prop:KjNbd}.
With the abbreviation $K_N=K_{N,N}$, it follows from
\refeq{ZIKNa}, together with the fact that
$\mathcal{B}_N(\Lambda_N)$ consists only of $\varnothing,\Lambda$, that
\begin{equation}
\lbeq{ZIKN}
  Z_N = (I_N \circ K_N)(\Lambda)
  = I_N(\Lambda) + K_N(\Lambda)
  .
\end{equation}
Therefore,
\begin{align}
\label{e:chibarm-bisIK}
  \hat\chi_N
  =
  \frac{1}{m^{2}} +  \frac{1}{m^4}\frac{1}{|\Lambda|}  D^2I_N^0(0, 0; 1,1)
  + \frac{1}{m^4}\frac{1}{|\Lambda|}  D^2K_N^0(0, 0; 1,1).
\end{align}

By \eqref{e:Idef}, $I_N(\Lambda) = e^{-V_N(\Lambda)}(1+W_N(\Lambda))$,
so $I_N^0(\Lambda) = e^{-V_N^0(\Lambda)}(1+W_N^0(\Lambda))$ where $V_N^0(\Lambda)$ and $W_N^0(\Lambda)$
are the degree-$0$ parts of the forms $V_N(\Lambda)$ and $W_N(\Lambda)$, respectively.
Thus
\begin{equation}
\lbeq{D2Icalc}
   D^2 I_N^0(\Lambda; 0, 0; 1, 1)
  =  D^2 e^{-V_N^0}(\Lambda;0, 0; 1, 1)
  + D^2W_N^0(\Lambda; 0,0; 1, 1)
  ,
\end{equation}
since cross-terms cancel when $\phi=0$
because $W_N^0$ is a polynomial in $\phi$ with no monomials
of degree below two.
The first term on the right-hand side of \eqref{e:D2Icalc} can be evaluated by direct calculation,
using \eqref{e:VX} and \eqref{e:taudef}--\eqref{e:addDelta}, to give
\begin{equation}
  \label{e:D2eVN}
   D^2 e^{-V_N^0}(\Lambda;0, 0; 1, 1)
  = - \sum_{x,y} \nu_N \delta_{xy} 1_x1_y
  -  \sum_{x,y} z_N (-\Delta_{xy}) 1_x1_y
  = - \nu_N |\Lambda|
  ,
\end{equation}
since the quartic term $\tau^2$ does not contribute, and $\Delta 1 = 0$.
This gives the identity
\begin{equation}
  \label{e:chibarm-bis2}
  \hat\chi_N
  =
  \frac{1}{m^{2}} -  \frac{\nu_N}{m^4}
  + \frac{1}{m^4}\frac{1}{|\Lambda|}  D^2W_N^0(0, 0; 1,1)
  + \frac{1}{m^4}\frac{1}{|\Lambda|}  D^2K_N^0(0, 0; 1,1).
\end{equation}

By Proposition~\ref{prop:flow-flow}, $\nu_N = O(\chi_N L^{-2N} \gbar_N)$,
which converges exponentially to $0$, so the second term in
\eqref{e:chibarm-bis2} does not contribute to the limit $N\to\infty$.
We will prove \eqref{e:chi-m-pf} by showing that the two rightmost terms
in \eqref{e:chibarm-bis2} are even
smaller, by factors $O(\gbar_N)$ and $O(\gbar_N^2)$, respectively,
so that the limit is $m^{-2}$ as desired.

For \eqref{e:chiprime-m-pf}, the constant term $m^{-2}$ in \eqref{e:chibarm-bis2}
does not contribute
because its derivative is zero.
We denote the derivative of $(V_N,K_N)$ with respect to $\nu_0$,
evaluated at $V_0 = V_0^c$, by $V_N' = (g_N', z_N', \mu_N')$ and $K_N'$.
We will show that $-m^{-4}\nu_N'$ in the derivative of \eqref{e:chiprime-m-pf}
is asymptotic to the right-hand side of \refeq{chiprime-m-pf},
and that the derivatives of the last two terms in \eqref{e:chibarm-bis2}
are again smaller by $O(\gbar_N)$ and $O(\gbar_N^2)$, respectively.
To execute this strategy, we use the following sequence of lemmas.
The bubble diagram $\bubble_{m^2}=8B_{m^2}$ defined in \eqref{e:freebubble},
 with its logarithmic divergence in $d=4$, enters
via Lemmas~\ref{lem:betasum}--\ref{lem:ginfty}.
The power $\gamma = \frac 14$ for the logarithmic correction
to the susceptibility in $d=4$ arises
in Lemma~\ref{lem:gzmuprime}.
Lemmas~\ref{lem:ginfty}--\ref{lem:gzmuprime}
involve extensions of arguments used in \cite{BBS-rg-flow}.

\begin{lemma} \label{lem:betasum}
  For $m^2 >0$,
  \begin{equation}
    \sum_{j=0}^{\infty} \beta_j
    =
    \bubble_{m^2}.
  \end{equation}
\end{lemma}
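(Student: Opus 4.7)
The plan is to exploit the fact that the defining formula for $\beta_j$ in \eqref{e:betadef} is already in telescoping form, so that the infinite sum collapses to a single boundary term which I will identify as $8B_{m^2} = \bubble_{m^2}$.

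First I would observe that $w_{0,x} = 0$ by convention (empty sum), and write
\begin{equation}
  \sum_{j=0}^{J} \beta_j
  = 8 \sum_{j=0}^{J} \sum_{x \in \Z^d} \left( w_{j+1,x}^2 - w_{j,x}^2 \right)
  = 8 \sum_{x \in \Z^d} w_{J+1,x}^2,
\end{equation}
where the interchange of the finite sum over $j$ with the sum over $x$ is justified since the $x$-sum is absolutely convergent: each $w_{j,x}$ is a finite sum of the positive-definite kernels $C_{i;0,x}$, and these decay rapidly in $x$ by the estimates of \eqref{e:scaling-estimate} (for $m^2>0$, the factor $(1+m^2 L^{2(i-1)})^{-k}$ renders $\sum_x C_{i;0,x}^2$ finite, and likewise for the cross terms by Cauchy--Schwarz).

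Next I would pass to the limit $J \to \infty$. By the decomposition \refeq{CCN} taken on $\Z^d$ (i.e.\ using $(-\Delta_{\Z^d}+m^2)^{-1} = \sum_{i=1}^\infty C_i$), we have $w_{J+1,x} \uparrow C_{m^2}(x)$ as $J \to \infty$ pointwise in $x$, and the convergence is monotone because $C_i$ is positive-definite with non-negative kernel (this is where I would invoke the positivity property of the finite-range decomposition from \cite[Section~\ref{pt-sec:Cdecomp}]{BBS-rg-pt}). Monotone convergence then yields
\begin{equation}
  \lim_{J \to \infty} \sum_{x \in \Z^d} w_{J+1,x}^2
  = \sum_{x \in \Z^d} C_{m^2}(x)^2
  = B_{m^2},
\end{equation}
where the right-hand side is finite for $m^2 > 0$ by the same Parseval computation used in \eqref{e:freebubble}. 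Multiplying by $8$ and recalling $\bubble_{m^2} = 8B_{m^2}$ gives $\sum_{j=0}^\infty \beta_j = \bubble_{m^2}$, as required.

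The only mild obstacle is the pointwise positivity of the $C_i$ needed to apply monotone convergence; if that is not available in the form stated, one can instead use dominated convergence with the envelope $\sum_x C_{m^2}(x)^2 < \infty$ (for $m^2>0$), after noting that $|w_{J+1,x}| \leq C_{m^2}(x)$ (which does follow from positive-definiteness of the tail, interpreted via non-negativity of $\sum_{i>J+1} C_i$ as a diagonal matrix entry bound). Either route is routine given the finite-range decomposition already in place.
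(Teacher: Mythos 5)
Your telescoping step is exactly the paper's, and the overall strategy (collapse the sum to $8\sum_x w_{J+1,x}^2$, then let $J\to\infty$) is the same. The gap is in the justification of the limit. Both of your proposed routes rest on position-space positivity statements that do not follow from positive-definiteness and are not established anywhere in the paper. Monotone convergence needs $C_{i;0,x}\ge 0$ pointwise in $x$; a positive-definite covariance can perfectly well have negative off-diagonal entries, and the finite-range decomposition of \cite{Baue13a} is not asserted to have non-negative kernels. Your fallback domination $|w_{J+1,x}|\le C_{m^2}(x)$ has the same defect: the quantity $C_{m^2}(x)-w_{J+1,x}=\sum_{i>J+1}C_{i;0,x}$ is an \emph{off-diagonal} entry of the positive-definite operator $\sum_{i>J+1}C_i$, so positive-definiteness (with translation invariance) only gives $|C_{m^2}(x)-w_{J+1,x}|\le \sum_{i>J+1}C_{i;0,0}$, a bound that is constant in $x$ --- not the pointwise comparison you need, and certainly not a summable envelope.

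The paper repairs exactly this point by moving to Fourier space, where positive-definiteness \emph{does} yield pointwise information: $\hat C_i(p)\ge 0$, hence $0\le\hat w_k(p)\le\hat C(p)$ with $\hat w_k$ increasing to $\hat C$. Parseval converts $\sum_{x\in\Z^d} w_k(x)^2$ into $\int_{[-\pi,\pi]^d}\hat w_k(p)^2\,\frac{dp}{(2\pi)^d}$, and dominated convergence applies with the envelope $\hat C(p)^2$, which is integrable precisely because $m^2>0$ (the computation in \eqref{e:freebubble}). This gives $\lim_k\sum_x w_k(x)^2=B_{m^2}$ and hence the claim. So the fix is short and the skeleton of your argument is right, but as written the interchange of limit and sum is not justified.
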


\begin{proof}
By the definition of $\beta_j$ in \refeq{betadef},
  \begin{equation}
    \sum_{j=0}^{k-1} \beta_j
    =
    8\sum_{x\in\Z^d} w_k(x)^2
    ,
  \end{equation}
since the left-hand side is a telescoping sum.
Since the terms in the covariance decomposition are positive definite,
the Fourier transforms satisfy $0\leq \hat w_k \leq \hat C$, where
$C=(-\Delta_\Zd + m^2)^{-1}$ is the Green function on $\Zd$.
By the Parseval relation, the dominated converge theorem,
and \eqref{e:freebubble},
\begin{equation}
  \lim_{k \to \infty}
  \sum_{x\in\Z^d} w_k(x)^2
  =
  \lim_{k \to \infty}
  \int_{[-\pi,\pi]^d} \hat w_k(p)^2
  \frac{dp}{(2\pi)^d}
  =
  \int_{[-\pi,\pi]^d} \hat C(p)^2
  \frac{dp}{(2\pi)^d}
  =
  B_{m^2}
  .
\end{equation}
Since $\bubble_{m^2} = 8 B_{m^2}$, the proof is complete.
\end{proof}

For the subsequent analysis, we note the following facts.
First, in the upper bounds of \refeq{Rmain-g}--\refeq{DVKbd},
$\ggen_j$ can be replaced by $\gch_j$ or $\gbar_j$ due to Lemma~\ref{lem:gbarmcomp}, e.g.,
\begin{equation} \label{e:gchrec}
  \gch_{j+1}
  = \gch_j - \beta_j \gch_j^2 + r_j
  \quad\text{with}\quad r_j=O(\chicCov_j \gch_j^3)
  = O(\gbar_j^3).
\end{equation}
As argued below \refeq{gchrec-0}, $r_j=r_j(m^2,g_0)$ is continuous on $[0,\delta)^2$.
Secondly, it follows from \refeq{gchrec} that
for any continuously differentiable function $\psi: (0,\infty) \to \R$,
\begin{equation}
  \label{e:gbarsumbisx}
  \sum_{l=j}^{k} (\beta_l \gch_l^2 - r_l) \psi(\gch_l)
  = \int_{\gch_{k+1}}^{\gch_{j}} \psi(t) \; dt
  + O\left(\int_{\gch_{k+1}}^{\gch_{j}} t^2 |\psi'(t)| \; dt
  \right).
\end{equation}
The formula \eqref{e:gbarsumbisx} is proved in
\cite[\eqref{flow-e:gbarsumbis}]{BBS-rg-flow} for the special case $r_j=0$,
but the same proof applies when $r_j = O(\chi_j \gch_j^3)$.
We also use the fact,
proved in \cite[Lemma~\ref{flow-lem:elementary-recursion}(ii)]{BBS-rg-flow}, that
\begin{equation}
  \lbeq{chigbd-bis}
  \sum_{l=j}^k \chi_l \bar g_l^n \leq C_{n}
  \begin{cases}
    |\log \bar g_{k}| & (n = 1)
    \\
    \chi_j \bar g_j^{n-1} & (n > 1)
    .
  \end{cases}
\end{equation}

\begin{lemma} \label{lem:ginfty}
  Let $d=4$.
  For $(m^2,g_0) \in (0,\delta)^2$,
  the limit $\gch_\infty = \lim_{j \to \infty} \gch_j$  exists,
  is continuous in $(m^2,g_0)$, and extends continuously to $[0,\delta)^2$.
  For $\hat g_0 \in (0,\delta)$,   as $m^2 \downarrow 0$ and $g_0 \to \hat g_0$,
  \begin{equation} \label{e:ginfty}
    \gch_\infty \sim \frac{1}{\bubble_{m^2}}
    .
  \end{equation}
\end{lemma}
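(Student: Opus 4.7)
The plan is to analyse the recursion \refeq{gchrec}, namely
\begin{equation*}
  \gch_{j+1} - \gch_j = -\beta_j \gch_j^2 + r_j \quad \text{with} \quad r_j = O(\chi_j \gch_j^3),
\end{equation*}
and show that the associated telescoping series converges. First I would establish the uniform domination $|\beta_l| \le \betamax \chi_l$, which is immediate from \refeq{mass-scale} (noting $\chi_l = 1$ for $l \le \jm$). Since $\gch_l = O(\gbar_l)$ from Proposition~\ref{prop:flow-flow}, this together with \refeq{chigbd-bis} (taking $n=2$) gives
\begin{equation*}
  \sum_{l=0}^\infty |\beta_l \gch_l^2 - r_l| \le C \sum_{l=0}^\infty \chi_l \gbar_l^2 < \infty,
\end{equation*}
so $(\gch_j)$ is Cauchy and $\gch_\infty$ exists, with tail estimate $|\gch_j - \gch_\infty| = O(\chi_j \gbar_j)$.

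For continuity on $(0,\delta)^2$, I would use the fact (established below \refeq{gchrec-0} via Proposition~\ref{prop:Fcont}) that each $\gch_j$ is continuous in $(m^2,g_0) \in [0,\delta)^2$. Since the tail bound $|\gch_j - \gch_\infty| = O(\chi_j \gbar_j) = O(g_0/(1+g_0 j))$ holds uniformly in $(m^2,g_0)$, the convergence $\gch_j \to \gch_\infty$ is locally uniform, and continuity of $\gch_\infty$ follows. Continuity at the boundary $g_0 = 0$ is immediate from $\gch_\infty = O(g_0)$, and continuity at $m^2 = 0$ for $\hat g_0 > 0$ follows from the same uniform-tail argument, using continuity of $\beta_j$ in $m^2 \in [0,\delta)$.

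For the asymptotic formula, I would invert the recursion. From $\gch_{l+1} = \gch_l(1 - \beta_l \gch_l + r_l/\gch_l)$ one obtains, for sufficiently small $\gch_0$,
\begin{equation*}
  \frac{1}{\gch_{l+1}} - \frac{1}{\gch_l} = \beta_l + O(\beta_l^2 \gch_l) + O(\chi_l \gch_l),
\end{equation*}
where the error uses $r_l = O(\chi_l \gch_l^3)$. Summing over $l \ge 0$, Lemma~\ref{lem:betasum} gives the leading term $\sum_l \beta_l = \bubble_{m^2}$, and the error terms are bounded, via $|\beta_l| \le \betamax \chi_l$ and \refeq{chigbd-bis} with $n=1$, by $O(|\log \gch_\infty|) = O(\log \bubble_{m^2})$. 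Hence
\begin{equation*}
  \frac{1}{\gch_\infty} = \frac{1}{\gch_0} + \bubble_{m^2} + O(\log \bubble_{m^2}).
\end{equation*}
As $m^2 \downarrow 0$ with $g_0 \to \hat g_0 > 0$, the term $1/\gch_0 = 1/g_0$ is bounded while $\bubble_{m^2} \to \infty$ logarithmically by \refeq{freebubble}, so $\log \bubble_{m^2} = o(\bubble_{m^2})$, yielding $1/\gch_\infty \sim \bubble_{m^2}$ and hence \refeq{ginfty}.

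The main obstacle is the asymptotic step: the error $O(\log \bubble_{m^2})$ is genuinely unbounded as $m^2 \downarrow 0$, so the claim $\gch_\infty \sim \bubble_{m^2}^{-1}$ relies crucially on the double-logarithmic separation $\log \log(1/m^2) \ll \log(1/m^2)$ specific to $d=4$. Verifying that the summed error terms really reduce to $O(|\log \gch_\infty|)$ (rather than $O(|\log \gbar_\infty|^2)$ or worse) requires the careful use of $\beta_l \le \betamax \chi_l$ to obtain an additional $\chi_l$ factor in every error term, which is what makes \refeq{chigbd-bis} applicable.
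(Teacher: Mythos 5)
Your proposal is correct and follows essentially the same route as the paper: existence of the limit from the absolute summability of $\chi_l\gbar_l^2$ via \refeq{chigbd-bis}, continuity from continuity of the finite-scale $\gch_j$ together with uniform control of the tail, and the asymptotics from inverting the recursion so that Lemma~\ref{lem:betasum} supplies the leading term $\bubble_{m^2}$ with an $O(|\log\gch_\infty|)$ error that is negligible against $\bubble_{m^2}$. The only differences are cosmetic: the paper obtains existence and continuity from the product representation of $\gch_{j+1}$ plus dominated convergence rather than telescoping plus locally uniform convergence, and derives the identity for $\gch_k^{-1}$ by applying \refeq{gbarsumbisx} with $\psi(t)=t^{-2}$ rather than by expanding $1/\gch_{l+1}-1/\gch_l$ directly.
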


\begin{proof}
As mentioned above, the remainder $r_j = r_j(m^2,g_0)$ in \refeq{gchrec}
is a continuous function of $(m^2,g_0)\in [0,\delta)^2$.
The solution to the recursion \refeq{gchrec} is given by
\begin{equation}
    \gch_{j+1} = g_0 \prod_{k=0}^j (1-\beta_k \gch_k - \gch_k^{-1}r_k)
    = g_0 \exp \left( \sum_{k=0}^j \log (1-\beta_k \gch_k - O(\chi_k\gbar_k^2)) \right).
\end{equation}
By the dominated convergence theorem and \eqref{e:chigbd-bis}, the
limit
\begin{equation}
  \gch_\infty
  = \lim_{j\to\infty} \gch_{j} =
  g_0 \exp\left( \sum_{k=0}^{\infty} \log(1-\beta_k \gch_k + O(\chi_k\gch_k^2)) \right)
\end{equation}
exists and is continuous in $(m^2,g_0) \in [0,\delta)^2$.
It remains to prove
\eqref{e:ginfty}.
  For this, we set $\psi(t) = t^{-2}$ in \eqref{e:gbarsumbisx},
  and use that $\sum_{l=0}^{k-1} r_l \gch_l^{-2} = O(\sum_{l=0}^{k-1}\chi_l \gbar_l) = O(\log \gch_l)$
  by \eqref{e:chigbd-bis},
  to obtain
  \begin{equation} \label{e:gjinv}
    \gch_{k}^{-1}
    = g_0^{-1} + \sum_{j=0}^{k-1} \beta_j + O(\log \gch_{k})
    .
  \end{equation}
  Therefore, by Lemma~\ref{lem:betasum}, if $(m^2,g_0)\in (0,\delta)^2$ then
  \begin{equation} \label{e:ginftyasym}
    \gch_\infty^{-1} +O( |\log \gch_\infty|)
    = g_0^{-1} + \bubble_{m^2}.
  \end{equation}
  In particular $\gch_\infty \to 0$ as $m^2 \downarrow 0$ or $g_0 \downarrow 0$, and
  $\gch_\infty$ is right-continuous also at $m=0$ and $g_0=0$.
  Finally, \refeq{ginfty} follows from \refeq{ginftyasym}.
  This completes the proof.
\end{proof}

In the proof of the following lemma, we use the fact that
\begin{equation}
  \label{e:prodbdx}
  \prod_{k=j}^l (1 - \gamma \beta_k \gch_k)^{-1} =
  \left( \frac{\gch_{j}}{\gch_{l+1}} \right)^\gamma (c_j+O(\chi_l \gbar_l)),
\end{equation}
with $c_j = 1+O(\chi_j\gbar_j)$ a continuous function of $(m^2,g_0) \in [0,\delta)^2$.
The product formula \refeq{prodbdx} is
a consequence of the recursion relation \refeq{gchrec};
its proof is identical to that of
\cite[Lemma~\ref{flow-lem:elementary-recursion}(iii-a)]{BBS-rg-flow}
where the same statement is proved with $r_j =0$.
As in \cite[Lemma~\ref{flow-lem:elementary-recursion}(iii-a)]{BBS-rg-flow},
the continuity of $c_j$ follows
 from the continuity of $\beta_j$ and
$\gch_j$ (the latter was established below \eqref{e:gchrec-0}).
In the next lemma, we use primes to denote derivatives with respect to the
initial value $\nu_0^c=\mu_0^c$.

\begin{lemma} \label{lem:gzmuprime}
  Let $d = 4$.
  Let  $(m^2,g_0) \in [0,\delta) \times (0,\delta)$,
  $(z_0,\mu_0)=(z_0^c,\mu_0^c)$,
  and $s_j = (m^2, \ggen_j(m^2,g_0))$.
  There exists a continuous function $c:[0,\delta)^2 \to \R$,
  which satisfies
  $c(m^2,g_0)=1+O(g_0)$,
  such that for all $j \in \N_0$:
  \begin{equation} \label{e:mugzprime}
    \much_j'
    =
    L^{2j}
    \left(\frac{\gch_j}{g_0}\right)^\gamma(c(m^2,g_0) + O(\chi_j \gch_j))
    ,
    \quad
    \gch_j' = O\left(\much_j' \gch_j^{2}  \right),
    \quad
    \zch_j' = O\left(\chi_j \much_j' \gch_j^{2} \right).
  \end{equation}
  Also, for $N \in \N_0$
  and $(m^2,g_0) \in [\delta L^{-2(N-1)},\delta) \times (0,\delta)$,
  \begin{equation} \label{e:Kprime}
    \|K_j'\|_{\Wcal_j(s_j,\Lambda_N)} = O\left(\chi_j \much_j' \gch_j^2 \right),
  \end{equation}
  with $K_j$ the sequence of Proposition~\ref{prop:KjNbd} corresponding to $(m^2,g_0)$.
\end{lemma}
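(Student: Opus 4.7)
The plan is to differentiate the transformed renormalisation group flow with respect to the initial condition $\mu_0 = \nu_0^c$, with the other initial data $g_0$ held fixed, setting $\gch_0' = \zch_0' = 0$, $\much_0' = 1$, $K_0' = 0$. From the recursion \refeq{fvflow}, together with \refeq{gbar}--\refeq{mubar}, the chain rule yields coupled linear equations for $(\gch_j', \zch_j', \much_j', K_j')$, with coefficients involving the derivatives of $\bar\varphi$ and of $\Rch_{j+1}, \Kch_{j+1}$. The triangular structure of $\bar\varphi$ and the smallness of $\theta_j, \eta_j, \xi_j, \pi_j = O(\chi_j)$ from Assumption~(A2) identify the leading differentiated recursion as $\much_{j+1}' = L^2 \much_j' (1 - \gamma \beta_j \gch_j) + \mathrm{error}_j$, where the error collects contributions from $\gch_j', \zch_j', K_j'$ through the non-leading coefficients in $\bar\varphi_j$ and through the derivatives of $\Rch_{j+1}$ estimated in Corollary~\ref{cor:step-mr-tr}.

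First I would derive the asymptotic for $\much_j'$. Under the inductive hypotheses that $\gch_j', \zch_j' = O(\much_j' \gch_j^2)$ (with the $\chi_j$ improvement on $\zch$) and $\|K_j'\|_{\Wcal_j} = O(\chi_j \much_j' \gch_j^2)$, the error term is $O(\chi_j \gbar_j) \much_j'$, which is summable against the product $\prod(1-\gamma\beta_k\gch_k)$ using \refeq{chigbd-bis}. Iterating from $\much_0' = 1$ gives
\begin{equation}
\much_j' = L^{2j}\prod_{k=0}^{j-1}(1 - \gamma \beta_k \gch_k)\bigl(1 + O(\chi_j \gbar_j)\bigr),
\end{equation}
and the product formula \refeq{prodbdx} with $j \mapsto 0$, $l \mapsto j-1$ converts this into the claimed form
$\much_j' = L^{2j} (\gch_j/g_0)^\gamma (c(m^2,g_0) + O(\chi_j \gch_j))$, where $c = 1/c_0$ inherits continuity and the expansion $c = 1 + O(g_0)$ from the continuity properties of $c_0$ (established in the discussion following \refeq{prodbdx}, itself relying on continuity of $\beta_j$ and $\gch_j$).

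Next I would close the bootstrap for $\gch_j'$ and $\zch_j'$. Since $\gch_0' = \zch_0' = 0$, the derivatives are generated purely by the perturbative sources; using Corollary~\ref{cor:step-mr-tr}, the source $\epsilon_j^g$ in $\gch_{j+1}' = \gch_j'(1 - 2\beta_j \gch_j) + \epsilon_j^g$ is of order $\chi_j \gch_j^3 \cdot \much_j' + \gch_j \cdot \|K_j'\|_{\Wcal_j}$, and similarly for $\zch'$ with an extra $\chi_j$ from the coefficient $\theta_j$. Solving these stable linear recursions using the discrete Duhamel formula and the summation estimates \refeq{chigbd-bis} and \refeq{prodbdx} yields the claimed bounds $\gch_j' = O(\much_j' \gch_j^2)$ and $\zch_j' = O(\chi_j \much_j' \gch_j^2)$. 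Finally, for $K_j'$ the chain rule and the estimates \refeq{DVKbd} give
\begin{equation}
\|K_{j+1}'\|_{\Wcal_{j+1}(s_{j+1},\Lambda_N)} \le M \chi_{j+1} \gbar_{j+1}^2 \|V_j'\|_\Vcal + \kappa \|K_j'\|_{\Wcal_j(s_j,\Lambda_N)},
\end{equation}
with $\kappa = O(L^{-1}) < 1$; since $\|V_j'\|_\Vcal \le \max\{|\gch_j'|, |\zch_j'|, |\much_j'|\} = |\much_j'|$ (using the norm \refeq{Vcalnorm}, and the fact that the other components are smaller), the contracting inhomogeneous recursion immediately delivers $\|K_j'\|_{\Wcal_j} = O(\chi_j \much_j' \gch_j^2)$.

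The main obstacle is to close the coupled inductive hypotheses self-consistently: each of $\gch_j', \zch_j', K_j'$ is driven by $\much_j'$ and by each other through the $\Rch$ and $\Kch$ derivatives, and one must verify that the feedback of $K_j'$ into $\gch_j'$ (via $D_K R$) and back into $K_{j+1}'$ does not destroy the hierarchy $K_j', \zch_j' \ll \gch_j' \ll \much_j'$. This is where the quantitative smallness in \refeq{Rmain-g}--\refeq{DVKbd} is essential; in particular, the $q = 1$ bound $\|D_K R\| = O(1)$ combined with the inductive $\|K_j'\| = O(\chi_j \much_j' \gch_j^2)$ gives a feedback into $\gch'$ of the right order. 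Continuity of $c(m^2, g_0)$ on $[0,\delta)^2$ follows from Proposition~\ref{prop:Fcont} applied to the $C^\infty$ maps defining the finite-step flow, together with the continuity of $\beta_j$ and $c_0$.
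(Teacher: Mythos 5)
Your overall strategy is the paper's: differentiate the transformed flow \refeq{fvflow} in the initial condition $\mu_0$, run a coupled induction in which $\much_j'$ is the dominant component and $\gch_j',\zch_j',K_j'$ are suppressed by $\gch_j^2$, convert the product $\prod_{k<j}(1-\gamma\beta_k\gch_k)$ into $(\gch_j/g_0)^\gamma$ via \refeq{prodbdx}, and obtain continuity of $c$ from Proposition~\ref{prop:Fcont}. (The paper organises the induction by writing $\much_j'=\Pi_j(1+\Sigma_j)$ with $\Pi_j=L^{2j}\prod_{k<j}(1-\gamma\beta_k\gch_k)$ and tracking the increments $\Sigma_j-\Sigma_{j-1}$, with explicit constants $M_1\gg M_2$ to close the feedback loop you mention, but this is the same argument.)

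There is, however, a genuine gap in your error bookkeeping for the $\much'$ recursion. You assert that the inhomogeneous term is $O(\chi_j\gbar_j)\much_j'$ and that this is ``summable against the product''. It is not: by the $n=1$ case of \refeq{chigbd-bis}, $\sum_{l\le j}\chi_l\gbar_l$ is only $O(|\log\gbar_j|)$, so a per-step relative error of $O(\chi_j\gbar_j)$ would produce a cumulative factor $\exp(O(|\log\gbar_j|))$, and no $j$-independent constant $c=1+O(g_0)$ with remainder $O(\chi_j\gch_j)$ could emerge; the stated form of \refeq{mugzprime} would fail. What is actually needed, and what does follow from your own inductive hypotheses combined with \refeq{Rmain-g}--\refeq{DVKbd}, is a per-step error $O(\chi_j\gbar_j^2)\much_j'$: the $D_V\Rch$ contribution is $O(\chi_j\gbar_j^2)\,\|V_j'\|_{\Vcal}=O(\chi_j\gbar_j^2)\,\much_j'$, the $D_K\Rch$ contribution is $O(1)\,\|K_j'\|=O(\chi_j\much_j'\gbar_j^2)$, and the explicit terms $\eta_j\gch_j'$, $\xi_j(\gch_j^2)'$, $\pi_j(\gch_j\zch_j)'$, $L^2\much_j\gamma\beta_j\gch_j'$ are all $O(\chi_j\much_j'\gch_j^2)$ or smaller. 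The $n=2$ case of \refeq{chigbd-bis} then gives a convergent total $O(g_0)$ with tail $O(\chi_j\gbar_j)$ beyond scale $j$. Relatedly, your intermediate display $\much_j'=L^{2j}\prod_{k<j}(1-\gamma\beta_k\gch_k)\bigl(1+O(\chi_j\gbar_j)\bigr)$ is not quite right even after the correction: the errors accumulated over scales $0,\dots,j$ contribute a $j$-independent factor $1+O(g_0)$ (the paper's $1+\Sigma_\infty$), which cannot be absorbed into $O(\chi_j\gbar_j)$ and must instead be folded into the constant $c$ together with the factor from \refeq{prodbdx}; only the tail beyond scale $j$ is $O(\chi_j\gbar_j)$. (A smaller slip in the same vein: the source in the $\gch'$ recursion is $O(\chi_j\gch_j^2\much_j')$, not $O(\chi_j\gch_j^3\much_j')$; this one is harmless, as the conclusion still follows.)
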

\begin{proof}
For $s_0=(m^2,g_0) \in [0,\delta)^2$,
let $V_j = V_j(s_0)$ be the infinite sequence
of Proposition~\ref{prop:KjNbd} with initial condition specified by $s_0$.
Let $\Vch_j = T_j(V_j) = (\gch_j, \zch_j,\much_j)$ be the transformed flow.
For $m^2 \in [\delta L^{-2(N-1)},\delta)$,
let $K_j = K_j(s_0)$ be the sequence given by Proposition~\ref{prop:KjNbd},
so that $(\Vch_j,K_j)$ is a finite volume flow of $(\Vch_+,\Kch_+)$,
as in \eqref{e:fvflow}.
(In fact, $K_j$ is defined for all $m^2 \in [0,\delta)$ if $j < N$.)
Let $(\Vch_j',K_j')$ denote the sequence of derivatives
along this solution,
with respect to the initial condition $\much_0 = \mu_0$,
and with the derivative $K_j'$ taken in the space $\Wcal_j(s_j) = \Wcal_j(s_j,\Lambda_N)$.
Since the sequence $V_j$ is independent of $\Lambda_N$ by Proposition~\ref{prop:KjNbd},
the sequence $\Vch_j'$ is also independent of $N$ and can be extended inductively
to all $j \in\N_0$ by choosing $N>j$.

We define $\Pi_j = \Pi_j(m^2,g_0)$ by
\begin{equation}
  \Pi_j = L^{2j}\prod_{l=0}^{j-1}(1-\gamma \beta_l \gch_l)
  .
\end{equation}
By \eqref{e:prodbdx}, there is a continuous
function $\Gamma_\infty (m^2,g_0) = O(g_0)$ such that
\begin{equation}  \label{e:prodid}
  \Pi_j =
  L^{2j}
  \left( \frac{\gch_{j}}{\gch_{0}} \right)^\gamma (1+\Gamma_\infty(m^2,g_0)+O(\chi_j\gch_j))
  .
\end{equation}
We also define $\Sigma_j = \Sigma_j(m^2,g_0)$ by
\begin{equation}
  \label{e:muPiSig}
  \much_j' = \Pi_j (1+ \Sigma_j) \quad (j \geq 0), \qquad
  \Sigma_{-1} = 0
  .
\end{equation}

We make the inductive assumption that for $j<N$
there exist $M_1 \gg M_2 \gg 1$ such that
\begin{equation} \label{e:induct1}
  |\Sigma_{j}-\Sigma_{j-1}| \leq O(M_1+M_2) \chi_j \gbar_j^2,
  \quad
  |\check g_j'|, |\check z_j'| \leq M_1 \chi_j \Pi_j \gbar_j^2 ,
  \quad
  \|K_j'\|_{\Wcal_j(s_j,\volume)} \leq M_2 \chi_j \Pi_j \gbar_j^2.
\end{equation}
Since $(\gch_0',\zch_0',\much_0',K_0') = (0,0,1,0)$,
the inductive assumption \eqref{e:induct1}  is true for $j=0$.
To advance the induction, we begin by applying
\eqref{e:prodid}--\eqref{e:muPiSig} to conclude that, if $L \gg 1$, if
$\Omega \leq L $, and if $g_0$ is sufficiently small, then
\begin{equation}
  |\much_j'| \le 2 \Pi_j
  ,\quad
  \chi_j \Pi_{j}\gbar_j^2 \leq \half \chi_{j+1} \Pi_{j+1} \gbar_{j+1}^2
  .
\end{equation}
We also use differentiated versions of the flow equation
\eqref{e:fvflow}.
By the chain rule, with $F = \Rch_{j+1}$ or $F=\Kch_{j+1}$,
\begin{equation} \label{e:chain}
  F'(\Vch_j,K_j) = D_{\Vch}F(\Vch_j,K_j)\Vch_j' + D_KF(\Vch_j,K_j)K_j'.
\end{equation}
By the estimates of Theorem~\ref{thm:step-mr-fv} (which apply as
discussed below \refeq{fvflow}) and by \eqref{e:induct1},
this gives
\begin{align}
  \|D_VF(\Vch_j,K_j)\Vch_j'\| &\leq O(\chi_j \gbar_j^2)(M_1 \gbar_j^2+2)\Pi_j \leq O(\chi_j \Pi_j\gbar_j^2),
  \\
  \|D_K\Rch_{j+1}(\Vch_j,K_j)K_j'\| &\leq O(M_2)\chi_j \Pi_j\gbar_j^2,
  \\
  \|D_K\Kch_{j+1}(\Vch_j,K_j)K_j'\| &\leq M_2\chi_j \Pi_j\gbar_j^2,
\end{align}
where the norms on the left-hand sides are those of the appropriate
$\Vcal,\Wcal(s_j)$ spaces.
This implies, for $M_2 \gg 1$,
\begin{equation} \label{e:phiprime}
  \|\Rch_{j+1}'(\Vch_j,K_j)\| \leq O(M_2) \chi_j \Pi_j\gbar_j^2,
  \quad
  \|\Kch_{j+1}'(\Vch_j,K_j)\| \leq 2M_2\chi_j\Pi_j\gbar_j^2.
\end{equation}

For $\check \mu$, the induction is advanced using the recursion \eqref{e:fvflow}
with \eqref{e:mubar}, \eqref{e:phiprime}, and
\begin{equation}
  (\eta_j \gch_j)', (\xi_j\gch_j^2)', (\gch_j\zch_j)' = O(M_1 \chi_j \Pi_j \gbar_j^2).
\end{equation}
It follows that
\begin{align} \label{e:muchO}
  \check\mu_{j+1}'
  &=
  L^2 \check\mu_j'(1-\gamma\beta_j \gch_j)
  + O\left((M_1+M_2)\chi_j \Pi_j  \gbar_j^2 \right)
  \nnb
  &
  = \Pi_{j+1}(1+\Sigma_j) + O\left((M_1+M_2)\chi_{j+1} \Pi_{j+1}  \gbar_{j+1}^2 \right)
  .
\end{align}
This enables us to advance
the induction for $\much$, namely the first estimate of \eqref{e:induct1}.
The advancement of the induction for $\check g$ and $\check z$ is similar,
as follows.
We use
the recursion relation \eqref{e:fvflow} with \eqref{e:gbar}--\eqref{e:zbar}
and \eqref{e:phiprime}, and choose $M_1 \gg M_2$ to obtain
\begin{align}
  |\gch_{j+1}'|, |\zch_{j+1}'|
  &\leq (M_1(1+O(\gbar_j))+O(M_2))\chi_j \Pi_j \gbar_j^2
  \nnb
  &\le 2 M_1 \chi_j \Pi_j \gbar_j^2
  \leq M_1 \chi_{j+1} \Pi_{j+1} \gbar_{j+1}^2.
\end{align}
This advances the induction for $\gch$ and $\zch$.

We now complete the proof,
having established that \eqref{e:induct1} holds for all $j < N$.
As discussed in the first paragraph of the proof, the bound
for $\Vch_j'$ in fact holds for all $j \in \N_0$.
Let $F(V,K,m^2) = V$.  By Proposition~\ref{prop:KjNbd} and Theorem~\ref{thm:step-mr-fv},
$F$ is a $C^\infty$ map of the renormalisation group coordinates at scale-$j$,
in the sense of Definition~\ref{def:flowcont}, for all $j\in \N_0$.  Therefore, by
Proposition~\ref{prop:Fcont},
$V_j'(s_0)$  is continuous in $s_0 \in [0,\delta) \times (0,\delta)$ for each $j \in \N_0$.
The same is therefore true for
$\Vch_j'(s_0)$.
As a consequence, since $\Pi_j$ is continuous,
it follows that $\Sigma_j(s_0)$ is continuous in $[0,\delta) \times (0,\delta)$,
for each $j \in \N_0$.
Since $\sum_{j=1}^\infty\chi_j \gbar_j^2 =O(g_0)$ by \refeq{chigbd-bis},
it follows from \eqref{e:induct1} that the limit $\Sigma_\infty  = \lim_{j \to \infty}\Sigma_j
=\sum_{j=1}^\infty (\Sigma_j - \Sigma_{j-1})$ exists with $\Sigma_\infty = O(g_0)$.
In particular, $\Sigma_j$ and $\Sigma_\infty$ extend continuously as $g_0 \downarrow 0$.
Similarly, continuity of $\Sigma_\infty$ on $[0,\delta)^2$ follows from the dominated convergence theorem,
with \eqref{e:approximate-flow}.
Also, again by \refeq{chigbd-bis},
\begin{equation}
  \label{e:sumid}
  \Sigma_\infty-\Sigma_j =
  O\left(\sum_{k=j+1}^\infty \chi_k \gbar_j^2 \right) = O(\chi_j \gbar_j)
  .
\end{equation}
From \eqref{e:prodid}--\eqref{e:muPiSig} and \eqref{e:sumid}, we obtain
the equation for $\much_j'$ in \eqref{e:mugzprime}, with
$c(m^2,g_0) = (1 + \Sigma_\infty(m^2,g_0))(1+\Gamma_\infty(m^2,g_0))$.
This $c(m^2,g_0)$ is indeed continuous,
since $\Sigma_\infty$ and $\Gamma_\infty$ are.
With this, \eqref{e:induct1} implies the last two equations in
\eqref{e:mugzprime} and \eqref{e:Kprime}.
\end{proof}

Lemmas~\ref{lem:ginfty}--\ref{lem:gzmuprime} are stated in terms of the transformed
variables. However, since $\gch_j = g_j + O(g_j^2)$, by \eqref{e:TVV2},
Lemma~\ref{lem:ginfty} also implies $g_\infty \sim \gch_\infty \sim 1/\bubble_{m^2}$.
Moreover, we note that \eqref{e:mugzprime} remains true if
$\much_j'$ is replaced by $\mu_j'$.
In fact, by \cite[\eqref{pt-e:much-def}]{BBS-rg-pt}, there
exist constants $a_j = O(1)$  such that
\begin{equation}
  \much_j = \mu_j + a_j\mu_j^2.
\end{equation}
Since $\mu_j = O(\chi_j \gch_j)$, by Proposition~\ref{prop:KjNbd},
this indeed implies
\begin{equation} \label{e:mup}
  \mu_j' = \much_j'(1 +O(\mu_j))
  = L^{2j}  \left(\frac{\gch_j}{g_0}\right)^\gamma(c(m^2,g_0) + O(\chi_j \gch_j)).
\end{equation}
In particular, setting $c(\hat g_0) = c(0, \hat g_0)$, and by Lemma~\ref{lem:ginfty},
\begin{equation} \label{e:nupinf}
  \lim_{N\to\infty} \nu_N'
  =
  c(m^2,g_0) \left(\frac{\gch_\infty}{g_0}\right)^\gamma
  \sim
  \frac{c(\hat g_0)}{(\hat g_0 \bubble_{m^2})^\gamma}
  \quad \text{as $(m^2,g_0) \to (0,\hat g_0)$.}
\end{equation}
Similarly, by \cite[\eqref{pt-e:gch-def}--\eqref{pt-e:zch-def}]{BBS-rg-pt},
\begin{equation} \label{e:gzp}
  g_j'
  = O(g_j\mu_j'),
  \quad
  z_j'
  = O(\chi_j g_j \mu_j').
\end{equation}

\subsection{Proof of Theorem~\ref{thm:suscept-diff}}
\label{sec:suscept-diff-pf}

\begin{proof}[Proof of Theorem~\ref{thm:suscept-diff}]
Let $\nu_0^c=\mu_0^c$ and $z_0^c$
be the functions of Proposition~\ref{prop:flow-flow},
which as desired are continuous in $(m^2,g_0)$
and differentiable in $g_0$.
As discussed at the beginning of Section~\ref{sec:pfmr},
it suffices to show that for $\hat\chi$ and $\hat\chi'$ evaluated
at $( m^2,g_0,\nu_0^c,z_0^c)$,
\begin{equation} \label{e:chi-m-pf-1}
  \hat\chi
  = \frac{1}{m^2} ,
  \quad\quad\quad
  \hat\chi'
  \sim - \frac{1}{m^4} \frac{c(\hat g_0)}{(\hat g_0\bubble_{m^2})^{\gamma}}
  \quad \text{as $(m^2 ,g_0)\to (0,\hat g_0)$},
\end{equation}
with $c$ continuous and $c(g_0)=1+O(g_0)$.
We do this using the identity \refeq{chibarm-bis2}, which asserts that
\begin{equation}
\label{e:chibarm-bis3}
  \hat\chi_N
  =
  \frac{1}{m^{2}}
  +
  \frac{1}{m^4}
  \left(
  -   \nu_N
  +
  \frac{1}{|\Lambda|}  D^2W_N^0(0, 0; 1,1)
  +  \frac{1}{|\Lambda|}  D^2K_N^0(0, 0; 1,1)
  \right)
  ,
\end{equation}
and is valid for initial conditions $V_0$ in an $N$-dependent
neighbourhood of $V_0^c$, as discussed below \eqref{e:ZIKNa}. In
particular, this allows the identity to be differentiated in $V_0$ at
$V_0=V_0^c$.

According to \eqref{e:munu}, we have $\nu_j = L^{-2j} \mu_j$, so
Proposition~\ref{prop:flow-flow} implies
$\nu_N = O(\chi_N \gbar_N L^{-2N}) \to 0$.  We prove $\hat\chi=m^{-2}$ by showing that
the $W$ and $K$ terms in \eqref{e:chibarm-bis3} are even smaller than
this, in fact by factors $O(\chi_N\gbar_N)$ and $O(\chi_N\gbar_N^2)$,
respectively.  For $\hat\chi'$, by \eqref{e:nupinf}, the leading term
$-m^{-4}\nu_N'$ obeys the asymptotic formula desired for $\hat\chi'$.
Thus it suffices to show that $\nu_0$-derivatives of the $W$ and $K$
terms in \eqref{e:chibarm-bis3} are relatively smaller as $m^2
\downarrow 0$.

Recall the definitions of the $\Vcal$,
$\Phi_N$, and $T_{0,N}$ norms in \refeq{Vcalnorm},
\eqref{e:Phinormdef}, and \eqref{e:Tphinormdef}.
The bounds for $W$ are more elementary than those for $K_N$, and are developed in \cite{BS-rg-IE}.
By definition, $ W_j(V,\tilde V)$ is bilinear in
$(V,\tilde V)$, and by \cite[\eqref{IE-e:W-logwish}]{BS-rg-IE},
\begin{equation} \label{e:Wbilinbd}
  \|W_{N}(V(\Lambda),\tilde V(\Lambda))\|_{T_{0,N}} \leq O(\chi_N) \|V\|_{\Vcal}\|\tilde V\|_{\Vcal}.
\end{equation}
We write $W_N =W_N( V_N(\Lambda), V_N(\Lambda))$
and $W_N'  = \ddp{}{\nu_0} W_N$.
Differentiation gives
\begin{equation}
  W_N'
  =
  W_{N}(\Lambda; V_N,V_N')
  + W_{N}(\Lambda; V_N',V_N),
\end{equation}
and it then follows from \eqref{e:gzp} that
\begin{equation} \label{e:WNNbd}
  \|W_{N} \|_{T_{0,N}} \leq O(\chi_N g_N^2),
  \quad
  \|W_{N}' \|_{T_{0,N}} \leq O(\chi_N  g_N  \mu_N')
  .
\end{equation}
By definition of the $T_\phi$ norm
(see \eqref{e:Tphinormdef}, or \cite{BS-rg-norm} for full details),
for a differential form $F\in \Ncal(\Lambda)$ with degree zero part $F^0$,
\begin{equation}
  |D^2F^0(0, 0; f, f)|
  \leq 2 \|F\|_{T_{0,N}} \|f\|_{\Phi_N}^2.
\end{equation}
The norm of the constant test function $1 \in \Phi_N$ is
\begin{equation}
\lbeq{1norm}
  \|1\|_{\Phi_N} = \ell_N^{-1} \sup_x |1_x| = \ell_N^{-1}
  = O(L^{N[\phi]})
  ,
\end{equation}
where $\ell_N = \ell_0 L^{-N[\phi]} = \ell_0 L^{-N(d-2)/2}$.  Since
$|\Lambda| = L^{dN}$, the bound on $W_N$ of \eqref{e:WNNbd} therefore
gives
\begin{equation}
\label{e:WNbd}
  |\Lambda|^{-1}
  |D^2 W_{N}^0(\Lambda; 0, 0; 1, 1)|
  \leq
  2 |\Lambda|^{-1} \|W_N \|_{T_{0,N}} \|1\|_{\Phi_N}^2
  \leq O(\chi_N \bar g_N^2 L^{-2N}),
\end{equation}
and the right-hand side vanishes in the limit $N \to\infty$.
Similarly, the $\nu_0$-derivative is bounded by $O(\chi_N \bar g_N
L^{-2N}\mu_N') = O(\chi_N \gbar_N \nu_N')$, so it is smaller than the
leading contribution $\nu_N'$ by a factor $O(\chi_N\gbar_N) \to 0$.

Proposition~\ref{prop:KjNbd} and Lemma~\ref{lem:gzmuprime} provide bounds
on $K_N$ and $K_N'$ analogous to \eqref{e:WNNbd} with one more power of $g_N$,
namely
\begin{equation}
  \label{e:KNbd2}
  \|K_{N}\|_{T_{0,N}} \leq O(\chi_N g_N^3),
  \qquad
  \|K_{N}'\|_{T_{0,N}} \leq O(\chi_N g_N^2 \mu_N').
\end{equation}
Thus the contribution due to $K$ is also small relative to the contributions
due to $\nu$, and the proof is complete.
\end{proof}

\subsection{Proof of Theorem~\ref{thm:nuc}}

\begin{proof}[Proof of Theorem~\ref{thm:nuc}]
Let $g_0 = \tilde g_0(g,0)$, with $\tilde g_0$ the function of
Proposition~\ref{prop:changevariables}(ii).
Let $\nu_0^c,z_0^c$ be given by $\mu_0^c,z_0^c$ of
Proposition~\ref{prop:flow-flow}.
By \eqref{e:g0g},
\begin{equation}
  \nu_c(g) = \frac{\nu_0^c(0,g_0)}{1+z_0^c(0,g_0)}
  = \nu_0^c(0,g_0)+O(g_0^2).
\end{equation}
Since $g_0 = g + O(g^2)$  by Proposition~\ref{prop:changevariables},
it suffices to show that
$\mu_0^c(0,g_0) = - 2C(0) g_0 + O(g_0^2) = -{\sf a} g_0 + O(g_0^2)$
(all covariances have $m^2 =0$ in this proof).
By Proposition~\ref{prop:flow-flow},
$\mu_0^c = \much_0$ where (see \eqref{e:mubar} and \eqref{e:fvflow})
the sequence $\much_j$ satisfies
\begin{equation} \label{e:much}
  \much_{j+1} = L^{2}\much_j(1-\gamma\beta_j \gch_j) + \eta_j \gch_j + O(\chi_j \gbar_j^2),
\end{equation}
and where we have used $\gch_j,\zch_j,\much_j = O(\gbar_j)$ for the
higher order terms, but not for the linear $\gch,\much$ terms and not for the $\much\gch$ term.
We also recall from \cite[\eqref{pt-e:nuplusdef}, \eqref{pt-e:wbardef2}]{BBS-rg-pt}
that
\begin{equation} \label{e:etadef}
  \eta_j = 2L^{2(j+1)} C_{j+1;0,0}
  .
\end{equation}

By Proposition~\ref{prop:flow-flow}, the sequence $\much_j$ is bounded,
so infinite iteration of \eqref{e:much} gives
\begin{equation} \label{e:mubar0}
  \much_0
  =
  - \sum_{l=0}^\infty
  \left( L^{-2(l+1)} \prod_{k=0}^l (1 - \gamma\beta_k\gch_k)^{-1} \right)
  (\eta_l \gch_l + O(\chi_l\gbar_l^2)).
\end{equation}
By \eqref{e:prodbdx}, we obtain from
\eqref{e:etadef}--\eqref{e:mubar0}
that
\begin{equation}
  \much_0 =
  - 2 (1+O(g_0))g_0^\gamma \sum_{l=0}^\infty
  \left( C_{l+1;0,0} \gch_l^{1-\gamma} + O(L^{-2l}\chi_l\gbar_l^{2-\gamma})\right).
\end{equation}
Since $C(0) = \sum_{l=0}^\infty C_{l+1;0,0}$, this gives
\begin{equation} \label{e:mubar0r}
\begin{aligned}
  \much_0
  &= - 2C(0) g_0 (1+O(g_0))
  - 2(1+O(g_0))g_0^\gamma\sum_{l=0}^\infty C_{l+1;0,0} (\gch_l^{1-\gamma}-g_0^{1-\gamma})
  \\ & \qquad\qquad
  + g_0^\gamma \sum_{l=0}^\infty  O(L^{-2l}\gbar_l^{2-\gamma}).
\end{aligned}
\end{equation}
We show that the last two terms are $O(g_0^2)$.  By
\eqref{e:gbarsumbisx} with $\psi(t) = (1-\gamma)^{-1} t^{-\gamma}$,
\begin{equation}
  \gch_l^{1-\gamma} - g_0^{1-\gamma} = \int_{\gch_0}^{\gch_l} \psi(t) \; dt
  =  (1-\gamma)^{-1} \sum_{k=0}^{l-1} \beta_k \gch_k^{2-\gamma} + O(g_0^{2-\gamma}).
\end{equation}
Thus, by Fubini's theorem and $C_{l+1;0,0} = O(L^{-2l})$, the middle sum in \eqref{e:mubar0r} is bounded by a multiple of
\begin{align}
  g_0^{\gamma}\sum_{l=0}^\infty C_{l+1;0,0} (\gch_l^{1-\gamma}-\gch_0^{1-\gamma})
  &=
  (1-\gamma)^{-1}
  g_0^\gamma \sum_{k=0}^{\infty} \beta_k \gch_k^{2-\gamma} \sum_{l=k+1}^\infty C_{l+1;0,0} + O(g_0^2)
  \nnb
  &=
   (1-\gamma)^{-1}
  g_0^\gamma \sum_{k=0}^{\infty} \beta_k \gch_k^{2-\gamma} O(L^{-2k})
  + O(g_0^2)
  = O(g_0^2).
\end{align}
Similarly, the rightmost sum in \eqref{e:mubar0r} is bounded by $O(g_0^2)$, so
\begin{equation}
  \much_0
  = - 2C(0) g_0 + O(g_0^2),
\end{equation}
and the proof is complete.
\end{proof}

\setcounter{section}{0}
\renewcommand{\thesection}{\Alph{section}}
\section{Existence of critical value}

The following lemma proves existence of a critical value in $(-\infty,0]$.
For dimensions $d>2$, a simple proof rules out $-\infty$.  Although our main
results pertain only to $d \ge 4$, we nevertheless show that the critical
value is finite in all dimensions.

\begin{lemma}
\label{lem:csub} For all dimensions $d>0$, there exists a
\emph{critical value} $\nu_c=\nu_c(d,g) \in (-\infty,0]$ such that
$\chi(g,\nu) < \infty$ if and only if $\nu > \nu_c$.  For $d>2$,
$\nu_c \in [-2C_0(0)g,0]$, where $C_0(x) = (-\Delta^{-1}_{\Zd})_{0,x}$
is the Green function of the simple random walk.
\end{lemma}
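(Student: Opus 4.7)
My plan is to establish submultiplicativity of $T\mapsto c_T$ via the Markov property and then extract $\nu_c$ via Fekete's lemma. Decomposing the intersection local time of a walk $X$ on $[0,T+S]$ as
\begin{equation*}
I(T+S) = I^{[0,T]}(X) + I^{[T,T+S]}(X) + 2\int_0^T\!\!\int_T^{T+S}\1_{X(s_1)=X(s_2)}\,ds_1\,ds_2,
\end{equation*}
the cross-term is non-negative, so $e^{-gI(T+S)}\le \exp(-gI^{[0,T]}-gI^{[T,T+S]})$. Conditioning on $X(T)$ and using that $(X(T+s))_{s\ge 0}$ is, by the Markov property, an independent simple random walk starting from $X(T)$, and then applying translation invariance, gives $c_{T+S}\le c_T c_S$. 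Fekete's lemma applied to the subadditive function $T\mapsto \log c_T$ (measurability being standard) yields the existence of
\begin{equation*}
\nu_c := \lim_{T\to\infty} T^{-1}\log c_T = \inf_{T>0} T^{-1}\log c_T \in [-\infty, 0],
\end{equation*}
the upper bound coming from $c_T\le 1$, which follows from $I(T)\ge 0$.

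From this characterisation, $\chi(g,\nu) = \int_0^\infty c_T e^{-\nu T}\,dT$ is finite precisely when $\nu>\nu_c$: for $\nu>\nu_c$ and small $\epsilon>0$, $c_T\le e^{(\nu_c+\epsilon)T}$ for large $T$, which gives exponential integrability; conversely, $\nu_c=\inf_T T^{-1}\log c_T$ implies $c_T\ge e^{\nu_c T}$ for all $T>0$, so $\int_0^\infty c_T e^{-\nu T}\,dT\ge \int_0^\infty e^{(\nu_c-\nu)T}\,dT$ diverges for $\nu\le\nu_c$.

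The remaining task is to show $\nu_c>-\infty$. For $d>2$, Jensen's inequality combined with
\begin{equation*}
E_a[I(T)] = 2\int_0^T (T-u)P_0(X(u)=0)\,du \le 2 T \int_0^\infty P_0(X(u)=0)\,du = 2C_0(0)T
\end{equation*}
yields $c_T\ge \exp(-g E_a[I(T)]) \ge e^{-2C_0(0)gT}$, whence $\nu_c\ge -2C_0(0)g$, which is the quantitative bound claimed. For $d\le 2$, where $C_0(0)=\infty$, I would instead restrict to the ballistic event $A_n$ that the walk takes exactly $n$ Poisson jumps in $[0,T]$, all in direction $+e_1$; on $A_n$ the walk visits $n+1$ distinct sites in order so $I(T)=\sum_{i=0}^n Y_i^2$ for waiting times $(Y_i)$ with $\sum Y_i=T$, exchangeability gives $E[I(T)\mid A_n]=2T^2/(n+2)$, and combining Jensen within $A_n$ with $P(A_n)=e^{-2dT}T^n/n!$ and Stirling at $n=\lfloor T\rfloor$ gives $c_T\ge c e^{-CT}/\sqrt{T}$ with finite $C=C(d,g)$, hence $\nu_c\ge -C>-\infty$. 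The main obstacle is really the submultiplicativity step, which requires a careful use of the Markov property together with non-negativity of the cross intersection term in continuous time; the $d\le 2$ lower bound on $c_T$ is an auxiliary matter secondary to the paper's interest in $d=4$.
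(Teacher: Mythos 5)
Your proposal is correct and follows essentially the same route as the paper: subadditivity of $c_T$ via dropping the nonnegative cross-intersection term plus the Markov property, Fekete's lemma giving $\nu_c$ with $c_T \ge e^{\nu_c T}$, Jensen's inequality with $E[I(T)]\le 2C_0(0)T$ for $d>2$, and for $d\le 2$ the same key identity $E[I(T)\mid n\text{ jumps, distinct sites}]=2T^2/(n+2)$ from the uniform order statistics of the jump times. The only (harmless) deviation is in the low-dimensional case, where you condition on a single ballistic direction and evaluate at $n\approx T$ instead of summing over all self-avoiding paths with the connective constant and Laplace's method as the paper does; this yields a worse explicit constant but suffices for $\nu_c>-\infty$.
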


\begin{proof}
We first show that
$c_{T+S} \leq c_{T}c_{S}$ for all $S,T \ge 0$.
To prove this, let
\begin{equation}
  I(S,T) = \int_S^T \!\! \int_S^T \1_{X_{S_1} = X_{S_2}} \; dS_1 \, dS_2.
\end{equation}
Then
\begin{equation}
    I(T) = I(0,T) \geq I(0,S) + I(S,T).
\end{equation}
By the Markov property, $I(0,T)$ and $I(T, T+S)$ are conditionally independent given $X(T)$.
Using translation-invariance, it therefore follows that
\begin{equation}
    \label{e:subadditivity}
    c_{T+S} \leq E(e^{-gI(0,T)} e^{-gI(T,T+S)})
    = E(e^{-gI(0,T)}) E(e^{-gI(T,T+S)})
     = c_T c_S.
\end{equation}
A standard lemma for subadditive functions now yields the existence
of a \emph{critical value} $\nu_c=\nu_c(g) \in [-\infty,\infty)$ such that $c_T^{1/T}
\to e^{\nu_c}$ and also $c_T \ge e^{\nu_c T}$ (see, e.g., \cite[Lemma~1.2.2]{MS93}).
The inequality $\nu_c \leq 0$ is obvious from $I(S,T) \geq 0$.
For $d>2$, we show that  $\nu_c \in [-2C_0(0)g,0]$, as follows.
By Jensen's inequality,
\begin{equation}
  c_{T} = E(e^{-gI(T)}) \geq e^{-g E(I(T))}.
\end{equation}
An elementary estimate shows that
$E(I(T)) \le 2TC_0(0)$, and the Green function
$C_0(x)$ is finite for $d>2$
(the estimate can be done as in the
discrete case, see, e.g., \cite[Lemma~1.4]{BS95}).
This shows that $\nu_c \in [-2C_0(0)g,0]$.

To prove that  $\nu_c > -\infty$ also in lower dimensions,
we proceed as follows.
Let $S_{T,n}$ be the event that the path followed by
$X[0,T]$ is an $n$-step strictly self-avoiding walk.  Then
\begin{align}
\lbeq{chilower}
    \chi(\nu)  & = \int_0^\infty E(e^{-gI(T)}) e^{-\nu T} dT
    \ge
    \sum_{n=0}^\infty
    \int_0^\infty E(e^{-gI(T)} \mid S_{T,n}) P(S_{T,n}) e^{-\nu T} dT.
\end{align}
Let $s_n$ denote the number of $n$-step (discrete-time) strictly
self-avoiding walks that start at the origin.  The connective constant
$\mu = \lim_{n\to \infty}s_n^{1/n}$ exists and lies in $[d,2d-1]$.
Let $Y_T$ denote the number of steps taken by $X$ during the interval
$[0,T]$; this is a Poisson random variable with mean $2dT$.  For any
$\mu'<\mu$ there is a constant $k$ such that
\begin{equation}
\lbeq{PS}
    P(S_{T,n}) = \frac{s_n}{(2d)^n}P(Y_T=n)
    =  \frac{s_n}{(2d)^n} \frac{e^{-2dT} (2dT)^n}{n!}
    \ge
    k\frac{e^{-2dT} (\mu' T)^n }{n!}.
\end{equation}
Also, by Jensen's inequality,
\begin{equation}
\lbeq{JI}
    E(e^{-gI(T)} \mid S_{T,n})
    \ge
    e^{ -g   E(I(T) \mid S_{T,n})}.
\end{equation}

To evaluate $E(I(T)\mid S_{T,n})$,
we first observe that on the event $S_{T,n}$ the $n$ jump times
of the walk are independent and uniformly distributed on $[0,T]$, and in particular
the lengths of the $n+1$ subintervals of $[0,T]$ determined by the jump times are
identically distributed.
Let $U_0=0$, $U_{n+1}=T$, and let $U_1,\ldots,U_n$  be
independent uniform random
variables  on $[0,T]$ with order statistics
$U_{(1)},\ldots,U_{(n)}$ (i.e., ordered from smaller to larger values).  Then
\begin{equation}
\lbeq{CSrev1}
    E(I(T)\mid S_{T,n})
    =
    E\left[ \sum_{i=1}^{n+1} (U_{(i)}-U_{(i-1)})^2 \right]
    =
    (n+1)E  U_{(1)}^2
    .
\end{equation}
The probability density function of $U_{(1)}$ is $f(x)= nT^{-n}(T-x)^{n-1}$, so
\begin{align}
\lbeq{CSrev}
    E(I(T)\mid S_{T,n})
    & =
    \frac{(n+1)n}{T^n} \int_0^T x^2 (T-x)^{n-1}dx
    =
    \frac{2T^2}{(n+2)}
    .
\end{align}

By \refeq{JI} and \refeq{CSrev},
\begin{equation}
    E(e^{-gI(T)} \mid S_{T,n})
    \ge
    e^{ -2g  T^2/(n+2)}
    \ge
    e^{ -2g  T^2/n}.
\end{equation}
Therefore, by \refeq{chilower}--\refeq{PS} and with $h=2g$,
\begin{equation}
\begin{aligned}
    \chi(\nu) & \ge
    k
    \sum_{n=1}^\infty
    \int_0^\infty e^{-hT^2/n} \frac{e^{-2dT}(\mu' T)^n}{n!} e^{-\nu T} dT
    \\ &
    =
    k
    \sum_{n=1}^\infty
    \int_0^\infty e^{-hnT^2}\frac{e^{-2dnT}(\mu' n T)^n}{n!} e^{-\nu n T }
    n dT.
\end{aligned}
\end{equation}
Using Stirling's formula and $\mu' T \ge 1$ in the restricted integration
domain, we obtain
\begin{align}
\lbeq{Stirling}
    \chi(\nu)
    & \ge
    k
    \sum_{n=1}^\infty
    \frac{ne^n}{\sqrt{2\pi n}}
    \int_{(\mu')^{-1}}^\infty e^{-n(hT^2 + (2d+\nu)T)}    dT
    .
\end{align}
We consider $\nu < -2d$ and complete the square to get
\begin{equation}
    hT^2 + (2d+\nu)T = h(T - c_\nu)^2 - hc_\nu^2,
    \quad
    \text{with $c_\nu = (|\nu|-2d)/(2h)$.}
\end{equation}
We further assume that $c_\nu > (\mu')^{-1}$, i.e., that
$|\nu|> 2d +2h (\mu')^{-1}$,
and use Laplace's method to get, as $n \to \infty$,
\begin{align}
    &
    \int_{(\mu')^{-1}}^\infty e^{-n(hT^2 + (2d+\nu)T)}    dT
    \sim
    {\rm const}
    \frac{e^{nhc_\nu^2}}{\sqrt{n}}.
\end{align}
With \refeq{Stirling}, this gives a divergent lower bound on the susceptibility,
so $\nu_c \in [-2d -2h (\mu')^{-1},0]$
and the proof is complete.
\end{proof}

Concerning explicit bounds on the critical value for $d=1,2$,
since $\mu'$ is an arbitrary number less than $\mu$,
the above proof gives $\nu_c \in [-2d -4g\mu^{-1},0]$.
Using $\mu=1$ for $d=1$, and $\mu^{-1} \le \frac 12$ for $d=2$, this gives
$\nu_c \in [-2 -4g,0]$ for $d=1$ and
$\nu_c \in [-4 -2g,0]$ for $d=2$.

\section*{Acknowledgements}

This work was supported in part by NSERC of Canada.
This material is also based upon work supported by the National Science
Foundation under agreement No.\ DMS-1128155.
RB gratefully acknowledges the support of the University of British Columbia,
where he was a PhD student while much of his work was done.
Part of this work was done away from the authors' home institutions,
and we gratefully acknowledge the support and hospitality of
the IAM at the University of Bonn and
the Department of Mathematics and Statistics at McGill University (RB),
the Institute for Advanced Study at Princeton and  Eurandom (DB),
and
the Institut Henri Poincar\'e and  the Mathematical Institute of Leiden
University (GS).
We thank Alexandre Tomberg for useful discussions, and an anonymous
referee for helpful comments.

\bibliography{../../bibdef/bib}
\bibliographystyle{plain}


\end{document}